\documentclass[11pt,letter]{article}

\usepackage{tikz}
\usetikzlibrary{decorations.pathmorphing}
\usepackage{subcaption}
\usepackage{fullpage}
\usepackage{xurl}
\usepackage{hyperref}  
\usepackage{xspace}
 \usepackage{mathrsfs}
\usepackage{amsthm}
\usepackage[fleqn]{amsmath}
\usepackage{amsfonts}
\usepackage{cleveref} 
\crefname{observation}{observation}{observations}
\usepackage{graphicx,color}

\graphicspath{{Figures/}}

\usepackage{boxedminipage}
\usepackage{framed}
\usepackage{mathtools}
\usepackage{thmtools}
\usepackage{thm-restate}
\usepackage{xspace}
\usepackage{footnote}
\usepackage{nicefrac}
\usepackage{todonotes}
\usepackage{footnote}
 \usepackage{xifthen}
 \usepackage{tabularx}
\usepackage{tikz} 
\usetikzlibrary{calc,math,patterns,shapes,backgrounds}
\usetikzlibrary{positioning,decorations.pathreplacing}
\usetikzlibrary{calc}
\tikzstyle{path} = [color=black,opacity=.30,line cap=round, line join=round, line width=10pt]
\usepackage{tcolorbox}
\usepackage{amsfonts}
\usepackage{multirow}
\usepackage[inline]{enumitem}

\DeclareMathOperator{\Int}{Int}

\newcommand{\C}[1]{\ensuremath{\operatorname{\mathcal {#1} }}} 
\newcommand{\SC}[1]{\ensuremath{\operatorname{\mathscr {#1} }}} 

\newcommand{\ceil}[1]{\ensuremath{\left\lceil {#1} \right\rceil}} 

\usepackage{amsmath,amssymb,amsxtra}
\usepackage{todonotes}

	\DeclareMathOperator{\operatorClassFPT}{{\sf FPT}\xspace}
	\newcommand{\classFPT}{\ensuremath{\operatorClassFPT}\xspace}

	\DeclareMathOperator{\operatorClassXP}{{\sf X}P\xspace}
	\newcommand{\classXP}{\ensuremath{\operatorClassXP}\xspace}

\newcommand{\NP}{\textsf{NP}\xspace}
\newcommand{\NPH}{\textsf{NP}-hard\xspace}

\newcommand{\mcm}[3]{\newcommand{#1}[#2]{{\ensuremath{#3}}}} 
\mcm{\Nbb}{0}{\mathbb{N}}
\mcm{\Zbb}{0}{\mathbb{Z}}
\mcm{\Rbb}{0}{\mathbb{R}}
\mcm{\Cbb}{0}{\mathbb{C}}
\mcm{\Qbb}{0}{\mathbb{Q}}
\mcm{\Acal}{0}{\cal A}
\mcm{\Bcal}{0}{\cal B}
\mcm{\Ccal}{0}{\cal C}
\mcm{\Dcal}{0}{\cal D}
\mcm{\Ecal}{0}{\cal E}
\mcm{\Fcal}{0}{\cal F}
\mcm{\Gcal}{0}{\cal G}
\mcm{\Hcal}{0}{\cal H}
\mcm{\Ical}{0}{\cal I}
\mcm{\Jcal}{0}{\cal J}
\mcm{\Kcal}{0}{\cal K}
\mcm{\Lcal}{0}{\cal L}
\mcm{\Mcal}{0}{\cal M}
\mcm{\Ncal}{0}{\cal N}
\mcm{\Ocal}{0}{{\cal O}}
\mcm{\Pcal}{0}{{\cal P}}
\mcm{\Qcal}{0}{{\cal Q}}
\mcm{\Rcal}{0}{{\cal R}}
\mcm{\Scal}{0}{{\cal S}}
\mcm{\Tcal}{0}{{\cal T}}
\mcm{\Ucal}{0}{{\cal U}}
\mcm{\Vcal}{0}{{\cal V}}
\mcm{\Wcal}{0}{{\cal W}}
\mcm{\Xcal}{0}{{\cal X}}
\mcm{\Ycal}{0}{{\cal Y}}
\mcm{\Zcal}{0}{{\cal Z}}

\newcommand{\yesinstances}{\yesinstances\xspace}

\newcommand{\pd}[1]{solution descendant}
\newcommand{\pds}[1]{solution descendants}
\newcommand{\pdr}{representative}
\newcommand{\pdrs}{representatives}
\newcommand{\maxddisjointcoveragefull}{\textsc{Diverse Shortest Cycle Coverage}\xspace}
\newcommand{\maxddisjointcoverage}{{\sc DSCC\xspace}}

\newcommand{\pmaxddisjointcoverage}{{\sc Planar DSC}}
\newcommand{\dsp}{\textsc{Disjoint Shortest Paths}\xspace}

\newcommand{\F}{{\sf e}}

\newcommand{\N}{{\sf n}}
\newcommand{\tgamma}{\wtilde{\gamma}}
\newcommand{\tbeta}{\wtilde{\beta}}

\newcommand{\tpsi}{\wtilde{\psi}}
\newcommand{\ttau}{\wtilde{\tau}}

\newcommand{\wth}{\wtilde{h}}

\theoremstyle{plain}

\newtheorem{observation}{Observation}[section]
\newtheorem{corollary}{Corollary}[section] 
\newtheorem{lemma}{Lemma}[section] 
 
\newtheorem{proposition}{Proposition}[section] 

\theoremstyle{definition}
\newtheorem{definition}{Definition}[section]

\DeclareMathOperator*{\argmax}{arg\,max}

\newtcolorbox{remarkbox}
{
colback=blue!3,       
  colframe=blue!30,     
  coltext=black
}

\newlength{\RoundedBoxWidth}
\newsavebox{\GrayRoundedBox}
\newenvironment{GrayBox}[1]%
   {\setlength{\RoundedBoxWidth}{.93\textwidth}
    \def\boxheading{#1}
    \begin{lrbox}{\GrayRoundedBox}
       \begin{minipage}{\RoundedBoxWidth}}%
   {   \end{minipage}
    \end{lrbox}
    \begin{center}
    \begin{tikzpicture}%
       \node(Text)[draw=black!20,fill=white,rounded corners,%
             inner sep=2ex,text width=\RoundedBoxWidth]%
             {\usebox{\GrayRoundedBox}};
        \coordinate(x) at (current bounding box.north west);
        \node [draw=white,rectangle,inner sep=3pt,anchor=north west,fill=white] 
        at ($(x)+(6pt,.75em)$) {\boxheading};
    \end{tikzpicture}
    \end{center}}     

\newenvironment{defproblemx}[2][]{\noindent\ignorespaces%
                                \FrameSep=6pt%
                                \parindent=0pt%
                \vspace*{-1.5em}
                \ifthenelse{\isempty{#1}}{%
                  \begin{GrayBox}{\textsc{#2}}%
                }{%
                  \begin{GrayBox}{\textsc{#2}  parameterized by~{#1}}%
                }
                \begin{tabular*}{\textwidth}{@{\hspace{.1em}} >{\itshape} p{1.8cm} p{0.8\textwidth} @{}}%
            }{
                \end{tabular*}%
                \end{GrayBox}%
                \ignorespacesafterend
            }  

\newcommand{\defparproblema}[4]{
  \begin{defproblemx}[#3]{#1}
    Input:  & #2 \\
    Task: & #4
  \end{defproblemx}
}%

\newcommand{\defproblema}[3]{
  \begin{defproblemx}{#1}
    Input:  & #2 \\
    Task: & #3
  \end{defproblemx}
}%

\usepackage{soul}

\newcommand{\what}{\widehat}
\newcommand{\wtilde}{\widetilde}

\newcommand{\WO}{\textsf{W[1]}}
\newcommand{\WOH}{\textsf{W[1]-hard}\xspace}
\newcommand{\FPT}{\textsf{FPT}\xspace}

\newcommand{\mcc}{\textsc{Multicolored Clique}\xspace}

\newcommand{\dsce}{\textsc{Edge-Disjoint Shortest Cycle Packing}\xspace}
\newcommand{\dscv}{\textsc{Vertex-Disjoint Shortest Cycle Packing}\xspace}

\newcommand{\ttheta}{\wtilde{\theta}}

\newcommand{\tcal}[1]{\C{\wtilde{#1}}}
\newcommand{\hcal}[1]{\C{\what{#1}}}

\newcommand{\bbeta}{\check{\beta}}
\newcommand{\tdelta}{\wtilde{\delta}}
\newcommand{\hbeta}{\what{\beta}}
\newcommand{\lsctfull}{laminar shortest cycles tree}
 
\title{Packing Diverse Short Cycles}  

\usepackage{authblk}

\author[1]{Akanksha Agrawal}
\author[2]{Fedor V. Fomin}
\author[2]{Petr A. Golovach}
\author[1]{Vinod Gupta}
\author[2]{Yash Hiren More}
\author[1]{Vidya Sagar Sharma}

\affil[1]{Indian Institute of Technology Madras, Chennai, India}
\affil[2]{University of Bergen, Bergen, Norway}

\begin{document}

\maketitle

\begingroup
\renewcommand{\thefootnote}{}
\footnotetext{\scriptsize
This work was supported by the Research Council of Norway under the BWCA  (grant no.~314528) and Extreme-Algorithms (grant no~355137) projects, the European Research Council (ERC) via grant NewPC, reference 101199930, and the Trond Mohn forskningsstiftelse (grant no. TMS2023TMT01).
Email addresses:
    \texttt{akanksha@cse.iitm.ac.in},
    \texttt{fedor.fomin@uib.no},
    \texttt{petr.golovach@uib.no},
    \texttt{guptavinod439@gmail.com},
    \texttt{yash.h.more@uib.no},
    and \texttt{vidyasagartifr@gmail.com}.
}
\addtocounter{footnote}{-1}
\endgroup

\begin{abstract}
    Bentert, Fomin, Golovach, Korhonen,  Lochet, Panolan,  Ramanujan,  Saurabh, and Simonov (SODA 2025) initiated the parameterized study of \dsce: given a weighted graph $G$ and an integer $k$, decide whether $G$ contains $k$ edge-disjoint cycles of minimum weight. They showed that the problem admits an algorithm running in time $n^{\Ocal(k^6)}$ and asked whether it is fixed-parameter tractable or \WOH parameterized by $k$. We resolve this question by proving that \dsce is \WOH parameterized by $k$, even on unweighted subcubic graphs. The same lower bound also applies to the vertex-disjoint variant.
    
    For planar graphs,  the same paper provides a construction of a kernel with $\Ocal(k^2)$ vertices and an algorithm running in time $k^{\Ocal(k)} \cdot n^{\Ocal(1)}$, and explicitly asked whether the problem admits a single-exponential algorithm of running time $2^{\Ocal(k)} \cdot n^{\Ocal(1)}$. Rather than addressing this question in isolation, we introduce a more general framework, \maxddisjointcoveragefull, which asks for $k$ shortest cycles that may overlap in a controlled way while maximizing the total weight of covered edges. This framework simultaneously captures edge-disjoint shortest cycle packing, the problem of finding diverse shortest cycles, and the problem of maximizing edge coverage by shortest cycles. Our main algorithmic result shows that \maxddisjointcoveragefull{} can be solved on planar graphs in time $2^{\Ocal(k)} \cdot n^{\Ocal(1)}$, thereby establishing a single-exponential algorithm for \dsce as a special case.
    
    The key ingredient behind our algorithm is a structural analysis of the \emph{Laminar Shortest Cycles Tree}, a tree-like decomposition that reveals a laminar interaction pattern among shortest cycles in planar graphs and enables an efficient dynamic programming algorithm.
\end{abstract}

\newpage
\tableofcontents \newpage
\section{Introduction}
Cycle packing is a fundamental optimization problem in which one seeks to find the maximum number of edge- (or vertex-) disjoint cycles in a graph. 
The authors in~\cite{psc,DBLP:journals/talg/BentertFGKLPRSS26} initiated the study of \dsce and \dscv, the problems where one seeks a packing of $k$ shortest (minimum-weight) 
edge- (or vertex-) disjoint cycles, respectively.
They showed that both packing problems are solvable in time $n^{\Ocal(k^6)}$, while ruling out a kernel of size polynomial in $k$ under standard complexity assumptions.
Our first theorem resolves the FPT vs.\ \WOH question left open in~\cite{DBLP:journals/talg/BentertFGKLPRSS26}.
%

\begin{restatable}{theorem}{Whardness}
\label{thm:hard_generalgraphs}
\dsce  is \WOH   parameterized by $k$ on unweighted subcubic graphs.  
\end{restatable}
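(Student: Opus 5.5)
We reduce from \mcc parameterized by the number of colour classes $\ell$, with $k$ polynomial in $\ell$ (say $k=\Theta(\ell^2)$). Given an $\ell$-partite graph $H$ with classes $V_1,\dots,V_\ell$, each of size $N$, we build an unweighted graph $G$ of maximum degree~$3$. Every cycle of $G$ has length at least some value $L$, and the cycles of length exactly $L$ are precisely the \emph{canonical} cycles that encode choices. We then ask for $k$ edge-disjoint cycles of length $L$.

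\textbf{Gadgets.} For each class $V_i$ we add a vertex-selection gadget: a family of $N$ canonical cycles $C_i^v$, $v\in V_i$, that pairwise share edges. A packing can therefore use at most one cycle per gadget. For each pair $i<j$ we add an edge-selection gadget whose canonical cycles correspond to edges $uv\in E(H)$ with $u\in V_i$, $v\in V_j$.

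The selections must agree, and we enforce this by edge conflicts in the spirit of a grid or ``crossing'' construction. The cycle chosen for $V_i$ routes through $N$ parallel ``tracks'' leading to each pair gadget $(i,j)$; choosing $v$ leaves track $v$ free and blocks the others, or vice versa. The cycle for edge $uv$ can be completed as a shortest cycle only by using the free tracks indexed by $u$ on the $i$ side and $v$ on the $j$ side. Concretely, I plan to realise each canonical cycle as the union of two shortest paths between fixed terminals, with lengths tuned so that any deviation increases the length. Tracks are long paths whose positions are encoded by offsets: track $v$ has length $M+v$ on one side and $M-v$ on the other, so only matching indices sum to the target length. This is the standard encoding trick from \dsp hardness. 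Setting $k=\ell+\binom{\ell}{2}$ (possibly plus filler cycles), a packing of size $k$ must take exactly one cycle per gadget, and edge-disjointness forces consistency. This yields a multicolored clique, and conversely a clique gives the packing.

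\textbf{Subcubic.} All branch points in the construction are binary, so replacing each high-degree vertex by a path of degree-$3$ vertices keeps the graph subcubic. The replacement must preserve distances equally for all canonical cycles, so every cycle passes through each such path the same number of times. The length of the replacement path can be absorbed into $L$ by padding.

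\textbf{Main obstacle.} The hard part is the soundness claim that every cycle of length exactly $L$ in $G$ is canonical. Since $G$ is large and unweighted, spurious short cycles formed by mixing tracks of different gadgets must be excluded. I would attack this by giving $G$ a layered, potential-function structure: assign each vertex a coordinate so that every edge changes the coordinate by $\pm 1$, and bound every cycle's length below via the total variation it must traverse between two ``far'' hubs. Equality in that bound then forces a canonical form. The offset encoding ($M\pm v$) must be chosen with $M\gg N\ell^2$, so that any cycle visiting two gadgets, or taking a nonmatching pair of tracks, strictly exceeds $L$.
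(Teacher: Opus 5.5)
Your outline has the same architecture as the paper's proof: reduce from \textsc{Multicolored Clique}, use selection gadgets for colour classes and for pairs, enforce consistency through shared edges, and choose a budget that forces every gadget to be saturated. But it stops exactly where the content of the theorem lies. You never exhibit gadgets with $g(G)=L$ whose length-$L$ cycles are precisely the intended ones, and the tools you propose for this would fail.
\begin{enumerate}
\item \emph{The offset encoding.} With track $v$ of length $M+v$ on one side and $M-v$ on the other, a cycle through track $u$ on one side and track $v$ on the other has length $2M+u-v$ plus a constant. Every mismatched pair with $u<v$ therefore closes a cycle shorter than $L$; if the topology forbids mismatched pairs, the offsets do nothing. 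For shortest cycles every non-canonical combination must be strictly longer, so a ``sums to the target'' encoding only lowers the girth, and the packing is then taken among the wrong cycles.
\item \emph{Vertex splitting.} Replacing a high-degree vertex by a path breaks both properties you need. Canonical cycles entering and leaving at different attachment points traverse subpaths of different lengths, which no uniform padding repairs. Moreover, two cycles of a subcubic graph that share a vertex share an edge, so packing cycles that met only at the split vertex now typically overlap on the replacement path. The intended packing has to be vertex-disjoint by design.
\item \emph{The potential function.} A lower bound via two far hubs only controls cycles that visit both hubs. The gadgets are full of local cycles, including those that mix pieces of overlapping canonical cycles.
\end{enumerate}

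What is missing is a gadget whose shortest cycles can be listed exactly. The paper takes a base cycle longer than $L$ and attaches paths whose endpoints are joined by a unique shortest arc of the base cycle. Every shortest cycle of the gadget is then an attached path plus its arc, of length exactly $L$. The arcs are placed in a cyclically shifted pattern so that the only maximum packings are monochromatic.

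Your single canonical cycle per colour class, blocking all but one track, is also a problem. Consider the natural realization $C^v=(Z-\mathrm{seg}_v)\cup P_v$: a base cycle $Z$, a segment $\mathrm{seg}_v$ carrying the interface of $v$, and a bypass $P_v$. The cycle $\mathrm{seg}_v\cup P_v$ must be longer than $L$. A shorter one lowers the girth; one of length exactly $L$ blocks only $v$ and frees every other vertex of the class at once. This forces $|\mathrm{seg}_v|>|Z|/2$ for every $v$, which is impossible once $N\ge 4$, because two of the segments are then edge-disjoint.

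This is why the paper uses two cycles $C_i,C_i'$ per vertex gadget, which together cover every slot edge except those of $S_i$ on the segment $Q_i$. It uses four cycles per edge gadget and budget $k=2h+4\binom{h}{2}$ for $h$ colour classes. Subcubicity is then automatic: root and slot edges have endpoints of degree two, and gadgets are glued by identifying such edges. Cross-gadget cycles are excluded by plain distance bounds. Endpoints of distinct root edges are at distance at least $L-2m+1$, and endpoints of distinct slot edges are at distance at least $2m$. Hence a chordless cycle not contained in a single gadget has length at least $2(L-2m+1)$ or $(L-2m+1)+2m$, both larger than $L$.
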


Since two cycles in a subcubic graph are edge-disjoint if and only if they are vertex-disjoint, we immediately obtain the hardness for the vertex-disjoint variant of the problem. 

\begin{corollary}\label{corr:hard_generalgraphsvertexd}
\dscv  is \WOH   parameterized by $k$ on unweighted subcubic graphs.   
\end{corollary}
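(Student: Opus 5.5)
The corollary follows from \Cref{thm:hard_generalgraphs} by the identity reduction: an instance $(G,k)$ of \dsce{} with $G$ unweighted and subcubic is passed unchanged as an instance of \dscv. The reduction is trivially polynomial and keeps the parameter $k$, so it is a parameterized reduction. In both problems the cycles are required to be shortest cycles of the same graph $G$, so the candidate cycles are identical. It therefore suffices to show that, for any family of shortest cycles of $G$, the family is pairwise edge-disjoint if and only if it is pairwise vertex-disjoint.

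One direction holds in every graph. If two cycles share an edge $uv$, they share the vertices $u$ and $v$. Hence vertex-disjoint cycles are edge-disjoint.

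For the converse, suppose $C_1$ and $C_2$ are distinct cycles that share a vertex $v$, and assume they are edge-disjoint.
\begin{itemize}
\item Each cycle uses exactly two edges at $v$.
\item Edge-disjointness then forces $\deg_G(v)\ge 4$.
\item This contradicts $\Delta(G)\le 3$.
\end{itemize}
So edge-disjoint cycles in a subcubic graph are vertex-disjoint.

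Consequently, $G$ contains $k$ edge-disjoint shortest cycles if and only if it contains $k$ vertex-disjoint shortest cycles. The two instances are equivalent, and \WO-hardness transfers from \Cref{thm:hard_generalgraphs} to \dscv{} on unweighted subcubic graphs. There is no real obstacle here; all of the difficulty lies in the theorem itself.
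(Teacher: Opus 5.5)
Your proof is correct and is exactly the paper's argument: the paper derives the corollary from \Cref{thm:hard_generalgraphs} by noting that in a subcubic graph two cycles are edge-disjoint if and only if they are vertex-disjoint, so the same instance works for both problems. Your degree-counting step (two edge-disjoint cycles through a common vertex would need four distinct edges at that vertex) just spells out that observation.
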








\Cref{thm:hard_generalgraphs}  rules out fixed-parameter tractability under standard assumptions 
for general graphs.  For planar graphs, the authors in~\cite{psc,DBLP:journals/talg/BentertFGKLPRSS26}
showed that \dsce admits a kernel with $\Ocal(k^2)$ vertices and an algorithm running
in time $k^{\Ocal(k)} \cdot n^{\Ocal(1)}$, and asked explicitly whether a
\emph{single-exponential} algorithm  with running time $2^{\Ocal(k)} \cdot
n^{\Ocal(1)}$  exists.
 
Our next theorem resolves a more general question than what was asked in ~\cite{DBLP:journals/talg/BentertFGKLPRSS26}.
It works for a natural  relaxation of edge-disjointness: for an integer $d \geq 0$, two cycles $C_1$ and $C_2$ are \emph{$d$-diverse} if $w(E(C_1) \cap E(C_2)) \leq d$, that is, the total weight of shared edges is at most~$d$. 
Edge-disjointness is the special case $d = 0$. 
Building on this, we introduce the following problem, which simultaneously captures edge-disjoint shortest cycle packing, the search for {diverse} shortest cycles, and the problem of maximizing edge coverage by shortest cycles.

\defproblema{\maxddisjointcoveragefull{} (\maxddisjointcoverage)}
{A weighted graph $G$ with weight function $w : E(G) \rightarrow \mathbb{Z}_{> 0}$, and integers $k,d$, and $\ell$.}
{Decide whether there exists a set $\Ccal = \{C_1, C_2, \ldots, C_k\}$ of exactly $k$ many pairwise $d$-diverse shortest cycles of $G$ such that the weight of the set of edges covered by these cycles is at least $\ell$, that is, $w\!\left(\bigcup_{i=1}^{k} E(C_i)\right) \geq \ell$.}

The framework of \maxddisjointcoveragefull{} captures several problems of independent
interest.

\begin{itemize}

\item \textbf{Edge-disjoint Packing.}
  Taking $d = 0$ (or equivalently $\ell = k \cdot g(G)$) forces the $k$ cycles to be
  pairwise edge-disjoint, recovering \dsce exactly.
  %

\item \textbf{Diverse Shortest Cycles.}
  Taking $\ell = 0$ removes the coverage requirement, leaving the problem of finding
  $k$ shortest cycles that are pairwise $d$-diverse.
  This fits within the growing literature on \emph{diversity} in combinatorial
  optimization~\cite{baste2020diversity,fomin2024diverse,hanaka2021finding}: rather
  than a single optimal solution, one seeks a collection of near-optimal solutions
  that are spread out in a meaningful sense.
  On planar graphs, $d$-diversity among shortest cycles translates directly to
  $d$-diversity among global minimum cuts in the dual graph, connecting our result to
  the work of de~Berg, Mart{\'\i}nez, and Spieksma~\cite{deBerg2023diverseCuts} on diverse minimum
  $s$-$t$ cuts.

\item \textbf{Maximum Edge Coverage by Shortest Cycles.}
  Taking $d = w(E(G))$ removes the diversity requirement, leaving the problem of finding $k$ shortest cycles whose union covers edges of total weight as large as possible.
  This can be seen as a variant of the classical
\textsc{Maximum Coverage} problem (find $k$ sets from a family
$\mathcal{S}$ covering the most elements) with the important
distinction that $\mathcal{S}$ is not given explicitly: the family
of all shortest cycles of $G$ can be exponentially large and given implicitly rather than enumerated.
%
  %

\end{itemize}

Our main algorithmic result not only resolves the open question from~\cite{DBLP:journals/talg/BentertFGKLPRSS26} asking
for a single-exponential algorithm for \dsce on planar graphs, but
also establishes the first \classFPT algorithms for \textsc{Diverse
Shortest Cycles} and \textsc{Maximum Edge Coverage by Shortest Cycles} on planar graphs.

\begin{restatable}{theorem}{thmmaxddisjointcoverage}
    \label{thm:maxddisjointcoverage}
    \maxddisjointcoveragefull{} on planar graphs admits an algorithm with running time $2^{\Ocal(k)}\cdot n^{\Ocal(1)}$.
\end{restatable}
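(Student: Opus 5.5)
We plan to prove \Cref{thm:maxddisjointcoverage} by combining a structural uncrossing argument for shortest cycles in planar graphs with a dynamic program over a tree that records how these cycles nest. Fix a plane embedding of $G$ and let $g=g(G)$ be the minimum cycle weight, which is computable in polynomial time. Each shortest cycle $C$ is a Jordan curve and separates the plane into an interior and an exterior. Two shortest cycles $C_1, C_2$ can be \emph{uncrossed}: the boundaries of $\Int(C_1)\cap \Int(C_2)$ and $\Int(C_1)\cup\Int(C_2)$ are closed walks with total weight at most $w(C_1)+w(C_2)=2g$. Each of these walks contains a cycle of weight at least $g$ (when the region is nonempty), so both are shortest cycles.

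The key exchange property we need is the following. Replacing $C_1,C_2$ by the two uncrossed cycles
\begin{enumerate*}[label=(\roman*)]
\item preserves the multiset of covered edges, hence preserves the union and its weight, and
\item does not increase pairwise overlaps with any third cycle,
\end{enumerate*}
up to a careful accounting of the shared edges in $E(C_1)\cap E(C_2)$. Repeating this operation with the potential $\sum_i |\Int(C_i)|^2$ should yield an optimal solution whose interiors form a laminar family. The parameter $d$ is a complication here, and this is the step we expect to be the main obstacle. An uncrossing can move overlap from one pair of cycles to another. We would first try to show that for any third cycle $C_3$ we have $|E(C_3)\cap E(C_1')| + |E(C_3)\cap E(C_2')| \le |E(C_3)\cap E(C_1)| + |E(C_3)\cap E(C_2)|$ (weighted). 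We would also show that if $C_3$ is itself laminar with respect to both cycles, the inequality holds term by term. If that fails, we would restrict the uncrossing to pairs chosen in a canonical order, for example by outermost interior, so that the resulting solution is laminar and still $d$-diverse.

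Next we would build the \emph{\lsctfull}. We take a maximal laminar family of shortest cycles that contains, for every shortest cycle $C$, cycles capturing every possible edge set $E(C)$ relative to the laminar structure. Concretely, we consider the regions bounded by shortest cycles and their Hasse diagram under containment. Since the shortest cycles can be exponentially many, this tree must be defined implicitly. We would compute, for each face pair and vertex pair, shortest-path data, in the spirit of the minimum-cut tree of the dual graph (Gomory--Hu-like or cactus representation of global minimum cuts). Shortest cycles in $G$ are exactly the global minimum cuts in $G^*$, and the cactus representation of all minimum cuts has polynomial size and is computable in polynomial time. A laminar solution then corresponds to choosing $k$ cuts in the cactus that are pairwise non-crossing, and hence to a rooted tree structure. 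The vertices of this tree are nodes of the cactus (cycles of the cactus are expanded by choosing arcs).

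Finally, we run a DP on this tree. A laminar family of $k$ cycles forms a forest of nesting depth at most $k$. Two cycles in a laminar family share edges only if one is a child of the other, or both are siblings touching along a common boundary. The overlap weight and the contribution to the union can therefore be computed locally from each parent--child or sibling adjacency. The DP states at a tree node record how many cycles have been chosen in the subtree, the cycle currently serving as the nearest chosen ancestor boundary, and the maximum covered weight. To get $2^{\Ocal(k)}$ rather than $k^{\Ocal(k)}$, the distribution of $k$ among children should be done by standard knapsack-style merging, which is polynomial in $k$. The only exponential factor should come from guessing the shape of the laminar forest of $k$ cycles, namely the number of rooted ordered forests on $k$ nodes, which is $4^{k}$. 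For each shape we would then embed it into the tree by a polynomial-time DP that checks the $d$-diversity constraints along edges of the shape. This gives total time $2^{\Ocal(k)}\cdot n^{\Ocal(1)}$.
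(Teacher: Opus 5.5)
\paragraph{Gap 1: the uncrossing step is false for $d>0$.} The plan breaks at the step you flagged, and the problem is not a missing proof: a laminar optimal solution need not exist. By \Cref{claim:intersection-between-two-cycles}, two shortest cycles cross only if they are edge-disjoint and meet exactly in two vertices $s,t$ at distance $g/2$. So a crossing pair has the form $P_1\cup P_3,\ P_2\cup P_4$ for internally disjoint $s$-$t$ paths $P_1,\dots,P_4$ in cyclic order. Uncrossing yields $\{P_1\cup P_2,\,P_3\cup P_4\}$ or $\{P_2\cup P_3,\,P_4\cup P_1\}$.

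The edge multiset is preserved, so your summed inequality holds with equality. But $d$-diversity bounds each pair separately, so the sum says nothing. Take a third shortest cycle $A$ in the wedge between $P_1$ and $P_2$ at $s$ that uses weight $d$ of each. Then $A$ is $d$-diverse from both crossing cycles, yet shares $2d$ with $P_1\cup P_2$.

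Put one such cycle, with a private closing path, in each of the four wedges, choose $d$ small relative to $g$ so no new short cycles appear, and set $k=6$. Covering all edges then requires the four wedge cycles plus a perfect matching of $\{P_1,\dots,P_4\}$. Every non-crossing matching pairs two paths flanking a wedge cycle, so the unique optimum contains the crossing pair. No canonical order of uncrossings can fix this.

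The paper never uncrosses. It keeps non-tree cycles, each a union of two longitudinal paths. First it disposes of the case of $\ge 4k$ leaves: the Four Color Theorem gives $k$ edge-disjoint facial shortest cycles, which attain the maximum coverage $k\cdot g$. Otherwise there are only $\Ocal(k)$ longitudinal paths. The algorithm guesses the set $\Pcal$ used by non-tree solution cycles ($2^{\Ocal(k)}$ choices) and tests $\Pcal$-validity in polynomial time. It requires tree cycles only to be $d$-diverse from each path in $\Pcal$, which suffices because a tree cycle meets at most one path of a non-tree cycle.

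\paragraph{Gap 2: the tree DP ignores cross-branch overlaps.} This gap is independent and remains even if a laminar optimum existed. Cycles with disjoint interiors can share boundary paths. If $\beta_i$ and $\beta_j$ lie below different children $\gamma_i,\gamma_j$ of a node, their overlap is a subpath of $\gamma_i\cap\gamma_j$, and its extent depends on the exact choices of both. Both $d$-diversity and the union weight therefore couple choices made in different child subtrees.

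A knapsack merge whose state is ``count plus nearest chosen ancestor'' can neither enforce diversity nor compute coverage. Checking constraints only along edges of a guessed forest shape misses these interactions, whose pattern is planar, not a tree. Without a leaf bound the host tree also branches arbitrarily.

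Handling this is the core of the paper's proof:
\begin{enumerate}
\item With fewer than $4k$ leaves, it restricts to a marked set $\Mcal_\alpha$ of $\Ocal(k)$ nodes and guesses lineage-free representatives $\gamma_i$ of the solution descendants.
\item For each $\gamma_i$ it guesses an intersection type: equal, no overlap with the relevant set $\Gamma_i$, or the first and last members of $\Gamma_i$ met along $\gamma_i$.
\item Planarity of the overlap graph gives $\sum_i|\Gamma_i|\le 7b$, so there are only $2^{\Ocal(k)}$ types.
\item No edge lies on three cycles from distinct branches, so overlaps are fixed by the representatives except at the guessed end-neighbours.
\item These residual partial overlaps form an auxiliary graph whose components are paths and cycles, and a second DP resolves them.
\end{enumerate}
Your cactus of minimum cuts of $G^*$ could plausibly replace the LSCT as the host tree, but it addresses neither gap.
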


The running time in~\Cref{thm:maxddisjointcoverage} is stated assuming the word RAM model. We note that the result holds for real-valued edge weights assuming the standard real RAM model of computation, in which basic arithmetic operations on real numbers can be performed in constant time.

\subsection{Related Work}

Both \dsce and \dscv are \NP-complete even on planar graphs, since they generalize the triangle packing  roblem~\cite{psc,DBLP:journals/talg/BentertFGKLPRSS26}.
Rautenbach and Regen~\cite{RautenbachR09} studied these problems on graphs of girth $g \in \{4,5\}$. 
In this case, both problems are solvable in polynomial time on graphs of maximum degree~$3$, but are \textsf{APX}-hard for maximum degree~$4$.
For $g \geq 6$, the problems are \textsf{APX}-hard already on cubic graphs.
Approximation algorithms of ratio $(3+\varepsilon)$ for \dsce and \dscv on planar graphs follow from the framework of Schlomberg, Thiele, and 
Vygen~\cite{DBLP:journals/siamcomp/SchlombergTV25} for uncrossing families of cycles.

The systematic parameterized study of \dscv and \dsce was initiated in~\cite{psc,DBLP:journals/talg/BentertFGKLPRSS26}.
For general graphs, they gave an \classXP algorithm with running time $n^{\Ocal(k^6)}$, which constituted the first algorithm solving these problems for any fixed~$k$, and ruled out a polynomial kernel under standard complexity assumptions. 
The \classFPT vs.\ \WOH status of \dsce and \dscv on general graphs was left explicitly open. 
For planar graphs, they obtained a polynomial kernel with $\Ocal(k^2)$ vertices and an \classFPT algorithm with running time
$k^{\Ocal(k)} \cdot n^{\Ocal(1)}$, and explicitly asked whether a single-exponential algorithm exists. 
Our \Cref{thm:hard_generalgraphs} resolves the first open question by proving \WO-hardness on subcubic graphs, and \Cref{thm:maxddisjointcoverage} resolves the second by giving a $2^{\Ocal(k)} \cdot n^{\Ocal(1)}$ algorithm for the more general \maxddisjointcoveragefull{} on planar graphs.

A close relative of shortest cycle packing is \dsp. 
In this problem, given $k$ pairs of terminal vertices, the task is to decide whether they can be connected by pairwise vertex- (or edge-) disjoint shortest paths. 
The problem is \NPH when~$k$ is part of the input~\cite{Eil98} and \WOH parameterized by~$k$~\cite{BNRZ23}. Lochet~\cite{Loc21} gave a polynomial-time algorithm for any fixed~$k$ in undirected unweighted graphs, with an improved running time obtained in~\cite{BNRZ23}. 
The recent breakthrough of Pilipczuk, Stamoulis, and W\l{}odarczyk~\cite{PilipczukSW26} establishes that the problem is \classFPT on planar graphs.

In our paper, we are using laminar shortest cycles trees introduced in~\cite{psc,DBLP:journals/talg/BentertFGKLPRSS26}; similar laminar structures of cycles in planar graphs have been use before~\cite{DBLP:journals/siamdm/HartvigsenM94,DBLP:journals/combinatorica/GoemansW98, DBLP:journals/siamcomp/SchlombergTV25,DBLP:conf/icalp/Schlomberg24}.

There has been growing interest in finding \emph{diverse} collections of solutions to combinatorial optimization problems from the perspective of parameterized complexity. 
The paper ~\cite{baste2020diversity} initiated this line of research, and~\cite{fomin2024diverse} developed a general framework for diverse collections in matroids and graphs.
Eiben, Koana, and Wahlstr\"om~\cite{eiben2025determinantal} developed algebraic techniques based on determinantal sieving that apply to various problems in the diverse-solutions settings, although their methods are not applicable to our problem.
Other general techniques were recently proposed by Kumabe~\cite{Kumabe25} and 
Ashok et al.~\cite{ashok2026solutionmanyoraclebasedfpt}. However, our problem does not fit into this framework as in our case, the diversity measures and the size of the diverse objects (that is, the girth) are unbounded. 
In the context of cuts, the papers ~\cite{deBerg2023diverseCuts,deBerg2025diverse} studied diverse minimum $s$-$t$ cuts and diverse solutions in problems with a distributive lattice structure. 
More recently \cite{dehaleesan2026fpt} gave an FPT algorithm for finding diverse minimum $s$-$t$ cuts.
Our problem \maxddisjointcoveragefull{} contributes to this line of work, where the objects of interest are shortest cycles and diversity is measured by the weight of pairwise edge intersections; on planar graphs, this is equivalent to diversity among global minimum cuts in the dual graph.%

\section{Overview of our methods}\label{subsec:intro}
 
\subsection{Proof of \Cref{thm:hard_generalgraphs}}

The reduction in~\cite{psc} 
establishes \WO-hardness of deciding, given a graph $G$ and integers $k$ and $L$, whether $G$ contains $k$ vertex- (or edge-) disjoint cycles of length at most $L$ each.
In this reduction, $L$ may be larger than the girth of $G$, and this flexibility is crucial. In the case of shortest cycles, however, all
cycles must have length exactly equal to the girth of $G$. 
This tight length constraint requires substantially more intricate arguments, even though we reuse some ideas from~\cite{psc}.

\Cref{thm:hard_generalgraphs} is proved by reducing from the \WOH \mcc problem.
This problem asks, given a graph $G$ whose vertices are partitioned into $h$ sets $V_1,\ldots,V_h$, called \emph{color classes}, whether $G$ has a \emph{multicolored} clique of size $h$, that is, a clique containing exactly one vertex from each color class. 
The parameter of \mcc is the size of the solution size $h$.
The key ingredients of our reduction are the constructions of \emph{vertex} and \emph{edge} selection gadgets which are used to encode the selection of exactly one vertex in each color class and exactly one edge between every pair of color classes of the input instance of \mcc. 
Then we combine these gadgets to encode the instance of \mcc.
To construct vertex and edge selection gadgets, we select a sufficiently large (with respect to the size of $G$) integer $L=2\ell$ which will be the girth of the constructed graph, and then build gadgets as unions of overlapping packings of cycles of length $L$. 

\begin{figure}[ht]
 \begin{center}
 \scalebox{0.6}{
 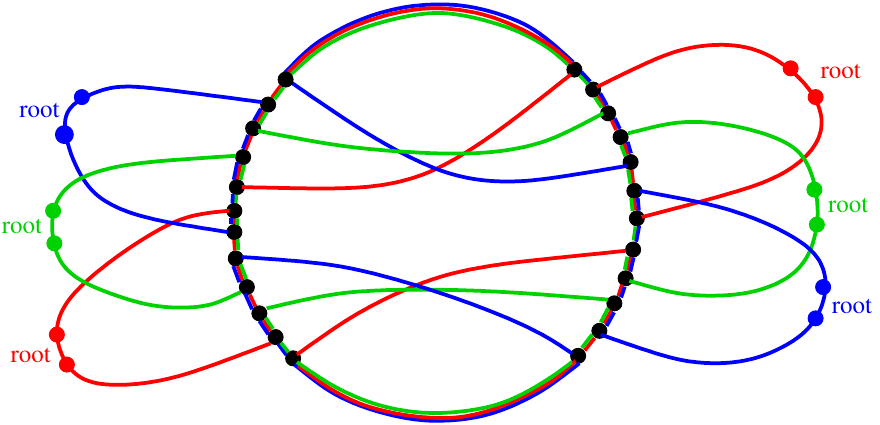}
 \end{center}
 \caption{Construction of the edge selection gadget $F$ for $m=3$. The paths in the left and in the right parts are of length $L-5$, and the paths in the middle part are of length $\ell=L/2$. The middle vertices of the paths in the left and in the right parts are \emph{roots} that are used to connect the edge selection gadget with the corresponding vertex selection gadgets. For $F$ constructed for the pair of color classes $V_i,V_j$ with $i<j$, the roots from the left part are used to connect the edge selection gadget with the vertex selection gadget constructed for $V_i$, and the roots from the right part are used for the connection with the vertex selection gadget constructed for $V_j$.
 }\label{fig:esg_intro}
 \end{figure}

 The crucial difference between the \WO-hardness proof in~\cite{psc} and our result is that the edge selection gadget allows us to work with shortest cycles, while \cite{psc} has to relax the length constraint. 
For the edge selection gadget $F$, we assume that the number of edges between each pair of color classes is the same integer $m$. Then $F$ contains $m$ $4$-tuples of $L$-length cycles. The idea is demonstrated in~\Cref{fig:esg_intro} for $m=3$, where distinct $4$-tuples are colored in red, green, and blue, respectively. The crucial property of the gadget is that if we wish to take $4$ edge-disjoint shortest cycles from $F$, then because of the overlap of the shortest cycles, the only option is to choose cycles of the same color. This choice corresponds to the selection of an edge in the instance of \mcc between two color classes.

\begin{figure}[ht]
 \begin{center}
 \scalebox{0.6}{
 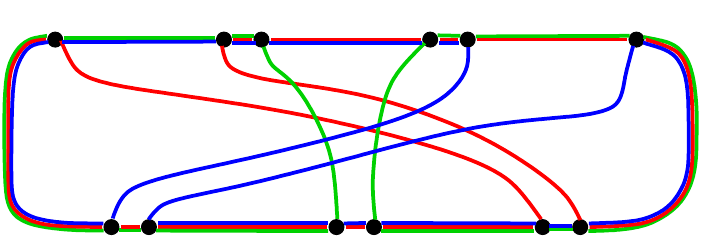}
 \end{center}
 \caption{Construction of the vertex selection gadget for $n=3$. The paths  in the middle part are of length $\ell=L/2$. The paths $Q_1,Q_2,Q_3$ correspond to the vertices of a color class and are used to connect the vertex selection gadget with the edge selection gadgets.
 }\label{fig:vsg_intro}
 \end{figure}

The construction of the vertex selection gadget $H$ exploits the same idea. We assume that each color class contains $n$ vertices. Then $H$ is composed of $n$ pairs of cycles of length $L$. We use~\Cref{fig:vsg_intro} to explain the construction for $n=3$; the $3$ pairs of cycles are shown in red, green, and blue. For the vertex selection gadget, the crucial property is that if we are required to take $2$ shortest cycles from $H$, then the only possibility is to select cycles of the same color. The choice naturally corresponds to a selection of a vertex in the color class of the instance of \mcc.
 
We use the paths $Q_s$ in the vertex selection gadget (see~\Cref{fig:vsg_intro}) constructed for a color class $V_i$ and roots of the edge selection gadgets (see~\Cref{fig:esg_intro}) constructed for the pairs $V_i,V_j$ for $j\neq i$. Namely, we identify edges of $Q_s$ corresponding to $v\in V_i$ without common endpoints with the roots corresponding to the edges incident to $v$ in $G$. To see the idea, notice that the selection of, say, the red cycles in the vertex selection gadget in \Cref{fig:vsg_intro} leaves the edges of $Q_1$ free to use in other cycles in the attached edge selection gadgets. We apply this idea to argue that $G$ has a multicolored clique if and only if the constructed graph has a packing of $2h+4\binom{h}{2}$ edge-disjoint shortest cycles.

Let us highlight the aspects of the reduction that we consider the most
delicate. First, since every cycle of the packing must have length exactly
$g(G')=L$, all gadgets are built as unions of overlapping cycles of length
exactly $L$, and their correctness rests on an overlap argument: the
$4$-tuples of shortest cycles in an edge selection gadget (the pairs of 
shortest cycles in an vertex selection gadget) share edges in a
cyclically shifted pattern, which forces a monochromatic choice. 
Second, the
gadgets are glued together by identifying root edges of edge selection
gadgets with specific edges of vertex selection gadgets. Each identification
creates new cycles passing through several gadgets, and we must certify that
none of them has length below $L$. Finally,
the equivalence rests on an exact counting argument: the budget
$k=2h+4\binom{h}{2}$ equals the maximum number of edge-disjoint shortest
cycles that the gadgets can jointly supply, so every gadget must be
saturated, and saturation is possible only along a selection encoding a
multicolored clique.

\subsection{Proof of \Cref{thm:maxddisjointcoverage}}
We give an intuitive idea behind our algorithm that proves \Cref{thm:maxddisjointcoverage}. 
At a high level, our algorithm has two sources of novelty compared to the
$k^{\Ocal(k)}\cdot n^{\Ocal(1)}$-time algorithm of~\cite{psc}: a finer
structural analysis of the laminar shortest cycles tree, which reduces all
guessing from quadratically many objects to linearly many, and a new
two-layer dynamic programming.

Consider an instance $(G,w,k,d,\ell)$ of \maxddisjointcoveragefull{}, where $G$ is a planar graph. We assume that every edge of $G$ is in some shortest cycle, and $G$ is embedded on the plane in such a way that the frontier of the external (outer) face is a shortest cycle. 
Whenever we say that $G$ is a plane graph, we mean that $G$ is embedded this way.
 The work ~\cite{psc} introduces a tree-like structure that captures the shortest cycles in a weighted plane graph, called the {\em laminar shortest cycles tree (LSCT)}; similar structures have also been previously used (for instance, see~\cite{DBLP:journals/siamdm/HartvigsenM94,DBLP:journals/siamcomp/SchlombergTV25,DBLP:journals/combinatorica/GoemansW98,DBLP:conf/icalp/Schlomberg24}).
Intuitively speaking, a family of shortest cycles $\C{C}$ in a given plane graph $G$ is {\em laminar} if for distinct $C,C'\in \C{C}$, one of the following holds: i) $C$ is drawn inside $C'$, ii) $C'$ is drawn inside $C$, or 
iii) the regions (not containing the outer face) bounded by the closed
    curves of $C$ and $C'$ are disjoint.
    
     We begin by computing the laminar shortest cycles tree $\Tcal$ for the
    given plane graph $G$ (see \Cref{def:laminar-tree} for the formal
    definition). The nodes of this tree are shortest cycles in $G$ (but not all of them). 
 \Cref{fig:three_intersecting} illustrates the kind of shortest cycles that are missing
    from $\Tcal$; an intuitive explanation follows below.
The only shortest cycles in $G$ that are not nodes of $\Tcal$ are those
    consisting of two ``non-consecutive'' shortest paths, each of length
    $g(G)/2$, between a pair of vertices $s,t\in V(G)$, where $g(G)$ is the
    weighted girth of $G$; moreover, such a pair excludes the two
    ``outermost'' shortest $s$-$t$ paths.
The shortest cycles that are nodes of $\C{T}$ are {\em tree cycles} and the ones that are not its nodes are {\em non-tree cycles}. 

We briefly describe the construction of this laminar shortest cycles tree, since its properties will be algorithmically exploited. 
Intuitively, the tree is rooted at this outer face shortest cycle, and we (recursively) add shortest cycles to this tree by going ``inside'' the region (not containing the outer face) bounded by the current shortest cycle node $\alpha$ as follows. 
    (See \Cref{fig:lsct-example} for an illustrative example; the concrete details in the figure relate to the formal definition presented later and may be ignored at this point.)
Find a maximal set $\SC{C}$ of shortest cycles where i) each of them is fully contained inside $\alpha$ in the given planar drawing, ii) no
    shortest cycle in $\Ccal$ fully contains another shortest cycle from
    $\Ccal$ in the planar drawing, and iii) there is no cycle (different from
    $\alpha$) contained inside $\alpha$ that fully contains a cycle from
    $\Ccal$. 
We make each $\beta\in \SC{C}$ a child of $\alpha$, and recursively proceed until we reach facial cycles. 
Roughly speaking, the above gives a clean ``laminar'' structure of shortest cycles in the given plane graph, and one may think of it as encountering cycles ``layer-by-layer'' starting from the outermost cycle. 

\begin{figure}[t]
    \centering
    \begin{tikzpicture}

\node (s) at (0, 2) [circle, draw] {$s$};
\node (t) at (0, -2) [circle, draw] {$t$};

\draw[ black] (s)     .. controls (-3.5, 1) and (-3.5, -1) .. (t) node[midway, left] {$P_1$};
\draw[ black] (s)       .. controls (-2.5, 1) and (-2.5, -1) .. (t) node[midway, left] {$P_2$};
\draw[ black] (s)      .. controls (-1.4, 1) and (-1.4, -1) .. (t) node[midway, left] {$P_3$};  
\draw[ black] (s)       .. controls (1.4, 1) and (1.4, -1) .. (t) node[midway, right] {$P_4$};  
\draw[ black] (s)      .. controls (2.5, 1) and (2.5, -1) .. (t) node[midway, right] {$P_5$}; 
\draw[ black] (s)     .. controls (3.5, 1) and (3.5, -1) .. (t) node[midway, right] {$P_6$}; 

\end{tikzpicture}

\caption{The figure illustrates embedding of two distinct vertices $s,t$ in a plane graph $G$, connected by exactly six internally pairwise vertex disjoint $s-t$ paths of lengths exactly $g(G)/2$, where $g(G)$ is the weighted girth of given graph $G$. In the \lsctfull, the cycle $P_1\cup P_6$ will be a node. For each $i\in [5]$, the cycle $P_i\cup P_{i+1}$ will be a child of $P_1\cup P_6$. However, the shortest cycles $P_{i}\cup P_{j}$, where $i,j \in [6]$, $i<j$ and $(i,j) \notin \{(1,6)\} \cup \{(i,i+1) \mid i\in [5]\}$, will not be represented in the \lsctfull. It can be shown that every shortest cycle, other than the ones obtained only using the paths $P_i$, $i\in [6]$, is embedded: i) either ``inside'' (including the boundary) one of the cycles $P_i\cup P_{i+1}$, $i\in [5]$, or ii) embedded ``outside'' (including the boundary) the closed curve defined by $P_1\cup P_6$.}  
    \label{fig:three_intersecting}
\end{figure}

\begin{figure}[t]
\centering
\begin{subfigure}[t]{0.56\textwidth}
\centering
\begin{tikzpicture}[scale=0.78, every node/.style={font=\small}]

  \coordinate (s) at (0,0);
  \coordinate (t) at (8,0);


  \fill[red!6]
    (s) .. controls (2,5.0) and (6,5.0) .. (t)
        .. controls (6,-2.8) and (2,-2.8) .. cycle;

  \fill[blue!10]
    (s) .. controls (2,5.0) and (6,5.0) .. (t)
        .. controls (6,1.4) and (2,1.4) .. cycle;

  \fill[blue!20]
    (s) .. controls (2,5.0) and (6,5.0) .. (t)
        .. controls (6,3.2) and (2,3.2) .. cycle;

  \fill[orange!10]
    (s) .. controls (2,1.4) and (6,1.4) .. (t)
        .. controls (6,-2.8) and (2,-2.8) .. cycle;


  \draw[very thick, red!70!black]       
    (s) .. controls (2,5.0) and (6,5.0) .. (t);
  \draw[thick, blue!70!black, dashed]   
    (s) .. controls (2,3.2) and (6,3.2) .. (t);
  \draw[thick, black!55, dashed]        
    (s) .. controls (2,1.4) and (6,1.4) .. (t);
  \draw[very thick, red!70!black]       
    (s) .. controls (2,-2.8) and (6,-2.8) .. (t);

  \node[red!70!black]  at (4.0,  4.15) {$P_0$};
  \node[blue!70!black] at (4,  2.75) {$P_1$};
  \node[black!55]      at (4, 0.75) {$P_2$};
  \node[red!70!black]  at (4, -2.45) {$P_3$};

  \node[blue!80!black] at (5.5, 2.8)  {$C_1$};
  \node[blue!60!black] at (5.5, 1.5) {$C_2$};
  \node[orange!80!black, font=\small\itshape] at (1.0,-0.2) {$C_3$};
  \node[red!70!black]  at (2.5, -2.9) {$C_r$};

  \fill[white] (3.0,-0.70) ellipse (1.1cm and 0.65cm);
  \draw[thick, teal!70!black] (3.0,-0.70) ellipse (1.1cm and 0.65cm);
  \node[teal!70!black] at (3.0,-0.70) {$C_4$};

  \fill[white] (5.5,-0.55) ellipse (1.1cm and 0.65cm);
  \draw[thick, teal!70!black] (5.5,-0.55) ellipse (1.1cm and 0.65cm);
  \node[teal!70!black] at (5.5,-0.55) {$C_5$};

  \filldraw (s) circle (2.5pt) node[left=3pt]  {$s$};
  \filldraw (t) circle (2.5pt) node[right=3pt] {$t$};

\end{tikzpicture}
\caption{Plane graph $G$. $C_r{=}P_0{\cup}P_3$ is a splittable-node with
  $\SC{C}_{{\sf s}}(C_r){=}\{C_1,C_2,C_3\}$, where $C_1{=}P_0{\cup}P_1$,
  $C_2{=}P_1{\cup}P_2$, and $C_3{=}P_2{\cup}P_3$.
  $C_1$ and $C_2$ are leaf cycles;
  $C_3$ is an unsplittable-node with $\SC{C}_{{u}}(C_3){=}\{C_4,C_5\}$. The shortest cycles formed by paths $P_0$ and $P_2$, and by paths $P_1$ and $P_3$ are non-tree cycles.}
\label{fig:lsct-graph}
\end{subfigure}
\hfill
\begin{subfigure}[t]{0.41\textwidth}
\centering
\begin{tikzpicture}[
  level distance    = 1.5cm,
  level 1/.style    = {sibling distance=2.4cm},
  level 2/.style    = {sibling distance=1.4cm},
  every node/.style = {circle, draw, minimum size=9mm,
                        inner sep=1pt, font=\small},
  snode/.style = {fill=blue!15,   draw=blue!55!black,   thick},
  unode/.style = {fill=orange!20, draw=orange!65!black, thick},
  lnode/.style = {fill=gray!15,   draw=gray!50,         thick},
]
\node[snode] (Cr) {$C_r$}
  child { node[lnode] (C1) {$C_1$} }
  child { node[lnode] (C2) {$C_2$} }
  child { node[unode] (C3) {$C_3$}
    child { node[lnode] (C4) {$C_4$} }
    child { node[lnode] (C5) {$C_5$} }
  };

\node[draw=none, rectangle, right=-0.15cm of Cr,
      font=\scriptsize, blue!70!black]   {(S)};
\node[draw=none, rectangle, right=-0.15cm of C3,
      font=\scriptsize, orange!70!black] {(U)};
\end{tikzpicture}

\smallskip
{\small\centering
  \textcolor{blue!70!black}{$\bullet$ splittable-node (S)}\\[1pt]
  \textcolor{orange!70!black}{$\bullet$ unsplittable-node (U)}\enspace
  \textcolor{gray!60!black}{$\bullet$ leaf}
}
\caption{LSCT $\Tcal(G)$ rooted at $C_r$.}
\label{fig:lsct-tree}
\end{subfigure}

\caption{Example of \lsctfull\ construction
  (\Cref{def:laminar-tree}).}
\label{fig:lsct-example}
\end{figure}

\subparagraph{A simpler case of edge-disjoint cycles.}  The work \cite{psc} uses the \lsctfull\ to obtain an FPT algorithm for {\sc Edge-Disjoint Shortest Cycle Packing}, running in time $k^{O(k)}\cdot n^{O(1)}$, for the given plane graph $G$ as follows: i) if the number of leaves in the \lsctfull\ $\C{T}$ is at least $4k$, using the Four Color Theorem, one can directly conclude that the instance is a yes-instance. ii) Else, the number of leaves in $\C{T}$ is less than $4k$. 
In this case, they show that there is a set of $O(k^2)$ appropriately selected ``deepest'' (in the layers) shortest cycles which necessarily contains a solution (if one exists). 
  We remark that the above marking scheme relies on the ``strict
    laminarity'' property of some optimal solution; we discuss strict
    laminarity in more detail shortly. 
With these marked cycles, now one can easily obtain an FPT algorithm running in time $k^{O(k)}\cdot n^{O(1)}$. 

Before describing the challenges we face in obtaining the $2^{O(k)}\cdot n^{O(1)}$-time algorithm for \maxddisjointcoveragefull, and the way we overcome them, we discuss how using our approach we can obtain a simpler single-exponential FPT algorithm for {\sc Edge-Disjoint Shortest Cycle Packing} (which is the special case of our problem for $d=0$). 
Again, the non-trivial case for us is when the number of leaves in the \lsctfull\ $\C{T}$ for the given plane graph $G$ is less than $4k$. 
In this case, we can obtain a family $\C{Q}$ of shortest paths that together cover all the non-tree cycles in $G$, where: i) $|\C{Q}| \in O(k)$, and ii) for each non-tree cycle $C$ in $G$, there are two distinct paths $P_1,P_2\in \C{Q}$ such that $C = P_1 \cup P_2$. 
With the above family of paths $\C{Q}$ at hand, we can simply ``guess'' the exact subset of these paths forming the non-tree cycles in the solution we are looking for. 
  We remark that although the number of non-tree cycles can be as large as
    $\Omega(k^2)$, guessing the paths requires iterating over only
    $2^{\Ocal(k)}$ subsets of $\Qcal$, as opposed to the $2^{\Omega(k^2)}$
    choices needed to guess the exact set of non-tree cycles in the solution.

It remains to handle the tree cycles, which are exactly the nodes of the
    laminar shortest cycles tree $\Tcal$. We proceed by a bottom-up dynamic
    programming over $\Tcal$, starting from its leaves, as follows. At a node
    $\alpha$ of $\Tcal$, we seek a solution that contains $\alpha$ and some
    descendants of $\alpha$, and we assume that for every strict descendant
    of $\alpha$ the corresponding dynamic programming entries have already
    been computed.
We note that the number of nodes in $\Tcal$ may not be bounded by a
    function of $k$, and thus we do not know which descendants to query
    directly.
However, as the number of leaves in $\C{T}$ is bounded, so is the number of nodes of degree at least $3$, as well as the number of inclusion-maximal paths in $\C{T}$ containing only nodes of degree $2$.
So intuitively, we can guess whether the ``closest'' descendant of $\alpha$ in the solution is a leaf, a node of degree at least three, or a node lying on a maximal degree-$2$ path (and how many solution cycles appear below it).

 Once we have the above information (the number of possible guesses is
    $2^{\Ocal(k)}$), we know precisely the descendants that are leaves or
    nodes of degree at least three, and the unknown ones lie on maximal
    degree-$2$ paths.
 The crucial property, which holds only for packing edge-disjoint cycles,
    is the following: if a solution exists, then there also exists a solution
    that satisfies ``strict laminarity'' in the following sense:  for any two distinct shortest cycles $C,C'$ in the solution, one of the following holds: i) $C$ is embedded strictly inside (modulo the vertices) the region bounded by $C'$, ii) $C'$ is embedded strictly inside (modulo the vertices) the region bounded by $C$, or iii) the regions bounded by the curves $C$ and $C'$ are disjoint (modulo the vertices). 
In the above, (roughly speaking) the region bounded by a cycle $C$ is always the one not containing the outer face. 
Now to identify the unknown descendant that lies in a given maximal degree $2$ path, using ``strict laminarity'', we can show that  greedily choosing the ``closest''  non-intersecting shortest cycle already works. 
The above gives us the $2^{O(k)}\cdot n^{O(1)}$-time algorithm for {\sc Edge-Disjoint Shortest Cycle Packing}.

\input{Figures/fig-4c}

\subparagraph{Handling the general case of \maxddisjointcoveragefull.} 
The algorithmic steps we discussed for {\sc Edge-Disjoint Shortest Cycle Packing}
up to the greedy choice
    handling descendants on maximal degree-$2$ paths of $\C{T}$, remain the
    same. However, as the strict laminarity property completely breaks down when cycles can share edges, the greedy strategy does not work. 
    More precisely, two cycles in different maximal degree-$2$ paths may intersect and the greedy strategy falls apart.
   Instead, our approach is to handle the choices of descendants across
    different maximal degree-$2$ paths simultaneously, in one go. To achieve
    this, we need a better understanding of the structure of the solution we
    are looking for, as we explain next.
    
Let  $\alpha$ be a node of $\C{T}$ and let the ``closest'' tree cycles in an optimal solution that appear ``below'' $\alpha$ in $\C{T}$ be the unknown cycles $\beta_1,\beta_2,\cdots, \beta_b$. In the above, by ``closest'' we mean that there is no tree cycle in the solution that is strictly between $\alpha$ and $\beta_i$, $i\in [b]$. Recall that we cannot precisely guess the $\beta_i$s that belong to maximal degree $2$ paths. Let us take a closer look at the ``unknown'' cycle $\beta_i$, which belongs to a maximal degree $2$ path $Q$ in the subtree $\C{T}_\alpha$ of $\C{T}$ rooted at $\alpha$. Furthermore, let $\gamma_i$ be the endpoint of $Q$ that is closest to $\alpha$. Using planarity, we can guarantee that the edges that $\beta_i$ may share with some $\beta_j$, where $j\in [b]\setminus \{i\}$ must also belong to $\gamma_i$ (see \Cref{fig:4c-1}) and such intersections are highly structured, as illustrated in the figure. 

Also, by the structure of intersections of shortest cycles, we can guarantee that $P_i = \gamma_i \cap \beta_i$ is a sub-path of $\gamma_i$. These two properties let
    us completely identify the   ``non-trivial'' part of the intersection of $\beta_i$ with other cycles once we know, roughly speaking, the ``start'' and ``end'' cycle intersections lying on $P_i$. Thus, we will use the above information to derive the solution we are looking for. At the above step, we will again exploit planarity to ensure that we can achieve the above in time $2^{O(k)}\cdot n^{O(1)}$. We remark that only ``start'' and ``end'' cycles are relevant for  $\beta_i$. We encode these guessed ``end-cycles'' as a graph, in which we can guarantee that each connected component is a path or a cycle. 
  These highly structured components
    then allow us to identify the correct solution descendants by a second,
    separate dynamic programming.  
 
Let us highlight the aspects of the algorithm that we consider the most
novel. First, the marking scheme of~\cite{psc} inherently costs
$k^{\Ocal(k)}$: it identifies $\Ocal(k^2)$ candidate cycles and guesses the
solution among them. To obtain a single-exponential running time, every
guess in our algorithm must range over objects whose number is linear in
$k$ (the longitudinal paths covering the non-tree cycles, the marked
representatives, and the intersection types) and it is planarity that keeps
the product of all these choices single-exponential: a sparsity argument in
the dual graph shows that the total size of the ``relevant sets'' of all
representatives is linear in their number, so that the intersection types
can be enumerated in $2^{\Ocal(k)}$ time overall. Second, for $d>0$ the
strict laminarity property, on which the approach of~\cite{psc} (as well as
our simpler algorithm for the edge-disjoint case) crucially relies,
completely breaks down: an optimal solution may be forced to use cycles that
pairwise share edges, and the choices of solution descendants on different
maximal degree-$2$ paths of $\Tcal$ interact through these shared edges.
Our replacement for the greedy step is a second layer of dynamic
programming: the guessed ``start'' and ``end'' intersections define an
auxiliary graph whose connected components are paths and cycles, and the
interactions within each component are resolved by a dynamic program along
it. Finally, the coverage objective introduces a difficulty absent
from pure packing: the algorithm must account exactly for the weight of
edges covered by two solution cycles, including cycles it never explicitly
identifies. The structured intersections in planar graphs (no edge belongs
to three solution cycles, and the pairwise intersections are subpaths of the
known representatives) are precisely what makes this bookkeeping possible.

\section{Preliminaries}
\label{sec:preliminaries}
\noindent
{\bf Numbers and Sets.} We denote the set of natural numbers by $\mathbb{N} = \{0,1,2,\ldots\}$. For $p\in \mathbb{N}$, we use $[p]$ to denote the set $\{1,\ldots,p\}$.
The set of integers is denoted by $\mathbb{Z}$, and $\mathbb{Z}_{>0} = \{a\in \mathbb{Z} \mid a>0\}$. 
We will use the following folklore result, which can be obtained using the well-known AM-GM inequality. 

\begin{proposition}\label{prop:num-sum-prod}
For $z, z_1,z_2,\cdots, z_\ell \in \mathbb{N}$, where $z = \sum_{i\in [\ell]}z_i$, we have $\prod_{i\in [\ell]}z_i \in 2^{O(z)}$. 
\end{proposition}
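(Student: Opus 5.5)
We prove the bound $\prod_{i\in[\ell]} z_i \le e^{z/e} < 2^{z}$, which gives the claim. The plan is to take the arithmetic–geometric mean inequality as the main tool and remove the dependence on $\ell$ by optimizing over the number of factors.

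First, dispose of degenerate cases. If some $z_i=0$, then the product is $0$ and there is nothing to prove. So we may assume every $z_i\ge 1$. In particular $\ell\le z$, and if $\ell=0$ the empty product is $1\in 2^{O(z)}$.

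Next, apply AM–GM to the positive numbers $z_1,\dots,z_\ell$:
\[
\prod_{i\in[\ell]} z_i \le \Bigl(\frac{1}{\ell}\sum_{i\in[\ell]} z_i\Bigr)^{\ell} = \Bigl(\frac{z}{\ell}\Bigr)^{\ell}.
\]

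It remains to bound $(z/\ell)^{\ell}$ uniformly in $\ell$. Relax $\ell$ to a real variable $x>0$ and consider $f(x)=x\ln(z/x)$. Its derivative is $f'(x)=\ln(z/x)-1$, which vanishes at $x=z/e$. Since $f$ is concave, this critical point is the maximum, with value $f(z/e)=z/e$. Therefore
\[
\Bigl(\frac{z}{\ell}\Bigr)^{\ell}\le e^{z/e}=2^{z/(e\ln 2)}\le 2^{z}.
\]
This step carries the only real content: without the optimization in $\ell$, AM–GM alone leaves an exponent depending on $\ell$, which is not single-exponential in $z$.

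As a sanity check that avoids calculus, each factor satisfies $z_i\le 2^{z_i}$ for natural numbers $z_i$. Multiplying these gives $\prod_i z_i\le 2^{\sum_i z_i}=2^{z}$ directly, which confirms the bound and could replace the argument above if a shorter proof is preferred.
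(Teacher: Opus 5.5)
Your proof is correct and follows the route the paper indicates. The paper gives no argument beyond calling the bound folklore and attributing it to AM--GM, and your maximization of $(z/\ell)^{\ell}$ over $\ell$ (giving $e^{z/e}\le 2^{z}$) supplies exactly the missing step. Your closing remark that $z_i\le 2^{z_i}$ for every $z_i\in\mathbb{N}$ already yields $\prod_{i\in[\ell]} z_i\le 2^{z}$ on its own, and it is the shortest complete proof. The only nit is that the strict inequality $e^{z/e}<2^{z}$ in your first sentence fails at $z=0$, where both sides equal $1$.
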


\smallskip
\noindent
{\bf Graphs \& weighted graphs.} Throughout this paper, we consider finite undirected graphs and refer to the textbook of Diestel~\cite{Diestel} for any undefined graph-theoretic notions. Unless stated otherwise, all graphs are simple.

For a graph $G$, $V(G)$ and $E(G)$ denote the vertex set and the edge set of $G$, respectively. We denote the number of vertices and edges of $G$ by $n$ and $m$, respectively. A graph together with an edge-weight function $w: E(G) \rightarrow \mathbb{Z}_{>0}$ is a {\em weighted graph}. Unless stated otherwise, whenever we write weighted graph, we will mean an edge weighted graph. For $U\subseteq V(G)$, we write $G[U]$ for the subgraph of $G$ induced by $U$, i.e., $V(G[U])=U$ and $E(G[U]) = \{\{u,v\}\in E(G)\mid u,v\in U\}$. For two graphs $G_1$ and $G_2$, their \emph{intersection} is the graph $G_1\cap G_2$ with $V(G_1\cap G_2)=V(G_1)\cap V(G_2)$ and $E(G_1\cap G_2)=E(G_1)\cap E(G_2)$, while their \emph{union} $G_1\cup G_2$ is the graph with $V(G_1\cup G_2)=V(G_1)\cup V(G_2)$ and $E(G_1\cup G_2)=E(G_1)\cup E(G_2)$.

Consider a weighted graph $G$ with $w: E(G) \rightarrow \mathbb{Z}_{>0}$. For $X \subseteq E(G)$, $w(X) = \sum_{e\in X}w(e)$. For a subgraph $H$ of $G$, $w(H) = w(E(H))$. An ordered sequence of distinct vertices $P=(u_1=u,u_2,\ldots,u_\ell=v)$ is a \emph{path} in $G$ between vertices $u$ and $v$ if $\{u_i,u_{i+1}\}\in E(G)$ for all $1\le i\le \ell-1$. We use the notation $P(u,v)$ to denote a path between two vertices $u,v$; alternatively we say that $P$ is a $u-v$ path. The vertices $u_1$ and $u_\ell$ are the \emph{endpoints} of $P$, and the remaining vertices are its \emph{internal} vertices. 
A \emph{subpath} of $P$ is any path of the form $(u_i,u_{i+1},\ldots,u_j)$ for $1\le i\leq j \le \ell$.
The length of a path is defined as the sum of the weights of edges of the path, i.e., $w(P) = \sum_{i \in [l-1]}{w(\{u_i, u_{i+1}}\})$. 
For two vertices $u$ and $v$ of $G$, the distance between them is the length of the shortest path between them. A \emph{cycle} is a path along with an additional edge between the two endpoints of the path.
The {\em length} of a cycle is the sum of the weights of its edges.
A {\em shortest cycle} of a graph is the smallest length cycle.
The \emph{girth}, $g(G)$, is the minimum-weight  of a cycle in $G$, i.e. $g(G) = \min\{w(C)\mid C\mbox{ is a cycle in } G\}$. We say that the graph $G$ is \emph{clean} if every edge and vertex belongs to some shortest cycle. 

\smallskip
\noindent
{\bf Planar graphs.} A graph is \emph{planar} if we can draw it on a plane without any pair of edges crossing (except at the shared endpoint, if any), and such a drawing is called a \emph{planar embedding} of the graph. 
We can find a planar embedding of a planar graph in linear time \cite{hopcroft_tarjan_planarity}. A planar graph $G$ with a fixed embedding on the plane is a {\em plane} graph. The \emph{faces} of a plane graph are the open regions of the plane bounded by edges and containing no vertices or edges in their interior. The outer region is the \emph{external} face, and the other faces are \emph{internal}. The vertices and edges appearing on the closed walk bounding a face form its \emph{frontier}. A cycle $C$ of $G$ is called a \emph{facial cycle} if it is the frontier of some face.

For a shortest cycle $C$ of a plane graph $G$, we denote by $G_{C}$ the subgraph of $G$ induced by the vertices lying on $C$ or in the interior $\Int(C)$ of the region bounded by the curve of $C$ in the plane. (In the above by interior region, we mean the region not containing the outer face.) 
For the two shortest cycles $C_1$ and $C_2$ in the plane graph $G$, we say $C_1 \leq C_2$ if every vertex and edge of $C_1$ is either a vertex or edge of $C_2$ or is embedded in the curve of $C_2$, i.e., $\Int(C_2)$.

The \emph{dual} graph of a plane graph $G$, denoted $G^*=(F(G),E^*)$, has one vertex for each face of $G$. For every edge $e\in E(G)$, there is a corresponding \emph{dual} edge $e^*\in E^*$ whose endpoints are the two faces whose frontiers contain $e$; if $e$ lies on the frontier of exactly one face (i.e., $e$ is a bridge), then $e^*$ is a self-loop at that face. Note that $G^*$ need not be simple even if $G$ is simple. 
    
\smallskip
\noindent
\textbf{Parameterized Complexity.}
We refer to the textbooks by Cygan et al.~\cite{cygan2015parameterized} and by Downey and Fellows~\cite{DowneyF99} for a comprehensive introduction to parameterized complexity. Here, we briefly recall the main definitions.

A \emph{parameterized problem} is a language $L\subseteq \Sigma^*\times\mathbb{N}$, where $\Sigma^*$ denotes the set of finite strings over a finite alphabet $\Sigma$. An instance of $L$ is a pair $(x,k)$, where $x\in\Sigma^*$ is the input and $k\in\mathbb{N}$ is the \emph{parameter}. A parameterized problem is \emph{fixed-parameter tractable} (\classFPT, for short) if it can be solved in time $f(k)\cdot |x|^{\mathcal{O}(1)}$ for some computable function $f$. The class \classFPT consists of all fixed-parameter tractable parameterized problems. 
%

For two parameterized problems $A,B \subseteq \Sigma^{*} \times\mathbb{N}$, a \emph{parameterized reduction} from $A$ to $B$ is an algorithm which given an instance $(x,k)$ of $A$ produces an equivalent instance $(x',k')$ of $B$ in time $f(k)\cdot|x|^{\Ocal(1)}$ and the parameter $k' \leq g(k)$ for some computable functions $f$ and $g$.

The standard approach to ruling out the existence of an \FPT-algorithm for a parameterized problem, under the standard assumptions in parameterized complexity that $\FPT \neq W[1]$, is to prove that the problem is \emph{$W[1]$-hard}. 
In particular, if there exists a \emph{parameterized reduction} from a $W[1]$-hard parameterized problem $A$ to a parameterized problem $B$, then $B$ is also $W[1]$-hard.

\section{Hardness of \dsce}\label{sec:hardnes}
    
In this section, we prove~\Cref{thm:hard_generalgraphs}. 
We restate it here for the reader's convenience. 

\Whardness*

We reduce from a well-known \WOH\ problem called \mcc~\cite{misFellowsHRV09,DBLP:journals/jcss/Pietrzak03}, which is formally defined below.

\defparproblema{\mcc}{A graph $G$ with a partition $V_1,V_2, \ldots, V_h$ of $V(G)$.}
{h}
{Decide whether there is a clique $X \subseteq V(G)$ such that for each $i \in [h]$, $|X \cap V_i|=1$.}



Let $(G, V_1,V_2, \dots, V_h)$ be an instance of \mcc. 
We assume that $h \geq 3$, as otherwise the instance can be resolved in polynomial time. 
Furthermore, we assume that for all distinct $i,j\in [h]$, the number of edges in $G$ with one endpoint in $V_i$ and the other in $V_j$ is exactly the same integer $m\geq 1$.
If this is not the case, then we can add edges with their endpoints in $V_i$ and $V_j$ forming connected components of the graph. Since $h\geq 3$, this does not change the answer. Finally, we assume that 
for all $i \in [h]$, the size of $V_i$ is the same integer  $n\geq 1$ (otherwise, we can add isolated vertices without changing the answer). 

We set $\ell=(hm+1)(2m+1)n$ and define $L=2\ell$. 
We construct the graph $G'$ with $g(G')=L$. In the construction, we use two blocks, called \emph{edge} and \emph{vertex selection gadget}, respectively.  
Copies of the edge selection gadgets are constructed for all pairs of indices $i,j\in[h]$ such that $i<j$. 
The crucial property of the gadget is that it allows us to choose exactly $m$ sets of four edge-disjoint shortest cycles corresponding to the selection of one edge in $G$ with its endpoints in $V_i$ and $V_j$. 
We construct a copy of the vertex selection gadget for each $i\in[h]$. 
This gadget has $n$ possibilities to choose sets of pairs of edge-disjoint shortest cycles, and each choice corresponds to the selection of a vertex of $V_i$. 
In the final stage of our reduction, we construct $G'$ using edge and vertex selection gadgets to ensure that $G'$ has $k=2h+4\binom{h}{2}$ edge-disjoint shortest cycles if and only if $G$ has a multicolored clique of size $h$.

\begin{figure}[ht]
 \begin{center}
 \scalebox{0.7}{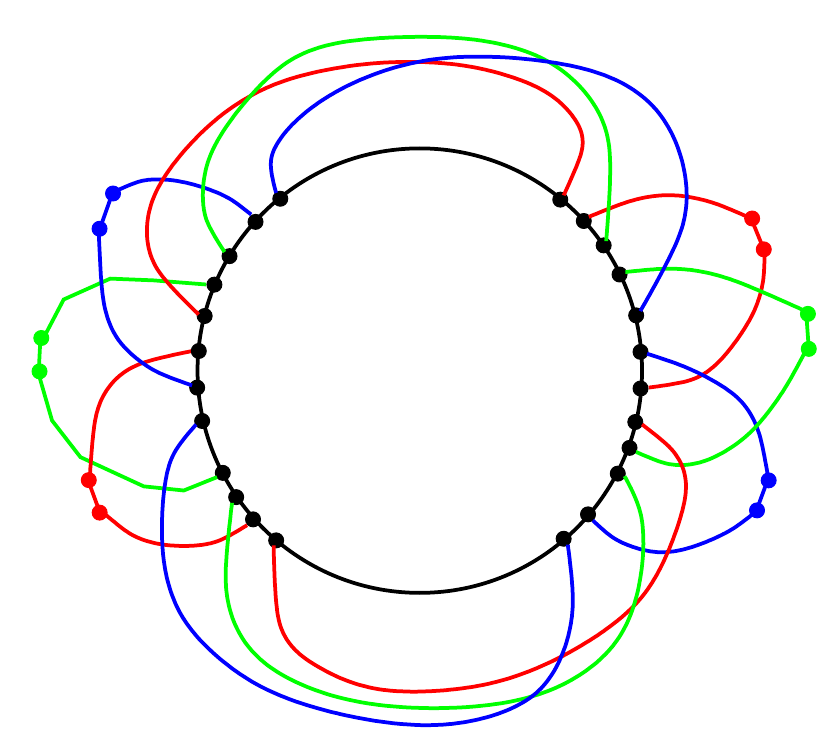}
 \end{center}
 \caption{Construction of the edge selection gadget. Notice that because $2m-1<\ell$, 
 $P_m$ has one endpoint on the shortest path in $C$ between the endpoints of $Q_1$ and, symmetrically, 
 $P_m'$ has one endpoint on the shortest path in $C$ between the endpoints of $Q_1'$.
 }\label{fig:esg}
 \end{figure}

\subparagraph{Edge selection gadget.} We construct the edge selection gadget $F$ as follows (see~\Cref{fig:esg}).
\begin{itemize}
\item[(i)] Construct a cycle $C=v_0\dots v_{p-1}v_0$ of length $p=L+4m+2$; we assume that for $i\geq p$, $v_i=v_{i\mod p}$. 
\item[(ii)] For each $i\in[m]$, construct 
\begin{itemize}
\item a $(v_{2i-1},v_{2i+2m-2})$-path $P_i$ of length $L-2m+1$,
\item a $(v_{2i+2m-1},v_{2i+2m+\ell-1})$-path $Q_i$ of length $\ell$,
\item a $(v_{2i+2m+\ell},v_{2i+4m+\ell-1})$-path $P_i'$ of length  $L-2m+1$,
\item a $(v_{2i+4m+\ell},v_{2i+4m+2\ell})$-path $Q_i'$ of length $\ell$.
\end{itemize}
\end{itemize}
Notice that for each $i\in[m]$, the length of $P_i$ and $P_i'$ is odd. 
We select the middle edges of these paths and denote them by $e_i$ and $e_i'$, respectively. We say that the $m$-tuples $(e_1,\ldots,e_m)$ and $(e_1',\ldots,e_m')$ are \emph{roots} of $F$. 
For $i\in[m]$, we say that $e_i$ and $e_i'$ are opposite.
For every $i\in [m]$, we denote by $C_i,C_i',S_i,S_i'$ the cycles obtained from the paths $P_i,P_i',Q_i,Q_i'$, respectively, by connecting their endpoints by the shortest paths in the cycle $C$.

The properties of the edge selection gadget are summarized in the following lemma.

\begin{lemma}\label{lem:esg}
The edge selection gadget $F$ has the following properties:
\begin{itemize}
\item[(i)] $\Delta(F)\leq 3$ and the endpoints of the root edges are of degree two,
\item[(ii)] for every two endpoints $x$ and $y$ of distinct root edges, the distance between $x$ and $y$ is at least $L-2m+1$,
\item[(iii)] $g(F)=L$, and for each $i\in[m]$, $C_i,C_i',S_i,S_i'$ are shortest cycles,
\item[(iv)] $F$ contains at most four edge-disjoint shortest cycles, and $\{C_i,C_i',S_i,S_i'\}$ for $i\in[m]$ are the only packings of four edge-disjoint shortest cycles,
\item[(v)] for each $i\in[m]$, there is a packing of four edge-disjoint shortest cycles such that 
any root edge except $e_i$ and $e_i'$ is not contained in the cycles of the packing, and
for any packing of four edge-disjoint shortest cycles, there is a unique $i\in[m]$ such that 
the opposite roots $e_i,e_i'$ are in the cycles of the packing and all other roots are not included in the cycles. 
\end{itemize}
\end{lemma}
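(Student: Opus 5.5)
The plan is to exploit the fact that $F$ is the cycle $C$ together with $4m$ internally vertex-disjoint ``chords'' $P_i,Q_i,P_i',Q_i'$ whose endpoints lie on $C$. First I will list the chord endpoints as indices modulo $p$:
\[
2i-1,\quad 2i+2m-2,\quad 2i+2m-1,\quad 2i+2m+\ell-1,\quad 2i+2m+\ell,\quad 2i+4m+\ell-1,\quad 2i+4m+\ell,\quad 2i+4m+2\ell,
\]
for $i\in[m]$. I will check, using parity, the ranges of these indices, and $p=2\ell+4m+2$, that they are pairwise distinct vertices of $C$. Note that $2i+4m+2\ell\equiv 2i-2 \pmod p$, which ranges over $0,2,\dots,2m-2$.

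From this, (i) is immediate. Every vertex of $C$ gets at most one extra edge, and internal chord vertices have degree two. All chords have length at least $\ell\ge 3$, so the middle edges $e_i,e_i'$ of $P_i,P_i'$ have internal vertices as endpoints. For (ii), an endpoint of $e_i$ lies at distance at least $(L-2m)/2$ from both ends of $P_i$, and any path leaving $P_i$ must pass through an end of the chord. A path between endpoints of two distinct root edges therefore contains two such chord segments, and it also contains at least one edge of $C$, because distinct chords have distinct endpoints. This gives length at least $L-2m+1$.

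For (iii), I will classify the cycles of $F$. A cycle either equals $C$ (length $p>L$), or it uses $t\ge1$ chords together with arcs of $C$ joining their endpoints. If $t=1$, the cycle is the chord plus one of the two arcs between its endpoints. By construction the shorter arc has length $2m-1$ for $P$-type chords and $\ell$ for $Q$-type chords, so the shorter arc gives exactly $L$ and the longer arc gives strictly more. If $t\ge2$, the cycle contains at least two chords, each of length at least $\ell$, plus at least one edge of $C$, since the endpoints are distinct. Its length is therefore greater than $2\ell=L$. Hence $g(F)=L$, and the shortest cycles are exactly the $4m$ cycles $C_i,S_i,C_i',S_i'$.

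For (iv) and (v), two shortest cycles are edge-disjoint if and only if their arcs on $C$ are edge-disjoint, since the chords are disjoint. The arcs are the following intervals:
\begin{itemize}
\item $C_i$: $[2i-1,\,2i+2m-2]$;
\item $S_i$: $[2i+2m-1,\,2i+2m+\ell-1]$;
\item $C_i'$: $[2i+2m+\ell,\,2i+4m+\ell-1]$;
\item $S_i'$: $[2i+4m+\ell,\,2i+4m+2\ell]$.
\end{itemize}
Two arcs from the same family always overlap, because their starting points differ by at most $2m-2$, which is less than each arc length. So a packing contains at most one cycle per family, hence at most four cycles.

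In a packing of four, denote the chosen cycles by $C_i,S_j,C'_k,S'_t$. Disjointness of consecutive arcs around the circle forces $j\ge i$, then $k\ge j$, then $t\ge k$. The wrap-around condition uses $2t+4m+2\ell\equiv 2t-2$ and requires $2t-2\le 2i-1$, that is, $t\le i$. Therefore $i=j=k=t$, and conversely each such quadruple is indeed a packing, which gives (iv). Statement (v) then follows directly: the root edges lie only on the $P$-chords, and the $i$-th packing uses exactly the chords $P_i$ and $P_i'$.

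The main obstacle is the careful modular index bookkeeping. This affects the distinctness check in (i) and especially the interval-disjointness argument with wrap-around in (iv). I would verify it against \Cref{fig:esg}, using $2m-1<\ell$ to rule out accidental overlaps across families.
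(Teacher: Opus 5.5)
Your proposal is correct and follows essentially the same route as the paper. The chords attach to $C$ at distinct endpoints, which gives (i)--(ii); every shortest cycle is a single chord plus its short arc, which gives (iii); and the at-most-one-per-family observation together with the cyclic chain $i\le j\le k\le t\le i$ gives (iv)--(v). The difference is that you spell out the index bookkeeping the paper leaves to the construction and figure, namely the distinctness of the $8m$ chord endpoints and the wrap-around $2t+4m+2\ell\equiv 2t-2 \pmod p$.
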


\begin{proof}
Property (i) immediately follows from the construction of $F$. We have (ii) because the root edges are the middle edges of paths $P_i,P_i'$ for $i\in[m]$. Since these paths are of length $L-2m+1$ and the endpoints of distinct paths are distinct, the distance between $x$ and $y$ is at least $L-2m+1$ for any  two endpoints $x$ and $y$ of distinct root edges. 

To see (iii), note that for all $i\in[m]$, the length of the cycles $C_i,C_i',S_i,S_i'$ is $L$ by the definition of the cycles. To show that $g(F)=L$, observe that the length of $C$ is $p=L+4m+2>L$. Therefore, for any shortest cycle $R$, there is $i\in[m]$ such that $R$ contains an edge of one of the paths $P_i,P_i',Q_i,Q_i'$. Suppose that $R$ contains an edge of $P_i$. Notice that the distance between the endpoints of $P_i$ is $2m-1$ by the construction of $F$. Moreover, the shortest path between the endpoints of $P_i$ is unique, and it is the shortest path between these vertices in $C$. Thus, $R=C_i$.  
Similarly, if $R$ contains an edge of $Q_i$ then $R=S_i$. The cases when $R$ contains an edge of $P_i'$ or $Q_i'$ are symmetric. This proves that $g(F)=L$. Moreover, we proved that $\bigcup_{i=1}^m\{C_i,C_i',S_i,S_i'\}$ is the set of all shortest cycles in $F$. 

To prove (iv), observe that for each $i\in[m]$, $\{C_i,C_i',S_i,S_i'\}$ is a packing of four edge-disjoint cycles. Notice that the cycles $C_i$ and $C_j$ for $i,j\in[m]$ have common edges by the construction of $F$, and the same holds for $C_i'$ and $C_j'$, $S_i$ and $S_j$, and $S_i'$ and $S_j'$. 
Thus, every packing of four edge-disjoint shortest cycles is $\{C_{i_1},C_{i_2}',S_{i_3},S_{i_4}'\}$ for some $i_1,i_2,i_3,i_4\in[m]$, and $F$ has no packing of size five.
If $i_1>i_3$, then $C_{i_1}$ and $S_{i_3}$ share an edge by the construction of $F$. Therefore, $i_1\leq i_3$. Similarly, we have that $i_3\leq i_2$ and $i_2\leq i_4$, that is, $i_1\leq i_3\leq i_2\leq i_4$. If $i_4>i_1$, then $C_{i_1}$ and $S_{i_4}'$ have a common edge.  
We conclude that $i_1=i_2=i_3=i_4$. This proves that $\{C_i,C_i',S_i,S_i'\}$ for $i\in[m]$ are the only packings of four edge-disjoint shortest cycles.

Property (iv) immediately implies (v) 
and this concludes the proof.  
\end{proof}

\begin{figure}[ht]
 \begin{center}
 \scalebox{0.7}{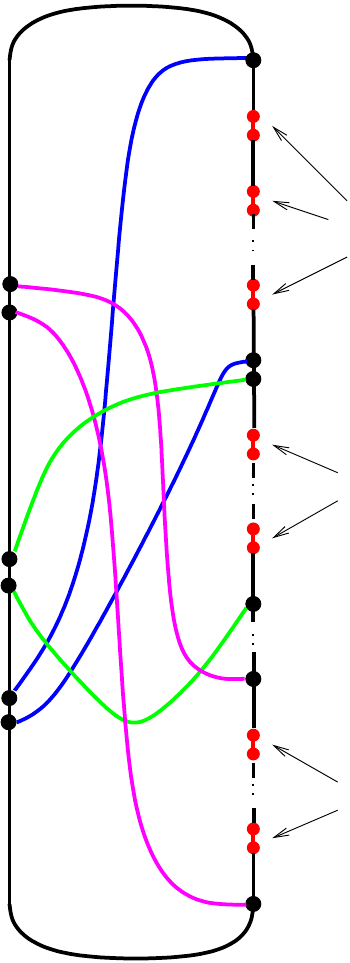}
 \end{center}
 \caption{Construction of the vertex selection gadget. The slot edges are shown in red. Notice that because $(hm+1)(2m+1)(n-1)<\ell$, 
 $P_n$ has one endpoint on the shortest path in $C$ between the endpoints of $P_1'$.
 }\label{fig:vsg}
 \end{figure}

\subparagraph{Vertex selection gadget.} Next we construct the vertex selection gadget $H$ (see~\Cref{fig:vsg}). Its construction follows the ideas of the hardness proof in~\cite[Theorem~1.2]{psc}.
\begin{itemize}
    \item[(i)] Construct a cycle $C=u_0\dots u_{q-1}u_0$ of length $q=(hm+1)(2m+1)+L$; we assume that for $i\geq q$, $u_i=u_{i\mod q}$,
    \item[(ii)] set $r=(hm+1)(2m+1)$, and for each $i\in[n]$, 
    \begin{itemize}
        \item construct a $(u_{ri-1},u_{ri-1+\ell})$-path $P_i$ of length $\ell$, 
        \item construct a $(u_{ri+\ell},u_{ri+L})$-path $P_i'$ of length $\ell$.
    \end{itemize}
\end{itemize}
For each $i\in[n]$, denote by $Q_i$ the shortest $(u_{r(i-1)},u_{ri-1})$-path in $C$. Each $Q_i$ is of length $r-1=(hm+1)(2m+1)-1$. For every $i\in[n]$, we denote by $S_i$ the set of $hm$ edges of $Q_i$ at distance $2m$ from each other and the endpoints of $Q_i$. We say that the edges of $S_i$ are \emph{slot} edges. For each $i\in[n]$, we denote by $C_i$ ($C_i'$, respectively) the cycle obtained from $P_i$ ($P_i'$, respectively) by connecting its endpoints by the shortest path between them in $C$.
Given a packing of two edge-disjoint shortest cycles, we say that a slot edge $e$ is \emph{free} if it is not included in the cycles of the packing.

We summarize the properties of the vertex selection gadget in the next lemma. 


\begin{lemma}\label{lem:vsg}
The vertex selection gadget $H$ has the following properties:
\begin{itemize}
\item[(i)] $\Delta(H)\leq 3$ and the endpoints of the slot edges are of degree two,
\item[(ii)] for every two endpoints $x$ and $y$ of distinct slot edges, the distance between $x$ and $y$ is at least $2m$,
\item[(iii)] $g(H)=L$, and for each $i\in[n]$, $C_i,C_i'$ are shortest cycles,
\item[(iv)] $H$ contains at most two edge-disjoint shortest cycles, and $\{C_i,C_i'\}$ for $i\in[n]$ are the only packings of two edge-disjoint shortest cycles,
\item[(v)] for every $i\in[n]$, there is a packing of two shortest cycles in $H$ such that the slot edges in $S_i$ are free, and for each packing of two shortest cycles in $H$, there is a unique $i\in[n]$ such that $S_i$ contains free slots.
\end{itemize}
\end{lemma}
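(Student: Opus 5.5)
The proof follows the pattern of the proof of \Cref{lem:esg}, adapted to the vertex selection gadget $H$. First we fix the positions on $C$. We have $q=r+L$ with $r=(hm+1)(2m+1)$. The path $P_i$ attaches at $u_{ri-1}$ and $u_{ri-1+\ell}$, and $P_i'$ at $u_{ri+\ell}$ and $u_{ri+L}$. The distance along $C$ between the endpoints of $P_i$ is $\ell$ in one direction and $r+\ell$ in the other, so the shortest path is the arc of length $\ell$, and $C_i$ has length $2\ell=L$. The same computation gives $w(C_i')=L$.

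For property (i), all attachment vertices are distinct, since their indices $ri-1$, $ri-1+\ell$, $ri+\ell$, $ri+L$ are distinct modulo $q$. Hence $\Delta(H)\le 3$. The slot edges lie on $Q_i$, which runs from $u_{r(i-1)}$ to $u_{ri-1}$. They sit strictly inside $Q_i$, at distance at least $2m$ from its endpoints, and $Q_i$ meets no attachment vertex in its interior. So the slot endpoints have degree two. Property (ii) comes from the spacing $2m$ along $C$. Shortcuts through some $P_j$ or $P_j'$ cost at least $\ell>2m$, so they cannot make two slot endpoints closer.

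For property (iii), any cycle other than $C$ (whose length $q>L$) uses some ear $P_j$ or $P_j'$ of length $\ell$. A cycle using exactly one ear has length $\ell+$ (an arc between its endpoints) $\ge 2\ell=L$. Equality holds only for $C_j$ or $C_j'$, because the length-$\ell$ arc is the unique shortest one. A cycle using two or more ears has length at least $2\ell$ plus the arcs joining them. These arcs are nonempty because all endpoints are distinct, so the cycle is longer than $L$. Thus $g(H)=L$, and the shortest cycles are exactly the $C_j$ and $C_j'$. This is the step needing the most care, namely checking the cyclic index arithmetic for the multi-ear case.

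For property (iv), we read off the arcs. $C_i$ uses the arc $[ri-1,ri-1+\ell]$ and $C_i'$ uses $[ri+\ell,ri+L]$. Since $r(n-1)<\ell$, every arc of a $C_j$ overlaps every other $C_{j'}$ arc, and likewise for the primed cycles. So a packing contains at most one $C$ and one $C'$, and has size at most two. Now $C_a$ and $C_b'$ are edge-disjoint iff their arcs do not overlap. The condition $ra-1+\ell\le rb+\ell$ forces $a\le b$. The other end, modulo $q$, forces $rb+L\le ra-1+q$, i.e.\ $r(b-a)\le r-1$, so $b\le a$. Hence $a=b$, and this overlap computation is the main obstacle.

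For property (v), take the packing $\{C_i,C_i'\}$. Its arcs cover $C$ from $u_{ri-1}$ to $u_{ri+L}$, which wraps around to $u_{r(i+1)-1-r}+\dots$. Their complement is exactly the interior of $Q_i$, up to its boundary edge. So all slot edges of $S_i$ are free, while every $Q_j$ with $j\ne i$ is covered, which gives uniqueness.
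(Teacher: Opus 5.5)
Your proof is correct and follows the same route as the paper: (i)--(ii) from the construction, (iii) by observing that a shortest cycle must use an ear and that the length-$\ell$ arc is the unique shortest way back, (iv) by pairwise arc overlaps forcing $a=b$ in $\{C_a,C_b'\}$, and (v) from (iv). You simply make explicit the index arithmetic the paper leaves as ``by the construction''. One small slip in (v), which does not affect the conclusion: the edges left uncovered by $\{C_i,C_i'\}$ are all of $E(Q_i)$ plus the single edge $u_{ri-1+\ell}u_{ri+\ell}$ between the two arcs, not ``the interior of $Q_i$''; that extra edge lies on no $Q_j$.
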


\begin{proof}
Properties (i) and (ii) follow immediately from the construction of $H$.

To show (iii), observe that for each $i\in[n]$, the length of $C_i$ and $C_i'$ is $L$. Consider an arbitrary shortest cycle $R$ in $H$. Since the length of $C$ is $q=(hm+1)(2m+1)+L>L$, $R$ contains an edge $e$ outside $C$. Then there is $i\in [n]$ such that either $e\in E(P_i)$ or $e\in E(P_i')$. Suppose that $e\in E(P_i)$ (the case $e\in E(P_i')$ is symmetric). Then $R$ contains $P_i$. Notice that the shortest path between the endpoints of $P_i$ in $C$ is the unique shortest path between these vertices in $H$. Thus, $R=C_i$. By symmetry, $R=C_i'$ if $e\in E(P_i')$. This proves (iii). Also, we obtain that 
$\bigcup_{i=1}^n\{C_i,C_i'\}$ is the set of shortest cycles of $H$.

To prove (iv), notice that $C_i$ and $C_i'$ are edge-disjoint shortest cycles for $i\in[n]$. Then each $\{C_i,C_i'\}$ is a packing of two edge-disjoint shortest cycles. To show that we have no other packings of size at least two, observe
that for distinct $i,j\in[n]$, $C_i$ and $C_j$ have common edges by the construction of $H$, and the same holds for $C_i'$ and $C_j'$. Then any packing of edge-disjoint shortest cycles is of the form $\{C_i,C_j'\}$ for some $i,j\in [n]$. However, if $i\neq j$, $C_i$ and $C_j'$ have common edges by the construction of $H$. This proves (iv).

Finally, (iv) together with the construction of $H$ immediately imply (v). This concludes the proof.
\end{proof}

Now we are ready to construct $G'$ and complete the proof of \Cref{thm:hard_generalgraphs}.

\begin{figure}[ht]
 \begin{center}
 \scalebox{0.7}{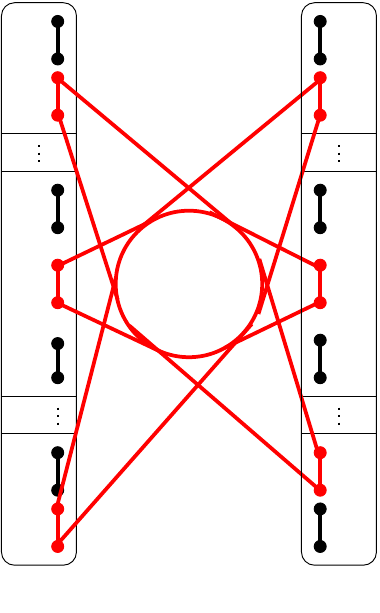}
 \end{center}
 \caption{Construction of $G'$.
 }\label{fig:equiv}
 \end{figure}

\subparagraph{Construction of $G'$.} Consider the instance $(G,V_1,\ldots,V_h)$ of \mcc. 
Let $V_i=\{x_1^i,\ldots,x_n^i\}$. For $i,j\in [h]$ such that $i<j$, let $f_1^{(i,j)},\ldots,f_m^{(i,j)}$ be the edges with their endpoints in $V_i$ and $V_j$.
We construct $G'$ as follows.
\begin{itemize}
    \item[(i)] For each $i\in [h]$, construct a copy $H_i$ of the vertex selection gadget $H$. Denote by $S_1^i,\ldots,S_n^i$ the sets of slot edges in $H_i$. 
    \item[(ii)] For each pair of indices $i,j\in [h]$ such that $i<j$, construct a copy $F_{i,j}$ of the edge selection gadget $F$. 
    \item[(iii)] For each pair of indices $i,j\in [h]$ such that $i<j$ and each $t\in[m]$, consider 
    $f_t^{(i,j)}$ and assume that $f_t^{(i,j)}=x_a^ix_b^j$ for some $a,b\in[n]$. Then we identify the root edge $e_t$ of $F_{i,j}$ and an arbitrary slot edge in $S_a^i$ that was not identified with any root edge in the previous steps, and identify the opposite root edge $e_t'$ of $F_{i,j}$ and an arbitrary slot edge in $S_b^j$ that was not identified with any root edge before.
\end{itemize}
The construction of $G'$ is shown in~\Cref{fig:equiv}
Notice that because $\Delta(G)\leq hm$ and $|S_a^i|=|S_b^j|=hm$, we always can find unused slot edges in Step~(iii) to perform the identifications. 

We summarize the properties of $G'$ in the next lemma.

\begin{lemma}\label{lem:Gprime}
The constructed graph $G'$ has the following properties:
\begin{itemize}
\item[(i)] $\Delta(G')\leq 3$,
\item[(ii)] $g(G')=L$, and for each shortest cycle $R$, there is either $i\in[h]$ such that $R$ is a shortest cycle in $H_i$ or there are $i,j\in [h]$ with $i<j$ such that $R$ is a shortest cycle in $F_{i,j}$. 
\end{itemize}
\end{lemma}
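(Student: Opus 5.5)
Property (i) is a degree count at the glued vertices. By \Cref{lem:esg}(i) and \Cref{lem:vsg}(i), the gadgets $F_{i,j}$ and $H_i$ have maximum degree at most three. Moreover, the endpoints of root edges and of slot edges have degree two. Each identification in step (iii) merges a root edge with a slot edge, so every endpoint receives at most one extra edge from each side. More precisely, an endpoint of degree two in $F_{i,j}$ and degree two in $H_i$, sharing the identified edge, gets degree $2+2-1=3$. Each slot edge and each root edge is used in at most one identification, so no vertex is touched twice. Hence $\Delta(G')\le 3$.

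For (ii), the gadgets embed into $G'$ (after identification) and contain cycles of length $L$ by \Cref{lem:esg}(iii) and \Cref{lem:vsg}(iii), so $g(G')\le L$. It remains to show that every cycle $R$ of $G'$ not contained in a single gadget has length greater than $L$; the classification then follows from the gadget lemmas. Such an $R$ decomposes into maximal segments, each lying inside one gadget, with consecutive segments meeting at endpoints of identified edges. Since gadgets share only identified edges and their endpoints, $R$ must pass through at least two gadgets, and it enters and leaves each gadget at endpoints of identified edges.

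The key case is a segment inside an edge selection gadget $F_{i,j}$ that connects endpoints of two distinct root edges. By \Cref{lem:esg}(ii), such a segment has length at least $L-2m+1$. The gadget graph (vertex gadgets joined to edge gadgets) is bipartite, and a cycle cannot return within the same pair via the same identified edge. So $R$ either contains at least two such $F$-segments between distinct roots, or it uses a single identified edge pair in a degenerate way. In the first case, $|R|\ge 2(L-2m+1)>L$, since $L=2\ell$ is huge compared with $m$. The degenerate case needs a separate look. If $R$ leaves $F_{i,j}$ and comes back through the same root edge, it passes through that edge's two endpoints. Then the segment in $H_i$ connects the two endpoints of one slot edge without using the edge, which means going around $C$ or through the paths $P$. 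The $F$-part likewise connects the two endpoints of a root edge without using it, and this closes a cycle in $F$ of length at least $L$ minus one plus extra. I would bound the $F$-side as a cycle through $e_t$ minus $e_t$, giving at least $L-1$, and the $H$-side similarly at least $L-1$, so the total is at least $2L-2>L$.

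The main obstacle is making the segment decomposition airtight, in particular the case where $R$ enters a gadget through one endpoint of an identified edge and leaves through the other endpoint of the same edge. Here I would use that any path between the two endpoints of a root edge inside $F$ avoiding that edge closes, with the edge, a cycle of $F$, hence has length at least $g(F)-1=L-1$; the same holds in $H$. This case analysis, together with the fact that distinct roots are far apart while distinct slots are only $2m$ apart (which is why the $F$-side bound must carry the argument), gives $|R|>L$ in all cases.
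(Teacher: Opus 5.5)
Your degree count for (i) is fine. So is your handling of cycles that meet two vertex selection gadgets, and your treatment of the degenerate same-edge case: you bound each side by $L-1$ via the gadget girths, whereas the paper excludes that case by chordlessness of shortest cycles.

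The gap is the dichotomy ``$R$ either contains at least two $F$-segments between distinct roots, or it uses a single identified edge pair in a degenerate way.'' This dichotomy is false. All left roots $e_1,\dots,e_m$ of $F_{i,j}$ are identified with slot edges of the same gadget $H_i$. So $R$ can leave $H_i$ into $F_{i,j}$ through $e_s$ and return to $H_i$ through $e_t$ with $s\neq t$. Such an $R$ consists of exactly one segment in $F_{i,j}$, joining endpoints of distinct roots, and one segment in $H_i$, joining endpoints of distinct slot edges. Bipartiteness of the gadget incidence structure does not rule this out, since the closed walk $H_i\to F_{i,j}\to H_i$ visits an edge selection gadget only once. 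Neither of your bounds applies here. There is no second long $F$-segment, so $2(L-2m+1)$ is unavailable. The two segments also do not close up around a single identified edge, so the $L-1$ bound is unavailable.

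This is precisely the case the parameters are tuned for, and it is the last subcase of the paper's proof. By \Cref{lem:esg}(ii), the $F_{i,j}$-segment has length at least $L-2m+1$. By \Cref{lem:vsg}(ii), the $H_i$-segment has length at least $2m$. Hence $|R|\ge (L-2m+1)+2m=L+1$, a margin of exactly one.

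Your closing remark that ``the $F$-side bound must carry the argument'' is backwards here. On its own, the single $F$-segment gives only $L-2m+1<L$. It is the slot spacing $2m$ on the $H$-side that pushes the length past $L$. Adding this case explicitly closes the argument.
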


\begin{proof}
Property (i) immediately follows from Claims~(i) of \Cref{lem:esg} and \Cref{lem:vsg}. To show (ii), note that $G'$ has cycles of length $L$ that are cycles of the vertex and edge selection gadgets. Moreover, by Claims~(iii) of \Cref{lem:esg} and \Cref{lem:vsg}, these cycles are shortest cycles in the gadgets. Consider a cycle $R$ of $G'$ that is a shortest chordless cycle for which there is no vertex of edge selection gadget containing $R$. We claim that the length of $R$ is at least $L+1$.

Suppose that $R$ contains vertices of at least two copies of the vertex selection gadget. Then, by the construction of $G'$, $R$ contains at least two edge-disjoint paths in some edge selection gadgets 
$F_{i,j}$ and $F_{i,j'}$ whose endpoints are endpoints of distinct root edges of the gadgets. By Claim~(ii) of \Cref{lem:esg}, the length of each of these paths is at least $L-2m+1$. Then the length of $R$ is at least $2(L-2m+1)>L$ by the  choice of $L$. 
Assume that $R$ contains vertices of one vertex selection gadget $H_i$ for some $i\in[h]$. 
Then $R$ contains vertices of one edge selection gadget $F_{i,j}$ for some $j \in [h]$. As $E(R)\setminus E(H_i)\neq\emptyset$ and $E(R)\setminus E(F_{i,j})\neq\emptyset$, $R$ is the union of edge-disjoint paths in $H_i$ and $F_{i,j}$, and the endpoints of these paths in $F_{i,j}$ are the endpoints of the root edges identified with slot edges of $H_i$. 
Since $R$ is a chordless cycle, $R$ does not contain a path in $F_{i,j}$ whose endpoints are the endpoints of the same root edge. If $R$ has two edge-disjoint paths whose endpoints are endpoints of distinct root edges of $F_{i,j}$, the length of $R$ is at least  $2(L-2m+1)>L$ by Claim~(ii) of \Cref{lem:esg}. Suppose that $R$ contains one path in $F_{i,j}$ whose endpoints are endpoints of distinct root edges. Then $R$ is the union of this paths and a path $H_i$ with the same endpoints. By Claims~(ii) of \Cref{lem:esg} and \Cref{lem:vsg}, the length of $R$ is at least $(L-2m+1)+2m>L$. This finishes the case analysis and proves that the length of $R$ is at least $L+1$.

Since any shortest cycle in $G'$ has no chord, we obtain that $g(G')=L$, and each shortest cycle is a shortest cycle of either a vertex or an edge selection gadget. 
This concludes the proof.
\end{proof}

Now we are ready to complete the proof of \Cref{thm:hard_generalgraphs} by showing that the 
instance $(G,V_1,\ldots,V_h)$ of \mcc is equivalent to the constructed instance $(G',k)$ of \dsce.

\begin{lemma}\label{lem:equiv}
The graph $G$ has a multicolored clique if and only if $G'$ has a packing of $k$ edge-disjoint shortest cycles.
\end{lemma}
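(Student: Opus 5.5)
We prove the two directions separately, relying on \Cref{lem:Gprime}(ii) to reduce every statement about shortest cycles of $G'$ to statements about the gadgets. By that lemma, every shortest cycle of $G'$ is a shortest cycle of a single gadget $H_i$ or $F_{i,j}$. Two cycles from different gadgets can share edges only along identified root/slot edges, because gadgets are otherwise disjoint and only single edges are glued together.

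\emph{Forward direction.} Let $X=\{x^1_{a_1},\dots,x^h_{a_h}\}$ be a multicolored clique. In each $H_i$ we take the packing $\{C_{a_i},C'_{a_i}\}$, which leaves all slot edges of $S^i_{a_i}$ free by \Cref{lem:vsg}(v). For each $i<j$, let $t$ be the index with $f^{(i,j)}_t=x^i_{a_i}x^j_{a_j}$, and take $\{C_t,C'_t,S_t,S'_t\}$ in $F_{i,j}$. By \Cref{lem:esg}(v), the only root edges this packing uses are $e_t,e'_t$. These were identified with slot edges in $S^i_{a_i}$ and $S^j_{a_j}$, which are free. The remaining edges of the packing lie in $F_{i,j}$ and are not identified with anything. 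Distinct edge gadgets therefore share no edges, because their root edges are identified with distinct slot edges. This gives $2h+4\binom h2=k$ pairwise edge-disjoint cycles.

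\emph{Backward direction.} Take a packing of $k$ edge-disjoint shortest cycles and partition it by gadget. By \Cref{lem:esg}(iv) and \Cref{lem:vsg}(iv), each $F_{i,j}$ contributes at most $4$ cycles and each $H_i$ at most $2$. The total is $k$, so every gadget is saturated. By \Cref{lem:vsg}(iv), in each $H_i$ the packing is $\{C_{a_i},C'_{a_i}\}$ for a unique $a_i$. By \Cref{lem:vsg}(v), the only free slot edges in $H_i$ are those of $S^i_{a_i}$; every other slot edge of $H_i$ lies on $C_{a_i}$ or $C'_{a_i}$. By \Cref{lem:esg}(iv)--(v), in $F_{i,j}$ the packing is $\{C_t,C'_t,S_t,S'_t\}$ for some $t$, and it uses both root edges $e_t$ and $e'_t$. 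Write $f^{(i,j)}_t=x^i_ax^j_b$. Then $e_t$ is identified with a slot edge in $S^i_a$. For it to be edge-disjoint from the cycles chosen in $H_i$, it must be free, which forces $a=a_i$. Likewise $b=a_j$. Hence every pair $x^i_{a_i},x^j_{a_j}$ is adjacent in $G$, and $X=\{x^i_{a_i}\}$ is a multicolored clique.

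\emph{Main obstacle.} The delicate step is the claim in the backward direction that a non-free slot edge actually lies on one of the chosen cycles of $H_i$. \Cref{lem:vsg}(v) states this only implicitly, as ``$S_i$ contains free slots for a unique $i$''. We must check from the construction that $C_{a_i}\cup C'_{a_i}$ covers every $Q_s$ with $s\neq a_i$. This holds because the shortest arcs of $C$ closing $P_{a_i}$ and $P'_{a_i}$ together cover $C$ minus $Q_{a_i}$. We would also verify edge-disjointness in the forward direction at the identified edges, which is immediate since each slot edge is identified with at most one root edge.
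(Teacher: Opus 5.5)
Your proof is correct and follows essentially the same route as the paper. In the forward direction you combine the packings given by \Cref{lem:vsg}(v) and \Cref{lem:esg}(v); in the backward direction you use \Cref{lem:Gprime}(ii) and the count $2h+4\binom{h}{2}=k$ to force every gadget to be saturated, then read the clique off from which slot edges are free. On your ``main obstacle'': \Cref{lem:vsg}(v) already states explicitly that no $S_s$ with $s\neq a_i$ contains a free slot. Also, your direct check is slightly imprecise. Taking $r=(hm+1)(2m+1)$ as in the construction of $H$, the two closing arcs of $C$ miss not only $Q_{a_i}$ but also the single edge $u_{ra_i-1+\ell}u_{ra_i+\ell}$ between them. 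This is harmless, since that edge is not a slot edge.
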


\begin{proof}
Suppose that $\{x_{i_1}^1,\ldots,x_{i_h}^h\}$ is a clique of $G$. We construct the packing $\mathcal{P}$ of edge-disjoint shortest cycles of $G'$ as follows. Recall that $g(G')=L$ by \Cref{lem:Gprime}. We construct $\mathcal{P}$ by using packings of cycles of length $L$ in the vertex and edge selection gadgets.
By Claim~(v) of \Cref{lem:vsg}, for each $j\in[h]$, there is a packing of two edge-disjoint cycles in $H_j$ such that the slots of $S^j_{i_j}$ are free. We include in $\mathcal{P}$ these cycles.
Consider arbitrary $j,r\in [h]$ such that $j<r$. 
Because $x^j_{i_j}$ and $x^r_{i_r}$ are adjacent in $G$, by the construction of $G'$, there is a pair of two opposite root edges $e_t$ and $e_t'$ of $F_{j,r}$ such that $e_t$ is identified with an edge of $S^j_{i_j}$ and $e_t'$ is identified with an edge of $S^r_{i_r}$. 
By Claim~(v) of \Cref{lem:esg}, $F_{j,r}$ has a packing of four edge-disjoint shortest cycles that have common edges only with the slot edges of $S^j_{i_j}$ and $S^r_{i_r}$, and do not contain other slot edges of $H_j$ and $H_r$, then we take these four cycles in $\mathcal{P}$. By doing this for all pairs 
$j,r\in[h]$ such that $j<r$, we obtain $\mathcal{P}$ that contains $2h+4\binom{h}{2}=k$ edge-disjoint shortest cycles.

For the opposite direction, suppose that $G'$ has a packing of $k=2h+4\binom{h}{2}$ edge-disjoint shortest cycles. 
By the second claim of \Cref{lem:Gprime}, each shortest cycle of $G'$ is a shortest cycle of either a vertex or an edge selection gadget. 
By Claims~(iv) of \Cref{lem:esg} and \Cref{lem:vsg}, each vertex selection gadget contains at most two edge-disjoint shortest cycles, and each edge selection gadget contains at most four edge-disjoint shortest cycles. 
Since $\mathcal{P}$ is of size $2h+4\binom{h}{2}$, $\mathcal{P}$ contains exactly two cycles from each vertex selection gadget and exactly four cycles from each edge selection gadget. 
By Claim~(v) of \Cref{lem:vsg}, for each $j\in[h]$, there is a unique $i_j\in[n]$ such that $S^{j}_{i_j}$ contains free slots with respect to $\mathcal{P}$. 
    We claim that $\{x_{i_1}^1,\ldots,x_{i_h}^h\}$ is a clique of $G$.

To show the claim, consider arbitrary $j,r\in[h]$ such that $j<r$. 
We prove that $x_jx_r\in E(G)$. To see this, recall that $F_{j,r}$ contains four edge-disjoint shortest cycles from $\mathcal{P}$. By Claim~(v) of \Cref{lem:esg}, there is a unique $t\in[m]$ such that 
the opposite roots $e_t,e_t'$ of $F_{j,r}$ are in the cycles of the packing $\mathcal{P}$, and all other roots are not included in the cycles. This implies that $e_t$ and $e_t'$ are identified with the free slots of $H_j$ and $H_r$. Since only $S_{i_j}^j$ and $S_{i_r}^r$ contain free slots, we have that $e_t$ is identified with an edge of $S_{i_j}^j$, and $e_t'$ is identified with an edge of $S_{i_r}^r$. Since $e_t$ and $e_t'$ are opposite roots of $F_{j,r}$, this means that $G$ contains the edge $x_{i_j}^jx_{i_r}^r$. Because $j,r\in[h]$ are arbitrary, we have that the vertices of  
$\{x_{i_1}^1,\ldots,x_{i_h}^h\}$ are pairwise adjacent, that is, they form a multicolored clique. This concludes the proof.
\end{proof}

Taking into account that $G'$ can be constructed in polynomial time and $k=2h+4\binom{h}{2}$, \Cref{lem:equiv} proves \Cref{thm:hard_generalgraphs}.

\section{FPT algorithm for Planar \maxddisjointcoveragefull} \label{sec:maxddisjointcoverage}

In this section we prove~\Cref{thm:maxddisjointcoverage}, by obtaining an algorithm with the desired running time bound via a dynamic programming
over a structure from \cite{psc} called the {\em laminar shortest cycles tree}.
The proof is technical, and \Cref{fig:thm-maxd-dependency} summarizes the dependencies between the main steps of the proof. 
We remark that our problem is not directly amenable to a dynamic programming over this structure alone: the recurrence relies on a second, auxiliary dynamic programming over a graph of shortest cycles, layered on top of the tree recursion. 
We will begin by recalling the notion of {\em laminar shortest cycles tree} for a given plane graph and some useful results regarding them from~\cite{psc}.

\begin{figure}[t]
\centering
\usetikzlibrary{arrows.meta}
\begin{tikzpicture}[
    >={Stealth[length=1.8mm]},
    thick,
    every node/.style={draw, rounded corners, align=center,
                      font=\scriptsize, text width=2.6cm, minimum height=0.85cm}
]

\node (lsct)  at (1.5,0)
  {Properties of the LSCT\\\Cref{prop:def:laminar-tree,prop:lsct-use-property}};
\node (inter) at (6,0)
  {Cycle intersection lemma\\\Cref{claim:intersection-between-two-cycles}};

\node (bound) at (1.5,-1.8)
  {Non-tree cycles \& paths bounded\\\Cref{lem:non-tree-cycles-are-bounded}};
\node (leaves) at (6,-1.8)
  {$\geq 4k$ leaves: trivial case\\\Cref{lem:4k-leaves}};

\node (test) at (1.5,-3.6)
  {Testing $\Pcal$-validity\\\Cref{obs:test-h-validity}};

\node (compute) at (1.5,-5.4)
  {Computing $\Pi_{\Pcal}(\alpha,h)$\\\Cref{lem:compute-pi}};

\node (thm) at (3.75,-7.4) [double, text width=3.4cm, minimum height=1cm]
  {Main theorem\\\Cref{thm:maxddisjointcoverage}};

\draw[->] (lsct.south)  -- (bound.north);
\draw[->] (lsct.south)  -- (leaves.north);
\draw[->] (bound.south) -- (test.north);
\draw[->] (test.south)  -- (compute.north);
\draw[->] (leaves.south)   -- (thm.north);
\draw[->] (compute.south)  -- (thm.north);
\draw[->] (inter.east) -- (8,0) -- (8,-7.4) -- (thm.east);

\end{tikzpicture}
\caption{Dependency structure of the proof of \Cref{thm:maxddisjointcoverage}.
The top row, together with \Cref{claim:intersection-between-two-cycles}, is
imported from~\cite{psc}. \Cref{lem:compute-pi} is itself established in
\Cref{subsec:compute-pi} via a further dynamic program (see the companion
figure).}
\label{fig:thm-maxd-dependency}
\end{figure}

\subsection{Laminar Shortest Cycles Tree}\label{sec:prelim-lsct}
We start by  recalling the terminology and basic properties of  {\em laminar shortest cycles tree} from~\cite{psc}; we slightly rephrase some of them for convenience. We begin by recalling how two shortest cycles may interact in a planar graph. 

\begin{proposition}[Lemma 5.1~\cite{psc}] \label{claim:intersection-between-two-cycles}
Consider a weighted planar graph $G$ and two shortest cycles $C,C'$ in $G$, where $|V(C)\cap V(C')| \geq 2$. Then exactly one of the following holds (see~\Cref{fig:touching-crossing}): i) $C\cap C'$ is a subpath of $C$ (resp. $C'$), or ii) $E(C)\cap E(C') =\emptyset$ and $V(C)\cap V(C')$ has exactly two vertices, say $s,t$, where $s$ and $t$ are at distance exactly $g(G)/2$ in $G$ (and also in $C$ and $C'$). 
\end{proposition}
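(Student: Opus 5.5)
The plan is a standard exchange argument on shortest cycles, with planarity entering only in the final disjoint case. Let $g=g(G)$ and let $C,C'$ be distinct shortest cycles with $|V(C)\cap V(C')|\ge 2$. (If $C=C'$ the intersection is trivially the whole cycle, so I assume they are distinct.)

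\emph{Key exchange fact.} Take two vertices $s,t\in V(C)\cap V(C')$ that are consecutive on $C'$ among the common vertices. Then the $s$-$t$ arc $A'$ of $C'$ is internally disjoint from $C$. Let $A_1,A_2$ be the two $s$-$t$ arcs of $C$.

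Each $A_i\cup A'$ is a cycle, unless $A'$ coincides with $A_i$, i.e.\ $A'$ is a single shared edge. Hence if $A'$ is not an edge of $C$, we get $w(A_i)+w(A')\ge g=w(A_1)+w(A_2)$ for $i=1,2$. This gives $w(A')\ge \max(w(A_1),w(A_2))\ge g/2$.

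\emph{Counting the arcs.} Now decompose $C'$ into maximal segments, alternating between pieces shared with $C$ (common edges) and ``excursions'' that are internally disjoint from $C$. Every excursion has weight at least $g/2$ by the fact above. Since $w(C')=g$, there are at most two excursions.

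\emph{Case of two excursions.} Both then have weight exactly $g/2$ and there are no shared edges. I then need that the two excursions have the same endpoints $s,t$. Otherwise $C'$ would contain further common vertices; but any additional common vertex splits $C'$ into further excursions, each of weight at least $g/2$, contradicting $w(C')=g$. For each excursion, the equality $w(A')=\max(w(A_1),w(A_2))=g/2$ forces $w(A_1)=w(A_2)=g/2$. So $s,t$ are at distance $g/2$ along both cycles.

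To get distance exactly $g/2$ in $G$: any shorter $s$-$t$ path $P$ combined with a half of $C$ would contain a cycle of weight less than $g$. Care is needed when $P$ intersects that half. The standard fix is to take the shortest closed walk and extract a cycle, using the fact that the two halves of $C$ together with the path give a closed walk of weight less than $g$ containing an edge not on the other half. This yields case (ii).

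\emph{Case of at most one excursion.} Here $C'$ minus one excursion is a single shared path, so $C\cap C'$ contains a path $R$. I must show there are no extra common vertices or edges of $C$ outside $R$ inside $C'$. By construction, the common part of $C'$ is exactly the complement of the one excursion, so $C\cap C'=R$, which is a subpath of $C'$ and hence, via the symmetric argument, of $C$. This is case (i).

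\emph{Zero excursions.} Then $C'\subseteq C$, so $C=C'$, which was excluded.

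\emph{Exclusivity and main obstacle.} Exclusivity of (i) and (ii) holds because in (ii) the intersection is two isolated vertices, which is not a path. The main obstacle is the careful handling of the degenerate case where a cycle formed from arcs is not simple, and the distance claim in $G$. I would handle both with the shortest closed walk extraction. Planarity is not essential to the argument. It is plausibly used in the original proof only for (ii), to guarantee the configuration of crossing versus touching cycles, which the statement does not require.
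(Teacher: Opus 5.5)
Your argument is correct. The paper gives no proof of this statement; it imports it as Lemma~5.1 of~\cite{psc}. So there is no in-paper route to compare against, and what you have is a self-contained exchange proof.

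\textbf{What works.} The core fact is sound: every arc of $C'$ between consecutive common vertices that is not an edge of $C$ (chords included) has weight at least $\max(w(A_1),w(A_2))\ge g(G)/2$. Hence $C'$ has at most two such excursions, and the split into $0,1,2$ excursions is exhaustive.

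\textbf{One excursion.} The complement $R$ of the excursion consists only of shared edges and carries all common vertices, so $C\cap C'=R$ exactly. $R$ is a subpath of $C$ automatically, because it is a path contained in the cycle $C$; no symmetric argument is needed.

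\textbf{Two excursions.} Once there are no shared edges, every arc between consecutive common vertices is an excursion. This forces exactly two common vertices, as you argue.

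\textbf{Places to tighten.}
\begin{enumerate}
\item The distance claim in $G$ needs no closed-walk extraction. If $P$ is an $s$-$t$ path with $w(P)<g(G)/2$, then $P\neq A_1$. So $P\cup A_1$ contains two distinct $s$-$t$ paths and cannot be a forest. It therefore contains a cycle of weight at most $w(P)+w(A_1)<g(G)$, a contradiction. Your phrase ``the two halves of $C$ together with the path'' should read ``one half''.
\item State the two hypotheses you use silently. First, $G$ is simple, so a chord $st$ together with either arc $A_i$ is a genuine cycle; the paper assumes simplicity throughout. Second, weights are strictly positive, so two excursions of weight $g(G)/2$ leave no room for shared edges.
\end{enumerate}

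As you observe, planarity is never used, so your proof establishes the statement for arbitrary simple graphs with positive weights. You also rightly note that the statement tacitly requires $C\neq C'$.
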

\begin{figure}[t]
\centering

\begin{tikzpicture}[scale=1]
\draw[red,thick] (0,1) -- (0,-1);

\filldraw (0,1) circle (2pt);
\filldraw (0,-1) circle (2pt);

\node[above] at (0,1) {$s$};
\node[below] at (0,-1) {$t$};

\draw[thick]
(0,1)
.. controls (-2,2) and (-2,-2) ..
(0,-1);

\draw[thick]
(0,1)
.. controls (2,2) and (2,-2) ..
(0,-1);

\node[left]  at (-1.4,0) {$C_1$};
\node[right] at (1.5,0) {$C_2$};

\node at (0,-2.5) {(a)};
\end{tikzpicture}
\hspace{1.2cm}
\begin{tikzpicture}[scale=1]

\filldraw (0,1.5) circle (2pt);
\filldraw (0,-1.5) circle (2pt);

\node[above] at (0,1.5) {$s$};
\node[below] at (0,-1.5) {$t$};

\draw[blue,thick]
(0,1.5)
.. controls (-2,0.8) and (-2,-0.8) ..
(0,-1.5); 

\draw[purple,thick]
(0,-1.5)
.. controls (2,-0.8) and (2,0.8) ..
(0,1.5);

\draw[purple,thick]
(0,1.5)
.. controls (-0.8,0.2) and (-0.8,-0.2) ..
(0,-1.5);

\draw[blue,thick]
(0,-1.5)
.. controls (0.8,-0.2) and (0.8,0.2) ..
(0,1.5); 

\node[blue,left] at (-1.4,0) {$C_1$};
\node[purple,right] at (1.5,0) {$C_2$};

\node at (0,-2.5) {(b)};
\end{tikzpicture}

\caption{Interactions between shortest cycles in a planar graph:
(a) two cycles that share a common path (shown in red color);
(b) two cycles intersecting exactly at two vertices $s$ and $t$.} 
\label{fig:touching-crossing}
\end{figure}

Recall a plane graph is a planar graph whose embedding is fixed. Consider a plane graph $G$ and two distinct cycles $C, C'$ in $G$. Also, each edge $e$ of $G$ is drawn as a curve in the plane, and each cycle defines a closed Jordan curve, where $\Int(C)$ is the open bounded region enclosed by the Jordan curve of $C$ (i.e., the region not containing the outer face of $G$).

\begin{figure}[t]
\centering
\begin{tikzpicture}[scale=1.3]

\begin{scope}[xshift=-3.5cm]
  \draw[thick, blue] (0,0) ellipse (1.4cm and 0.7cm);
  \node[blue] at (0, 0.9) {$C$};
  \draw[thick, red] (0,0) ellipse (0.7cm and 1.4cm);
  \node[red] at (0.95, 0) {$C'$};
 \node[text width=3.4cm, align=center] at (0,-1.9){(a) Crossing (shortest cycles do not cross this way)};
\end{scope}

\hspace{-4mm}

\begin{scope}[xshift=0cm]
  \draw[thick, red] (0,0) ellipse (1.3cm and 1.3cm);
  \node[red] at (1.55, 0) {$C'$};
  \draw[thick, blue] (0,0) ellipse (0.6cm and 0.6cm);
  \node[blue] at (0, 0) {$C$};
  \node at (0, -1.9)[text width=3.4cm, align=center] {(b) Non-crossing (nested)};
\end{scope}

\hspace{-7mm}

\begin{scope}[xshift=3.8cm]
  \fill (-0.9, 0) circle (1.5pt) node[left] {$s$};
  \fill ( 0.9, 0) circle (1.5pt) node[right] {$t$};

  \draw[thick, black] (-0.9, 0) -- (0.9, 0);

  \draw[thick, blue]
      (-0.9, 0) .. controls (-0.6, 1.1) and (0.6, 1.1) .. (0.9, 0);
  \node[blue] at (0, 1.1) {$C$};

  \draw[thick, red]
      (-0.9, 0) .. controls (-0.6, -1.1) and (0.6, -1.1) .. (0.9, 0);
  \node[red] at (0, -1.1) {$C'$};

  \node at (0, -1.9)[text width=3.4cm, align=center] {(c) Non-crossing (touching)};
\end{scope}

\hspace{12mm}

\begin{scope}[xshift=6.0cm]
  \fill (-1.2,0) circle (1.6pt) node[left] {$s$};
  \fill ( 1.2,0) circle (1.6pt) node[right] {$t$};

  \draw[thick, blue]
    (-1.2,0) .. controls (-0.8,1.45) and (0.8,1.45) .. (1.2,0)
    node[midway, above] {};

  \draw[thick, blue]
    (-1.2,0) .. controls (-0.45,-0.55) and (0.45,-0.55) .. (1.2,0)
    node[midway, below] {};

  \draw[thick, red]
    (-1.2,0) .. controls (-0.55,0.55) and (0.55,0.55) .. (1.2,0)
    node[midway, above] {};

  \draw[thick, red]
    (-1.2,0) .. controls (-0.85,-1.25) and (0.85,-1.25) .. (1.2,0)
    node[midway, below] {};

  \node[blue] at (-0.35,1.25) {$C$};
  \node[red] at (0.25,-1.1) {$C'$};

  \node[text width=3.4cm, align=center] at (0,-1.9)
    {(d) Crossing (two common vertices, no common edge)};
\end{scope}

\end{tikzpicture}
\caption{Illustration of (some) crossing and non-crossing cycles in a plane graph. Two shortest cycles can cross exactly one way as illustrated in (d). 
}
\label{fig:crossing-noncrossing}
\end{figure}

\begin{definition}[Crossing cycles]\label{def:crossing}
Let $G$ be a plane graph and let $C, C'$ be two distinct cycles in $G$. The cycles $C$ and $C'$ \emph{cross} if both of the following conditions hold (see \Cref{fig:crossing-noncrossing}):
    \begin{enumerate}[label=(\roman*)]
        \item there is $e \in E(C)$ such that every interior point of the curve of $e$ lies in $\Int(C')$, and
        \item there is $e' \in E(C')$ such that every interior point of the curve of $e'$ lies in $\Int(C)$.
    \end{enumerate}
    If $C$ and $C'$ do not cross, we say they are \emph{non-crossing}.
\end{definition}

\begin{definition}[Laminar family of cycles]\label{def:laminar-family}
    Let $G$ be a plane graph. A family $\mathcal{C}$ of cycles in $G$ is 
    \emph{laminar} if no two distinct cycles in $\mathcal{C}$ cross. 
    Equivalently, for any two distinct cycles $C_1, C_2 \in \mathcal{C}$, 
    exactly one of the following holds: $\Int(C_1) \subseteq \Int(C_2)$, 
    $\Int(C_2) \subseteq \Int(C_1)$, or $\Int(C_1) \cap \Int(C_2) = \emptyset$.
\end{definition} 

We will next see how the shortest cycles in a weighted plane graph can be categorized into two types, \emph{splittable} and \emph{unsplittable}---we state their definitions as implied by Lemma~5.4 of~\cite{psc}, as opposed to the original definitions given in~\cite{psc}. 
We rephrase some of these definitions in terms of notions we call \emph{longitudinal} paths and cycles. 

\begin{figure}[t]
\centering
\begin{tikzpicture}[scale=1.0,
  every node/.style={font=\small},
  pole/.style={circle, fill=black, inner sep=2pt},
  lpath/.style={thick, draw},
  internal/.style={thick, dashed, draw}
]

\begin{scope}[xshift=0cm]
  \coordinate (s) at (0,-1.2);
  \coordinate (t) at (0, 1.2);

  \draw[lpath, blue]   (s) .. controls (-1.6,-1.2) and (-1.6,1.2) .. (t);
  \draw[lpath, teal]   (s) .. controls (-0.8,-1.2) and (-0.8,1.2) .. (t);
  \draw[lpath, orange] (s) .. controls ( 0.8,-1.2) and ( 0.8,1.2) .. (t);
  \draw[lpath, purple] (s) .. controls ( 1.6,-1.2) and ( 1.6,1.2) .. (t);

  \node[pole] at (s) {};
  \node[pole] at (t) {};
  \node[below=4pt] at (s) {$s$};
  \node[above=4pt] at (t) {$t$};

  \node[left=2pt]  at (-1.0, 0) {\textcolor{blue}{$P_1$}};
  \node[left=2pt]  at (-0.4, 0) {\textcolor{teal}{$P_2$}};
  \node[right=2pt] at ( 0.4, 0) {\textcolor{orange}{$P_3$}};
  \node[right=2pt] at ( 1.0, 0) {\textcolor{purple}{$P_4$}};

  \node[below=6pt] at (0,-1.6) {(a)};
\end{scope}

\begin{scope}[xshift=5.5cm]
  \coordinate (s) at (0,-1.2);
  \coordinate (t) at (0, 1.2);

  \fill[blue!8]
    (s) .. controls (-1.4,-1.2) and (-1.4,1.2) .. (t)
        .. controls ( 1.4,1.2) and ( 1.4,-1.2) .. cycle;

  \draw[lpath, blue]   (s) .. controls (-1.4,-1.2) and (-1.4,1.2) .. (t)
      node[midway, left=-3.5pt] {$P_1$};
  \draw[lpath, orange] (s) .. controls ( 1.4,-1.2) and ( 1.4,1.2) .. (t)
      node[midway, right=-3.5pt] {$P_3$};

  \draw[internal, teal]
    (s) .. controls (-0.4,-1.0) and (-0.4,1.0) .. (t)
    node[midway, left=-3.5pt] {$P^*$};

  \node[pole] at (s) {};
  \node[pole] at (t) {};
  \node[below=4pt] at (s) {$s$};
  \node[above=4pt] at (t) {$t$};

  \node at (0.5,0.7) {$C$};

  \node[below=6pt] at (0,-1.6) {(b)};
\end{scope}

\begin{scope}[xshift=10.8cm]
  \coordinate (s) at (0,-1.2);
  \coordinate (t) at (0, 1.2);

  \fill[orange!8]
    (s) .. controls (-0.9,-1.2) and (-0.9,1.2) .. (t)
        .. controls ( 0.9,1.2) and ( 0.9,-1.2) .. cycle;

  \draw[lpath, blue]   (s) .. controls (-0.9,-1.2) and (-0.9,1.2) .. (t)
      node[midway, left=-2.5pt] {$P$};
  \draw[lpath, orange] (s) .. controls ( 0.9,-1.2) and ( 0.9,1.2) .. (t)
      node[midway, right=-3.5pt] {$P'$};


  \node[pole] at (s) {};
  \node[pole] at (t) {};
  \node[below=4pt] at (s) {$s$};
  \node[above=4pt] at (t) {$t$};

  \node[below=6pt] at (0,-1.6) {(c)};
\end{scope}

\end{tikzpicture}
\caption{Illustration of \Cref{def:longitudinal}. 
(a) A pole $\{s,t\}$ with four longitudinal paths $P_1, P_2, P_3, P_4$.
(b) A splittable cycle $C = P_1 \cup P_3$ (blue and orange): the dashed green 
path $P^*$ lies strictly inside $C$, witnessing splittability.
(c) Two consecutive longitudinal paths $P$ and $P'$: the bounded region of 
$P \cup P'$ contains no other longitudinal path.}
\label{fig:longitudinal}
\end{figure}

\begin{definition}
\label{def:longitudinal}
Consider a weighted plane graph $G$ (see \Cref{fig:longitudinal} for illustration).

\begin{itemize}

    \item \textbf{Pole pair.} A pair of distinct vertices $s,t\in V(G)$ is a \emph{pole pair}(or simply poles) if there exist at least three pairwise internally vertex-disjoint shortest $s$-$t$ paths in $G$, each of length exactly $g(G)/2$.
    
    \item \textbf{$s$-$t$ longitudinal path.} Let $s,t$ be a pole pair. 
    A shortest $s$-$t$ path in $G$ is an \emph{$s$-$t$ longitudinal path} (or simply a \emph{longitudinal path} when the pole pair is clear from context).
    
    \item \textbf{$s$-$t$ longitudinal cycle.}
    Let $s,t$ be a pole pair. A shortest cycle $C=P_1\cup P_2$, where $P_1$ and $P_2$ are $s$-$t$ longitudinal paths satisfying $V(P_1)\cap V(P_2)=\{s,t\}$, is an \emph{$s$-$t$ longitudinal cycle} (or simply a \emph{longitudinal cycle}). We call $P_1$ and $P_2$ the \emph{longitudinal paths of $C$}.
    
    \item \textbf{$s$-$t$ splittable cycle.}
    An $s$-$t$ longitudinal cycle $C$ is \emph{$s$-$t$ splittable} (or simply \emph{splittable}, when the context is clear) if the bounded region enclosed by $C$ contains a shortest $s$-$t$ path that is internally vertex-disjoint from $V(C)$.
    
    \item \textbf{Unsplittable cycle.}
    A shortest cycle that is not splittable is \emph{unsplittable}.
    
    \item \textbf{Consecutive $s$-$t$ longitudinal paths.}
    Let $P$ and $P'$ be $s$-$t$ longitudinal paths, where $V(P)\cap V(P')=\{s,t\}$. The paths $P$ and $P'$ are \emph{consecutive} if the bounded region enclosed by the cycle $P \cup P'$ contains no other $s$-$t$ longitudinal path. 
\end{itemize}
\end{definition}


Let $G$ be a weighted plane graph. 
We introduce the following partial order on the family of all shortest cycles of $G$. For two shortest cycles $C_1$ and $C_2$, we write $C_1\leq C_2$ if every vertex and edge of~$C_1$ is a vertex or an edge of~$C_2$ or is embedded in the internal face $\Int(C_2)$ of $C_2$.

Consider a pole $\{s,t\}$ in $G$. Following~\cite{psc}, we introduce specific families of paths and cycles as follows.
Let $C = P \cup P'$ be an $s$-$t$ splittable cycle where $P, P'$ are $s$-$t$ longitudinal paths with $V(P) \cap V(P') = \{s,t\}$. Recall that for a shortest cycle $C$ of a plane graph $G$, we denote by $G_C$ the subgraph of $G$ induced by the vertices lying on $C$ or in $\Int(C)$. Let $\Pcal_s(C) = \{P_0, P_1, \dots, P_q\}$ be the inclusion-wise maximal family of distinct shortest $s$-$t$ paths in $G_C$, ordered so that $P_0 \cup P_q = C$, and for each $i \in [q]$ the paths $P_{i-1}$ and $P_{i}$ are consecutive, i.e., no path of $\Pcal_{s}(C)$ is embedded in the interior of the cycle $P_{i-1} \cup P_{i}$ (see~\Cref{fig:Ps-family}). By~\cite[Lemma 5.4]{psc} the paths in $\Pcal_{s}(C)$ are pairwise internally vertex-disjoint  and  this ordering is unique. For paths in $\Pcal_{s}(C)$, we will always follow the ordering described here. Let $\SC{C}_s(C)$ be the set of cycles obtained by taking consecutive pairs of paths in $\Pcal_{s}(C)$. By ~\cite[Lemma 5.5]{psc}, any shortest cycle $C'$ contained in $G_C$ is either the union of two
paths from $\Pcal_{s}(C)$, or is embedded inside one of the
cycles in $\SC{C}_{s}(C)$, i.e., $C' \leq \what{C}$ for some $\what{C} \in \SC{C}_{s}(C)$. 

\begin{figure}[t]
\centering
\begin{subfigure}[t]{0.54\textwidth}
\centering
\begin{tikzpicture}[scale=0.85, every node/.style={font=\small}]

  \coordinate (s) at (0,0);
  \coordinate (t) at (6,0);

  \fill[gray!12]
    (s) .. controls (1.5, 2.8) and (4.5, 2.8) .. (t)
        .. controls (4.5,-2.8) and (1.5,-2.8) .. cycle;

  \draw[very thick, red!70!black]
    (s) .. controls (1.5, 2.8) and (4.5, 2.8) .. (t);
  \draw[very thick, red!70!black]
    (s) .. controls (1.5,-2.8) and (4.5,-2.8) .. (t);

  \draw[thick, blue!70!black, dashed]
    (s) .. controls (1.5, 1.3) and (4.5, 1.3) .. (t);

  \draw[thick, blue!70!black, dashed]
    (s) .. controls (1.5,-1.3) and (4.5,-1.3) .. (t);

  \filldraw (s) circle (2.5pt) node[left=3pt] {$s$};
  \filldraw (t) circle (2.5pt) node[right=3pt] {$t$};

  \node[red!70!black]  at (3,  2.5) {$P_0$};
  \node[blue!70!black] at (3,  1.3) {$P_1$};
  \node                at (3,  0.1) {$\vdots$};
  \node[blue!70!black] at (3, -1.3) {$P_{q-1}$};
  \node[red!70!black]  at (3, -2.5) {$P_q$};

  \node at (3.5, 0) {$C$};

\end{tikzpicture}
\caption{Splittable cycle $C{=}P_0{\cup}P_q$ with family
         $\Pcal_{s}(C){=}\{P_0,P_1,\dots,P_q\}$.
         Consecutive pairs $P_i{\cup}P_{i+1}$, $i\in [q-1]$ form $\SC{C}_{s}(C)$.}
\label{fig:Ps-family}
\end{subfigure}
\hfill
\begin{subfigure}[t]{0.43\textwidth}
\centering
\begin{tikzpicture}[scale=0.85, every node/.style={font=\small}]

  \fill[gray!12] (0,0) ellipse (3.2cm and 2.6cm);
  \draw[very thick, red!70!black] (0,0) ellipse (3.2cm and 2.6cm);
  \node[red!70!black] at (0, 2.95) {$C$};

  \fill[white] (-1.45, 0.7) ellipse (1.1cm and 0.72cm);
  \draw[thick, blue!70!black] (-1.45, 0.7) ellipse (1.1cm and 0.72cm);
  \node[blue!70!black] at (-1.45, 0.7) {$\what{C}_1$};

  \fill[white] (1.45, 0.7) ellipse (1.1cm and 0.72cm);
  \draw[thick, blue!70!black] (1.45, 0.7) ellipse (1.1cm and 0.72cm);
  \node[blue!70!black] at (1.45, 0.7) {$\what{C}_2$};

  \fill[white] (0, -1.2) ellipse (1.1cm and 0.72cm);
  \draw[thick, blue!70!black] (0, -1.2) ellipse (1.1cm and 0.72cm);
  \node[blue!70!black] at (0, -1.2) {$\what{C}_3$};

  \node[font=\scriptsize] at (0, 0.72)   {$\not\leq$};
  \node[font=\scriptsize, rotate=-52] at (-0.9, -0.22) {$\not\leq$};
  \node[font=\scriptsize, rotate= 52] at ( 0.9, -0.22) {$\not\leq$};

\end{tikzpicture}
\caption{Unsplittable cycle $C$ with family
         $\SC{C}_{u}(C){=}\{\what{C}_1,\what{C}_2,\what{C}_3\}$.
         Elements are pairwise incomparable and maximal inside $G_C$.}
\label{fig:Cu-family}
\end{subfigure}

\caption{Illustration of the families $\Pcal_{s}(C)$ and $\SC{C}_{s}(C)$
         for a splittable cycle (\emph{left}), and $\SC{C}_{u}(C)$ for an
         unsplittable cycle (\emph{right}).}
\label{fig:two-families}
\end{figure}

For an unsplittable cycle $C$ in a weighted plane graph $G$, let $\SC{C}_{u}(C)$ be the set of all unsplittable shortest cycles
of~$G_C$ distinct from~$C$ that are maximal with respect to~$\leq$;
see \Cref{fig:Cu-family}.
In other words,
\begin{itemize}
    \item[(i)] for every distinct $C_1, C_2 \in \SC{C}_{u}(C)$, neither $C_1 \le C_2$ nor $C_2 \le C_1$, and
    \item[(ii)] for any $C_1 \in \SC{C}_{u}(C)$, there is no unsplittable shortest cycle $C_2$ in $G_C$ other than $C$ such that $C_1 \le C_2$.
\end{itemize}

As is shown in~\cite[Lemma 5.6]{psc}, the family $\SC{C}_{ u}(C)$ is unique, laminar, and every shortest cycle in $G_C$ distinct from $C$ is either in $\SC{C}_{ u}(C)$
or embedded inside some cycle of $\SC{C}_{ u}(C)$.

We are ready to describe the construction of \emph{laminar shortest cycles tree (LSCT)}. 
Let $G$ be a \emph{clean} plane weighted graph. As is shown in ~\cite{psc}, one can find in polynomial time an embedding of $G$ such that the outer face of $G$ is a shortest cycle $C$ of $G$. 
We do not insist that $G$ is connected, but we will assume that all the vertices and edges of $G$ are drawn inside the region bounded by one of its fixed shortest cycle.
    
\begin{definition}[\cite{psc}]
\label{def:laminar-tree}
Let $G$ be a clean weighted plane graph  embedded in such a way that its outer face is a shortest cycle $C_r$.  The \emph{\lsctfull} (LSCT) of $G$, denoted by $\Tcal(G)$, is a tree rooted at $C_r$ whose nodes are (not necessarily all) shortest cycles of $G$, constructed recursively as follows. Initialize $V(\Tcal(G)) = \{C_r\}$, $E(\Tcal(G)) = \emptyset$,
    $\Ccal = \{C_r\}$, and process each $C \in \Ccal$ as follows, repeating
    until $\Ccal = \emptyset$ (see \Cref{fig:lsct-example}):
\begin{tcolorbox}[
    colback=blue!4,
    colframe=blue!35,
    arc=4pt,
    boxrule=0.6pt,
    left=6pt, right=6pt, top=4pt, bottom=4pt
]
    \begin{enumerate}
        \item\label{leaf-def:laminar-tree} If $C$ is the boundary of an internal face in the planar
        embedding, then $C$ is a \emph{leaf} of $\Tcal(G)$; remove $C$
        from $\Ccal$.
        \item\label{split-def:laminar-tree} If $C$ is a splittable cycle, then for each
        $C' \in \SC{C}_{{  s}}(C)$, add $C'$ as a child of $C$ in
        $\Tcal(G)$ and add $C'$ to $\Ccal$; then remove $C$ from $\Ccal$.
        Here $C$ is a \emph{splittable-node} of $\Tcal(G)$.
        \item\label{unsplit-def:laminar-tree} If $C$ is an unsplittable cycle, then for each
        $C' \in \SC{C}_{{  u}}(C)$, add $C'$ as a child of $C$ in
        $\Tcal(G)$ and add $C'$ to $\Ccal$; then remove $C$ from $\Ccal$.
        Here $C$ is an \emph{unsplittable-node} of $\Tcal(G)$.
    \end{enumerate}
\end{tcolorbox}
Whenever the context is clear, we will simply write $\C{T}$, instead of $\C{T}(G)$. A shortest cycle $C$ of $G$ is a \emph{tree cycle} if $C \in V(\Tcal)$, and otherwise it is a \emph{non-tree cycle}. For clarity, we will refer to vertices of $\C{T}$ as nodes. Also, to clearly highlight we are considering a node of $\C{T}$ (which must be a shortest cycle in $G$), we will use letters like $\alpha, \beta, \gamma,\iota$, etc.  
\end{definition}



We now state some useful properties of LSCT obtained in~\cite[Lemma 5.7]{psc}.

\begin{proposition}\label{prop:lsct-use-property}
    Consider a weighted clean plane graph $G$ with $w: E(G) \rightarrow \mathbb{Z}_{>0}$ where a shortest cycle of $G$ is the outer face of the embedding, and its $LSCT$, ${\cal T}$. Then, the following holds:
    \begin{enumerate}
        \item\label{prop:desc-lsct} For tree cycles $C_1, C_2$ in $G$, $C_1 \leq C_2$ if and only if $C_1$ is a descendant of $C_2$ in ${\cal T}$.\footnote{We will follow the convention that a node is a descendant of itself in a rooted tree.}
        
        
        \item\label{prop:cycle-type-lsct} For any shortest cycle $C$ in $G$ exactly one of the following holds: i) $C$ is a tree cycle, or ii) there exist a unique splittable-node $C^*$ of $\Tcal(G)$ and two non-consecutive paths $P_i,P_j \in \Pcal_{s}(C^*) = \{P_0,P_{1},P_2,\dots, P_q\}$ such that $C = P_i \cup P_j$ and $(P_i,P_j)\notin \{(P_0, P_q), (P_q, P_0)\}$.
        
        \item\label{prop:s-node-lsct} For any splittable-node $C$ of $\Tcal(G)$ the paths in $\Pcal_{s}(C)$ are pairwise internally vertex-disjoint. Moreover, for distinct splittable nodes $C,C'$ of $\C{T}(G)$, we have $\C{P}_{\sf s}(C)\cap \C{P}_{\sf s}(C') =\emptyset$.

        \item\label{prop:def:laminar-tree} The laminar shortest cycles tree of $G$ can be computed in polynomial time.
    \end{enumerate}
    
\end{proposition}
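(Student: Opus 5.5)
The plan is to prove the four items by induction along the recursive construction of Definition~\ref{def:laminar-tree}, using only the facts recalled before the statement: \cite[Lemma 5.4]{psc} (uniqueness and internal disjointness of $\Pcal_s(C)$), \cite[Lemma 5.5]{psc} (shortest cycles inside a splittable $C$), \cite[Lemma 5.6]{psc} (shortest cycles inside an unsplittable $C$), and Proposition~\ref{claim:intersection-between-two-cycles}. The central invariant is the following.

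\emph{Invariant.} For every node $\alpha$ of $\Tcal$, the children of $\alpha$ are pairwise $\leq$-incomparable and satisfy $\beta\leq\alpha$. Moreover, every shortest cycle $C'\neq\alpha$ of $G_\alpha$ falls into one of two cases:
\begin{itemize}
\item[(a)] $C'\leq\beta$ for some child $\beta$ of $\alpha$;
\item[(b)] $\alpha$ is splittable and $C'$ is a union of two paths of $\Pcal_s(\alpha)$.
\end{itemize}
This is exactly what Lemmas~5.5 and~5.6 of \cite{psc} give at a single step.

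\textbf{Item~\ref{prop:desc-lsct}.} The direction ``descendant implies $\leq$'' follows by transitivity of $\leq$ along the tree path. For the converse, let $C_1\leq C_2$ be tree cycles and suppose $C_1$ is not a descendant of $C_2$. Let $\alpha$ be their lowest common ancestor, and let $\beta_1,\beta_2$ be the distinct children of $\alpha$ on the paths to $C_1$ and $C_2$ (if $C_2$ is a proper descendant of $C_1$, antisymmetry of $\leq$ on distinct cycles gives a contradiction immediately). Then $C_1\leq C_2\leq\beta_2$ and also $C_1\leq\beta_1$.

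Two distinct children $\beta_1,\beta_2$ of $\alpha$ either:
\begin{itemize}
\item are consecutive-pair cycles from $\SC{C}_s(\alpha)$, whose interiors are disjoint by the ordering of $\Pcal_s(\alpha)$; or
\item are incomparable members of the laminar family $\SC{C}_u(\alpha)$, so $\Int(\beta_1)\cap\Int(\beta_2)=\emptyset$ by Definition~\ref{def:laminar-family}.
\end{itemize}
In both cases the closed regions meet only along shared boundary. A shortest cycle $C_1$ lying in both regions would therefore have to be contained in $\beta_1\cap\beta_2$. By Proposition~\ref{claim:intersection-between-two-cycles}, that intersection is a path or two vertices, which contains no cycle. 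This contradiction proves the converse.

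\textbf{Item~\ref{prop:cycle-type-lsct}.} Take any shortest cycle $C$ and descend from the root, always moving to a child $\beta$ with $C\leq\beta$. By the invariant, this descent stops at one of two places:
\begin{itemize}
\item at a node equal to $C$, so $C$ is a tree cycle;
\item at a splittable node $C^*$ where case (b) applies, so $C=P_i\cup P_j$ with $P_i,P_j\in\Pcal_s(C^*)$ non-consecutive and $\{i,j\}\neq\{0,q\}$. Consecutive pairs and $P_0\cup P_q$ are themselves tree nodes, so they are excluded here.
\end{itemize}
The descent cannot stop at a leaf without $C$ equal to that leaf, because a leaf is a face boundary and encloses no other cycle. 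The two outcomes are mutually exclusive because $\Pcal_s(C^*)$ is ordered by consecutiveness, so such a $C$ is not a node. Uniqueness of $C^*$ will follow from Item~\ref{prop:s-node-lsct}, since $P_i$ determines the node.

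\textbf{Item~\ref{prop:s-node-lsct}.} Internal disjointness within one node is \cite[Lemma 5.4]{psc}. For distinct splittable nodes $C,C'$, I will split into two cases:
\begin{itemize}
\item \emph{Incomparable in $\Tcal$.} By Item~\ref{prop:desc-lsct} and the disjoint-interiors argument above, a common shortest $s$-$t$ path of length $g(G)/2$ would have to lie on both boundaries. Combined with the other paths of each family, this would produce a crossing pattern forbidden by Proposition~\ref{claim:intersection-between-two-cycles}.
\item \emph{$C'$ a proper descendant of $C$.} Then $C'\leq\hat C$ for some $\hat C=P_{i-1}\cup P_i\in\SC{C}_s(C)$. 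A path of $\Pcal_s(C')$ that also lies in $\Pcal_s(C)$ must be $P_{i-1}$ or $P_i$, since the other paths of $\Pcal_s(C)$ are outside $\Int(\hat C)$.

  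I then argue that $\hat C$ itself is unsplittable as a child: consecutiveness of $P_{i-1},P_i$ forbids another $s$-$t$ path inside. Hence the poles of $C'$ differ from $\{s,t\}$. A longitudinal path of length $g/2$ between two different pole pairs, running along $P_i$, would force a shorter cycle or a third pole path inside $\hat C$, contradicting consecutiveness.
\end{itemize}
I expect this second case to be the main obstacle. It needs care about pole pairs of nested splittable cycles, and I would first try to show that all splittable descendants of $\hat C$ lie in the interior of $\hat C$ except for at most a subpath.

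\textbf{Item~\ref{prop:def:laminar-tree}.} Shortest cycles, $g(G)$, pole pairs, $\Pcal_s(C)$ (via a max-flow or unit-capacity path computation in the shortest-path DAG restricted to $G_C$), and $\SC{C}_u(C)$ (by testing $\leq$ among the polynomially many ``maximal'' candidates obtained by shortest cycles through each edge) are all computable in polynomial time. Laminarity bounds the number of nodes by $O(n+m)$, since distinct nodes have distinct sets of enclosed faces forming a laminar family. This gives a polynomial bound on the whole construction.
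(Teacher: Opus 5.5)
The paper gives no proof of this proposition; it imports it from \cite[Lemma~5.7]{psc}. Deriving it from Lemmas~5.4--5.6 of \cite{psc} is therefore a different, self-contained route. Item~1 and the existence half of Item~2 go through as you describe. The genuine gap is Item~3, together with the parts of Item~2 you defer to it.

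\textbf{Item~3, incomparable case.} There is no crossing to exploit. Suppose $C=X\cup P$ and $C'=P\cup X'$ share $P$, with shortest $s$-$t$ paths $Z$ strictly inside $C$ and $Z'$ strictly inside $C'$. Then $X,Z,P,Z',X'$ are pairwise internally disjoint $s$-$t$ paths of length $g(G)/2$. Any two cycles built from them meet in a single path or only in $\{s,t\}$, which is exactly what \Cref{claim:intersection-between-two-cycles} allows. What excludes this configuration is the construction of $\Tcal$, not the shape of pairwise intersections.

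\textbf{Item~3, descendant case.} Your claim that $\hat C=P_{i-1}\cup P_i$ is unsplittable is false. Consecutiveness forbids only further $s$-$t$ paths, and $\hat C$ can be splittable for another antipodal pair. For example, take unit weights, outer cycle $s\,y\,t\,t'\,s$, a vertex $s'$ inside adjacent to $s$ and $t$, and a vertex $x$ adjacent to $s'$ and $t'$. All faces are $4$-cycles, and the root is $s$-$t$ splittable with family $\{syt,ss't,st't\}$. Its child $s\,s'\,t\,t'\,s$ is $s'$-$t'$ splittable via $s'xt'$. The conclusion you want (the poles of $C'$ differ from $\{s,t\}$) is still true, but for a different reason. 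Once the poles differ nothing is left to prove, since a path determines its endpoints. So your last sentence about a path ``between two different pole pairs'' is moot.

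\textbf{The missing fact.} Call it ($\ast$): two distinct shortest $s$-$t$ paths with $d(s,t)=g(G)/2$ are internally disjoint, because their halves at a shared internal vertex would close a cycle shorter than $g(G)$. Hence, by consecutiveness, the closed disk of $P_{i-1}\cup P_i$ contains no shortest $s$-$t$ path other than $P_{i-1}$ and $P_i$. With ($\ast$) and Item~1, everything can be repaired as follows.
\begin{itemize}
\item \emph{Exclusivity in Item~2.} A tree node $C=P_i\cup P_j\neq C^*$ would be a proper descendant of $C^*$, hence $\le$ some consecutive-pair child, which violates ($\ast$).
\item \emph{Uniqueness of $C^*$.} Your appeal to Item~3 only covers two candidate nodes with the same poles; with different poles, $C$ splits into different pairs of paths. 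Instead, use the interior-disjointness you proved for Item~1: the nodes whose closed disks contain $C$ form a root path. So one candidate lies inside a consecutive-pair child of the other, again violating ($\ast$).
\item \emph{Item~3, proper descendant.} $C'$ lies in the closed disk of some $P_{i-1}\cup P_i$, so by ($\ast$) it cannot be $s$-$t$ splittable.
\item \emph{Item~3, incomparable.} Let $D=X\cup X'$, which encloses $C$ and $C'$, and let $\alpha$ be the deepest node with $D\le\alpha$. By your invariant, either $D=\alpha$, which is splittable since $P$ lies strictly inside $D$, or $\alpha$ is splittable and $D$ is a non-consecutive union of paths of $\Pcal_{s}(\alpha)$. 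In both cases $C$ is a proper descendant of $\alpha$, so it lies in a consecutive-pair child $\beta$. Then $X\subseteq D$ lies in the closed disk of $\beta$, and $D\cap\beta$ is at most one path of $\Pcal_{s}(\alpha)$. So $X$ is that path, and $\alpha$ has poles $\{s,t\}$. Now the closed disk of $\beta$ contains $X$, $Z$ and $P$, contradicting ($\ast$).
\end{itemize}

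\textbf{Item~4.} Your computation of $\SC{C}_{u}(C)$ is only asserted. There can be exponentially many shortest cycles through an edge, and you do not say how to find the $\le$-maximal ones. This needs an actual procedure, or the citation the paper relies on.
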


\begin{lemma}[Implicit in \cite{psc}]
    \label{lem:non-tree-cycles-are-bounded} 
   For a weighted clean plane graph $G$ with LSCT $\Tcal(G)$, where $t$ is the number of leaves in $\C{T}$, the following holds.
    \begin{enumerate} [label = (\roman*)]
        \item 
        \label{item-1-of-lem:non-tree-cycles-are-bounded}
        The number of non-tree cycles in $G$ is at most $\Ocal(t^2)$.
        \item \label{item-lem:loni-paths-bounded}
        The number of longitudinal paths in $G$ are at most $\Ocal(t)$. Furthermore, we can find these paths in time bounded by $n^{O(1)}$. 
    \end{enumerate}
\end{lemma}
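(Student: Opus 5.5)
The proof goes through the structure of splittable nodes of $\Tcal$. By \Cref{prop:lsct-use-property}(\ref{prop:cycle-type-lsct}), every non-tree cycle is $P_i\cup P_j$ for two paths in $\Pcal_s(C^*)$ of a unique splittable node $C^*$. So the number of non-tree cycles is at most $\sum_{C^*}\binom{|\Pcal_s(C^*)|}{2}$, where the sum runs over splittable nodes. It therefore suffices to show
\[
\sum_{C^*\text{ splittable}}|\Pcal_s(C^*)| = \Ocal(t).
\]
Item (i) then follows because $\sum_i x_i^2\le(\sum_i x_i)^2$.

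To bound this sum, observe that a splittable node $C^*$ with $\Pcal_s(C^*)=\{P_0,\dots,P_q\}$ has exactly $q\ge 2$ children in $\Tcal$, namely the cycles of $\SC{C}_s(C^*)$. The inequality $q\ge2$ holds because splittability gives an internal path, and the pole definition needs three disjoint paths. Hence $|\Pcal_s(C^*)|=q+1\le \tfrac{3}{2}q$, and $q\ge 2$ is exactly the number of children of $C^*$. In a rooted tree with $t$ leaves, the sum of $\deg^+(v)$ over nodes with at least two children is at most $2(t-1)$. To see this, note that $\sum_v(\deg^+(v)-1)$ over internal nodes equals $t-1$, and for nodes with $\deg^+\ge 2$ we have $\deg^+\le 2(\deg^+-1)$. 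Thus the sum is $\Ocal(t)$, and item (i) follows.

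For item (ii), I would show that every longitudinal path belongs to $\Pcal_s(C^*)$ for some splittable node $C^*$. Then the count is again bounded by the sum above, using the disjointness in \Cref{prop:lsct-use-property}(\ref{prop:s-node-lsct}) to avoid double counting. This is the main obstacle. Let $P$ be an $s$-$t$ longitudinal path for a pole $\{s,t\}$. Since $G$ is clean and the outer face is a shortest cycle, the three disjoint shortest $s$-$t$ paths lie inside $C_r$. I would take the outermost two such paths $P_0,P_q$, as in \Cref{fig:three_intersecting}; their union is a splittable cycle. Using \Cref{claim:intersection-between-two-cycles} together with the descent in \Cref{def:laminar-tree}, I would argue that some tree node is exactly this outermost $s$-$t$ longitudinal cycle. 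Walking down from $C_r$, at each node the cycle $P_0\cup P_q$ is either a child or lies inside a unique child, since $\SC{C}_u$ and $\SC{C}_s$ cover all shortest cycles by \cite[Lemmas 5.5, 5.6]{psc}. By maximality of $\Pcal_s(C^*)$, the path $P$ then lies in that family.

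Computationally, build $\Tcal$ in polynomial time (\Cref{prop:lsct-use-property}(\ref{prop:def:laminar-tree})). For each splittable node, compute $\Pcal_s(C^*)$ by repeatedly finding shortest $s$-$t$ paths inside $G_{C^*}$ that are internally disjoint from those already found, for instance by iterating over the faces adjacent to the current consecutive pair. Finally, output the union of these families, which takes $n^{\Ocal(1)}$ time.
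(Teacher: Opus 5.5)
\textbf{Verdict: item (i) is correct; item (ii) has a repairable gap in the descent step.} The paper gives no proof of this lemma (it cites \cite{psc}). Its definition of $\C{Q}=\bigcup_{\alpha\in U}\Pcal_{s}(\alpha)$ as ``the set of all longitudinal paths'' shows that your route is the intended one. Your counting is correct. A splittable node with $|\Pcal_{s}(C^*)|=q+1$ has exactly $q\ge 2$ children, so $\sum_{C^*}|\Pcal_{s}(C^*)|\le 3(t-1)$. \Cref{prop:lsct-use-property}(\ref{prop:cycle-type-lsct}) maps non-tree cycles injectively into pairs from these families.

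The gap is the step you flag: that the outermost cycle $D=P_0\cup P_q$ of a pole $\{s,t\}$ is a node of $\Tcal$. You justify the descent by saying that at every node $D$ is a child or lies inside a child ``since $\SC{C}_u$ and $\SC{C}_s$ cover all shortest cycles.'' For a splittable node $C$, this is not what \cite[Lemma 5.5]{psc} says. A shortest cycle of $G_C$ may also be the union of two \emph{non-consecutive} paths of $\Pcal_{s}(C)$, i.e., a non-tree cycle, which is neither a child nor inside one. So you must show that $D$ is never of this form.

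There are two cases to handle.
\begin{enumerate}
\item If $C$ has the same pole, outermost-ness settles it: the two outer paths of $C$ are $s$-$t$ paths enclosing $D$, so $D=C$.
\item If $C$ has a different pole $\{s',t'\}$, suppose $D=A_i\cup A_j$ with $A_i,A_j\in\Pcal_{s}(C)$ and $j\ge i+2$. The antipodal pairs $\{s,t\}$ and $\{s',t'\}$ interleave on $D$; say $s$ is internal to $A_i$ and $t$ is internal to $A_j$. Then the $s$-$t$ path $R$ witnessing that $D$ is $s$-$t$ splittable must meet $A_{i+1}$ in a vertex $x\in\Int(D)$. Consider the shortest cycle $A_i\cup A_{i+1}$, and the shortest cycle formed by $R$ together with the $s$-$t$ arc of $D$ through $s'$. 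These two cycles share the edges of the subpath of $A_i$ between $s$ and $s'$, and also the vertex $x$. Their intersection is disconnected: the edges of $A_i\cup A_{i+1}$ leaving that subpath are not in the second cycle. This contradicts \Cref{claim:intersection-between-two-cycles}.
\end{enumerate}
In other words, a shortest cycle is splittable with respect to at most one pole. Only with this does your descent end at $D$ as a tree node.

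A smaller point: ``outermost'' must be taken among \emph{all} shortest $s$-$t$ paths, not among three chosen disjoint ones. Otherwise $P$ need not lie in $G_D$, and maximality of $\Pcal_{s}(D)$ tells you nothing about $P$. Taking all of them works because, when $d(s,t)=g(G)/2$, any two distinct shortest $s$-$t$ paths are internally vertex-disjoint. Indeed, if they split at $x$ and first rejoin at $y$, they bound a cycle of length $2d(x,y)<g(G)$ unless $\{x,y\}=\{s,t\}$.

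The same fact also gives a simpler algorithm than iterating over faces. Each longitudinal path is determined by its first edge. So for every pair $s,t$ with $d(s,t)=g(G)/2$ and every neighbour $u$ of $s$, compute the unique shortest $s$-$t$ path starting with $su$. Keep the pairs that yield at least three such paths.
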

%
%

The next result follows from the definition of laminar shortest cycles tree. 

\begin{lemma}[Implicit in \cite{psc}]
\label{lem:intersection-with-ancestor-is-more}
    Consider a (weighted) clean plane graph $G$ with LSCT $\Tcal$.
    For $\alpha,\beta,\gamma \in V(\C{T})$,
    \footnote{
    We use Greek letters (e.g., $\alpha,\beta,\gamma$) to denote nodes of $\Tcal$(which are also shortest cycles in $G$). 
    Symbols such as $C$ and $C'$ denote arbitrary cycles (either tree or non-tree cycles). 
    } 
    if $\beta$ is a descendant of $\alpha$ in $\Tcal$ and $\gamma$ is not a descendant of $\alpha$, then $E(\beta)\cap E(\gamma)\subseteq E(\alpha)$.
\end{lemma}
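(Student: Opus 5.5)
We argue by planarity and the laminar structure of $\Tcal$. Since $\beta$ is a descendant of $\alpha$, item~\ref{prop:desc-lsct} of \Cref{prop:lsct-use-property} gives $\beta\leq\alpha$: every edge of $\beta$ lies on $\alpha$ or in $\Int(\alpha)$. It therefore suffices to show that no edge of $\gamma$ is embedded in $\Int(\alpha)$. Then any common edge of $\beta$ and $\gamma$, being an edge of $\beta$ not in $\Int(\alpha)$, must be an edge of $\alpha$.

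\textbf{Case 1: $\gamma$ is not comparable with $\alpha$ in $\Tcal$.} Let $\delta$ be the lowest common ancestor of $\alpha$ and $\gamma$, and let $\alpha',\gamma'$ be the distinct children of $\delta$ on the paths to $\alpha$ and $\gamma$. We need the siblings $\alpha'$ and $\gamma'$ to have disjoint interiors.
\begin{itemize}
\item If $\delta$ is a splittable node, then $\alpha'=P_{i-1}\cup P_i$ and $\gamma'=P_{j-1}\cup P_j$ are consecutive pairs from $\Pcal_s(\delta)$. These paths are pairwise internally vertex-disjoint and ordered, so $\alpha'$ and $\gamma'$ bound regions with disjoint interiors; they meet in at most one shared path $P_i$.
\item If $\delta$ is unsplittable, $\alpha',\gamma'$ are $\leq$-incomparable members of the laminar family $\SC{C}_u(\delta)$. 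By \Cref{def:laminar-family}, $\Int(\alpha')\cap\Int(\gamma')=\emptyset$; nesting of interiors is impossible, since it would give $\leq$-comparability.
\end{itemize}
Using $\gamma\leq\gamma'$, every edge of $\gamma$ lies on $\gamma'$ or inside $\Int(\gamma')$. Hence it is not inside $\Int(\alpha')\supseteq\Int(\alpha)$, because an edge on $\gamma'$ that lay inside $\Int(\alpha')$ would contradict disjointness of the open regions.

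\textbf{Case 2: $\gamma$ is a proper ancestor of $\alpha$.} This case is not excluded by the hypothesis, since "not a descendant of $\alpha$" allows it. Here $\gamma\geq\alpha$, so edges of $\gamma$ are not in $\Int(\alpha)$ unless they lie on $\alpha$. To see this, I would use that $\alpha\leq\gamma$ puts $\Int(\alpha)\subseteq\Int(\gamma)$, while the edges of $\gamma$ lie on its boundary curve, which is disjoint from $\Int(\gamma)$. So $E(\beta)\cap E(\gamma)\subseteq E(\alpha)$ follows directly.

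\textbf{Main obstacle.} The delicate step is turning the combinatorial relation $\leq$ and the sibling structure into the topological statement that an edge of $\gamma'$ cannot have its interior inside $\Int(\alpha')$. This requires care with boundary-sharing: in the splittable case siblings share a whole longitudinal path, and in the unsplittable case they may share subpaths by \Cref{claim:intersection-between-two-cycles}. I would handle this by working with open regions and edge interiors, invoking the Jordan curve theorem. I would also use the fact, from \cite[Lemma 5.4]{psc}, that each $P_j$ with $j\notin\{i-1,i\}$ is drawn outside $\Int(P_{i-1}\cup P_i)$.
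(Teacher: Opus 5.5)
Your proof is correct and matches the argument the paper intends; the paper gives no written proof and only remarks that the lemma follows from the definition of the laminar shortest cycles tree. Your two cases unpack that remark: $\beta\leq\alpha$, and a non-descendant $\gamma$ is either a proper ancestor or lies below a sibling branch of the lowest common ancestor whose interior is disjoint from $\Int(\alpha)$, so in both cases no edge of $\gamma$ can lie in $\Int(\alpha)$.
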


Note that by the Four Color Theorem, any planar graph on $4n'$ vertices has an independent set of size at least $n'$. Also, the dual of a planar graph is planar. Recall that the leaves of a laminar shortest cycles trees are the facial cycles of the given plane graph. 
The above gives the next result.   

\begin{lemma}[Implicit in \cite{psc}]
    \label{lem:4k-leaves}
    If LSCT $\C{T}$ of a weighted plane graph $G$ contains at least $4k$ leaves, then there is a set of $k$ pairwise edge-disjoint cycles in $G$, and these $k$ shortest cycles are leaves of $\C{T}$.
\end{lemma}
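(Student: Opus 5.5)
The plan is to pass to the dual graph and use the Four Color Theorem, as the remark before the statement suggests. The leaves of $\Tcal$ are, by item~\ref{leaf-def:laminar-tree} of \Cref{def:laminar-tree}, boundaries of internal faces of $G$, and they are shortest cycles. Distinct leaves bound distinct faces, so there are at least $4k$ distinct internal faces $f_1,\dots,f_{t}$, $t\ge 4k$, whose frontiers are shortest cycles $\lambda_1,\dots,\lambda_t$ that are leaves of $\Tcal$.

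Next I consider the dual $G^*$ and its subgraph $D$ induced on the vertices $f_1,\dots,f_t$, discarding loops and parallel edges. Since $G^*$ is planar, so is $D$. By the Four Color Theorem, $D$ admits a proper $4$-coloring, so some color class $I$ satisfies $|I|\ge t/4\ge k$. Fix $k$ faces of $I$.

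It remains to check that the corresponding leaf cycles are pairwise edge-disjoint. Suppose $\lambda_i$ and $\lambda_j$, for distinct $f_i,f_j\in I$, shared an edge $e$. Then $e$ lies on the frontier of both $f_i$ and $f_j$. Because $f_i\neq f_j$, the dual edge $e^*$ joins $f_i$ and $f_j$ and is not a loop. So $f_if_j\in E(D)$, contradicting independence. (Since $\lambda_i$ is a cycle, $e$ is not a bridge; hence $e$ borders exactly the two faces $f_i$ and $f_j$.) Thus the $k$ chosen leaves form the required packing of pairwise edge-disjoint shortest cycles.

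The only mildly delicate point is that the leaves of $\Tcal$ are distinct and bound distinct internal faces; this follows directly from the construction in \Cref{def:laminar-tree}. Beyond that, the argument is routine, and I expect no real obstacle.
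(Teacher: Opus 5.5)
Your proof is correct and follows the paper's own argument: the leaves of $\Tcal$ are facial cycles, the dual graph is planar, and the Four Color Theorem gives an independent set of $k$ of these faces in the dual, which is exactly pairwise edge-disjointness of the corresponding leaf cycles. One small precision: item~1 of the definition only says that internal-face boundaries become leaves. The converse you rely on, that every leaf is facial, is the fact the paper recalls from~\cite{psc}; it holds because every non-facial node receives at least one child.
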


\subsection{Proof of \Cref{thm:maxddisjointcoverage}}
We start the proof of ~\Cref{thm:maxddisjointcoverage}, which we restate here. 

\thmmaxddisjointcoverage*

Consider an instance $(G, w : E(G) \rightarrow \mathbb{Z}_{> 0}, k,d,\ell)$ of  \maxddisjointcoveragefull\ (\pmaxddisjointcoverage), where $G$ is a planar graph with $V(G) = [n]$. Recall our goal is to check if there is a set $\Ccal = \{C_1, C_2, \ldots, C_k\}$ of exactly $k$ many $d$-diverse shortest cycles in $G$ such that $w(\cup_{i\in k} E(C_i)) \geq \ell$. We will make the following assumptions regarding the instance:
\begin{remarkbox}
\begin{enumerate}
\item $G$ is a clean graph, i.e., each vertex and edge of $G$ is part of some shortest cycle in the graph (as otherwise such vertices/edges can be safely removed from $G$), 
\item $k \geq 1$, as otherwise, we can return the empty set as the solution, and 
\item $0 \leq d \leq g(G)/2$, where $g(G)$ is the (weighted) girth of $G$. The above  is valid due to~\Cref{claim:intersection-between-two-cycles}.
\item $0 \leq \ell\leq k \cdot g(G)$, as otherwise, the instance is trivially a no-instance of the problem.  
\end{enumerate}
\end{remarkbox} 

Throughout this section (and its subsections) we will be working with the instance $(G, w : E(G) \rightarrow \mathbb{Z}_{> 0}, k,d,\ell)$. We presented an overview of our algorithm in \Cref{subsec:intro}, and we will formalize those steps here.

\subparagraph{Laminar shortest cycles tree setup.} 
For notational simplicity, we add a cycle $C_r$ to the graph $G$ where $w(C_r) = g(G)$, increment $k$ and $\ell$ by $1$ and $g(G)$, respectively. 
Furthermore, we fix a planar embedding of $G$ such that all vertices and edges of $G$ (other than those in the newly added cycle $C_r$) lie inside the curve bounded by the cycle $C_r$. For convenience (and due to the equivalence of the original and the modified instances), we will denote the modified instance by $(G,w,k,d,\ell)$, and we will assume that if the instance is a yes-instance, then the considered  solution always contains the cycle $C_r$.

We use \Cref{prop:def:laminar-tree} (also see \Cref{def:laminar-tree}) to obtain the laminar shortest cycles tree $\Tcal$ of the weighted plane graph $G$. By \Cref{def:laminar-tree}, the root of $\Tcal$ is $C_r$, and each node of $\Tcal$ is a shortest cycle of $G$.
If the number of leaves in $\Tcal$ is at least $4k$, then by \Cref{lem:4k-leaves}, $G$ contains at least $k-1$ pairwise edge-disjoint shortest cycles that are leaves of $\Tcal$. These cycles together with $C_r$ yield a set of $k$ pairwise edge-disjoint cycles of $G$ that covers edges of total weight $k \cdot g(G)$, which is the maximum possible value for a set of $k$ shortest cycles. 
Thus, when the number of leaves in $\C{T}$ is at least $4k$, we can trivially resolve the instance. From now on we assume that the number of leaves in $\Tcal$ is less than $4k$. 

\subparagraph{Handling non-tree cycles.} Recall the laminar shortest cycles tree $\mathcal{T}$ partitions the shortest cycles of $G$ into two types (see~\Cref{sec:prelim-lsct}): tree cycles (shortest cycles of $G$ that are nodes of
$\mathcal{T}(G)$) and non-tree cycles (the remaining shortest cycles of
$G$). From \Cref{lem:non-tree-cycles-are-bounded} (\Cref{item-1-of-lem:non-tree-cycles-are-bounded}) we know that the number of non-tree cycles is bounded by $O(k^2)$; and this bound is tight. Due to this (and the time bound needed), we cannot keep track of the exact subset of non-tree cycles in the optimal solution we are looking for. 

Recall from \Cref{prop:lsct-use-property} (\ref{prop:cycle-type-lsct}), each non-tree cycle must consist of longitudinal paths, and from \Cref{lem:non-tree-cycles-are-bounded} (\ref{item-lem:loni-paths-bounded}) the number of longitudinal paths is $\Ocal(k)$. So instead of keeping track of the non-tree cycles, we will ``guess'' (or, more precisely, iterate over choices of the exact subset of) longitudinal paths participating in non-tree cycles. This leads us to the following definitions (and some results regarding them).

Let $U$ be the set of all splittable nodes of $\C{T}$ (see \Cref{def:laminar-tree} (\ref{split-def:laminar-tree})), and for $\alpha \in U$, let $\alpha$ be an $s_\alpha, t_\alpha$-splittable cycle (see \Cref{def:longitudinal}).\footnote{Here we are following the notations from \Cref{sec:prelim-lsct} that appears after \Cref{def:longitudinal}. Also, the notations we define here (and later sub-sections) will also be used in the subsequent sub-sections.} 
We let $\C{Q} = \cup_{\alpha\in U} \C{P}_{\sf s}(\alpha)$. Note $\C{Q}$ is the set of all longitudinal paths, and we can find it in polynomial time (see \Cref{lem:non-tree-cycles-are-bounded} \ref{item-lem:loni-paths-bounded}). For $\C{P} \subseteq \C{Q}$, $\C{P}$ is {\em $d$-diverse} if for every distinct $P,P' \in \C{P}$, $w(P \cap P') \leq d$. (Recall our notation that $w(P \cap P') = w(E(P)\cap E(P'))$.)
We now define the validity of coverage of paths in a $d$-diverse $\C{P}\subseteq \C{Q}$ using $q$ many non-tree cycles.

\begin{definition}
    \label{def:valid-set-of-paths}
    Consider $\C{P}\subseteq \C{Q}$ and an integer $q$, where $\C{P}$ is $d$-diverse and $0\leq q \leq k$. We say that $q$ is $\C{P}$-\emph{valid} if there is a set of $d$-diverse shortest non-tree
    cycles $\C{C}'=\{C_1,C_2, \ldots, C_q\}$, where for $i\in [q]$, $C_i = P_i\cup Q_i$, for $s_i-t_i$ longitudinal paths $P_i,Q_i$ for some pole $s_i,t_i$ with $V(P_i)\cap V(Q_i) = \{s_i,t_i\}$,\footnote{Here the poles for some of the cycles may be the same.} such that $\Pcal =\{P_i,Q_i \mid i \in [q]\}$. In the above we say that $\C{C}'$ {\em exactly covers} $\C{P}$.  
\end{definition}

Recall a solution for $(G,w,k,d,\ell)$ must have {\em exactly} $k$ shortest cycles (satisfying other conditions). 
So in our states, we need to keep track of different values of $q$ that are $\cal P$-valid, as for the same set $\C{P}\subseteq \C{Q}$, we may be able to construct different numbers of $d$-diverse shortest cycles exactly covering $\C{P}$. 
We next give an observation for testing $\cal P$-validity. 

\begin{lemma}\label{obs:test-h-validity}
Given $\C{P} \subseteq \C{Q}$ and an integer $0\leq q \leq k$, in time polynomial in $n$ we can do the following: i) test if $\C{P}$ is $d$-diverse, and ii) if $\C{P}$ is $d$-diverse, test if $q$ is $\C{P}$-valid. 
\end{lemma}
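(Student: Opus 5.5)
The plan is to handle the two parts separately, with most of the work in~(ii).

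\textbf{Part (i).} Testing $d$-diversity of $\Pcal$ is direct. For each of the $\Ocal(|\Pcal|^2)$ pairs $P,P'\in\Pcal$, compute $w(E(P)\cap E(P'))$ and compare it with $d$. Each pair costs $\Ocal(n)$, and $|\Pcal|\le|\Qcal|=\Ocal(k)\le n^{\Ocal(1)}$ by \Cref{lem:non-tree-cycles-are-bounded}.

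\textbf{Reduction to a graph problem for (ii).} By \Cref{prop:lsct-use-property}(\ref{prop:cycle-type-lsct}), every non-tree cycle is $P_i\cup P_j$ for a unique splittable node $\alpha$ with $P_i,P_j\in\Pcal_s(\alpha)$ non-consecutive and $\{i,j\}\neq\{0,q\}$. By \Cref{prop:lsct-use-property}(\ref{prop:s-node-lsct}), the families $\Pcal_s(\alpha)$ are pairwise disjoint. I will build an auxiliary simple graph $A_\Pcal$:
\begin{itemize}
\item its vertex set is $\Pcal$;
\item it has an edge $PP'$ whenever $P,P'$ lie in the same $\Pcal_s(\alpha)$ and $P\cup P'$ is a non-tree cycle.
\end{itemize}
A family $\Ccal'$ of non-tree cycles exactly covering $\Pcal$ is then precisely an edge cover of $A_\Pcal$ of size $q$ (distinct cycles give distinct edges). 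It remains to encode $d$-diversity. Paths of one pole are internally vertex-disjoint, so $E(C)$ is the disjoint union of the edge sets of its two paths. Hence for $C=P_1\cup P_2$ and $C'=P_3\cup P_4$ we have $w(E(C)\cap E(C'))=\sum_{a\in\{1,2\},\,b\in\{3,4\}} w(E(P_a)\cap E(P_b))$. In particular, if $C,C'$ share a path the intersection is at least $g(G)/2$.

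\textbf{Case $d=g(G)/2$.} By \Cref{claim:intersection-between-two-cycles}, two distinct shortest cycles share at most a subpath, and a shared subpath longer than $g(G)/2$ would yield a shorter closed walk and hence a shorter cycle. So every family of distinct shortest cycles is $d$-diverse, and the question is only whether $A_\Pcal$ has an edge cover with exactly $q$ edges. Let $M$ be a maximum matching of $A_\Pcal$ and suppose there are no isolated vertices (otherwise answer no). The minimum edge cover has size $|\Pcal|-|M|$ (Gallai), and adding edges keeps a cover. So $q$ is valid iff $|\Pcal|-|M|\le q\le |E(A_\Pcal)|$, which is polynomial-time checkable.

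\textbf{Case $d<g(G)/2$.} No two solution cycles may share a path, so the cover must be a perfect matching of $A_\Pcal$, forcing $q=|\Pcal|/2$. Each pair of matched edges must additionally satisfy the four-term sum above being at most $d$. This is the step I expect to be the main obstacle, since a general pairwise-compatibility matching problem is not polynomial. I would first try to use laminarity to show the constraint decomposes locally:
\begin{itemize}
\item Within one pole, distinct paths are internally disjoint, so a conflict can only come from paths of different poles.
\item A longitudinal path of a descendant pole $\beta$ lies inside some child cycle $P_i\cup P_{i+1}$ of $\alpha$ and, by \Cref{lem:intersection-with-ancestor-is-more} and \Cref{claim:intersection-between-two-cycles}, meets $\alpha$'s paths only along a subpath of $P_i\cup P_{i+1}$.
\end{itemize}
From this I would argue that each path of $\Pcal$ has positive-weight intersection with paths of at most $\Ocal(1)$ other poles, and only with their boundary paths. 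This should turn the compatibility condition into forbidding a bounded, structured set of edge pairs, which can be encoded by a polynomial matching or flow formulation, or by a dynamic program over the tree of poles induced by $\Tcal$.
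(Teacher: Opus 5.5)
Part (i) and your treatment of $d=g(G)/2$ are correct. The Gallai argument on $A_{\Pcal}$ (minimum edge cover $|\Pcal|-|M|$, and every size up to $|E(A_{\Pcal})|$ attainable) is in fact a more robust replacement for the paper's per-pole formulas $y_i=\lceil|\Pcal_i|/2\rceil$ and $z_i=|\Zcal_i|$.

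The gap is the case $d<g(G)/2$. You correctly reduce it to finding a perfect matching of $A_{\Pcal}$, but you leave the cross-pole condition $w(E(C)\cap E(C'))\le d$ unresolved. The claimed $\Ocal(1)$ bound on interacting poles is not proved. Even granting it, \emph{a matching or flow formulation with forbidden edge pairs} is not an algorithm, since perfect matching with pairwise conflicts is NP-hard in general. So as written, part (ii) is not established for $d<g(G)/2$.

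The missing idea, which the paper uses, is that the cross-pole condition is automatic once $\Pcal$ is $d$-diverse: in your four-term sum at most one term is nonzero.

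Let $C=P_i\cup P_j$ be a non-tree cycle of a splittable node $\alpha$, with $\Pcal_{\sf s}(\alpha)=\{P_0,\dots,P_b\}$. Let $C'=P'_{i'}\cup P'_{j'}$ be a non-tree cycle of a splittable node $\alpha'\neq\alpha$, with $\alpha'=P'_0\cup P'_{b'}$. Since $\{i',j'\}\neq\{0,b'\}$, $C'$ contains at most one of $P'_0,P'_{b'}$. Hence $E(C')\cap E(\alpha')$ lies in a single path of $C'$, and symmetrically $E(C)\cap E(\alpha)$ lies in a single path of $C$. Now consider the two cases.

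If $\alpha'$ is a strict descendant of $\alpha$ (the reverse case is symmetric), then $C'\le\alpha'\le P_m\cup P_{m+1}$ for some child of $\alpha$. The edges of $C$ in that closed disk belong to paths in $\{P_i,P_j\}\cap\{P_m,P_{m+1}\}$, which has at most one element because $P_i,P_j$ are non-consecutive. These edges lie on the boundary of the disk, hence outside $\Int(\alpha')\subseteq\Int(P_m\cup P_{m+1})$. So they can belong to $C'$ only if they lie on $\alpha'$.

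If $\alpha,\alpha'$ are incomparable, their interiors are disjoint, so $E(C)\cap E(C')\subseteq E(\alpha)\cap E(\alpha')$.

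In both cases $E(C)\cap E(C')\subseteq E(P_a)\cap E(P'_{a'})$ for a single $P_a\in\{P_i,P_j\}$ and a single $P'_{a'}\in\{P'_{i'},P'_{j'}\}$, both in $\Pcal$. Therefore $w(E(C)\cap E(C'))\le d$.

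Consequently, for $d<g(G)/2$, $q$ is $\Pcal$-valid if and only if $q=|\Pcal|/2$ and $A_{\Pcal}$ has a perfect matching, which Edmonds' algorithm decides in polynomial time. The paper avoids general matching altogether: it checks parity of each $|\Pcal_i|$ and rules out $c_i=1$ with a tree pair. Otherwise it pairs $P_{i,j}$ with $P_{i,c_i+j}$ within each pole.
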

\begin{proof}
The proof of the first item follows from \Cref{lem:non-tree-cycles-are-bounded} \ref{item-lem:loni-paths-bounded}. We now focus on the proof of the second item, and assume that $\C{P}$ is $d$-diverse. 

Note that if $\Pcal = \emptyset$, then $q$ is $\Pcal$-valid if and only if $ q = 0$. We now assume that $\C{P}\neq \emptyset$. Recall that $U$ is the set of all splittable nodes in $\C{T}$. Note that for distinct $\alpha,\alpha' \in U$, $\C{P}_{\sf s}(\alpha) \cap \C{P}_{\sf s}(\alpha') =\emptyset$ (see \Cref{prop:lsct-use-property} (\ref{prop:s-node-lsct})). Let $U' = \{\alpha_1,\alpha_2,\ldots, \alpha_x\} \subseteq U$ be the set such that for each $\alpha\in U'$, $\C{P} \cap \C{P}_{\sf s}(\alpha) \neq \emptyset$. Furthermore, for each $i \in [x]$, let $\C{P}_i = \C{P}\cap \C{P}_{\sf s}(\alpha_i)$. Note the above gives us a partition of $\C{P}$, and it can be computed in polynomial time. For $i\in [x]$, let $s_i,t_i$ be the pole of the splittable cycle $\alpha_i$, and note that each path in $\C{P}_i$ is an $s_i-t_i$ shortest path in $G$ of length exactly $g(G)/2$. We next do some sanity checks where we can directly rule out that $q$ is $\C{P}$-valid.

If for any $i \in [x]$, $|\C{P}_{\sf s}(\alpha_i)| \leq 3$ (in which case it must be exactly $3$), then note all three cycles formed by a pair of paths from $\C{P}_{\sf s}(\alpha_i)$ must be tree cycles of $\C{T}(G)$ (one of these will be the parent, and the other two its children in $\C{T}(G)$). As \Cref{def:valid-set-of-paths} requires only non-tree cycles, if for any $i\in [x]$, $|\C{P}_{\sf s}(\alpha_i)| \leq 3$, then there cannot exist any $\C{P}$-valid integer, so we conclude that $q$ is not $\C{P}$-valid. If for any $i \in [x]$, $|\Pcal_{i}| =1$, then there is no $\C{P}$-valid integer, so we can declare that $q$ is not $\C{P}$-valid. Hereafter, we assume that for each $i \in [x]$, $|\Pcal_{i}| \geq 2$ and $|\C{P}_{\sf s}(\alpha_i)| \geq 4$.

For each $i \in [x]$, let $\Pcal_{\sf s}(\alpha_i)=\{P_{i,0}, P_{i,1}, P_{i,2}, \ldots, P_{i,b_i}\}$, where $\alpha_i = P_{i,0} \cup P_{i,b_i}$, and the paths in $\C{P}_{\sf s}(\alpha_i)$ appear in the stated order inside the curve of $\alpha_i$ in $G$'s embedding. Note we can construct these orderings in polynomial time, and paths in $\Pcal_{\sf s}(\alpha_i)$ are internally pairwise vertex-disjoint (see \Cref{prop:lsct-use-property} (\ref{prop:s-node-lsct})). We consider the following cases based on the value of $d$.

\begin{itemize}
\item Consider the case when $d <g(G)/2$. Notice that for any $i \in [x]$, any two distinct non-tree $d$-diverse cycles that are each union of two paths from $\C{P}_i$ must actually be edge-disjoint (also see \Cref{prop:lsct-use-property} (\ref{prop:s-node-lsct})). 
Thus, we can declare that $q$ is not $\C{P}$-valid if one of the following holds: i) $|\C{P}_i|$ is odd, for some $i\in [x]$, or ii) $\sum_{i=1}^x |\C{P}_i|/2 \neq q$. 
Otherwise, consider the following. 
For $i\in [x]$, let $\C{P}_i = \{P_{i,1}, P_{i,2},\ldots, P_{i,2 c_i}\}$, where $c_i \geq 1$. 
Consider the case when for some $i\in [x]$, $c_i =1$. 
In the above case, if $P_{i,1}, P_{i,2}$ are consecutive in the ordering of paths in $\C{P}_{\sf s}(\alpha_i)$, or $P_{i,1}\cup P_{i,2} = \alpha_i$ then $\C{P}_i\subseteq \C{P}$ cannot be covered with non-tree cycles as required by \Cref{def:valid-set-of-paths}, so we can declare that $q$ is not $\C{P}$-valid. 
Else, notice that the non-tree cycle $P_{i,1}\cup P_{i,2}$ covers the paths in $\C{P}_i$. Now consider any $i \in [x]$, where $c_i > 1$. 
Notice that the cycles $C_j = P_{i,j} \cup P_{i,c_i +j}$, for $j\in [c_i]$, exactly cover $\C{P}_i$ and they are pairwise edge-disjoint (and thus $d$-diverse). 
We will now show that unless we have declared that $q$ is not $\C{P}$-valid in the previous discussions, then $q$ is $\C{P}$-valid. Note the above discussed procedure can be executed in polynomial time, and we next argue the correctness. 

So we consider the case that has not been already covered before, and note that in that case, for each $i\in [x]$, we constructed a set of non-tree cycles, $\C{C}_i$, which has $d$-diverse shortest cycles that exactly covers $\C{P}_i$. Let $\C{C} = \cup_{i=1}^{x} \C{C}_i$, and note that $|\C{C}| = q$ by the construction we described. 
We now argue that $\C{C}$ is $d$-diverse, thus implying that $q$ is $\C{P}$-valid. 
To prove this it is enough to argue that for any distinct $i,i' \in [x]$, and cycles $C\in \C{C}_i$ and $C' \in \C{C}_{i'}$, $w(C\cap C') \leq d$. 
Recall that $\C{P}$ is $d$-diverse and the non-tree cycles $C$ and $C'$ must be associated with distinct splittable nodes (otherwise, they are already edge-disjoint), say $\alpha$ and $\alpha'$ of $\C{T}$, respectively, where $\alpha = X \cup Y, \alpha' = X' \cup Y'$, where $X,Y\in \C{P}_{\sf s}(\alpha)$ and $X',Y'\in \C{P}_{\sf s}(\alpha')$. Note that $C$ (resp. $C'$) can contain edges of at most one of the paths $X,Y$ (resp. $X', Y'$). The above discussions together with the assumption that $\C{P}$ is $d$-diverse and $C,C'$ are non-tree cycles, implies that $w(C\cap C') \leq d$. 

\item Now consider the case when $d=g(G)/2$. 
Recall that for each $i\in [x]$, we have $|\Pcal_i|\geq 2$ and $|\C{P}_{\sf s}(\alpha_i)|\geq 4$.
Similar to the previous case, notice that if for some $i\in [x]$, $|\C{P}_i| =2$, where either the two paths in $\C{P}_i$ are consecutive in the ordering of paths in $\C{P}_{\sf s}(\alpha_i)$, or their union exactly gives the tree cycle $\alpha_i$, then $q$ is not $\C{P}$-valid. 
Similarly, if for $i\in [x]$, $|\C{P}_i| =3$ and all the three paths in $\C{P}_i$ are consecutive in $\C{P}_{\sf s}(\alpha_i)$, then $q$ is not $\C{P}$-valid. 
Now we assume that we are not in the above case and construct numbers $y_i,z_i$, for $i\in [x]$ which denote the minimum number of non-tree cycles needed to exactly cover $\C{P}_i$, and the maximum number of non-tree cycles that exactly cover $\C{P}_i$. Consider any $i\in [x]$. 
Note that to cover the paths in $\C{P}_i$ we need $\ceil{\C{P}_i/2}$ non-tree cycles, and such a set of non-tree cycles exists; so we set $y_i = \ceil{\C{P}_i/2}$. 
Let $\C{H}_i$ be the set of (unordered) pairs of distinct paths from $\C{P}_i$ that are either consecutive in ordering of paths in $\C{P}_{\sf s}(\alpha_i)$, or it is the (unordered) pair that has the first and the last path of $\C{P}_{\sf s}(\alpha_i)$. 
Let ${\C{P}_i \choose 2}$ be the set of pairs of distinct paths from $\C{P}_i$, and $\C{Z}_i = {\C{P}_i \choose 2} \setminus \C{H}_i$. 
Notice that $\C{Z}_i$ is $d$-diverse, i.e., for any distinct unordered pairs $\{\iota,\iota'\}, \{\delta,\delta'\} \in \C{Z}_i$, we have $w((\iota\cup\iota')\cap (\delta\cup\delta')) \leq d$. Furthermore, $z_i = |\C{Z}_i|$, and they can be constructed in polynomial time. Similar to the previous case, note that for distinct $i,i' \in [x]$, and $C = \iota\cup \iota'$, for $\{\iota, \iota'\}\in \C{Z}_i$ and $C'=\delta\cup \delta'$, for $\{\delta, \delta'\}\in \C{Z}_{i'}$, we can obtain that $w(C \cap C') \leq d$. 
Thus (unless already stated otherwise by the previous arguments), $q$ is $\C{P}$-valid if and only if $\sum_{i\in [x]}y_i \leq q \leq \sum_{i\in [x]}z_i$, which can be tested in polynomial time. 
\end{itemize}
This concludes the proof.
\end{proof}

\subparagraph{Starting with the dynamic programming states.} We now define the states of our dynamic programming table, for a fixed $d$-diverse $\C{P}\subseteq \C{Q}$ set of longitudinal paths.  

\begin{definition}[States]
    \label{def:pi-p-alpha-h}
    Consider a set of longitudinal paths $\Pcal\subseteq \C{Q}$, where $\C{P}$ is $d$-diverse. 
 Recall that $w(\C{C} \cup \C{P}) = \sum_{e \in E_{\C{C} \cup \C{P}}}w(e)$, where $E_{\C{C} \cup \C{P}} = \bigcup_{X \in \C{C} \cup \C{P}} E(X)$.
    For an integer $h$, where $1\leq h \leq k$, and a node $\alpha$ of $\C{T}$, we define $\Pi_{\Pcal}(\alpha, h)$ to be the maximum over $w(\C{C} \cup \C{P})$, where $\C{C}$ is a set of $d$-diverse tree cycles of $G$ satisfying the following conditions: 
    \begin{enumerate}
        \item
        \label{item-1-of-def:pi-p-alpha-h}
        $|\C{C}| = h$ and $\alpha \in \C{C}$.

        \item
        \label{item-2-of-def:pi-p-alpha-h} For each $\beta\in \C{C}$, $\beta$ is a descendant of $\alpha$ in $\C{T}$.\footnote{For a rooted tree, we use the convention that each node is also a descendant of itself.}

        \item
        \label{item-3-of-def:pi-p-alpha-h}
        Every tree cycle in $\C{C}$ is {\em $d$-diverse} from every path in $\Pcal$, i.e., for each $C\in \C{C}$ and $P\in \C{P}$, we have $w(C\cap P) \leq d$. 
    \end{enumerate}
    In the above, any $\C{C}$ satisfying the above three  conditions is a {\em solution} for $\Pi_{\Pcal}(\alpha, h)$. Moreover, it is an {\em optimal solution} if we have $w(\C{C} \cup \C{P}) = \Pi_{\Pcal}(\alpha, h)$.    
\end{definition}

In \Cref{subsec:compute-pi} we prove the following lemma, and for the remainder of this section we show how we can compute the optimal solution for $(G,w,k,d,\ell)$ using this lemma.

\begin{lemma}[\Cref{subsec:compute-pi}]\label{lem:compute-pi}
Given $(G,w,k,d,\ell)$, $\C{T}$ and a $d$-diverse $\C{P}\subseteq \C{Q}$, in (total) time bounded by $2^{\Ocal(k)}\cdot n^{\Ocal(1)}$, for each $\alpha \in V(\C{T})$ and integer $h \in [k]$, we can compute the entry $\Pi_{\Pcal}(\alpha, h)$. 
\end{lemma}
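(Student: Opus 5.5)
We compute $\Pi_{\Pcal}(\alpha,h)$ bottom-up over $\Tcal$, processing a node only after all its strict descendants. We may assume $\Tcal$ has fewer than $4k$ leaves. Then $\Tcal$ has $\Ocal(k)$ branching nodes (nodes of degree at least three), and its remaining nodes split into $\Ocal(k)$ inclusion-maximal paths of degree-$2$ nodes (\emph{chains}). For a leaf $\alpha$ with $w(\alpha\cap P)\le d$ for all $P\in\Pcal$, we set $\Pi_{\Pcal}(\alpha,1)=w(\{\alpha\}\cup\Pcal)$ and all other entries to $-\infty$.

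For an internal node $\alpha$, fix an optimal solution $\Ccal\ni\alpha$. Let $\beta_1,\dots,\beta_b$ be its \emph{closest} elements strictly below $\alpha$, meaning no element of $\Ccal$ lies strictly between $\alpha$ and $\beta_i$. Write $\Ccal=\{\alpha\}\cup\bigcup_i\Ccal_i$, where $\Ccal_i$ is the part of $\Ccal$ in the subtree of $\beta_i$ and $|\Ccal_i|=h_i$. By \Cref{lem:intersection-with-ancestor-is-more}, cycles from different parts $\Ccal_i,\Ccal_j$ can share only edges of $\alpha$ or of the lowest common ancestor region. This is what makes the value decomposable.

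We guess, in $2^{\Ocal(k)}$ ways, the following data.
\begin{itemize}
\item For each $i$, the branching node or leaf $\beta_i$ itself, or, if $\beta_i$ lies on a chain, the chain containing it together with its top endpoint $\gamma_i$.
\item The split $h=1+\sum_i h_i$. By \Cref{prop:num-sum-prod} and the $\Ocal(k)$ bound on branching nodes and chains, there are $2^{\Ocal(k)}$ such splits.
\end{itemize}
The value of a fixed guess is then $\sum_i \Pi_{\Pcal}(\beta_i,h_i)$, corrected by inclusion--exclusion for edges counted more than once. Those edges lie on $\alpha$, on paths of $\Pcal$, or on pairwise intersections $\beta_i\cap\beta_j$. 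We still have to check $d$-diversity between $\alpha$ and each $\beta_i$, and between $\beta_i$ and $\beta_j$ for $i\ne j$. Deeper members of $\Ccal_i$ meet $\alpha$ or other parts only inside $\beta_i$ or along $\alpha$, so this check also requires a ``top-intersection'' refinement of the stored entries. Concretely, we index the table additionally by the subpath of the parent-side cycle that the topmost solution cycle shares. By \Cref{claim:intersection-between-two-cycles} this is a subpath, so there are only polynomially many choices.

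When every $\beta_i$ is a known node, the inclusion--exclusion is direct. The difficulty is the chain case. By planarity, $\beta_i$ shares with any other $\beta_j$ only edges of $\gamma_i$, and $\beta_i\cap\gamma_i$ is a subpath $P_i$ of $\gamma_i$. We further guess, for each chain $i$, which other $\beta_j$ (or $\alpha$) realise the ``start'' and ``end'' of the intersections along $P_i$. A dual-graph sparsity argument should show that the total size of these relevant sets is linear in $b=\Ocal(k)$, so this guess again costs $2^{\Ocal(k)}$.

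These guesses define an auxiliary graph on the indices $i$, with an edge for each guessed neighbour. Since each chain has only a start and an end neighbour, the graph has maximum degree two, so its components are paths and cycles. We then run a second dynamic program along each component. It walks along the path or cycle, choosing the actual node $\beta_i$ on its chain; there are polynomially many candidates per chain. It keeps the previously chosen node as state, checks $d$-diversity and computes the shared weight with its neighbours, and adds $\Pi_{\Pcal}(\beta_i,h_i)$. For a cycle component, we first fix the first node and close the cycle at the end. The components interact only through edges of $\alpha$, which we account for globally. We also need that no edge lies in three solution cycles, so that pairwise correction suffices. Finally, we take the maximum over all guesses.

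The main obstacle is the correctness of this chain-case decomposition. We must prove that intersections between the chain-located $\beta_i$ are confined to $\gamma_i$ and form such path and cycle structures, and that pairwise inclusion--exclusion is exact. The $2^{\Ocal(k)}$ count for the start/end guesses, which rests on the planar sparsity argument, is the second delicate point. The running time is $2^{\Ocal(k)}$ guesses times polynomial work, per node and per $h$.
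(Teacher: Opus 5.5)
Your outline has the same architecture as the paper. Both use representatives drawn from $\Ocal(k)$ marked nodes, and both guess the first and last relevant cycle met by each unknown $\beta_i$ along $\gamma_i$; the dual sparsity bound of \Cref{obs:planar-gamma} keeps this to $2^{\Ocal(k)}$ choices. Both then build an auxiliary graph whose components are paths and cycles, and run a second dynamic program along each component. Using the chosen node instead of the subpath $\beta_q\cap\gamma_q$ as the state of that program is fine. Your ``top-intersection refinement'' of the table is unnecessary. For $C\in\Ccal_i$ and $C'\in\Ccal_j$, \Cref{lem:intersection-with-ancestor-is-more} gives $E(C)\cap E(C')\subseteq E(\beta_i)\cap E(\beta_j)$ and $E(C)\cap E(\alpha)\subseteq E(\beta_i)\cap E(\alpha)$, so diversity is inherited from the top cycles.

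The genuine gap is exactly at the step you flag as the crux. With ``an edge for each guessed neighbour'', the auxiliary graph does not have maximum degree two. Every index chooses at most two neighbours, but arbitrarily many indices can choose the same $j$. For instance, let $\beta_j\cap\gamma_j$ be a long subpath of $\gamma_j$ covering the segments $\gamma_j\cap\gamma_{i_1},\dots,\gamma_j\cap\gamma_{i_m}$, and let each $\beta_{i_t}\cap\gamma_{i_t}$ start inside $\gamma_{i_t}\cap\gamma_j$. Then $\gamma_j\in\{\psi_{i_t},\tau_{i_t}\}$ for every $t$, and $j$ has degree $m$. For the same reason, ``components interact only through edges of $\alpha$'' is false.

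The missing idea is a classification of pairs. If $\gamma_i$ lies strictly inside $\Gamma_{\alpha,R,\gamma_j}[\psi_j,\tau_j]$, or $\beta_j=\gamma_j$, then $\beta_j$ contains the whole segment $\gamma_i\cap\gamma_j$. Together with $E(\beta_i\cap\beta_j)\subseteq E(\gamma_i\cap\gamma_j)$, this gives $E(\beta_i\cap\beta_j)=E(\beta_i\cap\gamma_j)$, which depends on $\beta_i$ alone (and is a constant if $\gamma_j$ is also interior for $i$). So only \emph{mutual} endpoint pairs couple two unknown choices, namely those with $\gamma_j\in\{\psi_i,\tau_i\}$ and $\gamma_i\in\{\psi_j,\tau_j\}$. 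These are the only edges the paper puts in $H$, and this restriction is what gives degree at most two (\Cref{obs:h-cycle-path}).

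Every other interaction, including the one with $\alpha$, must then enter the component dynamic program as unary data. This means a filter $w(\beta_i\cap\gamma_j)\le d$ on the candidates (the $d$-good condition, \Cref{def:path-comp-theta-good}). It also means $\beta_i$-dependent overlap terms in the objective (cf.\ \Cref{def:overflow-wt}); these cannot be settled globally afterwards, since they affect which $\beta_i$ is optimal. Without this classification, your second dynamic program either runs on a graph that is not a union of paths and cycles, or silently drops diversity constraints and double-counted weight between non-adjacent indices.
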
 

We are ready to prove the main result of this section.  

\begin{proof}[Proof of \Cref{thm:maxddisjointcoverage}] 
Consider an instance $(G,w,k,d,\ell)$ of \pmaxddisjointcoverage. As discussed earlier, we assume that $(G,w,k,d,\ell)$ satisfies the conditions stated in the beginning of the section, and we have computed $\C{T}$ (using \Cref{prop:def:laminar-tree}), rooted at $C_r$. If the number of leaves in $\C{T}$ is more than $4k$, we can resolve the instance in polynomial time as discussed earlier, so we assume that $\C{T}$ has at most $4k$ leaves.  

Now for each $d$-diverse $\C{P}\subseteq \C{Q}$, we compute $\Pi_{\Pcal}(\alpha, h)$, for each node $\alpha$ of $\C{T}$ and an integer $h \in [k]$. As the number of leaves in $\C{T}$ is bounded by $4k$, from \Cref{lem:compute-pi,obs:test-h-validity} and \Cref{lem:non-tree-cycles-are-bounded} (\Cref{item-lem:loni-paths-bounded}), it follows that we can execute the above stated step in (total) time bounded by $2^{\Ocal(k)} \cdot n^{\Ocal(1)}$. Furthermore, within the same running time we can compute the following number:\footnote{We use the convention that empty maximum is $-\infty$, which is a number smaller than any fixed integer.} 
 $$
        z= \max_{\substack{
          d{\scriptstyle \text{-diverse }} \Pcal \subseteq \C{Q}\\ h\in [k]:~(k-h) {\scriptstyle\text{ is }} \C{P}{\scriptstyle \text{-valid}}
        }}
        \Pi_{\Pcal}(C_r, h)
    $$
    Finally, we return that $(G,w,k,d,\ell)$ is a yes-instance if and only if $ z\geq \ell$. 
    
  The desired runtime bound follows from above discussions. Thus, to conclude the proof it is enough to argue that the weight of an optimal solution for $(G,w,k,d,\ell)$ is exactly $z$. (In the above if an optimal solution does not exist, by the weight of its optimal solution we mean $-\infty$.) To this end, consider an optimal solution $\Ccal$ for $(G,w,k,d,\ell)$, where $C_r\in \C{C}$. 
  
We will first show that $z\geq w(\C{C})$. We assume that right hand side is not $-\infty$, else the inequality is trivially satisfied. Let $\Ccal_T$ and $\Ccal_N$ denote the sets of tree cycles and
    non-tree cycles in $\Ccal$, respectively. Let $h^* = |\Ccal_T|$, and let
    $\Pcal^*$ be the set of longitudinal paths covered by the $k-h^*$ cycles in
    $\Ccal_N$. Then the edges covered by $\Ccal_N$ are exactly the edges covered by
    $\Pcal^*$, and by \Cref{def:valid-set-of-paths}, $(k-h^*)$ is $\Pcal^*$-valid. Note that $z \geq \Pi_{\Pcal^*}(C_r, h^*)$, so to conclude that $z\geq w(\C{C})$, it is enough to argue that $\C{C}_T$ satisfies all the conditions stated in \Cref{def:pi-p-alpha-h}, for $\C{P}^*$, $h^*$, and $C_r$. By the construction of $h^*$ and as $C_r\in \C{C}$, \Cref{def:pi-p-alpha-h} (\ref{item-1-of-def:pi-p-alpha-h}) is satisfied by $\C{C}_T$. As $C_r$ is the root of $\C{T}$, trivially \Cref{def:pi-p-alpha-h} (\ref{item-2-of-def:pi-p-alpha-h}) is satisfied by $\C{C}_T$. Now we argue that for each $C\in \C{C}_T$ and $P\in \C{P}^*$, we have $w(C\cap P) \leq d$. Note that $P$ is a longitudinal path, say an $s-t$ longitudinal path, for pole $s,t$. As $P\in \C{P}^*$, there exists $C'\in \C{C}_N$, such that $C' = P \cup P'$, where both $P,P'$ are $s-t$ longitudinal paths with $V(P) \cap V(P') = \{s,t\}$. Moreover, as $C'$ is a non-tree cycle, \Cref{prop:lsct-use-property} (\ref{prop:cycle-type-lsct}) implies that there is a unique splittable node $\gamma$ in $\C{T}$, such that $P,P'$ are non-consecutive paths of $\C{P}_{\sf s}(\gamma)$. The above together with the fact that $\C{C}$ is $d$-diverse and $C$ is a tree cycle (and \Cref{claim:intersection-between-two-cycles}), implies that the edges (if they exist) in $E(C)\cap E(C')$ must lie in at most one of the paths $P, P'$. Thus we can conclude that $w(C\cap P)\leq d$. This concludes the proof that $z\geq w(\C{C})$. 

    We now show that $z\leq w(\C{C})$ (and again assume that $z\neq -\infty$). 
    Consider $\C{P}'\subseteq \C{Q}$, an integer $h'\in [k]$ where $k-h'$ is $\C{P}'$-valid, such that $z = \Pi_{\C{P}'}(C_r,h')$. 
    Furthermore, let $\Ccal_1$ be the set of tree cycles satisfying the conditions stated in \Cref{def:pi-p-alpha-h} for $\C{P}'$, $h'$, and $C_r$, such that $\Pi_{\C{P}'}(C_r,h') = w(\Ccal_1 \cup \C{P}')$. As $k-h'$ is $\C{P}'$-valid, there exists a set of $d$-diverse non-tree cycles $\C{C}_2=\{C_1,C_2, \ldots, C_{k-h'}\}$, such that $\C{C}_2$ exactly covers $\C{P}'$ (see \Cref{def:valid-set-of-paths}). 
    Let $\C{C}' = \C{C}_1 \cup \C{C}_2$, and note that $w(\C{C}') = z \leq w(\C{C})$, where $\C{C}$ is an optimal solution for $(G,w,k,d,\ell)$. 
    Thus, it is enough to argue that $\C{C}'$ is a solution for $(G,w,k,d,\ell)$. 
    Note by construction, $|\C{C}'| = k$, and to establish the statement above, it is enough to argue that for  $C_1\in \C{C}_1$ and $C_2 \in \C{C}_2$, we have $w(C_1\cap C_2) \leq d$. 
    As $C_2 = P \cup P'$ is a non-tree cycle (see \Cref{prop:lsct-use-property} (\ref{prop:cycle-type-lsct})), it must be some $s-t$ splittable cycle where $P,P' \in \C{P}'$ are $s-t$ longitudinal paths. 
    Similar to the arguments used while proving the other inequality, note we can obtain that $E(C_1)$ can intersect with edges in at most one of the paths $P,P'$. 
    This implies that $w(C_1 \cap C_2) \leq d$ (see \Cref{def:pi-p-alpha-h} (\ref{item-3-of-def:pi-p-alpha-h})). 
    
    This concludes the proof of \Cref{thm:maxddisjointcoverage} (subject to the proof of \Cref{lem:compute-pi}). 
\end{proof}

    \subsection{Proof of \Cref{lem:compute-pi}: Computing Table Entries}\label{subsec:compute-pi}

In this section we prove~\Cref{lem:compute-pi}. The proof proceeds in several stages. One of the crucial steps is the technical \Cref{lem:compute-pi-ext}, whose proof occupies \Cref{sec:conputation-of-pi-alpha-h-using-theta}.

Consider an instance $(G,w,k,d,\ell)$ of \pmaxddisjointcoverage, where $G$ is a clean plane graph with $V(G)=[n]$ and the outer face is the cycle $C_r$, where vertices of $C_r$ have no edges to any vertex in $V(G)\setminus V(C_r)$, and for any cycle $C$ in $G$, we have $C \leq C_r$, $w: E(G) \rightarrow \mathbb{Z}_{>0}$, $k\geq 1$, $0 \leq d \leq g(G)/2$. 
Also, we have the LSCT $\C{T}$ of $G$ is rooted at $C_r$, which has at most $4k$ leaves. 
We consider $\C{P} \subseteq \C{Q}$, where $\C{P}$ is $d$-diverse, an integer $h\in  [k]$, and a node $\alpha$ of $\C{T}$ (see~\Cref{def:pi-p-alpha-h}), and describe a (recursive) formula to compute the entry $\Pi_{\Pcal}(\alpha, h)$. We start with a simple sanity check.  

\smallskip
\noindent{\bf Sanity Check.} Note that if $\{\alpha\}$ is not $d$-diverse from $\C{P}$ (see \Cref{def:pi-p-alpha-h} (\ref{item-3-of-def:pi-p-alpha-h})), then $\Pi_{\C{P}}(\alpha,h)= -\infty$. 
(Recall our convention that empty maximum is $-\infty$). 
Note we can test the above condition in polynomial time, and accordingly set the value if needed. 

Hereafter, we assume that $\{\alpha\}$ is $d$-diverse from $\C{P}$, and next describe our base case. 

\smallskip
\noindent{\bf Base Case.} When $\alpha$ is a leaf of $\C{T}$, or $h=1$. In each of the above cases $\alpha$ must be in any solution for $\Pi_{\C{P}}(\alpha,h)$. From \Cref{{def:pi-p-alpha-h}} (\ref{item-1-of-def:pi-p-alpha-h}), if $h=1$, we can correctly set $\Pi_{\C{P}}(\alpha,h) = w(\{\alpha\} \cup \C{P})$. Moreover, if $\alpha$ is a leaf of $\C{T}$ and $h>1$, then we can correctly set $\Pi_{\C{P}}(\alpha,h)$ to $-\infty$. 

\medskip\noindent{\bf Towards the recursive formula.} We will now consider the case when $\alpha$ is not a leaf of $\C{T}$ and $h>1$ (and $\{\alpha\}$ is $d$-diverse from $\Pcal$). Furthermore, we will assume that we have correctly computed the entries $\Pi_{\Pcal}(\beta, h')$, for each descendant $\beta$ of $\alpha$ in $\C{T}$, where $\beta\neq \alpha$ and $h' \in [k]$. To compute $\Pi_{\C{P}}(\alpha,h)$, we would have ideally liked to query entries for descendant of $\alpha$ that belongs to a (fixed unknown) optimal solution, which we formalize in the next definition (also see \Cref{fig:pd-and-pdr}). However, as the number of nodes in $\C{T}$ may not be bounded by a function of $k$, to achieve the running time we desire, the above type of queries will not be possible for us, due to which we will need a slightly more non-trivial notion of ``relevant'' descendants.  

\begin{figure}
\centering
\begin{tikzpicture}[
scale=1,
every node/.style={circle,draw,minimum size=15pt,inner sep=1pt},
font=\scriptsize
]

\node(a) at (0,0)[draw=red,line width=1pt,line width=1pt]{\(\alpha\)};

\node(l0) at(-1.5,-0.5){};
\node(l1) at (-3,-1){};
\node(m1) at (0,-1){};
\node(r1) at (3,-1){$\gamma_5$};
\draw[blue, line width=1pt] (r1) circle (10pt);

\node(l2) at (-4,-2)[draw=red,line width=1pt]{$\beta_1$};
\node() at (-3.25,-2.05)[draw=none,line width=1pt]{$= \gamma_1$};
\draw[blue, line width=1pt] (l2) circle (10pt);
\node(l3) at (-5,-3){};
\node(l4) at (-6,-4){};

\node(lb1) at (-4,-3){};
\node(lb2) at (-4,-4){};

\node(m2) at (0,-2){$\delta_1$};

\node(new1) at (-1.5,-2.0){};

\node(m3l) at (-1.5,-3){$\gamma_2$};
\draw[blue, line width=1pt] (m3l) circle (10pt);
\node(m4l) at (-1.5,-4)[draw=red,line width=1pt]{$\beta_2$};
\node(m5l) at (-1.5,-5)[draw=red,line width=1pt]{};
\node(m6l) at (-1.5,-6){};

\node(m3c) at (0,-3){$\gamma_3$};
\node(m3c2) at (0,-4)[draw=red,line width=1pt]{$\beta_3$};
\draw[blue, line width=1pt] (m3c) circle (10pt);

\node(m3r) at (1.5,-3){};
\node(m4r) at (1.5,-4)[draw=red,line width=1pt]{$\beta_4$};
\node() at (2.25,-4.05)[draw=none,line width=1pt]{$= \gamma_4$};
\draw[blue, line width=1pt] (m4r) circle (10pt);
\node(m5r1) at (1,-5){};
\node(m5r2) at (2,-5){};

\node(r2) at (4,-2){$\gamma_5^{\sf s}$};
\node(r3) at (5,-3){};
\node(r4) at (5.5,-4)[draw=red,line width=1pt]{$\beta_5$};
\node(r5) at (6,-5){$\gamma_5^{\sf t}$};
\node(r6) at (6.6,-6){};


\draw (a)--(l0);
\draw (l1)--(l0);
\draw (a)--(m1);
\draw (a)--(r1);

\draw (l1)--(l2);
\draw (l2)--(l3);
\draw (l3)--(l4);

\draw (l2)--(lb1);
\draw (lb1)--(lb2);

\draw (m1)--(m2);
\draw (m1)--(new1);

\draw (m2)--(m3l);
\draw (m3l)--(m4l);
\draw (m4l)--(m5l);
\draw (m6l)--(m5l);

\draw (m2)--(m3c);
\draw (m3c2)--(m3c);

\draw (m2)--(m3r);
\draw (m3r)--(m4r);
\draw (m4r)--(m5r1);
\draw (m4r)--(m5r2);

\draw (r1)--(r2);
\draw (r2)--(r3);
\draw (r3)--(r4);
\draw (r4)--(r5);
\draw (r5)--(r6);

\end{tikzpicture}

\caption{The picture illustrates, for a node $\alpha$ of $\C{T}$, the subtree $\C{T}_\alpha$, and a solution $\C{C}$ for $\Pi_{\C{P}}[\alpha,7]$ using red border nodes. The nodes labelled $\beta_1,\beta_2, \beta_3, \beta_4$, and $\beta_5$ are solution descendants of $\alpha$ in $\C{C}$, and the nodes $\gamma_1,\gamma_2, \gamma_3, \gamma_4$, and $\gamma_5$ are their \pdrs, respectively. We also illustrate a maximal degree $2$ path below $\gamma_5$ from $\gamma^{\sf s}_5$ to $\gamma^{\sf t}_5$ (relevant for \Cref{sec:conputation-of-pi-alpha-h-using-theta}).}
\label{fig:pd-and-pdr}
\end{figure}

\begin{definition}
\label{def:productive-descendant}
Let $\Ccal$ be a solution for $\Pi_{\Pcal}(\alpha, h)$, where $\alpha$ is a non-leaf node of $\Tcal$ and $h > 1$. A node $\beta \in \Ccal \setminus \{\alpha\}$ is a \emph{solution descendant}
of $\alpha$ in $\Ccal$ if there is no $\beta' \in \Ccal \setminus \{\alpha, \beta\}$ that lies on the unique path between $\alpha$ and $\beta$ in $\Tcal$.
\end{definition}

For an illustration of the above definition, see \Cref{fig:pd-and-pdr}. The following observation (which will be useful later) relates $\Pi_{\C{P}}(\alpha,h)$ to the entries of the (unknown) solution descendants of $\alpha$ for some fixed (unknown) optimal solution, and it easily follows from \Cref{def:pi-p-alpha-h}. 
 
 {Recall that for a node $\gamma$ of $\C{T}$, $\C{T}_\gamma(G)$ is the subtree of $\C{T}$ rooted at $\gamma$.}
 \begin{observation}
    \label{obs:partitioning-of-a-solution}
    Let $\Ccal$ be a solution for $\Pi_{\Pcal}(\alpha, h)$, where $\alpha$ is a non-leaf node of $\Tcal$ and $h > 1$. Furthermore, let $\{\beta_1, \beta_2, \ldots, \beta_{b}\}$ be the set of \pds{} of $\alpha$ in $\Ccal$. For each $i \in [b]$, let $\Ccal_i = \Ccal \cap V(\C{T}_{\beta_i})$, and $h_i = |\Ccal_i|$. Then the following holds. 
    \begin{enumerate}
        \item 
        \label{item-1-of-obs:partitioning-of-a-solution}
        $\{\{\alpha\}, \Ccal_1, \Ccal_2, \ldots, \Ccal_{b}\}$ is a partition of $\Ccal$. 
        
        \item 
        \label{item-2-of-obs:partitioning-of-a-solution}
        For each $i \in [b]$, $h_i \geq 1$, and $\sum_{i = 1}^{b} h_i = h - 1$.
        
        \item 
        \label{item-3-of-obs:partitioning-of-a-solution} 
 For any $i\in [b]$, replacing $\Ccal_i$ by any optimal solution $\Ccal_i^*$ for $\Pi_{\Pcal}(\beta_i,h_i)$ yields a solution
            $\Ccal^*=(\Ccal\setminus \Ccal_i)\cup \Ccal_i^*$
        for $\Pi_{\Pcal}(\alpha,h)$, where $w(\Ccal^*\cup \C{P}) \geq w(\Ccal \cup \C{P})$.
        
        \item\label{item-1-of-obs:partitioning-of-a-solution-opt-sol} For each $i\in [b]$, $\C{C}_i$ is a solution for $\Pi_{\Pcal}(\beta_i, h_i)$. Moreover, if $\C{C}$ is an optimal solution for $\Pi_{\Pcal}(\alpha, h)$, then $\C{C}_i$ is an optimal solution for $\Pi_{\Pcal}(\beta_i, h_i)$.
    \end{enumerate}
\end{observation}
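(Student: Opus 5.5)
The plan is to check the four items one at a time, using only \Cref{def:pi-p-alpha-h}, \Cref{def:productive-descendant} and the tree structure of $\Tcal$. Item~3 will use \Cref{lem:intersection-with-ancestor-is-more} in addition.

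\textbf{Items 1 and 2.} Take any $\beta\in\Ccal\setminus\{\alpha\}$. By Condition~\ref{item-2-of-def:pi-p-alpha-h}, $\beta$ is a descendant of $\alpha$. Walk up the $\alpha$--$\beta$ path in $\Tcal$ and let $\beta'$ be the node of $\Ccal\setminus\{\alpha\}$ on this path that is closest to $\alpha$. Then $\beta'$ is a solution descendant and $\beta\in V(\Tcal_{\beta'})$, so every element of $\Ccal$ lies in $\{\alpha\}$ or in some $\Ccal_i$.

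The sets are disjoint for the following reason. If $\beta\in\Ccal_i\cap\Ccal_j$ with $i\neq j$, then $\beta_i$ and $\beta_j$ both lie on the $\alpha$--$\beta$ path. So one of them, say $\beta_i$, lies strictly between $\alpha$ and the other, which contradicts $\beta_j$ being a solution descendant. Also $\alpha\notin\Ccal_i$, since $\beta_i\neq\alpha$ is a proper descendant of $\alpha$. This gives item~1. Item~2 follows because $\beta_i\in\Ccal_i$ gives $h_i\geq 1$, and the partition together with $|\Ccal|=h$ gives $\sum_i h_i=h-1$.

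\textbf{Item 4.} Each $\Ccal_i$ is a solution for $\Pi_\Pcal(\beta_i,h_i)$:
\begin{itemize}
\item it has size $h_i$ and contains $\beta_i$;
\item all its members are descendants of $\beta_i$;
\item its members are pairwise $d$-diverse and $d$-diverse from $\Pcal$, as inherited subsets of $\Ccal$.
\end{itemize}
Optimality follows from item~3 by an exchange argument. Suppose $\Ccal$ is optimal but $\Ccal_i$ is not optimal for $\Pi_\Pcal(\beta_i,h_i)$. Replace $\Ccal_i$ by an optimal $\Ccal_i^*$. Item~3 only gives $\geq$, so I will show strict improvement directly, by decomposing the covered weight into contributions.

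\textbf{Item 3: the main step.} First I show that $\Ccal^*$ is a solution.
\begin{itemize}
\item Its size is $h$, by disjointness: $\Ccal_i^*\subseteq V(\Tcal_{\beta_i})$ is disjoint from $\Ccal\setminus\Ccal_i$, because by item~1 the other members lie outside $\Tcal_{\beta_i}$.
\item It contains $\alpha$, and all members are descendants of $\alpha$.
\item Diversity with $\Pcal$ holds by Condition~\ref{item-3-of-def:pi-p-alpha-h} for both parts.
\end{itemize}

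The real content is pairwise $d$-diversity between $\gamma\in\Ccal_i^*$ and $\gamma'\in\Ccal\setminus\Ccal_i$. Here \Cref{lem:intersection-with-ancestor-is-more}, applied with ancestor $\beta_i$, gives $E(\gamma)\cap E(\gamma')\subseteq E(\beta_i)$, since $\gamma'$ is not a descendant of $\beta_i$. Hence $E(\gamma)\cap E(\gamma')\subseteq E(\gamma)\cap E(\beta_i)\cap E(\gamma')$.

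I expect this to be the delicate point, since $\gamma$ and $\beta_i$ may share edges, so I cannot conclude directly. Instead I argue that $E(\gamma)\cap E(\gamma')\subseteq E(\beta_i)\cap E(\gamma')$. This set has weight at most $d$, because $\beta_i,\gamma'\in\Ccal$ and $\Ccal$ is $d$-diverse (when $\gamma'=\alpha$ this is also fine). Thus $\beta_i$ acts as a ``shield'', and $d$-diversity transfers.

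For the weight, I would write
\[
E_{\Ccal^*\cup\Pcal}=E_{(\Ccal\setminus\Ccal_i)\cup\Pcal}\cup E_{\Ccal_i^*}.
\]
The part of $E_{\Ccal_i^*}$ outside $E(\beta_i)$ is disjoint from $E_{\Ccal\setminus\Ccal_i}$ by the same lemma. The worry is the overlap with $\Pcal$ and with $E(\beta_i)$. Since $\beta_i\in\Ccal_i^*$ and $\beta_i\in\Ccal_i$, both sides contain $E(\beta_i)$. So
\[
w(\Ccal^*\cup\Pcal)=w\bigl((\Ccal\setminus\Ccal_i)\cup\Pcal\bigr)+w\bigl(E_{\Ccal_i^*}\setminus E_{(\Ccal\setminus\Ccal_i)\cup\Pcal}\bigr),
\]
and analogously for $\Ccal$.

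The remaining issue is to compare the second terms via $w(\Ccal_i^*\cup\Pcal)\geq w(\Ccal_i\cup\Pcal)$. Every edge of $E_{\Ccal\setminus\Ccal_i}$ that lies in $E_{\Ccal_i^*}\cup E_{\Ccal_i}$ lies in $E(\beta_i)$, which is common to both. So the marginal gain equals $w(\Ccal_i^{(\ast)}\cup\Pcal)$ minus a quantity $w\bigl(E(\beta_i)\cap E_{\Ccal\setminus\Ccal_i}\setminus E_\Pcal\bigr)$ that does not depend on the choice. The inequality then follows, and it is strict exactly when $\Ccal_i$ is suboptimal, which also completes item~4.
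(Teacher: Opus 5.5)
Your proof is correct and follows the paper's route: the partition comes directly from the definition of solution descendants, and the replacement argument uses \Cref{lem:intersection-with-ancestor-is-more} with $\beta_i$ as the shielding ancestor, so that $E(\gamma)\cap E(\gamma')\subseteq E(\beta_i)\cap E(\gamma')$ has weight at most $d$. Your explicit decomposition showing that $w(\Ccal^*\cup\Pcal)-w(\Ccal\cup\Pcal)$ equals exactly $w(\Ccal_i^*\cup\Pcal)-w(\Ccal_i\cup\Pcal)$ is a useful addition: the paper's one-line claim that item~4 follows from item~3 tacitly needs this equality, not just the inequality, to obtain a strict improvement.
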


\begin{proof}
    The first statement follows  from \Cref{def:productive-descendant}, since every cycle in $\Ccal \setminus \{\alpha\}$ belongs to the subtree of exactly one solution descendant of $\alpha$. Moreover, the above implies the second statement. The fourth statement is implied by the third one, so we focus on the proof of third item. 
     
As $\beta_i \in \Ccal_i \cap \Ccal_i^{*}$, for any cycle $\delta \in \Ccal^*\setminus \Ccal^*_i$ and $\delta'\in \Ccal^*_i$, $E(\delta\cap \delta') \subseteq E(\beta_i \cap \delta')$ (see \Cref{lem:intersection-with-ancestor-is-more}), and thus we can conclude that $\C{C}^*$ is a solution for $\Pi_{\Pcal}(\alpha, h)$. Moreover, $w(\Ccal^*\cup \C{P}) \geq w(\Ccal \cup \C{P})$ follows from $w(\Ccal^*_i\cup \C{P}) \geq w(\Ccal_i\cup \C{P})$.
\end{proof} 

We will introduce another definition that helps us compute $\Pi_{\Pcal}(\alpha, h)$ using at most $2^{O(k)}$ queries, and to this end we construct some sets of marked nodes from $\C{T}$ as follows. Recall the number of leaves in $\C{T}$ is bounded by $4k$, and thus the number of degree three nodes in $\C{T}$ is at most $4k$. 
For $\gamma\in V(\C{T})$, let $\C{M}_\gamma$ be the nodes from $\C{T}_\gamma(G)$ containing: i) each leaf, ii) each node of degree at least three, iii) for each maximal path in $\C{T}_\gamma(G)$ where all nodes are of degree exactly $2$, the endpoints of these paths that are closest to $\gamma$, and iv)  all the children of $\gamma$. We have the following simple observation, which immediately follows from the above definition. 

\begin{observation}\label{obs:size-of-M}
For any node $\gamma$ of $\C{T}$, we have $|\C{M}_\gamma|\leq 16k$.
\end{observation}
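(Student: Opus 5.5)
We bound the four kinds of nodes in $\C{M}_\gamma$ separately, using only that $\C{T}$ has fewer than $4k$ leaves. The subtree $\C{T}_\gamma$ has at most as many leaves as $\C{T}$, so type (i) contributes fewer than $4k$ nodes. For type (ii), we use the standard leaf-counting identity for trees. Viewing $\C{T}_\gamma$ as a rooted tree in which every non-root node of tree-degree at least three has at least two children, the number of nodes with at least two children is less than the number of leaves. This gives fewer than $4k$ nodes of degree at least three; the root $\gamma$ adds at most one.

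For type (iii), we contract every maximal path of degree-$2$ nodes into a single edge. This yields a tree whose nodes are the leaves, the branching nodes and $\gamma$, so it has $O(k)$ edges, namely fewer than $8k+1$. Each maximal degree-$2$ path corresponds to one such edge and contributes a single marked endpoint (the one closest to $\gamma$). Hence type (iii) contributes fewer than $8k$ nodes.

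For type (iv), we observe that every child of $\gamma$ is either a leaf, a branching node, or the top endpoint of a maximal degree-$2$ path. Each child therefore already lies in one of the previous three classes, except possibly for the degenerate counting at $\gamma$ itself. Summing the contributions gives $|\C{M}_\gamma| \le 4k + 4k + 8k = 16k$.

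The only delicate point is the bookkeeping of degree conventions: whether the root's degree counts its parent edge, and that nodes with exactly one child count as ``degree $2$''. I would fix the convention that degree means the number of children plus one for non-roots. The constant $16$ leaves enough slack that any consistent convention works.
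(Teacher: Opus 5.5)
Your argument is correct and is the counting the paper leaves implicit (the paper only remarks that fewer than $4k$ leaves bound the degree-$\geq 3$ nodes and says the observation follows from the definition); your split $4k+4k+8k$, with the children of $\gamma$ absorbed, is the right allocation, since the number of maximal degree-$2$ paths can indeed approach $8k$. Tighten the slack you appeal to: at most $4k-1$ leaves, at most $4k$ branching nodes and $\gamma$ give a contracted tree with at most $8k$ nodes, hence at most $8k-1$ edges (your ``fewer than $8k+1$'' is too weak). Together with the $4k-1$ leaves, this leaves exactly enough room, $(4k-1)+4k+(8k-1)+2=16k$, for the at most two children of a degree-$2$ node $\gamma$ that are not topmost endpoints of their degree-$2$ paths.
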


Intuitively, to compute $\Pi_{\Pcal}(\alpha, h)$, instead of querying over the solution descendants of $\alpha$ in a (fixed unknown) optimal solution, we will manage with the ``closest'' descendants from the set $\C{M}_\alpha$. We formalize this in the next definition (see \Cref{fig:pd-and-pdr}). 

\begin{definition}
\label{def:productive-descendant-representative}
Let $\Ccal$ be a solution for $\Pi_{\Pcal}(\alpha, h)$, where $\alpha$ is a non-leaf node of $\Tcal$ and $h > 1$. Consider a \pd{} $\beta$ of $\alpha$ in $\C{C}$. The {\em $(\alpha, \C{C})$-\pdr{}} (or simply, {\em \pdr}, when the context is clear), $\gamma$ for $\beta$ is the vertex from $\C{M}_\alpha$ that is closest to $\beta$ in the unique path between $\beta$ and $\alpha$ in $\C{T}$. 
\end{definition}

\begin{figure}[t]

\centering

\begin{tikzpicture}[scale=1]

\draw[thick] (0,0) circle (3);
\node at (0,3.4) {$\delta$};

\draw[thick]
(240:3)
.. controls (-1,-1) and (1,0) ..
coordinate[pos=0.3] (p1)
coordinate[pos=0.75] (p2)
(330:3);

\draw[thick]
(175:3)
.. controls (-0.8,0.25) and (-0.8,0) ..
coordinate[pos=0.4] (p3)
(p1);

\draw[thick]
(p2)
.. controls (0.8,0.8) and (1.5,0.8) ..
coordinate[pos=0.4] (p4)
(30:3);

\draw[thick]
(p3)
.. controls (0,0.8) and (0.4,0.8) ..
coordinate[pos=0.5] (p5)
(p4);

\draw[thick]
(p5)
.. controls  (-0.4,1) and (-1.5,2) ..
(150:3);

\draw[thick]
(p5)
.. controls (0,1) and (0.2,2) ..
(70:3);


\fill[black] (p1) circle (1.5pt);
\fill[black] (p2) circle (1.5pt);

\fill[black] (p3) circle (1.5pt);
\fill[black] (p4) circle (1.5pt);
\fill[black] (p5) circle (1.5pt);

\fill[black] (240:3) circle (1.5pt);
\fill[black] (330:3) circle (1.5pt);
\fill[black] (175:3) circle (1.5pt);
\fill[black] (30:3) circle (1.5pt);
\fill[black] (150:3) circle (1.5pt);
\fill[black] (70:3) circle (1.5pt);


\node at (0.5,-1.8) {$\gamma$};      

\node at (-1.8,-1.1) {$\gamma_1$};      
\node at (0,-0.2) {$\gamma_2$};            
\node at ( 2.0,0.3) {$\gamma_3$};      
\node at (-2.0,0.6) {$\gamma_4$};      
\node at (1.5,1.8) {$\gamma_5$};           

\end{tikzpicture}

\caption{Illustration of the {\em relevant-subset.}
Here we have a {\em $\delta$-lineage-free} set $R = \{\gamma,\gamma_1,\gamma_2,\gamma_3,\gamma_4,\gamma_5\}$.
For the cycle $\gamma$ we have $\Gamma_{\delta,R,\gamma}= \{\delta,\gamma_1,\gamma_2,\gamma_3\}$
}
\label{fig:relevant_subset}
\end{figure}

For $\delta\in V(\C{T})$, a set $R \subseteq \C{M}_\delta$ ($\subseteq V(\C{T}_\delta)$) is {\em $\delta$-lineage-free} (or, {\em lineage-free}, if the context is clear) if for distinct $\gamma, \gamma' \in R$, neither $\gamma$ is a descendant of $\gamma'$, nor $\gamma'$ is a descendant of $\gamma$, in $\C{T}_\delta$. 
Consider a $\delta$-lineage-free set $R$ and $\gamma\in R$.
We let $\Gamma_{\delta, R,\gamma}$ be the set of cycles in $R \cup \{\delta\}$ (other than $\gamma$), that share an edge with $\gamma$, i.e., $\Gamma_{\delta, R,\gamma} = \{\gamma' \mid \gamma' \in (R\cup \{\delta\}) \setminus \{\gamma\}, E(\gamma')\cap E(\gamma) \neq \emptyset\}$. We say that $\Gamma_{\delta, R,\gamma}$ is the {\em $(\delta,R,\gamma)$-relevant set} (see \Cref{fig:relevant_subset}). Next we give some simple observations, the first of which immediately follows from \Cref{def:productive-descendant,def:productive-descendant-representative}.

\begin{observation}
    \label{obs:relation-between-pd-and-pdrs}
    Consider a solution $\Ccal$ for $\Pi_{\Pcal}(\alpha, h)$, where $\alpha$ is a non-leaf node of $\Tcal$ and $h > 1$. Let $\{\beta_i \mid i\in [b]\}$ be the set of \pds{} of $\alpha$ in $\Ccal$, where for $i\in [b]$, $\gamma_i$ is the \pdr{} of $\beta_i$.
    \begin{enumerate}
        \item 
        \label{item-2-of-obs:relation-between-pd-and-pdrs}
        For distinct $i,j \in [b]$, $\beta_i \neq \beta_j$ and $\gamma_i \neq \gamma_j$.
     
     \item The set $\{\gamma_i\mid i\in [b]\}$ is lineage-free. 
    \end{enumerate}
\end{observation}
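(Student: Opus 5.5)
The plan is to argue directly from \Cref{def:productive-descendant,def:productive-descendant-representative}, using only elementary facts about ancestor/descendant relations in the rooted tree $\C{T}_\alpha$. The key structural fact I will use is: for a solution descendant $\beta_i$, the \pdr{} $\gamma_i$ lies on the path from $\alpha$ to $\beta_i$ (possibly $\gamma_i=\beta_i$) and is an ancestor of $\beta_i$ distinct from $\alpha$. It is well defined and distinct from $\alpha$ because the child of $\alpha$ on this path belongs to $\C{M}_\alpha$ by item~iv) of the definition of $\C{M}_\alpha$. I will not need this distinctness essentially, but it keeps the objects clean.

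\textbf{Step 1: the $\beta_i$ are pairwise incomparable.} First I will show that no solution descendant is an ancestor of another. Suppose $\beta_i$ is a strict ancestor of $\beta_j$. Then $\beta_i\in\Ccal\setminus\{\alpha,\beta_j\}$ lies on the $\alpha$--$\beta_j$ path, which contradicts $\beta_j$ being a solution descendant. The statement $\beta_i\neq\beta_j$ for $i\ne j$ is just that the $\beta_i$ are distinct elements of a set.

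\textbf{Step 2: the $\gamma_i$ are distinct.} Suppose $\gamma_i=\gamma_j=\gamma$ for $i\neq j$. Then $\beta_i$ and $\beta_j$ both lie in $\C{T}_\gamma$, and the paths from $\gamma$ to $\beta_i$ and from $\gamma$ to $\beta_j$ contain no node of $\C{M}_\alpha$ other than $\gamma$. Since $\beta_i\neq\beta_j$ and they are incomparable, these two paths diverge at some node $\eta$ with at least two children, where $\eta$ is $\gamma$ or a strict descendant of it. If $\eta\neq\gamma$, then $\eta$ has degree at least three in $\C{T}$, since it has a parent and two children. Hence $\eta\in\C{M}_\alpha$, and $\eta$ is closer to $\beta_i$ than $\gamma$ is, a contradiction. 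So $\eta=\gamma$, and the two children of $\gamma$ on the respective paths are distinct nodes $\mu_i,\mu_j$.

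This is the main obstacle: I now need some marked node strictly below $\gamma$ on one of the two paths. My first attempt is the following. If $\gamma=\alpha$, then $\mu_i$ is a child of $\alpha$, so it is in $\C{M}_\alpha$ and closer to $\beta_i$, contradicting the choice of $\gamma_i$ (unless $\mu_i$ is itself $\gamma_i$, which is impossible since $\gamma_i=\gamma=\alpha$). If $\gamma\neq\alpha$, then $\gamma$ has a parent and two children, so it has degree at least three. Consider the child $\mu_i$. Either $\mu_i$ has degree $\neq 2$, in which case it is a leaf or a node of degree at least three and so lies in $\C{M}_\alpha$. Or $\mu_i$ has degree exactly $2$, in which case it is the endpoint closest to $\alpha$ of a maximal degree-$2$ path, because its parent $\gamma$ has degree at least three. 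In both cases $\mu_i\in\C{M}_\alpha$ and $\mu_i$ is closer to $\beta_i$ than $\gamma$, a contradiction. Hence $\gamma_i\neq\gamma_j$.

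\textbf{Step 3: lineage-freeness.} Suppose $\gamma_i$ is a strict ancestor of $\gamma_j$. Then $\gamma_i$ and $\gamma_j$ are both ancestors of $\beta_j$, and $\gamma_j$ is the one closer to $\beta_j$. I will show $\beta_i$ and $\beta_j$ diverge at a node $\eta$ strictly between them. By Step 1, $\beta_j$ is not a descendant of $\beta_i$, so the path from $\gamma_i$ toward $\beta_i$ must leave the path toward $\beta_j$ at some node $\eta$ with $\gamma_i\le\eta$ and $\eta$ a strict ancestor of $\gamma_j$; otherwise $\gamma_j$ would lie on the $\gamma_i$--$\beta_i$ path, be a marked node closer to $\beta_i$, and contradict the choice of $\gamma_i$. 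Then the same branching argument as in Step 2, applied at $\eta$ together with the child of $\eta$ toward $\beta_i$, produces a node of $\C{M}_\alpha$ strictly below $\gamma_i$ on the $\gamma_i$--$\beta_i$ path. This contradicts the choice of $\gamma_i$, so $\{\gamma_i\mid i\in[b]\}$ is lineage-free. For the case $\eta=\gamma_i=\alpha$, I use instead that $\alpha\notin\C{M}_\alpha$ and that the child of $\alpha$ is marked.
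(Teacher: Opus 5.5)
Your argument is correct and is essentially the paper's: the paper gives no proof beyond saying the observation follows immediately from \Cref{def:productive-descendant,def:productive-descendant-representative}, and your branching argument (a node below $\alpha$ with two children is marked, and so are its children) is exactly that unpacking. One small inaccuracy is the remark $\alpha\notin\C{M}_\alpha$, which is false in general since $\alpha$ is marked whenever it has degree at least three; this is harmless, because $\gamma_i\neq\alpha$ makes that case vacuous and you only ever use that the children of $\alpha$ are marked.
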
 



The next observation shows that no edge can belong to three cycles in a $\delta$-lineage-free set. 
Consequently, solution descendant descendant $\beta$ of some cycle $\gamma \in R$ cannot share the same edge with two distinct cycles in the $(\delta,R,\gamma)$-relevant set. We will crucially rely on this property in our dynamic programming. 

\begin{observation}
    \label{obs:split-relation-between-pd-and-pdrs}
    Consider $\delta \in V(\C{T})$, a $\delta$-lineage-free set $R$, and $\gamma^* \in R$. 
    Then for any descendant $\beta^*$ of $\gamma^*$ in $\C{T}_\delta$ and distinct $\gamma,\gamma' \in \Gamma_{\delta,R,\gamma^*}$, we have $E(\beta^*) \cap E(\gamma) \cap E(\gamma') = \emptyset$.
\end{observation}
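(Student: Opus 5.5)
The plan is to reduce the statement to a purely geometric fact about tree cycles, namely that no edge of $G$ lies on three pairwise ``non-overlapping'' cycles of $R\cup\{\delta\}$. Suppose, for contradiction, that some edge $e\in E(\beta^*)\cap E(\gamma)\cap E(\gamma')$ exists.

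\emph{Step 1: $e\in E(\gamma^*)$.} Since $\gamma,\gamma'\in\Gamma_{\delta,R,\gamma^*}$, at least one of them, say $\gamma$, lies in $R\setminus\{\gamma^*\}$. Because $R$ is $\delta$-lineage-free, $\gamma$ is not a descendant of $\gamma^*$. The node $\beta^*$ is a descendant of $\gamma^*$, so \Cref{lem:intersection-with-ancestor-is-more} gives $E(\beta^*)\cap E(\gamma)\subseteq E(\gamma^*)$, and hence $e\in E(\gamma^*)$. (If instead both $\gamma$ and $\gamma'$ lie in $R$, the same argument applies to either of them.) Consequently $e$ belongs to three distinct cycles $\gamma^*,\gamma,\gamma'$ of $R\cup\{\delta\}$, at most one of which equals $\delta$.

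\emph{Step 2: geometry of tree cycles.} Any two distinct $\rho,\rho'\in R$ are incomparable in $\Tcal$. By \Cref{prop:lsct-use-property}~(\ref{prop:desc-lsct}), neither $\rho\le\rho'$ nor $\rho'\le\rho$. Since the node set of the LSCT is a laminar family (\Cref{def:laminar-family}), this incomparability forces $\Int(\rho)\cap\Int(\rho')=\emptyset$. Moreover, every $\rho\in R$ is a descendant of $\delta$, so $\rho\le\delta$ and $\Int(\rho)\subseteq\Int(\delta)$. Confirming that laminarity of the tree nodes, together with incomparability, really yields disjoint interiors is the step I expect to need the most care; I would rely on the laminar construction of $\Tcal$ from \cite{psc}.

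\emph{Step 3: the side argument at $e$.} The edge $e$ separates two faces $f_1$ and $f_2$ locally. For every cycle $\rho$ through $e$, exactly one of $f_1,f_2$ lies in $\Int(\rho)$.
\begin{itemize}
\item If $\gamma^*,\gamma,\gamma'\in R$, then by pigeonhole two of these three cycles contain the same face $f_i$ in their interiors. This contradicts the disjointness of interiors established in Step 2.
\item If one of the three cycles is $\delta$, say $\gamma'=\delta$, then the face on the exterior side of $e$ with respect to $\delta$ is not in $\Int(\delta)$. It is therefore in neither $\Int(\gamma^*)$ nor $\Int(\gamma)$, since both interiors are contained in $\Int(\delta)$. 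Hence $\gamma^*$ and $\gamma$ both contain the other face in their interiors, again contradicting Step 2.
\end{itemize}
In both cases we reach a contradiction, so $E(\beta^*)\cap E(\gamma)\cap E(\gamma')=\emptyset$.
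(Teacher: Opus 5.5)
Your proof is correct, but it is organized differently from the paper's. The paper never passes to $\gamma^*$. It assumes $\gamma\neq\delta$, notes that $\beta^*$ and $\gamma$ are $\le$-incomparable, and then asserts ``by planarity'' that every cycle through an edge of $E(\beta^*)\cap E(\gamma)$ satisfies exactly one of $C\le\beta^*$, $C\le\gamma$, $\beta^*\le C$, $\gamma\le C$. The cycle $\gamma'$ violates this: none of the four holds if $\gamma'\in R$, and two hold if $\gamma'=\delta$.

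You instead use \Cref{lem:intersection-with-ancestor-is-more} to move the edge onto $\gamma^*$. This turns the claim into the $\beta^*$-free statement that no edge lies on three members of $R\cup\{\delta\}$, which is the form announced just before the observation. You then prove it by an explicit pigeonhole over the two faces incident to $e$. What your version buys is that the planar step is actually argued rather than asserted. The detour through $\gamma^*$ is optional: since $\Int(\beta^*)\subseteq\Int(\gamma^*)$, the same face argument already works directly for $\beta^*,\gamma,\gamma'$.

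Both proofs rest on the fact you flagged in Step~2, and you can get it from the construction rather than citing \cite{psc} wholesale:
\begin{itemize}
\item Siblings in $\Tcal$ have disjoint interiors. Children of a splittable node bound the regions between consecutive internally disjoint paths of $\Pcal_{s}(C)$. Children of an unsplittable node form the laminar, pairwise $\le$-incomparable family $\SC{C}_{u}(C)$, and nested interiors there would force comparability.
\item A descendant $\rho$ of $\rho'$ satisfies $\rho\le\rho'$ by \Cref{prop:lsct-use-property}~(\ref{prop:desc-lsct}), hence $\Int(\rho)\subseteq\Int(\rho')$.
\item Passing to the lowest common ancestor then handles arbitrary incomparable pairs.
\end{itemize}

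Two small loose ends remain. First, $f_1\neq f_2$ because $e$ lies on a cycle. Second, the case $\gamma^*=\delta$ is vacuous: lineage-freeness together with $R\subseteq V(\Tcal_\delta)$ would force $R=\{\delta\}$, and hence $\Gamma_{\delta,R,\gamma^*}=\emptyset$.
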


\begin{proof}     
Towards a contradiction, suppose there is a descendant $\beta^*$ of $\gamma^*$ in $\C{T}_\delta$ and distinct $\gamma,\gamma' \in \Gamma_{\delta,R,\gamma^*}$ such that $ E(\beta^*) \cap E(\gamma) \cap E(\gamma') \neq \emptyset$. As $\gamma\neq \gamma'$, at least one of $\gamma$ or $\gamma'$ is different from $\delta$, say $\gamma \neq \delta$. As $E(\beta^* \cap \gamma) \neq \emptyset$, from~\Cref{claim:intersection-between-two-cycles} we can obtain that  $\beta^*\cap \gamma$ is a path in $G$. Note that $\beta^*$ and $\gamma$ are not in ancestor-descendant relationship, and thus, from \Cref{prop:lsct-use-property} (\ref{prop:desc-lsct}), neither $\beta^* \leq \gamma$, nor $\gamma \leq \beta^*$ in the (given) planar embedding of $G$. Now, by the planarity of $G$, if a cycle $C$ contains an edge in $E(\beta^*) \cap E(\gamma)$, then exactly one of the following holds: i) $C \leq \beta^*$, ii) $C\leq \gamma$, iii) $\beta^* \leq C$, or iv) $\gamma \leq C$. Notice that $\gamma'$ satisfies none of the above conditions, as argued next. Indeed, neither $\gamma' \leq \beta^*$ nor $\gamma' \leq \gamma$ (recall $\gamma \neq \delta$). Also exactly one of $\beta^* \leq \gamma'$ or   $\gamma \leq \gamma'$ cannot hold, whether or not $\gamma' = \delta$ (see \Cref{prop:lsct-use-property}~(\ref{prop:desc-lsct})). Therefore, $\gamma'$ cannot contain an edge from $E(\beta^*) \cap E(\gamma)$, which contradicts our assumption. 
\end{proof}

The next observation exploits the planarity of $G$ to bound the sum of the sizes of the relevant sets of all representative descendants, which will be useful in obtaining the desired running-time.

\begin{observation}\label{obs:planar-gamma}
    For $\delta \in V(\C{T})$ and a $\delta$-lineage-free set $R = \{\gamma_i\mid i\in [b]\}$, $\sum_{i\in [b]} |\Gamma_{\delta, R,\gamma_i}| \leq 7 b$.
\end{observation}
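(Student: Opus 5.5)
We plan to bound the sum by counting edges of an auxiliary planar graph whose vertices are the cycles of $R\cup\{\delta\}$. Define the \emph{touching graph} $H$ on vertex set $R\cup\{\delta\}$, with $\gamma\gamma'\in E(H)$ whenever $E(\gamma)\cap E(\gamma')\neq\emptyset$. By definition, $|\Gamma_{\delta,R,\gamma_i}|$ is at most the degree of $\gamma_i$ in $H$. Hence $\sum_{i\in[b]}|\Gamma_{\delta,R,\gamma_i}|\le 2|E(H)|$. It therefore suffices to show that $H$ is planar, since then $|E(H)|\le 3(b+1)-6\le 3b$, which gives the bound $6b\le 7b$. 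The slack of $7b$ versus $6b$ also leaves room for a cruder accounting in which the $\delta$-edges are counted separately, as explained below.

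We will handle the edges incident to $\delta$ separately. Each $\gamma_i$ contributes at most one to its count through $\delta$, for a total of at most $b$. For the edges among $R$ we show planarity of $H[R]$. Since $R$ is lineage-free, \Cref{prop:lsct-use-property}(\ref{prop:desc-lsct}) gives that for distinct $\gamma,\gamma'\in R$ neither $\gamma\le\gamma'$ nor $\gamma'\le\gamma$. Tree cycles form a laminar family, so the open regions $\Int(\gamma)$ for $\gamma\in R$ are pairwise disjoint. Moreover, by \Cref{claim:intersection-between-two-cycles}, whenever two of these cycles share an edge, their intersection is a subpath and contains an edge on both boundaries.

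We then contract each region. In the plane, place a point $p_\gamma$ inside $\Int(\gamma)$ for each $\gamma\in R$. For an adjacent pair $\gamma,\gamma'$, choose a common edge $e$ and draw a curve from $p_\gamma$ to the midpoint of $e$ inside $\Int(\gamma)$, then on to $p_{\gamma'}$ inside $\Int(\gamma')$. The edge $e$ separates the two disjoint regions locally, so the curve crosses $e$ only at its midpoint. Choose distinct shared edges for distinct pairs, which is possible since \Cref{obs:split-relation-between-pd-and-pdrs} (applied with $\beta^*=\gamma^*$) forbids an edge lying in three cycles of $R\cup\{\delta\}$. Routing the curves inside each region as a star from $p_\gamma$ to distinct boundary points gives non-crossing curves, because the regions are disjoint open discs. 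This yields a planar drawing of $H[R]$, essentially as a minor of the dual graph, so $|E(H[R])|\le 3b$.

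Combining, $\sum_i|\Gamma_{\delta,R,\gamma_i}|\le 2\cdot 3b+b=7b$. The main obstacle is the rigorous justification of the non-crossing star routing: each $\Int(\gamma)$ must be a disc whose boundary is exactly the cycle $\gamma$, which holds because it is a Jordan region, and the cut points on $\gamma$ must be distinct. The cleanest way to settle this is to phrase the argument as contracting, in the dual $G^*$, the faces inside each $\gamma$ into one vertex and deleting all other dual vertices. This exhibits $H[R]$ as a simple minor of a planar graph, after removing parallel edges and loops, and hence as planar.
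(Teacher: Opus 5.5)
Your proposal is correct and follows essentially the paper's argument. The paper also forms the edge-sharing graph on $R$ and shows it is a simple subgraph of the dual of the plane graph $G_R$ made up only of the cycles of $R$, where each $\Int(\gamma)$ with $\gamma\in R$ is a face; your contraction of the dual achieves exactly this. Its edge count is therefore below $3b$, and adding at most $b$ for the possible membership of $\delta$ in each relevant set gives $7b$, as you conclude. Your opening planarity claim for the graph that includes $\delta$ is not needed, since you count the $\delta$-contributions separately.
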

We remark that the constant  $7 b$ is likely not tight, but it is sufficient for our purposes.
\begin{proof}
    Let $G_R$ be the plane graph obtained from $G$ by deleting each vertex and edge that are not present in any of the cycles in $R$. 
    Furthermore, let $\wtilde{G}_R$ be the graph with $V(\wtilde{G}_R) = R$ and for distinct $\gamma,\gamma' \in R$, $\{\gamma,\gamma'\}$ is an edge in $\wtilde{G}_R$ if $E(\gamma) \cap E(\gamma')\neq \emptyset$. 
    Note the following:
    \begin{enumerate}
        \item For distinct $\gamma,\gamma' \in R$, $\{\gamma,\gamma'\} \in E(\wtilde{G}_R)$ if and only if $\gamma \in \Gamma_{\delta,R,\gamma'}\setminus \{\delta\}$ and $\gamma' \in \Gamma_{\delta,R,\gamma}\setminus \{\delta\}$. In other words, $N_{\wtilde{G}_R}(\gamma) = \Gamma_{\delta,R,\gamma}\setminus \{\delta\}$ and $N_{\wtilde{G}_R}(\gamma') = \Gamma_{\delta,R,\gamma'}\setminus \{\delta\}$. 
        \item The graph $\wtilde{G}_R$ is a simple graph and it is a subgraph of the dual of $G_R$. Thus, $\wtilde{G}_R$ is a planar graph, hence $|E(\wtilde{G}_R)| < 3b$. (The above follows from the fact that a dual of a planar graph is a planar multi-graph, see for instance, Section 4.6 of~\cite{Diestel}).
    \end{enumerate}
    Using the above, together with the possibility that $\delta\in \Gamma_{\delta,R,\gamma_i}$, for each $i\in [b]$, we can conclude that $\sum_{i\in [b]} |\Gamma_{\delta, R,\gamma_i}| = 2 \cdot |E(\wtilde{G}_R)| + b \leq 7b$. 
\end{proof}

 



As stated previously, to achieve the desired running time, we will not be able to maintain the exact \pds{} for an (unknown) optimal solution $\Ccal$ for $\Pi_{\C{P}}(\alpha,h)$, and instead will be working with their \pdrs. 
This makes the (recursive) computation of $\Pi_{\C{P}}(\alpha,h)$ way more tedious, as for instance, if $\gamma$ is a representative for $\beta\in \C{C}$, we do not know how $\beta$ intersects with other cycles from $\C{C}$, when $\gamma \neq \beta$. 
We need a way by which we can still account for contributions of edges in the selected cycles to $\Pi_{\C{P}}(\alpha,h)$, and we will see how we can still do it using the \pdrs, by carefully going over the following scenarios. 
Let $\{\beta_i \mid i\in [b]\}$ be the set of \pds{} of $\alpha$ in $\Ccal$, where for $i\in [b]$, $\gamma_i$ is the \pdr\ of $\beta_i$ (see \Cref{obs:relation-between-pd-and-pdrs} (\ref{item-2-of-obs:relation-between-pd-and-pdrs})). 
Furthermore, let $R=\{\gamma_i\mid i\in [b]\}$ and for $i\in [b]$, let $\Gamma_i = \Gamma_{\alpha,R, \gamma_i}$. For each $i\in [b]$, we will maintain a state $\theta_i$ which (intuitively discussed below) will help capture the unknown intersections of $\beta_i$ with cycles from $\Gamma_i$, which will be enough for our purpose due to \Cref{lem:intersection-with-ancestor-is-more}.

\begin{figure}[t]
\centering
\begin{tikzpicture}[scale=2]

\def\r{1}
\def\delta{30}

\def\aone{90}
\def\atwo{20}
\def\athree{-40}
\def\afour{-140}

\def\gap{12}

\draw[thick, red] (0,0) circle (\r);
\node at (0,0) {$\gamma_i$};

\draw[line width=2pt, white]
(\athree-\delta/2-\gap:\r)
arc[start angle=\athree-\delta/2-\gap, end angle=\afour+\delta/2+\gap, radius=\r];

\draw[
  red,
  line width=1.5pt,
  line cap=round,
  dash pattern=on 0pt off 9pt
]
(\athree-\delta/2-\gap:\r)
arc[start angle=\athree-\delta/2-\gap, end angle=\afour+\delta/2+\gap, radius=\r];

\draw[thick] plot[smooth] coordinates {
    (\aone-\delta/2:\r)
    (\aone-2*\delta/5:\r+0.05)
    (\aone-3*\delta/10:\r-0.05)
    (\aone-\delta/5:\r+0.05)
    (\aone-\delta/10:\r-0.05)
    (\aone:\r+0.05)
    (\aone+\delta/10:\r-0.05)
    (\aone+\delta/5:\r+0.05)
    (\aone+3*\delta/10:\r-0.05)
    (\aone+2*\delta/5:\r+0.05)
    (\aone+\delta/2:\r)
};
\node at (\aone:1.3*\r) {$\iota_i^1$};

\draw[thick] plot[smooth] coordinates {
    (\atwo-\delta/2:\r)
    (\atwo-2*\delta/5:\r+0.05)
    (\atwo-3*\delta/10:\r-0.05)
    (\atwo-\delta/5:\r+0.05)
    (\atwo-\delta/10:\r-0.05)
    (\atwo:\r+0.05)
    (\atwo+\delta/10:\r-0.05)
    (\atwo+\delta/5:\r+0.05)
    (\atwo+3*\delta/10:\r-0.05)
    (\atwo+2*\delta/5:\r+0.05)
    (\atwo+\delta/2:\r)
};
\node at (\atwo:1.3*\r) {$\iota_i^2$};

\draw[thick] plot[smooth] coordinates {
    (\athree-\delta/2:\r)
    (\athree-2*\delta/5:\r+0.05)
    (\athree-3*\delta/10:\r-0.05)
    (\athree-\delta/5:\r+0.05)
    (\athree-\delta/10:\r-0.05)
    (\athree:\r+0.05)
    (\athree+\delta/10:\r-0.05)
    (\athree+\delta/5:\r+0.05)
    (\athree+3*\delta/10:\r-0.05)
    (\athree+2*\delta/5:\r+0.05)
    (\athree+\delta/2:\r)
};
\node at (\athree:1.3*\r) {$\iota_i^3$};

\draw[thick] plot[smooth] coordinates {
    (\afour-\delta/2:\r)
    (\afour-2*\delta/5:\r+0.05)
    (\afour-3*\delta/10:\r-0.05)
    (\afour-\delta/5:\r+0.05)
    (\afour-\delta/10:\r-0.05)
    (\afour:\r+0.05)
    (\afour+\delta/10:\r-0.05)
    (\afour+\delta/5:\r+0.05)
    (\afour+3*\delta/10:\r-0.05)
    (\afour+2*\delta/5:\r+0.05)
    (\afour+\delta/2:\r)
};
\node at (\afour:1.3*\r) {$\iota_i^{q_i}$};

\draw[->, green, very thick]
(120:0.7*\r)
arc[start angle=120, end angle=-160, radius=0.7*\r];

\end{tikzpicture}
\caption{\centering Illustration of $\gamma_i$ and its intersections with cycles in $\Gamma_i$, for $i\in [b]$.}
\label{fig:cycle-intersections}
\end{figure}

\begin{enumerate} 
\item If $\gamma_i=\beta_i$, we will denote this by $\theta_i = \F$, where, intuitively, $\F$ will stand for ``equal''. 
\item If $\gamma_i\neq \beta_i$ and the cycle $\beta_i$ does not share an edge with any cycle in $\Gamma_i$ (or its descendants), we will maintain this state by $\N$, where $\N$ will stand for no intersection. 
\item Now we need to worry about the case when $\gamma_i\neq \beta_i$ and the cycle $\beta_i$ intersects with some cycles in $\Gamma_i$. 
Recall from \Cref{claim:intersection-between-two-cycles} that if two shortest cycles share an edge, then their intersection is a path. 
Moreover, such a path due to intersection of $\beta_i$ with some cycle in $\Gamma_i$, must be a subpath of $\gamma_i$ from \Cref{lem:intersection-with-ancestor-is-more}. Consider the path $P_i = \beta_i \cap \gamma_i$. 
We will now look at the intersections of $\gamma_i$ with cycles in $\Gamma_i$, and let $\iota_i^1, \iota_i^2, \cdots \iota_i^{q_i}$ be the ordering of elements in $\Gamma_i$ based on how the paths $\gamma_i\cap \iota^j_i$ appear in the clockwise order of $\gamma_i$ (see \Cref{fig:cycle-intersections}). 
Note $P_i$ (if at all it intersects cycles from $\Gamma_i$), must start intersecting the stated path sequence at some $\psi_i =\iota_i^s$ and ends the intersection at some $\tau_i = \iota_i^t$ (see \Cref{fig:4c-1}). We will maintain these first and last paths of intersections in the state $\theta_i$. 
We will exploit planarity to ensure the above can be done in the desired amount of time (using \Cref{obs:planar-gamma}). 
\end{enumerate}

To formalize the above we introduce some notations and prove some useful properties. 

\subparagraph*{Useful Notations.} 
Consider $\delta\in V(\C{T})$, and a $\delta$-lineage-free set $R\subseteq \C{M}_\delta$. 
Recall for $\gamma\in R$, the set $\Gamma_{\delta, R,\gamma} = \{\gamma' \mid \gamma' \in (R\cup \{\delta\}) \setminus \{\gamma\}, E(\gamma')\cap E(\gamma) \neq \emptyset\}$ 
is the $(\delta,R,\gamma)$-relevant subset of $R$. Let
$\C{S}_{\delta,R,\gamma} = \{\F, \N\} \cup \big(\Gamma_{\delta,R,\gamma} \times \Gamma_{\delta,R,\gamma} \big)
$, for $\gamma\in R$. We will use $\C{S}_{\delta,R,\gamma}$ to capture the discussed intersections of a descendant of $\gamma$ in the solution with other cycles in $\Gamma_{\delta,R,\gamma}$. 

Consider $\gamma \in R$, and we let $\gamma^\circ$ be the clockwise cyclic ordering of edges in $E(\gamma)$. We use the convention that, for instance, $(a,b,c), (b,c,a), (c,a,b)$, denote the same cyclic ordering. We attach the symbol $\circ$ to denote cyclic ordering, and all such orderings will be the clockwise cyclic ordering obtained using the given planar embedding of $G$. 
We now create the cyclic ordered set of elements in $\Gamma_{\delta,R,\gamma}$, in the order in which the (starting) point of their intersections with $\gamma$ appears in $\gamma^{\circ}$. Note that from \Cref{obs:split-relation-between-pd-and-pdrs} for distinct $\iota, \iota' \in \Gamma_{\delta,R,\gamma}$, $E(\iota) \cap E(\iota') \cap E(\gamma) = \emptyset$. 
Due to the above we obtain the unique (clockwise) cyclic ordering $\Gamma^\circ_{\delta,R,\gamma}$ of $\iota \in \Gamma_{\delta,R,\gamma}$, based on the order in which the path $\iota \cap \gamma$ appears in $\gamma$. For (not necessarily distinct) $\psi,\tau \in \Gamma_{\delta,R,\gamma}$ we let $\Gamma_{\delta,R,\gamma}[\psi,\tau]$ be the ordered set of elements starting from $\psi$ till $\tau$ in $\Gamma^\circ_{\delta,R,\gamma}$ (inclusive of both $\psi,\tau$). We note that $\Gamma_{\delta,R,\gamma}[\psi,\tau]$ is not a cyclic ordering, but a linear ordering. For simplicity, we will also treat $\Gamma_{\delta,R,\gamma}[\psi,\tau]$ as a set (when the ordering is irrelevant). 

Let $\beta$ be a descendant of $\gamma$, i.e., $\beta \leq \gamma$ (see \Cref{prop:lsct-use-property} (\ref{prop:desc-lsct})), or a sub-path of a descendant of $\gamma$, such that $\beta \neq \gamma$ and $E(\beta \cap \gamma) \neq \emptyset$. As $\beta\neq \gamma$, note that $\beta\cap \gamma$ must be a path (not a cycle) in $G$ (see \Cref{claim:intersection-between-two-cycles}). 
Let $\Gamma_{\delta,R,\gamma,\beta}$ be the subset of elements from $\Gamma_{\delta,R,\gamma}$ that share an edge with $\beta$, i.e., $\Gamma_{\delta,R,\gamma,\beta} =\{\iota \in \Gamma_{\delta,R,\gamma} \mid E(\iota) \cap E(\beta) \neq \emptyset\}$. 
We have the following simple observation regarding the set $\Gamma_{\delta,R,\gamma,\beta}$ which follows from \Cref{claim:intersection-between-two-cycles}. 

\begin{observation}\label{obs:consecutive}
If $\Gamma_{\delta,R,\gamma,\beta}\neq \emptyset$, then there are (not necessarily distinct) elements $\psi,\tau \in \Gamma_{\delta,R,\gamma}$ such that $\Gamma_{\delta,R,\gamma,\beta} = \Gamma_{\delta,R,\gamma}[\psi,\tau]$. 
\end{observation}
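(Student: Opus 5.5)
The plan is to reduce the claim to the fact that $\beta\cap\gamma$ is a single path on $\gamma$, and that each $\iota\in\Gamma_{\delta,R,\gamma}$ meets $\gamma$ in a path, with these paths pairwise edge-disjoint. The cyclic order $\Gamma^\circ_{\delta,R,\gamma}$ records the order of these disjoint arcs around $\gamma$.

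\textbf{Step 1: locating the intersections on $\gamma$.} First show that every edge shared by $\beta$ and some $\iota\in\Gamma_{\delta,R,\gamma}$ lies on $\gamma$. Since $\beta\le\gamma$ and $\iota$ is not a descendant of $\gamma$, this follows from \Cref{lem:intersection-with-ancestor-is-more} applied with $\alpha:=\gamma$. If $\iota=\delta$ is an ancestor of $\gamma$, use planarity instead: an edge of $\beta$ not on $\gamma$ lies in $\Int(\gamma)$, while every edge of $\delta$ lies on $\gamma$ or outside $\Int(\gamma)$. When $\beta$ is only a subpath of a descendant, apply the same argument to that descendant. Consequently $E(\iota)\cap E(\beta)=E(\iota\cap\gamma)\cap E(\beta\cap\gamma)$. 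Here $\beta\cap\gamma$ is a path $P$ on $\gamma$, by \Cref{claim:intersection-between-two-cycles}, or directly if $\beta$ is a subpath. Likewise each $A_\iota:=\iota\cap\gamma$ is a subpath of $\gamma$, and by \Cref{obs:split-relation-between-pd-and-pdrs} the arcs $A_\iota$ are pairwise edge-disjoint.

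\textbf{Step 2: interval argument on the cycle.} The question now lives on the cycle $\gamma$, which carries pairwise edge-disjoint arcs $A_{\iota_1},\dots,A_{\iota_q}$ in the cyclic order $\Gamma^\circ_{\delta,R,\gamma}$. We must show that a single arc $P\neq\gamma$ meets a cyclically consecutive block of them. Take $\psi$ to be the first arc, in clockwise order starting from the initial edge of $P$, that $P$ meets in an edge, and $\tau$ the last one. Any arc strictly between $\psi$ and $\tau$ in the cyclic order then lies entirely within the stretch of $\gamma$ that $P$ traverses from an edge of $A_\psi$ to an edge of $A_\tau$. That stretch is contained in $P$ because $P$ is a contiguous path, so $P$ contains all edges of every intermediate arc. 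Hence $\Gamma_{\delta,R,\gamma,\beta}=\Gamma_{\delta,R,\gamma}[\psi,\tau]$. The case $\psi=\tau$ is allowed.

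\textbf{Main obstacle.} The delicate point is that $P$ is a proper subpath and so does not wrap around $\gamma$. An arc $A_\iota$ may also straddle an endpoint of $P$, being only partially covered. We need the ordering by starting points to agree with the order in which $P$ meets the arcs, so that the block does not wrap and skip an arc. I would handle this by fixing a linear order on $\gamma$ that starts just after the last edge of $\gamma\setminus P$. Because the arcs are pairwise edge-disjoint, at most one arc can straddle each end of $P$, and such an arc is correctly recorded as $\psi$ or $\tau$. One further case needs checking: an arc containing all of $\gamma\setminus P$ together with both ends of $P$. In that case $\iota$ shares both end segments of $P$; it is then the only arc met if it covers $P$'s complement, and otherwise the disjointness from \Cref{obs:split-relation-between-pd-and-pdrs} rules the configuration out.
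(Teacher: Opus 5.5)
Your Step~1 is sound, and it is the reduction the paper has in mind; the paper only asserts that the observation follows from \Cref{claim:intersection-between-two-cycles}. All shared edges lie on $\gamma$, $P=\beta\cap\gamma$ is a single proper subpath of $\gamma$, and the arcs $A_\iota$ are pairwise edge-disjoint. (Your separate planarity argument for $\iota=\delta$ is unnecessary: a strict ancestor of $\gamma$ is not a descendant of $\gamma$, so \Cref{lem:intersection-with-ancestor-is-more} already covers it.)

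The gap is in the wrap-around case you single out at the end. Suppose one arc $A_\iota$ contains $E(\gamma)\setminus E(P)$ together with both end edges of $P$. Your first/last rule then returns $\psi=\tau=\iota$, i.e.\ the block $\{\iota\}$. But every other arc is edge-disjoint from $A_\iota\supseteq E(\gamma)\setminus E(P)$, so it lies inside $P$ and is met by $P$. Hence your assertion that $\iota$ is then the only arc met is false. Moreover, \Cref{obs:split-relation-between-pd-and-pdrs} only controls edges common to two \emph{distinct} arcs, so it does nothing to exclude this configuration.

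The repair is short, and there are two ways to do it.
\begin{enumerate}
\item Split by the type of $\beta$. If $\beta$ is a cycle, the configuration cannot occur. In that case $E(\iota)\cap E(\beta)=E(A_\iota)\cap E(P)$ would consist of an initial and a final segment of $P$. These are separated on $\gamma$ by nonempty edge sets on both sides: the middle of $P$ (nonempty since $A_\iota\neq\gamma$) and $E(\gamma)\setminus E(P)$. So $\iota\cap\beta$ would not be a path. This contradicts \Cref{claim:intersection-between-two-cycles} applied to $\iota$ and $\beta$, which are distinct because only $\beta$ is a descendant of $\gamma$. If instead $\beta$ is a subpath such as $\gamma[u,v]$ (the case used in \Cref{def:path-comp-theta}), the configuration can occur. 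But then every arc is met, so $\Gamma_{\delta,R,\gamma,\beta}=\Gamma_{\delta,R,\gamma}=\Gamma_{\delta,R,\gamma}[\psi,\tau]$ with $\tau=\iota$ and $\psi$ its cyclic successor.
\item Drop the first/last rule altogether, which handles every case uniformly. An arc misses $P$ exactly when it is contained in the cyclic interval $E(\gamma)\setminus E(P)$. Pairwise edge-disjoint arcs contained in one cyclic interval form a consecutive block of $\Gamma^\circ_{\delta,R,\gamma}$. The complement of a consecutive block in a cyclic order is again consecutive, so the arcs met by $P$ form the required block.
\end{enumerate}
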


\subparagraph*{Auxiliary Table.}  
The next definition will allow us to guess the discussed intersection properties and look for solutions that have these properties. 

\begin{definition}\label{def:solution-of-pi-alpha-h-theta}
 
    Consider an $\alpha$-lineage-free set $R = \{\gamma_i\mid i\in [b]\}$, integers $h_1, h_2, \ldots, h_{b} \geq 1$ with $\sum_{i=1}^{b} h_i = h-1$, and for each $i\in [b]$, $\theta_i\in \C{S}_{\alpha,R,\gamma_i}$. We define $\Pi_{\Pcal,\alpha, h}(\gamma_i, h_i, \theta_i)_{i \in [b]}$ as the maximum over $w(\C{C} \cup \C{P})$, where $\C{C}$ is a solution for $\Pi_{\Pcal}(\alpha, h)$ and it additionally satisfies the following: 
    \begin{enumerate}
    \item
    \label{condition-1-of-def:solution-of-pi-alpha-h-theta}
    $R$ is the set of \pdrs{} for the \pds{} $\{\beta_i\mid i\in [b]\}$ of $\alpha$ in $\C{C}$. For each $i\in [b]$, $\gamma_i$ is the \pdr{} for $\beta_i$. 

    \item
    \label{condition-2-of-def:solution-of-pi-alpha-h-theta}
    For each $i\in [b]$, exactly $h_i$ cycles of $\Ccal$ are nodes in the subtree $\Tcal_{\beta_i}$.

    \item 
    \label{condition-3-of-def:solution-of-pi-alpha-h-theta}
    For $i\in [b]$, the following holds (see \Cref{fig:4c-1}):   
        \begin{enumerate}
            \item 
            \label{subcondition-1-of-condition-3-of-def:solution-of-pi-alpha-h-theta}
            If $\theta_i = \F$, then $\beta_i = \gamma_i$.
            \item 
            \label{subcondition-2-of-condition-3-of-def:solution-of-pi-alpha-h-theta}
            If $\theta_i = \N$, then $\beta_i \neq \gamma_i$ and $E(\beta_i) \cap E(\delta) = \emptyset$ for all $\delta \in \Gamma_{\alpha,R,\gamma_i}$.
            \item 
            \label{subcondition-3-of-condition-3-of-def:solution-of-pi-alpha-h-theta} Otherwise, let $\theta_i = (\psi_i,\tau_i)$. We have $\beta_i \neq \gamma_i$ and $\emptyset \neq \Gamma_{\alpha,R,\gamma_i,\beta_i} = \Gamma_{\alpha,R,\gamma_i}[\psi_i,\tau_i]$.
        \end{enumerate}
    \end{enumerate}
    In the above, a $\C{C}$ that satisfies the stated conditions (but may or may not achieve the maximum) is a {\em solution} for $\Pi_{\Pcal,\alpha, h}(\gamma_i, h_i, \theta_i)_{i \in [b]}$. Moreover, if it additionally achieves the maximum, then it is an {\em optimal} solution for $\Pi_{\Pcal,\alpha, h}(\gamma_i, h_i, \theta_i)_{i \in [b]}$. 
\end{definition}

The next lemma (proved in \Cref{sec:conputation-of-pi-alpha-h-using-theta}) helps us identify the correct solution descendants of an optimal solution, if we know how the solution behaves with respect to the definition stated above. For the remainder of this section we show how we can compute $\Pi_{\Pcal}(\alpha, h)$ using this lemma. 

\begin{lemma}[Proved in \Cref{sec:conputation-of-pi-alpha-h-using-theta}]\label{lem:compute-pi-ext}
Consider we are given $(G,w,k,d,\ell)$, $\C{T}$ and a $d$-diverse $\C{P}\subseteq \C{Q}$, a non-leaf node $\alpha$ of $\C{T}$, an integer $h >1$, an $\alpha$-lineage-free set $R = \{\gamma_i\mid i\in [b]\}$, integers $h_1,h_2, \ldots, h_{b} \geq 1$ with $\sum_{i=1}^{b} h_i = h-1$, and for $i\in [b]$, $\theta_i \in \C{S}_{\alpha,R,\gamma_i}$. Furthermore, we are given the values $\Pi_{\C{P}}(\alpha',h')$, for each $\alpha'\in V(\C{T}_{\alpha}) \setminus \{\alpha\}$ and $1\leq h'\leq k$. Then, in polynomial time, we can do the following: 
\begin{enumerate}
\item\label{itm:rep-no} correctly return {\sf NO}, indicating that there is no solution for $\Pi_{\Pcal,\alpha, h}(\gamma_i, h_i, \theta_i)_{i \in [b]}$, or 

\item\label{itm:ret-beta} return a set $\{\beta_i\mid i\in [b]\}$, such that if there is a solution for $\Pi_{\Pcal,\alpha, h}(\gamma_i, h_i, \theta_i)_{i \in [b]}$, then there is an optimal solution $\C{C}$ for $\Pi_{\Pcal,\alpha, h}(\gamma_i, h_i, \theta_i)_{i \in [b]}$ such that for each $i\in [b]$, $\beta_i \in \C{C}$ and $\gamma_i$ is the $(\alpha, \C{C})$-\pdr\ of $\beta_i$. 
\end{enumerate}
\end{lemma}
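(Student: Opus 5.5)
We plan to determine, for each $i\in[b]$, a finite (polynomial-size) candidate set $B_i$ of nodes that may serve as $\beta_i$, and then to select one candidate per index by a second dynamic program over an auxiliary ``interaction graph''. For each $i$, the candidates are the descendants $\beta$ of $\gamma_i$ whose $(\alpha,\cdot)$-representative is $\gamma_i$ and which fit $\theta_i$. If $\gamma_i$ is a leaf, a node of degree at least three, or a child of $\alpha$ in $\C{M}_\alpha$, then $\beta$ lies on the maximal degree-$2$ path hanging below $\gamma_i$, from $\gamma_i^{\sf s}$ to $\gamma_i^{\sf t}$, or equals $\gamma_i$. The constraints are these. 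If $\theta_i=\F$, then $\beta=\gamma_i$. If $\theta_i=\N$, then $\beta\neq\gamma_i$ and $\beta$ shares no edge with any element of $\Gamma_{\alpha,R,\gamma_i}$. If $\theta_i=(\psi_i,\tau_i)$, then $\Gamma_{\alpha,R,\gamma_i,\beta}=\Gamma_{\alpha,R,\gamma_i}[\psi_i,\tau_i]$. In every case we also require that $\Pi_{\C{P}}(\beta,h_i)\neq-\infty$ and that $\beta$ is $d$-diverse from $\alpha$. All these conditions are checkable in polynomial time, since $\C{T}$ has polynomially many nodes. If some $B_i$ is empty we return {\sf NO}.

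The key structural facts come next. By \Cref{lem:intersection-with-ancestor-is-more}, any edge shared by $\beta_i$ (or its subtree) with a cycle outside $\C{T}_{\gamma_i}$ lies on $\gamma_i$, so it lies in $P_i=\beta_i\cap\gamma_i$, which is a subpath of $\gamma_i$ by \Cref{claim:intersection-between-two-cycles}. By \Cref{obs:split-relation-between-pd-and-pdrs}, each such edge belongs to exactly one $\iota\in\Gamma_{\alpha,R,\gamma_i}$. Consequently the interactions of $\beta_i$ with other solution descendants $\beta_j$ happen only through $\gamma_j\in\Gamma_{\alpha,R,\gamma_i}[\psi_i,\tau_i]$. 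Moreover, an edge of $\gamma_i\cap\gamma_j$ can lie in $\beta_i\cap\beta_j$ only if it is at an extreme of the guessed range. Interior elements of $[\psi_i,\tau_i]$ are fully traversed, so their common path with $\gamma_i$ is entirely in $\beta_i$. Hence the only unknown overlaps are at the two ``end'' cycles $\psi_i,\tau_i$. Interior elements contribute a fixed, known intersection, and we check $d$-diversity for them and against $\alpha$ per candidate. We then build an auxiliary graph $H$ on $[b]$ with an edge $\{i,j\}$ whenever $\gamma_j\in\{\psi_i,\tau_i\}$ or $\gamma_i\in\{\psi_j,\tau_j\}$. Each index has at most two end-partners, so every vertex has degree at most two (possibly after merging symmetric pairs), and each component of $H$ is a path or a cycle. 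Indices not in any edge are independent, and we choose the candidate maximizing $\Pi_{\C{P}}(\beta,h_i)$ minus the known overlap contributions.

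On each path component we run a left-to-right dynamic program. Its state is the chosen candidate for the current index, or more economically the extent of $\beta_i\cap\gamma_i$ along the shared path with the next end-cycle. The value accumulates $\Pi_{\C{P}}(\beta_i,h_i)$ corrected by the covered weight double-counted on the shared segment of $\gamma_i\cap\gamma_j$, and it enforces $w(\beta_i\cap\beta_j)\le d$. Each pairwise check and correction is computable in polynomial time, because the overlap of $\beta_i,\beta_j$ is the intersection of two subpaths of the common path $\gamma_i\cap\gamma_j$. For cycle components, we fix the candidate of one index and run the path dynamic program. We return the argmax choices $\beta_i$. The correctness argument is an exchange argument. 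Take an optimal solution $\C{C}$ for the auxiliary entry. Replacing each subtree part by an optimal solution of $\Pi_{\C{P}}(\beta_i,h_i)$ preserves feasibility by \Cref{obs:partitioning-of-a-solution}(\ref{item-3-of-obs:partitioning-of-a-solution}). The total covered weight then decomposes as the sum of the subtree values plus $w(\alpha)$ minus the pairwise overlaps, using the no-triple-edge property. This sum is exactly the objective optimized componentwise, so our selection achieves at least the optimum.

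The main obstacle will be the overlap accounting. We must show that overlaps between $\C{T}_{\beta_i}$ and $\C{T}_{\beta_j}$, with respect to both coverage and diversity, reduce to overlaps of $\beta_i$ and $\beta_j$ on $\gamma_i\cap\gamma_j$, and that this quantity depends only on the end-segments. We must also confirm that $H$ indeed has maximum degree two. For the latter we would first try a planarity argument, in the spirit of \Cref{obs:planar-gamma}, to show that reciprocal end relations pair up.
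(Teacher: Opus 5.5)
Your overall plan matches the paper's: restrict each $\beta_i$ to the degree-$2$ path below $\gamma_i$, handle $\F$- and $\N$-indices independently, run a DP over path/cycle components of an auxiliary graph, and finish with an exchange argument. The gap is in the auxiliary graph itself, which does not have the structure your DP needs.

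\textbf{The step that fails.} You add $\{i,j\}$ whenever $\gamma_j\in\{\psi_i,\tau_i\}$ \emph{or} $\gamma_i\in\{\psi_j,\tau_j\}$. Each index has at most two \emph{outgoing} end relations, but nothing bounds the \emph{incoming} ones. For a single $\gamma_i$, every $\gamma_j\in\Gamma_{\alpha,R,\gamma_i}$ whose $\theta_j$ is a pair may have $\psi_j=\gamma_i$. The lemma must handle every enumerated tuple $(\theta_i)_{i\in[b]}$, so this case does arise. Your $H$ then only admits an orientation of out-degree at most two, and such graphs need not have paths and cycles as components. The paper warns about exactly this when it introduces $d$-goodness. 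No planarity argument can force maximum degree two here, and a DP over arbitrary such components is not polynomial. So the step you mark as ``to be confirmed'' is false for your $H$.

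\textbf{The missing idea.} Only \emph{reciprocal} end relations are genuinely binary.
\begin{enumerate}
\item Suppose $\gamma_j\in\{\psi_i,\tau_i\}$, and either $\theta_j=\F$ or $\gamma_i$ lies strictly inside $\Gamma_{\alpha,R,\gamma_j}[\psi_j,\tau_j]$. Then every $\beta_j$ compatible with $\theta_j$ contains the whole segment $\gamma_i\cap\gamma_j$.
\item By \Cref{lem:intersection-with-ancestor-is-more}, this gives $E(\beta_i)\cap E(\beta_j)=E(\beta_i)\cap E(\gamma_j)$, which depends on $\beta_i$ alone.
\item Hence $d$-diversity for such a pair is a per-candidate filter on $\beta_i$; this is the paper's $d$-good condition (\Cref{def:path-comp-theta-good}). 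Its coverage correction is a unary or constant term (the quantities of \Cref{def:overflow-wt}).
\item The mirrored case ($\gamma_j$ interior for $i$, $\gamma_i$ an end for $j$) is unary in $j$. This also corrects your claim that interior elements of $[\psi_i,\tau_i]$ always contribute a fixed intersection: that holds only when the relation is interior on both sides.
\end{enumerate}
After folding these terms into the per-candidate data, you need an edge only when $\gamma_j\in\{\psi_i,\tau_i\}$ \emph{and} $\gamma_i\in\{\psi_j,\tau_j\}$. Then the neighbours of $\gamma_i$ lie among $\psi_i,\tau_i$, so every component is a path or a cycle. With this change, your component DP (including fixing one index on a cycle component) and your exchange argument go through as in the paper.
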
 

We are ready to prove \Cref{lem:compute-pi}.   

\begin{proof}[Proof of \Cref{lem:compute-pi}] 

We have an instance $(G,w,k,d,l)$ of \pmaxddisjointcoverage{}, its LSCT, $\C{T}$, rooted at $C_r$, and a $d$-diverse $\C{P}\subseteq \C{Q}$. We now describe an algorithm to do the stated task in the lemma statement. 

Consider a node $\alpha$ of $\C{T}$ and an integer $h$, where $1\leq h \leq k$ (we will do the stated step for every choice of $\alpha,h$, in bottom-up fashion). Recall our sanity check and discussed base case are correct and they can be applied in polynomial time. 
Therefore, we now assume that $\alpha$ is a non-leaf node of $\Tcal$, $\alpha$ is $d$-diverse from $\Pcal$, and $h > 1$. 
Also, we assume that for each descendant $\alpha'$ of $\alpha$ in $\Tcal \setminus \{\alpha\}$ and any $1\leq h' \leq k$, we have already correctly computed $\Pi_{\Pcal}(\alpha', h')$. 
For each $b \in [h-1]$, let $\C{H}_b = \{(h_1,h_2,\ldots, h_b)\in [h-1]^b \mid \sum_{i\in [b]}h_i = h-1\}$, $\C{R}_{\alpha,b} = \{R\subseteq \C{M}_\alpha \mid R \mbox{ is } \alpha\mbox{-lineage-free}, |R|=b\}$. 
Furthermore, for $R\in \C{R}_{\alpha,b}$, let $\C{S}_{\alpha,R} = \{(\theta_1,\theta_2,\ldots, \theta_b) \mid \theta_i \in \C{S}_{\alpha,R,\gamma_i}, i\in [b]\}$. We initialize $M=\emptyset$. 

We do the following iteration over each $b \in [h-1]$, $(h_i)_{i\in [b]} \in \C{H}_b$, $R=\{\gamma_i\mid {i\in [b]}\} \in \C{R}_{\alpha,b}$,  $(\theta_i)_{i\in [b]} \in \C{S}_{\alpha,R}$. 
Let ${\sf ans}(\gamma_i,h_i, \theta_i)_{i\in [b]}$ be the answer returned by \Cref{lem:compute-pi-ext}, for $(G,w,k,d,\ell)$, $\C{T}, \C{P}$, $\alpha, h$, $R,(h_i)_{i\in [b]}, (\theta_i)_{i\in [b]}$. If ${\sf ans}(\gamma_i,h_i, \theta_i)_{i\in [b]} = {\sf NO}$, then we do nothing (and move to the next iteration). 

Else, let ${\sf ans}(\gamma_i,h_i, \theta_i)_{i\in [b]} = \{\beta_i \mid i\in [b]\}$, and we do the following. 
We construct a number which essentially will help us eliminate the double counting of edges (outside of edges in $\C{P}$) that appear in two distinct cycles from $\{\alpha\} \cup \{\beta_i \mid i\in [b]\}$. (Note that no three cycles from $\{\alpha\} \cup \{\beta_i \mid i\in [b]\}$ can intersect due to \Cref{obs:split-relation-between-pd-and-pdrs}. Let $E'$ be the set of edges from cycles in $\{\beta_i\mid i\in [b]\} \cup \{\alpha\}$, that appear in at least two of these cycles, i.e., $E' = \cup_{\beta, \beta' \in \{\alpha\} \cup \{\beta_i \mid i\in [b]\},\beta \neq \beta'} E(\beta) \cap E(\beta')$. Also, let $\varkappa = w(E'\setminus E(\C{P}))$. We next define a term, which is essentially counting the weight of edges of cycles in optimal solutions for $\Pi_{\C{P}}(\beta_i,h_i)$, $i\in [b]$, the cycle $\alpha$, and the paths in $\C{P}$. 
Note the weight of paths in $\C{P}$ is accounted for, in each $\Pi_{\C{P}}(\beta_i,h_i)$ (and not in $\varkappa$). 
We insert the number $w(E(\alpha) \setminus E(\C{P})) -(b-1)\cdot w(\C{P}) + \sum_{i\in [b]} \big(\Pi_{\C{P}}(\beta_i,h_i)) - \varkappa$ to $M$. 

Finally, we set $\Pi_{\C{P}}(\alpha,h) =\max_{x\in M}x$ (empty maximum is $-\infty$). 

\subparagraph{Running time analysis.}
From \Cref{prop:def:laminar-tree} the number of nodes in $\C{T}$ is bounded by a polynomial in $n$, and the number of distinct choices of $h$ is bounded by $k$. 
Now consider a fixed $\alpha \in V(\C{T})$, $h\in [k]$, and $b \in [h-1]$. 
Since the number of ordered partitions of a positive integer $x$ into non-zero integers is $2^{x-1}$, we have that $|\C{H}_{b}|$ is bounded by $2^{\Ocal(k)}$.
Additionally, by \Cref{obs:size-of-M}, $|\C{R}_{\alpha,b}|$ is bounded by $2^{O(k)}$. 
Using \Cref{prop:num-sum-prod} and \Cref{obs:planar-gamma} we can obtain that $|\C{S}_{\alpha,R}|$ is bounded by $2^{O(k)}$. 
Moreover, the above sets can be enumerated in the amount of time proportional to their sizes (with a multiplicative factor of a polynomial in $\log n$). 
Also, \Cref{lem:compute-pi-ext} is a polynomial time procedure. 
Every other stated step in the algorithm can be executed in polynomial time. 
Thus, the running time is bounded by $2^{O(k)}\cdot n^{O(1)}$.

\subparagraph{Correctness of the algorithm.}
Consider $\alpha \in V(\C{T})$, and a positive integer $h$, where $1\leq h\leq k$. The correctness of the base case (and the sanity check) is immediate from its description. So we now consider the case when $\alpha$ is a non-leaf node of $\C{T}$, $h >1$ and $\{\alpha\}$ is $d$-diverse from $\C{P}$. Our goal will be to show that $\Pi_{\C{P}}(\alpha,h) = \max_{x\in M}x$. 

\noindent{\em Proof of $\Pi_{\C{P}}(\alpha,h) \leq \max_{x\in M}x$}. 
Suppose that $\Pi_{\Pcal}(\alpha,h) \neq -\infty$ (else the inequality is trivially satisfied), and let $\C{C}^*$ be an optimal solution for $\Pi_{\Pcal}(\alpha,h)$ (see \Cref{def:pi-p-alpha-h}). 
Let $B = \{\beta^*_i\mid i\in [b]\}$ be the set of \pds{} of $\alpha$ in $\C{C}^*$ (see \Cref{def:productive-descendant}), where note that $1 \leq b \leq h-1$. Furthermore, for each $i\in [b]$, let $\gamma_i$ be the $(\alpha,\C{C}^*)$-representative for $\beta^*_i$ (see \Cref{def:productive-descendant-representative}), and let $R = \{\gamma_i \mid i\in [b]\}$. From \Cref{obs:relation-between-pd-and-pdrs}, $R \in \C{R}_{\alpha,b}$. For $i\in [b]$, let $h_i = |\C{C}^* \cap V(\C{T}_{\beta^*_i})|$, and note that $(h_i)_{i\in [b]} \in \C{H}_b$. For $i\in [b]$, we construct $\theta_i \in \C{S}_{\alpha,R}$ as follows. If $\beta^*_i = \gamma_i$, set $\theta_i = \F$. If $\beta^*_i \neq \gamma_i$ and $E(\beta^*_i) \cap E(\delta) = \emptyset$ for all $\delta \in \Gamma_{\alpha, R, \gamma_i}$, set $\theta_i = \N$. Otherwise, we set $\theta_i = (\psi_i, \tau_i)$, where $\emptyset \neq \Gamma_{\alpha,R,\gamma_i,\beta^*_i} = \Gamma_{\alpha,R,\gamma_i}[\psi_i,\tau_i]$ (see \Cref{obs:consecutive}).

By construction, $(\theta_i)_{i\in [b]} \in \C{S}_{\alpha, R}$. Furthermore, $\C{C}^*$ is a solution for $\Pi_{\Pcal,\alpha, h}(\gamma_i, h_i, \theta_i)_{i \in [b]}$. 
Thus, at the iteration where the algorithm considers $b$, $(h_i)_{i\in [b]}$, $R$, $(\theta_i)_{i\in [b]}$, the correctness of \Cref{lem:compute-pi-ext} implies that ${\sf ans}(\gamma_i,h_i, \theta_i)_{i\in [b]} = \{\beta_i \mid i\in [b]\}$, and there is an optimal solution $\C{C}$ for $\Pi_{\Pcal,\alpha, h}(\gamma_i, h_i, \theta_i)_{i \in [b]}$ such that for each $i\in [b]$, $\beta_i \in \C{C}$ and $\gamma_i$ is the $(\alpha, \C{C})$-\pdr\ of $\beta_i$. 
Note that $w(\C{C} \cup \C{P}) \geq w(\C{C}^* \cup \C{P})$, and thus $\C{C}$ is an optimal solution for $\Pi_{\Pcal}(\alpha, h)$. For $i\in [b]$, let $\C{C}_i = \C{C} \cap V(\C{T}_{\beta_i})$, and notice that from  \Cref{{obs:partitioning-of-a-solution}} (\ref{item-1-of-obs:partitioning-of-a-solution-opt-sol}), $\Pi_{\Pcal}(\beta_i, h_i) = w(\C{C}_i\cup \C{P})$. 
Let $\varkappa = w(E'\setminus E(\C{P}))$, where $E' = \cup_{\beta, \beta' \in \{\alpha\} \cup \{\beta_i \mid i\in [b]\},\beta\neq \beta'} E(\beta) \cap E(\beta')$. 
From the above, together with \Cref{lem:intersection-with-ancestor-is-more} and \Cref{obs:split-relation-between-pd-and-pdrs} we can obtain that, 
$$
    \Pi_{\Pcal}(\alpha, h) = w(\C{C}\cup \C{P}) = w(E(\alpha) \setminus E(\C{P})) -(b-1)\cdot w(\C{P}) + \big(\sum_{i\in [b]} w(\C{C}_i \cup \C{P})\big) - \varkappa
$$ 
$$ 
    = w(E(\alpha) \setminus E(\C{P})) -(b-1)\cdot w(\C{P}) + \big(\sum_{i\in [b]} \Pi_{\C{P}}(\beta_i,h_i)\big) - \varkappa
$$
Thus, we can conclude that $\Pi_{\C{P}}(\alpha,h) \leq \max_{x\in M}x$.

\noindent{\em Proof of $\Pi_{\C{P}}(\alpha,h) \geq \max_{x\in M}x$.} 
Suppose that $\max_{x\in M}x \neq - \infty$ (else the inequality is trivially satisfied), and let $x^* = \max_{x\in M}x$, where $x^*$ was added to $M$ for the iteration where the algorithm considered $b \in [h-1]$, $(h_i)_{i\in [b]} \in \C{H}_b$, $R =\{\gamma_i\mid i\in [b]\} \in \C{R}_{\alpha,b}$ and $(\theta_i)_{i\in [b]} \in \C{S}_{\alpha,R}$. Furthermore, the answer returned by \Cref{lem:compute-pi-ext}, for $(G,w,k,d,\ell)$, $\C{T}, \C{P}$, $\alpha, h$, $R,(h_i)_{i\in [b]}, (\theta_i)_{i\in [b]}$ is ${\sf ans}(\gamma_i,h_i, \theta_i)_{i\in [b]} = \{\beta_i \mid i\in [b]\}$, where there is an optimal solution $\C{C}$ for $\Pi_{\Pcal,\alpha, h}(\gamma_i, h_i, \theta_i)_{i \in [b]}$ such that for each $i\in [b]$, $\beta_i \in \C{C}$ and $\gamma_i$ is the $(\alpha, \C{C})$-\pdr\ of $\beta_i$. For this iteration we have $\varkappa = w(E'\setminus E(\C{P}))$, where $E' = \cup_{\beta, \beta' \in \{\alpha\} \cup \{\beta_i \mid i\in [b]\},\beta\neq \beta'} E(\beta) \cap E(\beta')$. Note that $\C{C}$ must also be a solution for $\Pi_{\C{P}}(\alpha,h)$ (see \Cref{def:solution-of-pi-alpha-h-theta}), and furthermore, $w(\C{C} \cup \C{P}) = x^*$. Thus, $\Pi_{\C{P}}(\alpha,h) \geq x^*$.

This concludes the proof of \Cref{lem:compute-pi} (subject to to the proof of \Cref{lem:compute-pi-ext}). 
\end{proof} 

    \subsection{Proof of \Cref{lem:compute-pi-ext}: Finding Solution Descendants For Structured Solutions}\label{sec:conputation-of-pi-alpha-h-using-theta}

The goal of this section is to prove \Cref{lem:compute-pi-ext}. We are given an instance $(G,w,k,d,l)$ of \pmaxddisjointcoverage, where recall $G$ is a clean plane graph with $V(G)=[n]$ and the outer face is the shortest cycle $C_r$, and for any cycle $C$ in $G$, we have $C \leq C_r$, $w: E(G) \rightarrow \mathbb{Z}_{>0}$, $k> 1$, $0 \leq d \leq g(G)/2$. Additionally, we are given a laminar shortest cycles tree $\C{T}$ of $G$, rooted at $C_r$, a $d$-diverse $\C{P}\subseteq \C{Q}$, a non-leaf node $\alpha$ of $\C{T}$, an integer $h >1$, an $\alpha$-lineage-free set $R = \{\gamma_i\mid i\in [b]\}$, $h_1, h_2 \ldots, h_{b} \in [h-1]$ with $\sum_{i=1}^{b} h_i = h-1$, and for each $i\in [b]$, $\theta_i \in \C{S}_{\alpha,R,\gamma_i}$. Furthermore, we are given the values $\Pi_{\C{P}}(\alpha',h')$, for each $\alpha'\in V(\C{T}_{\alpha}) \setminus \{\alpha\}$ and $h' \in [k]$. 

Let $R_{\F}, R_{\N}, R_{\sf o}$ be the partition (with some sets possibly empty) of $R$ based on $\theta_i$, for ${i\in [b]}$ as follows. For ${\sf x} \in \{\F,\N\}$, let $R_{\sf x} = \{\gamma_i\mid \theta_i = {\sf x}, i\in [b]\}$, and $R_{\sf o} = R\setminus (R_{\F} \cup R_{\N})$. 
For each $i\in [b]$, we will construct $\beta_i$, satisfying the premise of the lemma. 
Initialize $\beta_i = \phi$, for each $i\in [b]$. 
For each $i\in [b]$, if $\gamma_i \in R_{\F}$, then set $\beta_i = \gamma_i$. The correctness of the above follows from \Cref{def:solution-of-pi-alpha-h-theta} (\ref{subcondition-1-of-condition-3-of-def:solution-of-pi-alpha-h-theta}).

Consider an (unknown) optimal solution $\C{C}^*$ for $\Pi_{\Pcal,\alpha, h}(\gamma_i, h_i, \theta_i)_{i \in [b]}$ (if it exists). 
Furthermore, for $i\in [b]$, let $\beta^*_i \in \C{C}^*$ be such that $\gamma_i$ is the $(\alpha, \C{C}^*)$-\pdr{} of $\beta^*_i$. 
Note that for each $\gamma_i \in R_\F$, it must be the case that $\gamma_i\in \C{C}^{*}$, as $\gamma_i = \beta^*_i$ (see \Cref{def:solution-of-pi-alpha-h-theta} (\ref{subcondition-1-of-condition-3-of-def:solution-of-pi-alpha-h-theta})); so we have correctly set $\beta_i = \gamma_i$. Consider an $i\in [b]$, where $\gamma_i \in R \setminus R_{\F}$. From \Cref{def:solution-of-pi-alpha-h-theta} (\ref{subcondition-2-of-condition-3-of-def:solution-of-pi-alpha-h-theta}, \ref{subcondition-3-of-condition-3-of-def:solution-of-pi-alpha-h-theta}), it must be the case that $\beta^*_i \neq \gamma_i$, and thus, $\beta^*_i$ must be a strict descendant of $\gamma_i$ in $\C{T}_\alpha$. 
Moreover, \Cref{def:productive-descendant-representative} together with the construction of $\C{M}_\alpha$, implies that $\gamma_i$ must be a node of degree exactly $2$ in $\C{T}_\alpha(G)$ (else we can directly report {\sf NO}). 
Let $\gamma^{\sf s}_i$ be the unique child of $\gamma_i$ in $\C{T}_\alpha$ (see \Cref{fig:pd-and-pdr}). In the maximal path starting from $\gamma_i$ in $\C{T}_{\gamma_i}$ where all nodes are of degree exactly $2$ in $\C{T}$, let $\gamma^{\sf t}_i$ be the last node of this path. We have the following simple observation regarding $\gamma^{\sf s}_i$ and $\gamma^{\sf t}_i$.

\begin{observation}\label{obs:simple-prop-beta}
Consider $i\in [b]$ with $\theta_i \neq \F$. If $\gamma^{\sf t}_i=\gamma_i$, then there is no solution for $\Pi_{\Pcal,\alpha, h}(\gamma_i, h_i, \theta_i)_{i \in [b]}$. Moreover, for any solution $\what{\C{C}}$ for $\Pi_{\Pcal,\alpha, h}(\gamma_i, h_i, \theta_i)_{i \in [b]}$, where for $i\in [b]$, $\gamma_i$ is the $(\alpha,\what{\C{C}})$-\pdrs{} of $\what{\beta}_i \in \what{\C{C}}$, $\what{\beta_i}$ is a node in the unique path between $\gamma^{\sf s}_i$ and $\gamma^{\sf t}_i$ (including $\gamma^{\sf s}_i$ and $\gamma^{\sf t}_i$) in $\C{T}_\alpha$.\footnote{Here, we may possibly have $\gamma^{\sf s}_i=\gamma^{\sf t}_i$.}
\end{observation}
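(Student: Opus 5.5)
The plan is to derive both claims directly from the definition of $\C{M}_\alpha$ and of the representative (\Cref{def:productive-descendant-representative}), together with the fact that $\beta_i\neq\gamma_i$ whenever $\theta_i\neq\F$ (\Cref{def:solution-of-pi-alpha-h-theta}~(\ref{subcondition-2-of-condition-3-of-def:solution-of-pi-alpha-h-theta}), (\ref{subcondition-3-of-condition-3-of-def:solution-of-pi-alpha-h-theta})). Throughout, fix $i$ with $\theta_i\neq\F$ and a solution $\what{\C{C}}$ for $\Pi_{\Pcal,\alpha, h}(\gamma_i, h_i, \theta_i)_{i \in [b]}$. Then $\what{\beta}_i$ is a strict descendant of $\gamma_i$ in $\C{T}_\alpha$, and $\gamma_i$ is the node of $\C{M}_\alpha$ closest to $\what{\beta}_i$ on the $\what{\beta}_i$--$\alpha$ path. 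In particular, no node of $\C{M}_\alpha$ lies strictly between $\gamma_i$ and $\what{\beta}_i$, and $\what{\beta}_i\notin\C{M}_\alpha$.

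First, I will show that $\gamma_i$ has exactly one child, namely $\gamma^{\sf s}_i$. The case $\gamma_i=\alpha$ is impossible, since $R\subseteq\C{M}_\alpha$ consists of descendants distinct from $\alpha$ (representatives lie on the path from $\beta$ toward $\alpha$ but are chosen among the nodes of $\C{M}_\alpha\subseteq V(\C{T}_\alpha)$ defined below $\alpha$). If $\gamma_i$ had two or more children, or were a leaf, then $\what{\beta}_i$ could not exist as a strict descendant, or $\gamma_i$ would have degree at least $3$. In the latter case $\gamma_i$ is marked, which is harmless, but then the child of $\gamma_i$ on the path to $\what{\beta}_i$ either is in $\C{M}_\alpha$ (as a leaf, a branching node, or the top endpoint of a maximal degree-$2$ path) or has degree $2$ with parent of degree $\geq 3$. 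In both situations that child is itself the top endpoint of a maximal degree-$2$ path, hence marked and closer to $\what{\beta}_i$, a contradiction. This is the step that needs the most care, since one must check each clause (i)--(iv) of the definition of $\C{M}_\alpha$. So $\gamma_i$ has degree $2$, and its unique child $\gamma^{\sf s}_i$ lies on the path to $\what{\beta}_i$.

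Next, I will walk down from $\gamma^{\sf s}_i$. Every node $\nu$ strictly between $\gamma_i$ and $\what{\beta}_i$ is unmarked, so it is neither a leaf nor of degree $\geq 3$; hence it has degree exactly $2$. Moreover, the maximal degree-$2$ path containing $\gamma_i$ extends through all these nodes, and its endpoint closest to $\alpha$ lies at or above $\gamma_i$, so these nodes are not marked via clause (iii). Likewise $\what{\beta}_i$ itself is unmarked, so it has degree $2$ and lies on the same maximal degree-$2$ path starting from $\gamma_i$. That path ends at $\gamma^{\sf t}_i$, so $\what{\beta}_i$ is on the $\gamma^{\sf s}_i$--$\gamma^{\sf t}_i$ path, which is the second claim.

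Finally, for the first claim: if $\gamma^{\sf t}_i=\gamma_i$, the maximal degree-$2$ path starting at $\gamma_i$ contains no strict descendant of $\gamma_i$. But we have just shown that any solution yields a strict descendant $\what{\beta}_i$ lying on that path. Hence no solution exists.
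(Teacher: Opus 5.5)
Your proposal is correct and follows the same route as the paper. The paper's proof just says the claim follows from the definition of the representative, the construction of $\C{M}_\alpha$, and conditions (1), (3b), (3c) of the auxiliary-table definition, and you have written out that case analysis in full.

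One small correction concerns why $\gamma_i\neq\alpha$. It is not because $\C{M}_\alpha$ avoids $\alpha$; in fact $\C{M}_\alpha$ can contain $\alpha$ via clause (ii) or (iii). The real reason is clause (iv): the child of $\alpha$ on the path to $\what{\beta}_i$ is always marked, and it is strictly closer to $\what{\beta}_i$ than $\alpha$.
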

\begin{proof}
Follows from \Cref{def:productive-descendant-representative}, the construction of $\C{M}_\alpha$, and \Cref{def:solution-of-pi-alpha-h-theta} (\ref{condition-1-of-def:solution-of-pi-alpha-h-theta}, \ref{subcondition-2-of-condition-3-of-def:solution-of-pi-alpha-h-theta}, \ref{subcondition-3-of-condition-3-of-def:solution-of-pi-alpha-h-theta}). 
\end{proof}  

For $i\in [b]$, let $\Gamma_i = \Gamma_{\alpha,R,\gamma_i}$. Consider $i\in [b]$ where $\gamma_i \in R \setminus R_{\F}$. If $\gamma_i = \gamma^{\sf t}_i$, using the above observation we return {\sf NO}, correctly indicating that there is no solution for $\Pi_{\Pcal,\alpha, h}(\gamma_i, h_i, \theta_i)_{i \in [b]}$ (and stop). Else, let $P_i$ be the unique path between $\gamma^{\sf s}_i$ and $\gamma^{\sf t}_i$ in $\C{T}_\alpha$. 

We will now find $\beta_i$ when $\theta_i = \N$ (or report there is no solution for $\Pi_{\Pcal,\alpha, h}(\gamma_i, h_i, \theta_i)_{i \in [b]}$) as follows. 
For $\beta \in V(P_i)$, we say that $\beta$ is $(\gamma_i,h_i)$-{\em nice} if $\Pi_{\C{P}}(\beta,h_i) \neq -\infty$ and $E(\beta)\cap E(\iota) = \emptyset$, for every $\iota \in \Gamma_i$. 
Let $N_i = \{\iota \in V(P_i) \mid \iota  \text{ is } (\gamma_i,h_i)-\text{nice} \}$. 
Furthermore, if $N_i \neq \emptyset$, let $\iota_i = \argmax_{\beta \in N_i}\Pi_{\C{P}}(\beta,h_i)$.
(If there are multiple such $\beta$s, we arbitrarily pick one of them). 
The next observation (roughly speaking) establishes that $\iota_i$ is a correct choice for $\beta_i$.

\begin{observation}\label{lem:correct-beta-i-n}
    Consider $i\in [b]$, where $\gamma_i \in R_{\N}$ and $\gamma_i \neq \gamma^{\sf t}_i$. Then the following holds:
    \begin{enumerate}
        \item\label{item-1-of-lem:correct-beta-i-n} If $N_i=\emptyset$ then there is no solution for $\Pi_{\Pcal,\alpha, h}(\gamma_{i'}, h_{i'}, \theta_{i'})_{i' \in [b]}$. 
        \item\label{item-2-of-lem:correct-beta-i-new} 
        Suppose there is a solution $\what{\C{C}}$ for $\Pi_{\Pcal,\alpha, h}(\gamma_{i'}, h_{i'}, \theta_{i'})_{i' \in [b]}$, where $\gamma_{i'}$ is $(\alpha, \what{\C{C}})$-\pdr{} for $\hbeta_{i'} \in \hcal{C}$, and $\what{\C{C}}_{i'} = \what{\C{C}}\cap V(\C{T}_{\hbeta_{i'}})$, for $i'\in [b]$. 
        Then, there is a solution $\C{C}$ for $\Pi_{\Pcal,\alpha, h}(\gamma_{i'}, h_{i'}, \theta_{i'})_{i' \in [b]}$, such that $\gamma_i$ is the $(\alpha,\C{C})$-\pdr{} for $\iota_i \in \C{C}$, $\C{C} \setminus V(\C{T}_{\iota_i}) = \what{\C{C}} \setminus \what{\C{C}}_i$, and $w(\C{C} \cup \C{P}) \geq w(\what{\C{C}} \cup \C{P})$. 
    \end{enumerate} 
\end{observation}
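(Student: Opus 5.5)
Item~\ref{item-1-of-lem:correct-beta-i-n} is a direct consequence of the second item together with \Cref{obs:simple-prop-beta}, so I first prove a weaker form of it directly. Suppose $\what{\C{C}}$ is a solution for $\Pi_{\Pcal,\alpha, h}(\gamma_{i'}, h_{i'}, \theta_{i'})_{i' \in [b]}$. By \Cref{obs:simple-prop-beta}, $\hbeta_i$ lies on $P_i$. By \Cref{def:solution-of-pi-alpha-h-theta}~(\ref{subcondition-2-of-condition-3-of-def:solution-of-pi-alpha-h-theta}), $E(\hbeta_i)\cap E(\iota)=\emptyset$ for all $\iota\in\Gamma_i$. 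By \Cref{obs:partitioning-of-a-solution}~(\ref{item-1-of-obs:partitioning-of-a-solution-opt-sol}), $\what{\C{C}}_i$ is a solution for $\Pi_{\Pcal}(\hbeta_i,h_i)$, so this entry is not $-\infty$. Hence $\hbeta_i\in N_i$, so $N_i\neq\emptyset$. Contrapositively, $N_i=\emptyset$ rules out any solution.

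For the second item, I take an optimal solution $\C{C}_{\iota}$ for $\Pi_{\Pcal}(\iota_i,h_i)$ and set $\C{C}=(\what{\C{C}}\setminus\what{\C{C}}_i)\cup\C{C}_\iota$. The plan is to check the conditions of \Cref{def:pi-p-alpha-h} and \Cref{def:solution-of-pi-alpha-h-theta}, and then compare weights.

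\begin{itemize}
\item \emph{Size and descendancy.} $|\C{C}|=h$ and all cycles are descendants of $\alpha$; both follow immediately.
\item \emph{Diversity from $\C{P}$.} Each cycle of $\C{C}_\iota$ is $d$-diverse from $\C{P}$ by definition of $\C{C}_\iota$.
\item \emph{Pairwise $d$-diversity.} This is the main obstacle. Take $\delta\in\C{C}_\iota$ and $\delta'\in\what{\C{C}}\setminus\what{\C{C}}_i$. Then $\delta'$ is $\alpha$ itself or lies in $\C{T}_{\hbeta_j}\subseteq\C{T}_{\gamma_j}$ for some $j\neq i$. In both cases $\delta'$ is not a descendant of $\gamma_i$. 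Applying \Cref{lem:intersection-with-ancestor-is-more} with ancestor $\iota_i$ gives $E(\delta)\cap E(\delta')\subseteq E(\iota_i)$. Applying it again with ancestor $\gamma_j$ (when $\delta'\neq\alpha$) gives containment in $E(\gamma_j)$. So any shared edge lies in $E(\iota_i)\cap E(\gamma_j)$, or in $E(\iota_i)\cap E(\alpha)$. By \Cref{lem:intersection-with-ancestor-is-more} applied to $\iota_i\le\gamma_i$, such an edge is also in $E(\gamma_i)$. Hence $\gamma_j$ (respectively $\alpha$) shares an edge with $\gamma_i$ and belongs to $\Gamma_i$. But niceness says $E(\iota_i)\cap E(\iota)=\emptyset$ for all $\iota\in\Gamma_i$, so the intersection is empty. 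In particular $\C{C}$ is $d$-diverse, and $\C{C}_\iota$ is edge-disjoint from the rest of $\C{C}$ outside $E(\C{P})$.
\item \emph{Structural conditions.} $\iota_i$ is a solution descendant of $\alpha$ with representative $\gamma_i$, since $\iota_i$ lies strictly below $\gamma_i$ on the degree-$2$ path $P_i$ and no other cycle of $\C{C}$ lies between them. Exactly $h_i$ cycles lie in $\C{T}_{\iota_i}$. Condition $\theta_i=\N$ holds by niceness and $\iota_i\ne\gamma_i$. The other indices are unchanged; their relevant sets $\Gamma_j$ and the states $\theta_j$ depend only on $R$, $\alpha$ and their own $\hbeta_j$.
\end{itemize}

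For the weight comparison, the disjointness above gives an exact decomposition:
\[
w(\C{C}\cup\C{P}) = w\bigl((\what{\C{C}}\setminus\what{\C{C}}_i)\cup\C{P}\bigr) + \Pi_{\C{P}}(\iota_i,h_i) - w(\C{P}).
\]
The same identity holds with $\what{\C{C}}_i$ in place of $\C{C}_\iota$, using the same argument with $\hbeta_i\in N_i$. The second copy is only an upper bound, $w(\what{\C{C}}_i\cup\C{P})\le\Pi_{\C{P}}(\hbeta_i,h_i)$. The choice of $\iota_i$ as argmax over $N_i$ then gives $w(\C{C}\cup\C{P})\ge w(\what{\C{C}}\cup\C{P})$. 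Finally, $\C{C}\setminus V(\C{T}_{\iota_i})=\what{\C{C}}\setminus\what{\C{C}}_i$ holds by construction.
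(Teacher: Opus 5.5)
Your proof is correct and follows the paper's route. For item~1, any solution's solution descendant $\hbeta_i$ must itself lie in $N_i$. For item~2, you use the same replacement $(\hcal{C}\setminus\hcal{C}_i)\cup\C{C}_\iota$ with an optimal solution for $\Pi_{\C{P}}(\iota_i,h_i)$.

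The paper only asserts that ``the construction of $\C{C}$ implies'' the claim. You supply the missing justification: niceness of both $\iota_i$ and $\hbeta_i$, combined with \Cref{lem:intersection-with-ancestor-is-more}, makes the replaced subtree edge-disjoint from the rest of the solution. That yields both the $d$-diversity and the exact weight comparison.
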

\begin{proof}
Suppose that $N_i=\emptyset$. Notice that if there exists a solution $\Ccal$ for $\Pi_{\Pcal,\alpha, h}(\gamma_i, h_i, \theta_i)_{i \in [b]}$, then there exists $\beta \in \C{C}$ such that $\beta$ is a \pd{} of $\alpha$ in $\Ccal$ and $\gamma_i$ is the $(\alpha, \Ccal)$-\pdr{} for $\beta$. As $\gamma_i\in R_{\N}$ (i.e., $\theta_i = \N$), by \Cref{def:productive-descendant-representative}, 
$\beta$ must lie on the unique path between $\gamma_i^s$ and $\gamma_i^t$. As $N_i=\emptyset$, it must be the case that $\Pi_{\Pcal}(\beta, h_i) = -\infty$, or there is $\delta \in \Gamma_i$ such that $E(\beta) \cap E(\delta) \neq \emptyset$. The former case cannot happen due to \Cref{obs:partitioning-of-a-solution} (\ref{item-1-of-obs:partitioning-of-a-solution-opt-sol}) (also see, \Cref{def:solution-of-pi-alpha-h-theta}). The latter case cannot happen due to \Cref{def:solution-of-pi-alpha-h-theta} 
(\ref{subcondition-2-of-condition-3-of-def:solution-of-pi-alpha-h-theta}). Thus, when $N_i=\emptyset$, there is no solution for $\Pi_{\Pcal,\alpha, h}(\gamma_i, h_i, \theta_i)_{i \in [b]}$.

For second item, consider a solution $\what{\C{C}}$ for $\Pi_{\Pcal,\alpha, h}(\gamma_{i'}, h_{i'}, \theta_{i'})_{i' \in [b]}$, where for $i' \in [b]$, $\gamma_{i'}$ is the $(\alpha,\what{\C{C}})$-representative of $\hbeta_{i'}\in \what{\C{C}}$ and $\what{\Ccal}_{i'} = \what{\Ccal}\cap V(\Tcal_{\hbeta_{i'}})$. If $\beta_i = \iota_i$, then \cref{item-2-of-lem:correct-beta-i-new} holds for $\C{C}=\what{\C{C}}$. So we now suppose that $\beta_i \neq \iota_i$. By the choice of $\iota_i$ and the definition of $(\gamma_i, h_i)$-nice, $\Pi_{\Pcal}(\iota_i, h_i) \neq - \infty$; let $\Ccal_i^*$ be an optimal solution for $\Pi_{\Pcal}(\iota_i, h_i)$. Let $\Ccal = (\what{\Ccal}\setminus \what{\Ccal}_i) \cup \Ccal_i^*$. The construction of $\Ccal$ implies that $\Ccal$ is a solution for $\Pi_{\Pcal,\alpha, h}(\gamma_{i'}, h_{i'}, \theta_{i'})_{i' \in [b]}$ that satisfies the requirements of the observation. 
\end{proof}
From the above lemma, for any $i\in [b]$ where $\gamma_i \in R_{\N}$ and $\gamma_i \neq \gamma^{\sf t}_i$, we can set $\beta_i = \iota_i$. 

\subparagraph{Construction of an auxiliary graph.} We will focus on $i\in [b]$, where $\gamma_i\in R_{\sf o}$, and let $\theta_i = (\psi_i, \tau_i)$. Recall that for $i\in [b]$, for the (unknown) $\beta^*_{i}$, $P^*_{i} = \beta^*_{i} \cap \gamma_{i}$ must be a path in $G$. Furthermore, $\psi_{i}\cap \gamma_i$ (resp. $\tau_{i}\cap \gamma_i$) (see \Cref{def:solution-of-pi-alpha-h-theta} (\ref{subcondition-3-of-condition-3-of-def:solution-of-pi-alpha-h-theta})) is supposed to denote the ``first'' (resp. ``last'') intersection of $P^*_{i}$ with the cycles in $\Gamma_i$ (in the clockwise cyclic ordering $\gamma^\circ_i$). We construct an auxiliary graph that will precisely capture these ``end'' intersections, where we can have ``partial'' containment of the edges from the path $\iota \cap \gamma_{i}$ in $\beta^*_{i}$, for $\iota \in \Gamma_{i}$.  

We now state the construction of the auxiliary graph $H$. Set $V(H) = R_{\sf o}$, and for distinct $i,j\in [b]$, where $\gamma_i,\gamma_j \in R_{\sf o}$, we add the edge $\{\gamma_i,\gamma_j\}$ to $E(H)$ if and only if $\gamma_i \in \{\psi_j,\tau_j\}$ and $\gamma_j \in \{\psi_i,\tau_i\}$. We want to highlight that, for instance, if $(\psi_i,\tau_i) = (\gamma_j,\gamma_j)$ and $(\psi_j,\tau_j) = (\gamma_i,\gamma_i)$, then $\gamma_i, \gamma_j$ will be in a connected component with no other vertex and this component will still be a simple graph with exactly one edge, namely, $\{\gamma_i,\gamma_j\}$. 


The next observation easily follows from the fact that for any $i\in [b]$, where $\gamma_i\in R_{\sf o}$, the only possible neighbors of $\gamma_i$ in $H$ are $\psi_i$ and $\tau_i$. 

\begin{observation}\label{obs:h-cycle-path}
Each connected component of the graph $H$ is either i) a path (with one or more vertices), or ii) a cycle (with at least three vertices). 
\end{observation}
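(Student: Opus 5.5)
The plan is a degree count followed by a case analysis on the components. By construction, $\gamma_i\in V(H)=R_{\sf o}$ can be adjacent only to vertices $\gamma_j$ with $\gamma_j\in\{\psi_i,\tau_i\}$. This set has at most two elements, even when $\psi_i=\tau_i$, so $\Delta(H)\le 2$.

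I will then check that $H$ is simple. Edges are only added between distinct $\gamma_i,\gamma_j$, so there are no loops. Each edge is an unordered pair $\{\gamma_i,\gamma_j\}$ added at most once, even if the adjacency condition is witnessed twice, for instance by $(\psi_i,\tau_i)=(\gamma_j,\gamma_j)$. So there are no parallel edges, as the paper already highlights for the two-vertex case.

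With $H$ finite, simple, and of maximum degree at most $2$, the standard fact applies: every connected component of such a graph is either a path or a cycle. To see this, take a component $K$. If $K$ has a vertex of degree at most $1$, start at it and walk along the unique continuation. Since degrees are at most $2$ and $K$ is finite, the walk never branches and ends at another vertex of degree at most $1$, giving a path; a single isolated vertex counts as a path with one vertex. If every vertex of $K$ has degree exactly $2$, then $K$ is a connected $2$-regular simple graph, hence a cycle. Simplicity forces such a cycle to have at least three vertices, because a $2$-cycle would require a double edge.

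I do not expect a real obstacle. The only care point is making sure the degree bound counts distinct neighbours rather than the coordinates of $\theta_i$, and that the mutual-adjacency requirement in the definition of $E(H)$ does not create multi-edges. Both are handled by the observation that $E(H)$ is defined as a set of unordered pairs of distinct vertices.
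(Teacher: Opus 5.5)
Your proof is correct and takes the same approach as the paper, which only notes that the possible neighbours of $\gamma_i$ in $H$ are $\psi_i$ and $\tau_i$, so $H$ has maximum degree at most two. Your explicit check that $H$ is simple, which is what guarantees that a cycle component has at least three vertices, fills in the routine details the paper leaves implicit.
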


We will handle each connected component of $H$ separately. We begin by giving an intuitive description of our dynamic programming routine that will help us construct the $\beta_i$s for the $\gamma_i$s that appear in a fixed connected component of $H$. 


\begin{figure}[t]
\centering
\begin{tikzpicture}[scale=1.2]
\draw[thick, gray] (0,0) circle (2);
\draw[thick, gray] (0,0) circle (1);

\draw[line width=2pt, white]
(100:2)
arc[start angle=100, end angle=135, radius=2];

\draw[line width=2pt, white]
(100:1)
arc[start angle=100, end angle=140, radius=1];

\draw[line width=2pt, white]
(275:2)
arc[start angle=275, end angle=320, radius=2];

\draw[line width=2pt, white]
(275:1)
arc[start angle=275, end angle=320, radius=1];

\draw[thick, gray]
(100:2) -- (100:1);

\draw[thick, gray]
(135:2) -- (140:1);

\draw[thick, gray]
(270:2) -- (265:1);


\draw[thick, gray]
(200:2) -- (200:1);

\draw[thick, gray]
(30:2) -- (30:1);

\draw[thick, gray]
(330:2) -- (330:1);

\fill[gray] (108:1.5) circle (0.03);
\fill[gray] (118:1.5) circle (0.03);
\fill[gray] (128:1.5) circle (0.03);

\fill[gray] (285:1.5) circle (0.03);
\fill[gray] (300:1.5) circle (0.03);
\fill[gray] (315:1.5) circle (0.03);

\def\r{2}
\def\rmid{1.28}

\draw[red]
plot[smooth] coordinates {
(30:\r-0.05)
(28:\r+0.05)
(26:\r-0.05)
(24:\r+0.05)
(22:\r-0.05)
(20:\r+0.05)
(18:\r-0.05)
(16:\r+0.05)
(14:\r-0.05)
(12:\r+0.05)
(10:\r-0.05)
(8:\r+0.05)
(6:\r-0.05)
(4:\r+0.05)
(2:\r-0.05)
(0:\r+0.05)
(-2:\r-0.05)
(-4:\r+0.05)
(-6:\r-0.05)
(-8:\r+0.05)
(-10:\r-0.05)
(-12:\r+0.05)
(-14:\r-0.05)
(-16:\r+0.05)
(-18:\r-0.05)
(-20:\r+0.05)
(-22:\r-0.05)
(-24:\r+0.05)
(-26:\r-0.05)
(-28:\r+0.05)
(-30:\r-0.05)
};

\draw[red]
plot[smooth] coordinates {
(30:\rmid-0.05)
(28:\rmid+0.05)
(26:\rmid-0.05)
(24:\rmid+0.05)
(22:\rmid-0.05)
(20:\rmid+0.05)
(18:\rmid-0.05)
(16:\rmid+0.05)
(14:\rmid-0.05)
(12:\rmid+0.05)
(10:\rmid-0.05)
(8:\rmid+0.05)
(6:\rmid-0.05)
(4:\rmid+0.05)
(2:\rmid-0.05)
(0:\rmid+0.05)
(-2:\rmid-0.05)
(-4:\rmid+0.05)
(-6:\rmid-0.05)
(-8:\rmid+0.05)
(-10:\rmid-0.05)
(-12:\rmid+0.05)
(-14:\rmid-0.05)
(-16:\rmid+0.05)
(-18:\rmid-0.05)
(-20:\rmid+0.05)
(-22:\rmid-0.05)
(-24:\rmid+0.05)
(-26:\rmid-0.05)
(-28:\rmid+0.05)
(-30:\rmid-0.05)
};

\draw[red]
plot[smooth] coordinates {
(-30:\r-0.05)
(-32:1.95)
(-28:1.90)
(-32:1.85)
(-28:1.80)
(-32:1.75)
(-28:1.70)
(-32:1.65)
(-28:1.60)
(-32:1.55)
(-28:1.50)
(-32:1.45)
(-30:\rmid-0.05)
};

\draw[red]
plot[smooth] coordinates {
(30:\r-0.05)
(32:1.95)
(28:1.90)
(32:1.85)
(28:1.80)
(32:1.75)
(28:1.70)
(32:1.65)
(28:1.60)
(32:1.55)
(28:1.50)
(32:1.45)
(30:\rmid-0.05)
};

\def\rA{1.6}
\def\rB{1.0}

\draw[blue]
plot[smooth] coordinates {
(30:\rA-0.05)
(32:\rA+0.05)
(34:\rA-0.05)
(36:\rA+0.05)
(38:\rA-0.05)
(40:\rA+0.05)
(42:\rA-0.05)
(44:\rA+0.05)
(46:\rA-0.05)
(48:\rA+0.05)
(50:\rA-0.05)
(52:\rA+0.05)
(54:\rA-0.05)
(56:\rA+0.05)
(58:\rA-0.05)
(60:\rA+0.05)
(62:\rA-0.05)
(64:\rA+0.05)
(66:\rA-0.05)
(68:\rA+0.05)
(70:\rA-0.05)
(72:\rA+0.05)
(74:\rA-0.05)
(76:\rA+0.05)
(78:\rA-0.05)
(80:\rA+0.05)
(82:\rA-0.05)
(84:\rA+0.05)
(86:\rA-0.05)
(88:\rA+0.05)
(90:\rA-0.05)
(92:\rA+0.05)
(94:\rA-0.05)
(96:\rA+0.05)
(98:\rA-0.05)
(100:\rA+0.05)
};

\draw[blue]
plot[smooth] coordinates {
(30:\rB-0.05)
(32:\rB+0.05)
(34:\rB-0.05)
(36:\rB+0.05)
(38:\rB-0.05)
(40:\rB+0.05)
(42:\rB-0.05)
(44:\rB+0.05)
(46:\rB-0.05)
(48:\rB+0.05)
(50:\rB-0.05)
(52:\rB+0.05)
(54:\rB-0.05)
(56:\rB+0.05)
(58:\rB-0.05)
(60:\rB+0.05)
(62:\rB-0.05)
(64:\rB+0.05)
(66:\rB-0.05)
(68:\rB+0.05)
(70:\rB-0.05)
(72:\rB+0.05)
(74:\rB-0.05)
(76:\rB+0.05)
(78:\rB-0.05)
(80:\rB+0.05)
(82:\rB-0.05)
(84:\rB+0.05)
(86:\rB-0.05)
(88:\rB+0.05)
(90:\rB-0.05)
(92:\rB+0.05)
(94:\rB-0.05)
(96:\rB+0.05)
(98:\rB-0.05)
(100:\rB+0.05)
};

\draw[blue]
plot[smooth] coordinates {
(29:\rA-0.05)
(32:1.55)
(28:1.50)
(32:1.45)
(28:1.40)
(32:1.35)
(28:1.30)
(32:1.25)
(28:1.20)
(32:1.15)
(28:1.10)
(30:\rB-0.05)
};

\draw[blue]
plot[smooth] coordinates {
(100:\rA+0.05)
(102:1.55)
(98:1.50)
(102:1.45)
(98:1.40)
(102:1.35)
(98:1.30)
(102:1.25)
(98:1.20)
(102:1.15)
(98:1.10)
(100:\rB+0.05)
};

\draw[green!50!black, very thick]
plot[smooth] coordinates {
(29:\rA-0.01)
(32:1.55)
(28:1.50)
(32:1.45)
(28:1.40)
(32:1.35)
(28:1.30)
};

\begin{scope}[shift={(0:1.7)}]
    \draw[red, thick, ->]
        (-0.1,0)
        arc[start angle=110, end angle=-150, radius=0.2];
\end{scope}

\begin{scope}[shift={(70:1.3)}]
    \draw[blue, thick, ->]
        (0.18,0)
        arc[start angle=0, end angle=-270, radius=0.18];
\end{scope}

\fill (30:1.63) circle (0.05);
\fill (100:1.6) circle (0.05);

\node[font=\scriptsize] at (36:1.75) {$u_{\tilde{p}}$};
\node[font=\scriptsize] at (107:1.73) {$v_{\tilde{p}}$};

\fill (29:1.3) circle (0.05);
\fill (-30:1.32) circle (0.05);

\node[font=\scriptsize] at (21:1.15) {$u_{1}$};
\node[font=\scriptsize] at (-36.5:1.44) {$v_{1}$};


\node[font=\scriptsize] at (60:2.2) {$\tgamma_{\tilde{p}}$};
\node[font=\scriptsize] at (0:2.2) {$\tgamma_{1}$};
\node[font=\scriptsize] at (315:2.2) {$\tgamma_{2}$};
\node[font=\scriptsize] at (240:2.2) {$\tgamma_{i-1}$};
\node[font=\scriptsize] at (160:2.2) {$\tgamma_{i}$};

\node[font=\scriptsize, red] at (15:1.7) {$\tilde{\beta}_{1}$};
\node[font=\scriptsize, blue] at (50:1.3) {$\tilde{\beta}_{\tilde{p}}$};

\begin{scope}[shift={(170:1.48)}]
    \draw[cyan, thick, ->]
        (0.16,0) arc[start angle=0, end angle=-270, radius=0.16];
\end{scope}

\begin{scope}[shift={(232:1.68)}]
    \draw[magenta, thick, ->]
        (0.16,0) arc[start angle=0, end angle=-270, radius=0.16];
\end{scope}



\coordinate (GiTopLeft)     at (135:2);
\coordinate (GiTopRight)    at (200:1.9);
\coordinate (GiBottomLeft)  at (137:1.23);
\coordinate (GiBottomRight) at (200:1.12);

\tikzset{
  gicurve/.style={
    cyan,
    decorate,
    decoration={
      snake,
      amplitude=0.06cm,
      segment length=0.14cm
    }
  }
}

\draw[gicurve]
(134:1.97)
arc[start angle=134, end angle=204, radius=1.89];

\draw[gicurve]
(GiBottomLeft)
arc[start angle=140, end angle=200, radius=1.20];

\draw[gicurve]
(GiTopLeft) -- (GiBottomLeft);

\draw[gicurve]
(GiTopRight) -- (GiBottomRight);



\coordinate (GiTopLeftn)     at (265:1.3);
\coordinate (GiBottomLeftn)  at (270:2.04);

\coordinate (GiTopRight)    at (200:1.45);
\coordinate (GiBottomRight) at (200:1.9);

\coordinate (GiBottomLeft)  at (200:1.5);

\tikzset{
  gicurve/.style={
    magenta,
    decorate,
    decoration={
      snake,
      amplitude=0.06cm,
      segment length=0.14cm
    }
  }
}

\draw[gicurve]
(200:1.97)
arc[start angle=200, end angle=272, radius=1.89];

\draw[gicurve]
(GiBottomLeft)
arc[start angle=200, end angle=280, radius=1.20];

\draw[gicurve]
(GiTopLeftn) -- (GiBottomLeftn);

\draw[gicurve]
(GiTopRight) -- (GiBottomRight);

\node[font=\scriptsize, cyan] at (150:1.7) {$\tilde{\beta}_{i}$};
\node[font=\scriptsize, magenta] at (255:1.68) {$\tilde{\beta}_{i-1}$};

\fill (138:1.2) circle (0.05);
\fill (200:1.18) circle (0.05);

\node[font=\scriptsize] at (211:1.1) {$u_{i}$};
\node[font=\scriptsize] at (132:1.1) {$v_{i}$};

\fill (267:1.3) circle (0.05);
\fill (200:1.5) circle (0.05);

\node[font=\scriptsize] at (280:1.3) {$u_{i-1}$};
\node[font=\scriptsize] at (193:1.55) {$v_{i-1}$};

\end{tikzpicture}

    \caption{Illustration for the case when $\wtilde{H} = (\tgamma_1, \tgamma_2, \ldots, \tgamma_{\tilde{p}})$ is a cycle. The gray closed curves illustrate the embedding of the cycles $\tgamma_i$s on the plane. The curly closed curves in red and blue colors illustrate $\tbeta_1$ and $\tbeta_{\tilde{p}}$, respectively. Also, $\tbeta_i\cap \tgamma_i = \tgamma_i[u_i,v_i]$, for $i\in [\tilde{p}]$. The thick curly green curve illustrates $\tbeta_1\cap \tbeta_{\tilde{p}}$.}
    \label{fig:H-cycle-select}
\end{figure}

\subparagraph{Intuitive description of our dynamic programming routine.} Consider a connected component $\wtilde{H}= (\tgamma_{1}, \tgamma_{2}, \cdots, \tgamma_{\tilde{p}})$ of $H$, where for $i\in [\tilde{p}]$, $\tgamma_i = \gamma_{i_j}$. 
Recall $\wtilde{H}$ must either be a cycle or a path. For each $i \in [\tilde{p}]$, we initialize $\wtilde{\beta}_i = \beta_{i_j}$ (recall the initialization $\beta_{i_j} = \phi$). 
Intuitively speaking, the states of our table will capture the following:
\begin{itemize}
\item We begin by fixing pairs $\varepsilon_1 = (u_1,v_1)$ and $\varepsilon_{\tilde{p}} = (u_{\tilde{p}},v_{\tilde{p}})$ that are $d$-diverse (i.e., $w(\tgamma_1[u_1,v_1]\cap \tgamma_{\tilde{p}}[u_{\tilde{p}},v_{\tilde{p}}]) \leq d$)---these will respectively denote the endpoints of the subpaths $\tgamma_1[u_1,v_1] = \tbeta_1\cap \tgamma_1$ and $\tgamma_{\tilde{p}}[u_{\tilde{p}},v_{\tilde{p}}] = \tbeta_{\tilde{p}}\cap \tgamma_{\tilde{p}}$ (see \Cref{fig:H-cycle-select}). 
We remark that the above two guesses help us in unifying the cases for paths and cycles. 

\item We will start identifying the ``best'' possible $\tbeta_i$ for each $q\in [\tilde{p}]$, starting from $q=1$. For $q=1$, we would like to choose a cycle that is ``compliant'' with the fixed intersection of $\tbeta_1 \cap \tgamma_1$, which should precisely be $\tgamma_1[u_1,v_1]$. 
In the above, knowing the exact edges in $\tbeta_1 \cap \tgamma_1$ helps us to ``extend'' it meaningfully as our dynamic programming progresses. 
At the $q^{th}$ stage, where $q\in [\tilde{p}] \setminus \{1\}$, we will assume that we have identified the ``best choices'' for each $i \in [q-1]$, where we additionally know the best such choice for $\tbeta_{i-1}\cap \tgamma_{i-1}$. 
Now for a ``guessed'' choice of the subpath $\tgamma_q[u_q, v_q] = \tbeta_q\cap \tgamma_q$, we will find the ``best'' choices of the cycles for each $i\in [q-1]$, by querying the entries for computed up to $q-1$. 
Here, the crux of why this dynamic programming works is that a cycle $\tbeta_i$ can only ``non-trivially'' intersect with cycles $\tbeta_{i-1}$ and $\tbeta_{i+1}$. 
In the above, when we say ``non-trivial'' we mean the case where the common edge set is not exactly $\tgamma_i \cap \tgamma_{i'}$, for $i'\in [b]$. The above is due to the construction of $H$. 
We additionally remark that an edge can belong to at most two of the cycles from $\tbeta_{i-1}, \tbeta_i, \tbeta_{i+1}$ due to \Cref{obs:split-relation-between-pd-and-pdrs}. 

For those $i'\in [b]$, where $\beta_{i'}\cap \tbeta_q$ must be the same as $\gamma_{i'} \cap \tgamma_{q}$, the contribution of this intersection will be a fixed number when we are looking for solutions that are ``$\theta_i$-abiding'', for each $i\in [b]$. 
Thus the above part of the intersections can be handled separately, as $\tgamma_i$s alone determine the above discussed number.
\end{itemize}

\subparagraph{Useful notations.} We introduce some notations that will help us in defining/computing the states of our dynamic programming table over connected components of $H$. 

To capture such pairwise edge intersections (outside $\C{P}$) of cycles we define the following. For two subgraphs $G_1$ and $G_2$ of $G$, let $\zeta_{G_1,G_2}$ denote the weight of the edges common to both graphs that are not in $\Pcal$, i.e., $\zeta_{G_1,G_2} = w\big(E(G_1 \cap G_2) \setminus E(\Pcal)\big)$. 

Consider a connected component $\wtilde{H}$ of $H$, and we let $\wtilde{H} = (\tgamma_{1}, \tgamma_{2}, \cdots ,\tgamma_{\tilde{p}})$, which is a cycle or a path (see \Cref{obs:h-cycle-path}). Note that $V(\wtilde{H}) \subseteq R_{\sf o}$, and for $i \in [\tilde{p}]$, we let $i_j \in [b]$ such that $\gamma_{i_j} = \tgamma_i$. Furthermore, let $\wth_i = h_{i_j}$, $\tpsi_i = \psi_{i_j}$, $\ttau_i = \tau_{i_j}$, $\ttheta_i = \theta_{i_j}$, and $\wtilde{\Gamma}_i = \Gamma_{i_j}$. For $i\in [\tilde{p}]$, let $\wtilde{P}_i$ be the path $P_{i_j}$, which is the unique maximal path of vertices of degree $2$, between $\tgamma^{\sf s}_{i}=\gamma^{\sf s}_{i_j}$ and $\tgamma^{\sf t}_{i}=\gamma^{\sf t}_{i_j}$ in $\C{T}_{\gamma_i}$ (see \Cref{fig:pd-and-pdr}). 

In our dynamic programming, we will start from $q=1$ till $q=\tilde{p}$, and at state for $q$, we will ``guess'' the subpath $\beta^*_q\cap \gamma_q$. To this end we introduce the following notation. For $\gamma \in V(\C{T})$, and distinct $u,v\in V(\gamma)$, $\gamma[u,v]$ is the sub-path of $\gamma$ starting from $u$ till $v$ (inclusive of both $u$ and $v$) in the clockwise ordering $\gamma^\circ$. For $q\in [\tilde{p}]$ and distinct $u,v \in V(\tgamma_q)$, let $\C{D}_{q,u,v} = \{\beta \in V(\wtilde{P}_i) \mid \beta \cap \tgamma_q = \tgamma_q[u,v]\}$. 

Next, we define when a pair of vertices from $\gamma_i$ or a descendant of $\gamma_i$ is ``abiding'' for $\theta_i$, for $\gamma_i\in R_{\sf o}$. Intuitively speaking, it will be the case when \Cref{def:solution-of-pi-alpha-h-theta} (\ref{subcondition-3-of-condition-3-of-def:solution-of-pi-alpha-h-theta}) is satisfied by them. 

\begin{definition}
\label{def:path-comp-theta}
Consider $i \in [b]$, where $\gamma_i \in R_{\sf o}$. For distinct $u,v \in V(\gamma_i)$, $(u,v)$ is $\theta_i$-{\em abiding} if $\emptyset \neq \Gamma_{\delta,R,\gamma,\gamma[u,v]} = \Gamma_{\alpha,R,\gamma_i}[\psi_i,\tau_i]$. Similarly, $\beta \in V(P_i)$ is {\em $\theta_i$-abiding} if $\emptyset \neq \Gamma_{\delta,R,\gamma,\beta} = \Gamma_{\alpha,R,\gamma_i}[\psi_i,\tau_i]$. 
\end{definition}

We next define the notion of {\em $d$-good}, which intuitively speaking will take into account the $d$-diversity of the choice of path intersection $\tgamma_i[u,v]$ (or a descendant of $\gamma$) with edges in $\C{P}$, and with (the unknown) cycles $\psi_i$ and $\tau_i$, if these intersections are ``partial'' but they are not captured by the graph $H$. We remark here that we cannot simply add these missing edges as it will destroy the property that each connected component of $H$ is a path or a cycle---a property that aids us in our upcoming dynamic programming routine.\footnote{Note that not every $2$-degenerate graphs have their connected components as paths and cycles.} 

\begin{definition}\label{def:path-comp-theta-good}
Consider $i \in [b]$, where $\gamma_i \in R_{\sf o}$. For distinct $u,v \in V(\gamma_i)$, $(u,v)$ is $d$-good if the following holds:
\begin{enumerate}
\item For each $P\in \C{P}$, $w(P\cap \gamma_i[u,v]) \leq d$, $w(\alpha \cap \gamma_i[u,v]) \leq d$, and 
\item Consider $i'\in [b]\setminus \{i\}$, where i) $\gamma_{i'} \in \{\psi_i, \tau_i\}$ and ii) $\theta_{i'} ={\sf e}$, or $\gamma_{i'} \in R_{\sf o}$ and $\gamma_i \in \Gamma_{\alpha,R,\gamma_{i'}}[\psi_{i'}, \tau_{i'}] \setminus \{\psi_{i'}, \tau_{i'}\}$. Then, we have $w(\gamma[u,v] \cap \gamma_{i'}) \leq d$.
\end{enumerate} 
Similarly, for $\beta\in V(P_i)$, $\beta$ is {\em $d$-good} if the following holds: 
\begin{enumerate}
\item For each $P\in \C{P}$, $w(P\cap \beta) \leq d$, $w(\alpha \cap \beta) \leq d$, and 
\item Consider $i'\in [b]\setminus \{i\}$, where i) $\gamma_{i'} \in \{\psi_i, \tau_i\}$ and ii) $\theta_{i'} ={\sf e}$, or $\gamma_{i'} \in R_{\sf o}$ and $\gamma_i \in \Gamma_{\alpha,R,\gamma_{i'}}[\psi_{i'}, \tau_{i'}] \setminus \{\psi_{i'}, \tau_{i'}\}$. Then, we have $w(\beta \cap \gamma_{i'}) \leq d$.
\end{enumerate} 
\end{definition}

For $i \in [b]$, where $\gamma_i \in R_{\sf o}$, let $\C{A}_i$ be the set of ordered pairs of distinct vertices from $V(\gamma_i)$ that are $\theta_i$-abiding and $d$-good. For $i \in [\tilde{p}]$, let $\tcal{A}_i = \C{A}_{i_j}$ (recall, $\tgamma_i = \gamma_{i_j}$). 

Consider a connected component $\wtilde{H} = (\tgamma_{1}, \tgamma_{2}, \cdots ,\tgamma_{\tilde{p}})$. Note for each $i\in [\tilde{p}]$, we need to choose $\wtilde{\beta}_i$ ($= \beta_{i_j}$), that satisfies the requirements of \Cref{lem:compute-pi-ext} (see \Cref{fig:H-cycle-select}). To do this, we will be fixing the intersections $\wtilde{\beta}_1\cap \tgamma_1$ and $\wtilde{\beta}_{\wtilde{p}} \cap \tgamma_{\wtilde{p}}$ (these two guesses helps us unify the cases for paths and cycles). To this end, we introduce the following. For $\varepsilon_1 = (u_1,v_1) \in \C{\wtilde{A}}_1$ and $\varepsilon_{\tilde{p}} = (u_{\tilde{p}},v_{\tilde{p}}) \in \C{\wtilde{A}}_{\tilde{p}}$, $(\varepsilon_1,\varepsilon_{\tilde{p}})$ is {\em $d$-diverse} if $w(\tgamma_1[u_1,v_1]\cap \tgamma_{\tilde{p}}[u_{\tilde{p}},v_{\tilde{p}}]) \leq d$.

\subparagraph{Dynamic programming states for computing $\beta_i$s.} We now formalize the intuitive description of the states that we gave previously, in the next definition. Consider a connected component $\wtilde{H} = (\tgamma_1,\tgamma_2,\ldots, \tgamma_{\tilde{p}})$. We remark that, for simplicity, our dynamic programming will actually be solving an optimization version of the problem, however, by simple memoizations, we can actual compute the $\wtilde{\beta}_i$s.

\begin{definition}\label{def:H-dp}
Consider $\varepsilon_1 = (u_1,v_1) \in \C{\wtilde{A}}_1$ and $\varepsilon_{\tilde{p}} = (u_{\tilde{p}},v_{\tilde{p}}) \in \C{\wtilde{A}}_{\tilde{p}}$ (see \Cref{def:path-comp-theta}), where $(\varepsilon_1,\varepsilon_{\tilde{p}})$ is $d$-diverse if $\tilde{p}\geq 2$. For $q \in [\tilde{p}]$, $(u,v) \in \C{\wtilde{A}}_q$, we define $\Lambda_{\varepsilon_1,\varepsilon_{\tilde{p}}}(q,u,v)$ as the maximum over $\sum_{i\in [q]}\Pi_{\C{P}}(\tbeta_i, \wth_i)  - \sum_{i \in [q-1]}{\zeta_{\tbeta_i, \tbeta_{i+1}}}$, 
where $\wtilde{\C{C}} = \{\tbeta_i\mid i\in [q]\}$ is a set of $q$ tree shortest cycles in $G$ such that: 

\begin{enumerate}
\item\label{item-1-of-def:H-dp} for each $i\in [q]$, we have $\tbeta_i \in V(\wtilde{P}_i)$, $\tbeta_i$ is $\ttheta_i$-abiding and $d$-good.  

\item\label{item-2-of-def:H-dp} $\tbeta_q \cap \tgamma_q = \tgamma_q[u,v]$, and for $i' \in [q] \cap \{1, \tilde{p}\}$,  $\tbeta_{i'} \cap \tgamma_{i'} = \tgamma_{i'}[u_{i'}, v_{i'}]$, and

\item \label{item-3-of-def:H-dp}
for each $i \in [q-1]$, $\wtilde{\beta}_i$ and $\wtilde{\beta}_{i+1}$ are $d$-diverse, i.e., $w(E(\beta_i) \cap E(\beta_{i+1})) \leq d$.
\end{enumerate}
In the above, if such a $\tcal{C}$ does not exist, then $\Lambda_{\varepsilon_1,\varepsilon_{\tilde{p}}}(q,u,v) = -\infty$. 
\end{definition}

Fix $\varepsilon_{1} = (u_1,v_1) \in \tcal{A}_1$ and $\varepsilon_{\tilde{p}}=(u_{\tilde{p}},v_{\tilde{p}}) \in \tcal{A}_{\tilde{p}}$, where if $\tilde{p}\geq 2$, then $(\varepsilon_{1}, \varepsilon_{\tilde{p}})$ are $d$-diverse. We will next describe the (recursive) computation of $\Lambda_{\varepsilon_1,\varepsilon_{\tilde{p}}}(q,u,v)$, for each $q\in [\tilde{p}]$ and $(u,v)\in \tcal{A}_q$, starting with the base case.  

\vspace{1mm}\noindent{\em Base Case ($q=1$).} Consider $u,v\in \tcal{A}_1$. If $(u,v)\neq (u_1,v_1)$, set $\Lambda_{\varepsilon_1,\varepsilon_{\tilde{p}}}(q,u,v) =-\infty$ (see \Cref{def:H-dp} (\ref{item-2-of-def:H-dp})). Otherwise, we set $\Lambda_{\varepsilon_1,\varepsilon_{\tilde{p}}}(q,u,v) = \max_{\beta\in \C{D}_{1,u,v}} \Pi_{\C{P}}(\beta,\wth_1)$. The correctness and the applicability of the base case in polynomial time is immediate from its description (and the definition of $\tcal{A}_1$).

\vspace{1mm}\noindent{\em Recursive formula.} Consider $q\in [\tilde{p}]\setminus \{1\}$, and $(u,v) \in \tcal{A}_q$. We assume that the entries $\Lambda_{\varepsilon_1,\varepsilon_{\tilde{p}}}(q',u',v')$ have been computed correctly for each $q' \in [q-1]$ and $(u',v') \in \tcal{A}_{q'}$. Notice that if $\tilde{p} = q$ and $(u, v) \neq (u_p, v_p)$, then we can correctly set $\Lambda_{\varepsilon_1,\varepsilon_{\tilde{p}}}(q,u,v) =-\infty$ (see \Cref{def:H-dp} (\ref{item-2-of-def:H-dp})). 

Intuitively, for our recursive formula we query the entry for $q-1$, over every choice of $\beta\in \C{D}_{q,u,v}$ and the choice of intersection, $\tgamma_i[u',v']$, between the cycle $\tbeta_{q-1}$ and $\tgamma_{q-1}$. We remark that while doing the above we need to ensure that the cycle selected for $q-1$ and the cycle selected from $q$, should be $d$-diverse from each other. This leads us to the following recursive formula. 
\begin{equation}\label{eq:H-rec}
\Lambda_{\varepsilon_1,\varepsilon_{\tilde{p}}}(q,u,v) = \max_{\substack{(\beta, (u', v')) \in \C{D}_{q,u,v} \times \tcal{A}_{q-1}: \\ w(E(\tgamma_{q-1}[u',v'] \cap \tgamma_q[u,v])) \leq d}} \big(\Pi_{\C{P}}(\beta, \wth_q) +  \Lambda_{\varepsilon_1,\varepsilon_{\tilde{p}}}(q-1,u',v') - \zeta_{\tgamma_{q-1}[u',v'], \tgamma_q[u, v]}
\big)
\end{equation}

In the above equation we note that the minus term is to reduce the weight of the edges (outside $\C{P}$) from $\beta$ that belong to the cycle selected for $q-1$, where note that the intersection must be precisely $\tgamma_{q-1}[u',v'] \cap \tgamma_q[u, v]$. In the next lemma we prove the correctness of the recursive formula. 

\begin{lemma}\label{lem:H-rec-correct}
In polynomial time, for each $\varepsilon_1 = (u_1,v_1) \in \C{\wtilde{A}}_1$, $\varepsilon_{\tilde{p}} = (u_{\tilde{p}},v_{\tilde{p}}) \in \C{\wtilde{A}}_{\tilde{p}}$, where $(\varepsilon_1,\varepsilon_{\tilde{p}})$ is $d$-diverse if $\tilde{p} \geq 2$, $q \in [\tilde{p}]$, and $(u,v) \in \C{\wtilde{A}}_q$, we can compute $\Lambda_{\varepsilon_1,\varepsilon_{\tilde{p}}}(q,u,v)$. Moreover, in polynomial time, using memoization, we can also find $\wtilde{\C{C}} = \{\tilde{\iota}_i \mid i \in [q]\}$ that satisfies all the conditions of \Cref{def:H-dp} for $\varepsilon_1,\varepsilon_{\tilde{p}}$, $q$, and $(u,v)$, that achieves the stated maximum. 
\end{lemma}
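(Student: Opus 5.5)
We prove \Cref{lem:H-rec-correct} by induction on $q$, showing that the recursion of \Cref{eq:H-rec} (with the base case) computes exactly the value in \Cref{def:H-dp}, and then bound the table size. For the running time: each $\tcal{A}_q$ has at most $n^2$ pairs, each $\C{D}_{q,u,v}\subseteq V(\wtilde{P}_q)$ has polynomially many nodes, and $\tilde{p}\le b\le k$. Testing membership in $\tcal{A}_q$ ($\ttheta_q$-abiding and $d$-good) only computes intersections of $\tgamma_q[u,v]$ with the polynomially many cycles in $\wtilde{\Gamma}_q\cup\{\alpha\}\cup R$ and the paths of $\C{P}$. So there are $n^{\Ocal(1)}$ pairs $(\varepsilon_1,\varepsilon_{\tilde p})$, $n^{\Ocal(1)}$ entries per pair, and each entry is a maximum over $n^{\Ocal(1)}$ candidates. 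One caveat: we should also require that the chosen $\beta\in\C{D}_{q,u,v}$ is itself $\ttheta_q$-abiding and $d$-good. This follows from $(u,v)\in\tcal{A}_q$ and $\beta\cap\tgamma_q=\tgamma_q[u,v]$ via \Cref{lem:intersection-with-ancestor-is-more}, which gives that every edge $\beta$ shares with a cycle outside $\C{T}_{\tgamma_q}$ lies on $\tgamma_q$. Otherwise we filter $\C{D}_{q,u,v}$ explicitly, also in polynomial time.

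The key structural claim is this: for $\tbeta_{q-1}\in\C{D}_{q-1,u',v'}$ and $\tbeta_q\in\C{D}_{q,u,v}$,
\[E(\tbeta_{q-1})\cap E(\tbeta_q)=E(\tgamma_{q-1}[u',v'])\cap E(\tgamma_q[u,v]).\]
We get this from \Cref{lem:intersection-with-ancestor-is-more}: $\tgamma_{q-1}$ and $\tgamma_q$ are lineage-free, so $\tbeta_q$ is not a descendant of $\tgamma_{q-1}$. Hence $E(\tbeta_{q-1})\cap E(\tbeta_q)\subseteq E(\tgamma_{q-1})$, and symmetrically $\subseteq E(\tgamma_q)$. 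Intersecting with $E(\tbeta_{q-1})$ and $E(\tbeta_q)$ then reduces to the two guessed subpaths. Consequently the $d$-diversity constraint (\Cref{def:H-dp}(\ref{item-3-of-def:H-dp})) and the correction term $\zeta_{\tbeta_{q-1},\tbeta_q}$ depend only on $(u',v')$ and $(u,v)$. This is exactly why \Cref{eq:H-rec} may use $\zeta_{\tgamma_{q-1}[u',v'],\tgamma_q[u,v]}$ and the condition $w(\tgamma_{q-1}[u',v']\cap\tgamma_q[u,v])\le d$.

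With this, the induction is routine. For the $\le$ direction, take an optimal family $\{\tbeta_i\}_{i\in[q]}$ and set $(u',v')$ to be the endpoints of the path $\tbeta_{q-1}\cap\tgamma_{q-1}$, which is a path by \Cref{claim:intersection-between-two-cycles} since the $\ttheta$-abiding condition forces a nonempty intersection. This pair lies in $\tcal{A}_{q-1}$. The prefix $\{\tbeta_i\}_{i\in[q-1]}$ is feasible for $\Lambda(q-1,u',v')$: the boundary constraint at index $1$ is inherited, and the one at index $\tilde p$ only matters when $q=\tilde p$, which is handled by the explicit $-\infty$ rule. The objective decomposes as prefix value plus $\Pi_{\C{P}}(\tbeta_q,\wth_q)-\zeta_{\tbeta_{q-1},\tbeta_q}$. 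For the $\ge$ direction, concatenate an optimal prefix for $\Lambda(q-1,u',v')$ with the chosen $\beta$; the claim shows the new adjacent pair is $d$-diverse and that the subtracted term is correct.

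The main obstacle is the structural claim in the cycle case of $\wtilde H$ (and at the boundary between $\tbeta_1$ and $\tbeta_{\tilde p}$). There we must make sure that non-adjacent $\tbeta_i,\tbeta_j$ interact only through fixed $\tgamma_i\cap\tgamma_j$ edges, which is what the construction of $H$ and $d$-goodness are designed for. We must also make sure the wrap-around pair is controlled solely by the fixed guesses $\varepsilon_1,\varepsilon_{\tilde p}$ and their $d$-diversity. Since $\Lambda$ only constrains consecutive pairs and pins the endpoints, this reduces to the same edge-intersection identity applied to $\tbeta_1,\tbeta_{\tilde p}$. Finally, memoization works in the standard way: store the arg-max $(\beta,(u',v'))$ per entry and backtrack from $(\tilde p,u_{\tilde p},v_{\tilde p})$ to recover $\wtilde{\C{C}}$.
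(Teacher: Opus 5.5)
Your proposal is correct and follows the paper's proof. Like the paper, you establish the two inequalities of \Cref{eq:H-rec} by truncating an optimal family at index $q-1$ and, conversely, by appending some $\beta\in\C{D}_{q,u,v}$ to an optimal prefix. Both arguments rest on the identity $E(\tbeta_{q-1})\cap E(\tbeta_q)=E(\tgamma_{q-1}[u',v'])\cap E(\tgamma_q[u,v])$; the paper only asserts it, whereas you actually derive it from \Cref{lem:intersection-with-ancestor-is-more} and the lineage-freeness of $R$.

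The wrap-around and non-adjacent interactions you flag as the main obstacle are not part of this lemma. They never enter the definition of $\Lambda$, and the paper handles them only later, in \Cref{obs-H-path-d-div,lem:dp-H-wt-better}.
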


\begin{proof}
From the description of the base case and \Cref{eq:H-rec}, to obtain a proof of the lemma, it is enough to prove that recurrence relation given by \Cref{eq:H-rec} is correct. We first show that:
\begin{equation*}
\Lambda_{\varepsilon_1,\varepsilon_{\tilde{p}}}(q,u,v) \leq 
\max_{\substack{(\beta,(u',v')) \in \C{D}_{q,u,v} \times \tcal{A}_{q-1}:\\
w(E(\tgamma_{q-1}[u',v'] \cap \tgamma_q[u,v])) \leq d}}
\Big(
\Pi_{\C{P}}(\beta,\wth_q) + \Lambda_{\varepsilon_1,\varepsilon_{\tilde{p}}}(q-1,u',v')
- \zeta_{\tgamma_{q-1}[u',v'],\tgamma_q[u,v]}
\Big)
\end{equation*}
We assume that $\Lambda_{\varepsilon_1, \varepsilon_{\tilde{p}}}(q,u,v) \neq -\infty$, as otherwise, the inequality is trivially satisfied. Consider $\wtilde{\C{C}} = \{\tbeta_i \mid i\in [q]\}$ that satisfies the conditions stated in \Cref{def:H-dp}, such that $\Lambda_{\varepsilon_1,\varepsilon_{\tilde{p}}}(q,u,v) = \sum_{\tbeta_i \in \wtilde{\C{C}}}\Pi_{\C{P}}(\tbeta_i, \wth_i)  - \sum_{i \in [q-1]}{\zeta_{\tbeta_i, \tbeta_{i+1}}}$. 
Note that from \Cref{def:H-dp} (\ref{item-1-of-def:H-dp}) (also see \Cref{def:path-comp-theta}), $\theta_{q-1}$-abiding and from the construction of $H$, there must exist distinct vertices $\what{u},\what{v}\in V(\tgamma_{q-1})$, such that $\tbeta_{q-1} \cap  \tgamma_{q-1} = \tgamma_{q-1}[\what{u},\what{v}]$. Let $\what{\C{C}} = \wtilde{\C{C}}\setminus \{\tbeta_q\}$.  
Notice that by the constructions of $\C{D}_{q,u,u}$, $\tcal{A}_{q-1}$, and \Cref{def:H-dp} (\ref{item-3-of-def:H-dp}), we have $(\tbeta_q, (\what{u}, \what{v}))\in \C{D}_{q,u,u} \times \tcal{A}_{q-1}$ and $w(E(\tgamma_{q-1}[\what{u},\what{v}] \cap \tgamma_q[u,v])) \leq d$. Also, 
\begin{align*}
\Lambda_{\varepsilon_1,\varepsilon_{\tilde{p}}}(q,u,v) & = \sum_{i\in [q]}\Pi_{\C{P}}(\tbeta_i, \wth_i) - \sum_{i \in [q-1]}{\zeta_{\tbeta_i, \tbeta_{i+1}}}\\ 
  & = \Pi_{\C{P}}(\tbeta_q, \wth_q) + \big(\sum_{i\in [q-1]} \Pi_{\C{P}}(\tbeta_i, \wth_i) - \sum_{i \in [q-2]}{\zeta_{\tbeta_i, \tbeta_{i+1}}} 
  \big) 
  - \zeta_{\tbeta_{q-1}, \tbeta_q} 
\end{align*}
From the construction of $\what{u},\what{v}$ and \Cref{def:H-dp} (\ref{item-2-of-def:H-dp}), we have $\tbeta_{q-1} \cap \tbeta{q} = \tgamma[\what{u}, \what{v}] \cap \tgamma[u, v]$. Thus, $\Lambda_{u_1,v_1}(q,u,v) \leq  \Pi_{\C{P}}(\tbeta_q, \wth_q) + \Lambda_{\varepsilon_1, \varepsilon_{\tilde{p}}}(q-1, \what{u}, \what{v}) - \zeta_{\tgamma_{q-1}[\what{u}, \what{v}], \tgamma_q[u, v]}$, 
and the required inequality holds. 

We next show the other direction of inequality, i.e.,
\begin{equation*}
\Lambda_{\varepsilon_1,\varepsilon_{\tilde{p}}}(q,u,v) \geq \max_{\substack{(\beta, (u', v')) \in \C{D}_{q,u,v} \times \tilde{\C{A}}_{q-1}: \\ w(E(\tgamma_{q-1}[u',v'] \cap \tgamma_q[u,v])) \leq d}} \big(\Pi_{\C{P}}(\beta, \wth_q) +  \Lambda_{\varepsilon_1,\varepsilon_{\tilde{p}}}(q-1,u',v') - \zeta_{\tgamma_{q-1}[u',v'], \tgamma_q[u, v]} 
\big).
\end{equation*}


We assume that the right-hand side of the above inequality is not $-\infty$, as otherwise, trivially, the inequality is satisfied. Consider $\tbeta_q \in \C{D}_{q, u, v}$ and $(\what{u}, \what{v}) \in \tcal{A}_{q-1}$ with $w(E(\tgamma_{q-1}[\what{u}, \what{v}] \cap \tgamma_q[u,v])) \leq d$ that achieves the maximum on the right hand side of the inequality. To prove the inequality, it is enough to prove that $\Lambda_{\varepsilon_1,\varepsilon_{\tilde{p}}}(q,u,v) \geq \Pi_{\C{P}}(\tbeta_q, \wth_q) +  \Lambda_{\varepsilon_1,\varepsilon_{\tilde{p}}}(q-1,\what{u}, \what{v}) - \zeta_{\tgamma_{q-1}[\what{u}, \what{v}], \tgamma_q[u, v]}$.

Consider $\hcal{C} = \{\tbeta_i \mid i \in [q-1]\}$ that satisfies the conditions stated in \Cref{def:H-dp} for $q-1$ and $(\what{u}, \what{v})$ such that $\Lambda_{\varepsilon_1,\varepsilon_{\tilde{p}}}(q-1,\what{u}, \what{v}) = \sum_{i\in [q-1]}\Pi_{\C{P}}(\tbeta_i, \wth_i)  - \sum_{i \in [q-2]}{\zeta_{\tbeta_i, \tbeta_{i+1}}} $. 
Let $\tcal{C} = \hcal{C} \cup \{\tbeta_q\}$. By the construction of $\C{D}_{q,u,v}$ and the fact that $(u,v)$ if $\ttheta_q$-abiding implies that $\tbeta_i$ is $\ttheta_q$-abiding and $\tbeta_q \cap \tgamma_q = \tgamma_q[u,v]$. Also, $\tbeta_q \cap \tbeta_{q-1} = \tgamma_q[u,v]\cap \tgamma_{q-1}[\what{u},\what{v}]$, and thus, $w(\tbeta_q \cap \tbeta_{q-1}) \leq d$. The above together with the assumption that $\hcal{C}$ satisfies all the conditions in \Cref{def:H-dp} for $q-1$ and $(\what{u}, \what{v})$, implies that $\tcal{C}$ satisfies all the conditions in the definition for $q$ and $(u,v)$. Next, notice that:
\begin{align*}
\Lambda_{\varepsilon_1,\varepsilon_{\tilde{p}}}(q, u, v)
    & \geq  \sum_{i\in [q]} \Pi_{\C{P}}(\tbeta_i, \wth_i)  - \sum_{i \in [q-1]}{\zeta_{\tbeta_i, \tbeta_{i+1}}} \\
     & =\Pi_{\C{P}}(\tbeta_q, \wtilde{h}_q) + \big(\sum_{i\in [q-1]} \Pi_{\C{P}}(\tbeta_i, \wth_i)  - \sum_{i \in [q-2]}{\zeta_{\tbeta_i, \tbeta_{i+1}}} 
    \big) 
     - \zeta_{\tbeta_{q-1}, \tbeta_q}\\
     & = \Pi_{\C{P}}(\tbeta_q, \wtilde{h}_q) + \Lambda_{\varepsilon_1,\varepsilon_{\tilde{p}}}(q-1,\what{u}, \what{v}) - w(E(\beta_{q-1} \cap \beta_q) \setminus E(\Pcal)) \\
     & = \Pi_{\C{P}}(\tbeta_q, \wtilde{h}_q) + \Lambda_{\varepsilon_1,\varepsilon_{\tilde{p}}}(q-1,\what{u}, \what{v}) - w(E(\tgamma_{q-1}[u',v'] \cap \tgamma_q[u, v]) \setminus E(\Pcal))\\ 
    & = \Pi_{\C{P}}(\tbeta_q, \wtilde{h}_q) + \Lambda_{\varepsilon_1,\varepsilon_{\tilde{p}}}(q-1,\what{u}, \what{v}) - \zeta_{\tgamma_{q-1}[u',v'], \tgamma_q[u, v]}.
\end{align*}
This concludes the proof. 
\end{proof}

\subparagraph{Finding the best pair from $\tcal{A}_1\times\tcal{A}_{\tilde{p}}$.} We will now choose $(\varepsilon^*_1, \varepsilon^*_{\tilde{p}})$ as a pair from $\tcal{A}_1 \times \tcal{A}_{\tilde{p}}$, where $\varepsilon^*_1 = (u_1,v_1)$ and $\varepsilon^*_1 = (u_{\tilde{p}},v_{\tilde{p}})$ are $d$-diverse when $\tilde{p}\geq 2$, which is the ``best choice'' in the following sense. Roughly speaking it maximizes the total weight of the cycles we can get as per \Cref{def:H-dp}, but additionally taking into account the losses due to double counting of edges which are already accounted for by solution cycles that do not belong to the component $\wtilde{H}$. 
We remark that a part of these intersections are solely determined by the given $\{\tgamma_i \mid i\in [\tilde{p}]\}$, and a part of it is determined by the choice of $\tbeta_i$s by the algorithm. Our next definition captures different such intersection terms in a solution for $\Pi_{\Pcal,\alpha, h}(\gamma_i, h_i,\theta_i)_{i \in [b]}$ (which are not accounted for in our dynamic programming for $\wtilde{H}$), which is split into three types: i) determined by the chosen $\tbeta_i$s, ii) determined completely by $\tgamma_i$s and $\beta_{i'}$, where $\gamma_{i'} \notin V(\wtilde{H})$ (and is independent of the choice of the selected $\tbeta_i$s), and iii) solely determined by $\tgamma_i$s.

\begin{definition}\label{def:overflow-wt}
Consider a solution $\C{D}$ for $\Pi_{\Pcal,\alpha, h}(\gamma_i, h_i, \theta_i)_{i \in [b]}$, and for $i\in [b]$, let $\delta_i \in \C{D}$ be such that $\gamma_i$ is the $(\alpha,\C{D})$-representative for $\delta_i$. For $q\in [\tilde{p}]$, let $\tdelta_q = \delta_{q_j}$. We define the following numbers (see \Cref{fig:overflow-def}). 

    \begin{enumerate}
    \item\label{one-def:overflow-wt} For $q\in [\tilde{p}]$, let $\Gamma_{\alpha,R,\tgamma_q}^1$ be the set containing each $\gamma_i$, where: 
    i) $\gamma_i \in \{\tpsi_q, \ttau_q\}$, and 
    ii) $\theta_i = (\psi_i,\tau_i)$ and $\tgamma_q \in \Gamma_{\alpha,R,\gamma_i}[\psi_i, \tau_i]\setminus \{\psi_i, \tau_i\}$, or 
    $\theta_i ={\sf e}$. 
    Also, if $\alpha\in \{\tpsi_q,\ttau_q\}$, then add $\alpha$ to $\Gamma_{\alpha,R,\tgamma_q}^1$. Notice that if $\Gamma_{\alpha,R,\tgamma_q}^1 \neq \emptyset$, then its size can be at most $2$, and furthermore, $\tgamma_q$ must be an endpoint of the connected component $\wtilde{H}$ (in which case $\wtilde{H}$ must be a path), where $\Gamma_{\alpha,R,\tgamma_q}^1 \cap V(\wtilde{H}) =\emptyset$. Let $X = \cup_{q \in \{1,\tilde{p}\}, \gamma \in \Gamma_{\alpha,R,\tgamma_q}^1} E(\gamma)$, and $w^1(\C{D}) = w(X\setminus E(\C{P}))$. 
    
    \item\label{two-def:overflow-wt} For $q\in [\tilde{p}]$, let $\Gamma_{\alpha,R,\tgamma_q}^2$ be the set containing each $\gamma_i$, where: i) $\gamma_i \in \Gamma_{\alpha,R,\tgamma_q}[\tpsi_q, \ttau_q] \setminus \{\tpsi_q, \ttau_q\}$, and ii) $\theta_i = (\psi_i,\tau_i)$ and $\tgamma_q \in \{\psi_i, \tau_i\}$. Also, let $w^2(\C{D}) = \sum_{q\in [\tilde{p}]}\sum_{\gamma_i \in \Gamma_{\alpha,R,\tgamma_q}^2} w(E(\tgamma_q \cap \delta_i)\setminus E(\C{P}))$.
    
    \item\label{three-def:overflow-wt} For $q\in [\tilde{p}]$, let $\Gamma_{\alpha,R,\tgamma_q}^3$ be the set containing each $\gamma_i$, where: i) $\gamma_i \in \Gamma_{\alpha,R,\tgamma_q}[\tpsi_q, \ttau_q] \setminus \{\tpsi_q, \ttau_q\}$, and ii) $\theta_i = {\sf e}$, or $\tgamma_q\in \Gamma_{\alpha,R,\gamma_i}[\psi_i, \tau_i] \setminus \{\psi_i, \tau_i\}$. Also, if $\alpha\in \Gamma_{\alpha,R,\gamma_i}[\psi_i, \tau_i] \setminus \{\psi_i, \tau_i\}$, add $\alpha$ to $\Gamma_{\alpha,R,\tgamma_q}^3$. Let $w^3(\C{D}) = \sum_{q\in [\tilde{p}]}\sum_{\gamma_i\in \Gamma_{\alpha,R,\tgamma_q}^3} w(E(\tgamma_q\cap \gamma_i)\setminus E(\C{P}))$.
    \end{enumerate} 
\end{definition}

\begin{figure}[t]
\centering

\begin{subfigure}{0.60\textwidth}
\centering
\begin{tikzpicture}[scale=0.8]

    \coordinate (A) at (-3,-1.5);
    \coordinate (B) at (-3, 1.5);
    \coordinate (M1) at (-3,0); 
    \coordinate (M2) at (3,0); 

    \draw[fill] (A) circle[radius=2pt];
    \draw[fill] (B) circle[radius=2pt];
    \draw[fill] (M1) circle[radius=2pt];
    \draw[fill] (M2) circle[radius=2pt];
    \draw (A)
      .. controls (-6.5,-2.2) and (-6.5,2.2) ..
      (B) -- (A);

    \draw (A)
      .. controls (0.5,-2.2) and (0.5,2.2) ..
      (B) -- (A);

    \draw[orange, line width=3pt, opacity=0.4]
      (A)
      .. controls (-5.5,-2.2) and (-5.5,2.2) ..
      (B) -- (A);

    \draw[purple, line width=3pt, opacity=0.4]
        (M1) -- (A)
        .. controls (0.5,-2.2) and (0.5,2.2) ..
        (M1)
        -- (M1);

    \node[orange] at (-4,-0.5) {$\delta_i$};
    \node[purple] at (-2,-0.5) {$\hat{\delta}_q$};
    
    \node at (-4,-2) {$\gamma_i$};
    \node at ( -2,-2) {$\tilde{\gamma}_q$};

    \coordinate (C) at (3,-1.5);
    \coordinate (D) at (3, 1.5);
    
    \draw[fill] (C) circle[radius=2pt];
    \draw[fill] (D) circle[radius=2pt];
    
    \draw (C)
      .. controls (-0.5,-2.2) and (-0.5,2.2) ..
      (D) -- (C);

    \draw (C)
      .. controls (6.5,-2.2) and (6.5,2.2) ..
      (D) -- (C);

    \draw[orange, line width=3pt, opacity=0.4]
      (C)
      .. controls (-0.5,-2.2) and (-0.5,2.2) ..
      (D) -- (C);

    \draw[purple, line width=3pt, opacity=0.4]
        (M2) -- (C)
        .. controls (6.5,-2.2) and (6.5,2.2) ..
        (M2)
        -- (M2);

    \node[orange] at (2,-0.5) {$\delta_i$};
    \node[purple] at (4,-0.5) {$\hat{\delta}_q$};
    
    \node at (4,-2) {$\tilde{\gamma}_q$};
    \node at (2,-2) {$\gamma_i$};

    \node at (0,-3) {(a)};

\end{tikzpicture}
\end{subfigure}
\hfill
\begin{subfigure}{0.35\textwidth}
\centering
\begin{tikzpicture}[scale=0.8]

    \coordinate (A) at (0,-1.5);
    \coordinate (B) at (0,1.5);
    \coordinate (M) at (0,0);

    \draw[fill] (A) circle[radius=2pt];
    \draw[fill] (B) circle[radius=2pt];
    \draw[fill] (M) circle[radius=2pt];

    \draw (A)
      .. controls (-4,-2.2) and (-4,2.2) ..
      (B) -- (A);

    \draw (A)
      .. controls (4,-2.2) and (4,2.2) ..
      (B) -- (A);

    \draw[purple, line width=3pt, opacity=0.4]
      (A)
      .. controls (3,-2.2) and (3,2.2) ..
      (B) -- (A);

    \draw[orange, line width=3pt, opacity=0.4]
        (M) -- (A)
        .. controls (-4,-2.2) and (-4,2.2) ..
        (M)
        -- (M);
    \node[orange] at (-1,-0.5) {$\delta_i$};
    \node[purple] at (1,-0.5) {$\hat{\delta}_q$};

    \node at (-1,-2) {$\gamma_i$};
    \node at (1,-2) {$\tilde{\gamma}_q$};

    \node at (0,-3) {(b)};

\end{tikzpicture}
\end{subfigure}

\vspace{1cm}

\begin{subfigure}{0.60\textwidth}
\centering
\begin{tikzpicture}[scale=0.8]

    \coordinate (A) at (-3,-1.5);
    \coordinate (B) at (-3, 1.5);
    \draw[fill] (A) circle[radius=2pt];
    \draw[fill] (B) circle[radius=2pt];

    \draw (A)
      .. controls (-6.5,-2.2) and (-6.5,2.2) ..
      (B) -- (A);

    \draw (A)
      .. controls (0.5,-2.2) and (0.5,2.2) ..
      (B) -- (A);

    \draw[orange, line width=3pt, opacity=0.4]
      (A)
      .. controls (-5.5,-2.2) and (-5.5,2.2) ..
      (B) -- (A);
      
    \draw[purple, line width=3pt, opacity=0.4]
      (A)
      .. controls (-0.5,-2.2) and (-0.5,2.2) ..
      (B) -- (A);

    \node[orange] at (-4,-0.5) {$\delta_i$};
    \node[purple] at (-2,-0.5) {$\hat{\delta}_q$};
    
    \node at (-4,-2) {$\gamma_i$};
    \node at (-2,-2) {$\tilde{\gamma}_q$};

    \coordinate (C) at (3,-1.5);
    \coordinate (D) at (3, 1.5);
    \draw[fill] (C) circle[radius=2pt];
    \draw[fill] (D) circle[radius=2pt];

    \draw (C)
      .. controls (-0.5,-2.2) and (-0.5,2.2) ..
      (D) -- (C);

    \draw (C)
      .. controls (6.5,-2.2) and (6.5,2.2) ..
      (D) -- (C);

    \draw[orange, line width=3pt, opacity=0.4]
      (C)
      .. controls (-0.5,-2.2) and (-0.5,2.2) ..
      (D) -- (C);

    \draw[purple, line width=3pt, opacity=0.4]
      (C)
      .. controls (5.5,-2.2) and (5.5,2.2) ..
      (D) -- (C);

    \node[orange] at (2,-0.5) {$\delta_i$};
    \node[purple] at (4,-0.5) {$\hat{\delta}_q$};
    
    \node at (4,-2) {$\tilde{\gamma}_q$};
    \node at (2,-2) {$\gamma_i$};
    
    \node at (0,-3) {(c)};

\end{tikzpicture}
\end{subfigure}

\caption{The three types of overflow interactions used in the \Cref{def:overflow-wt} of $w^1(\mathcal{D})$, $w^2(\mathcal{D})$, and $w^3(\mathcal{D})$. 
The orange cycle is $\delta_i$ with representative $\gamma_i$, and the purple cycle is $\hat{\delta}_q$ with representative $\tilde{\gamma}_q$. 
(a) Type-1 interactions which are $\Gamma_{\alpha,R,\tilde{\gamma}_q}^1$, where $\gamma_i\in\{\tilde{\psi}_q,\tilde{\tau}_q\}$ and $\tilde{\gamma}_q \in \Gamma_{\alpha,R,\gamma_i}[\psi_i,\tau_i] \setminus \{\psi_i,\tau_i\}$ ($\tilde{\gamma}_q$  lies in the interior of the guessed intersection of $E(\gamma_i) \cap E(\delta_i) $) or $\theta_i={\sf e}$.
(b) Type-2 interactions which are $\Gamma_{\alpha,R,\tilde{\gamma}_q}^2$, where $\tilde{\gamma}_q\in\{\psi_i,\tau_i\}$ and $\gamma_i \in \Gamma_{\alpha,R,\tgamma_q}[\tpsi_q, \ttau_q] \setminus \{\tpsi_q, \ttau_q\}$.
(c) Type-3 interactions which are $\Gamma_{\alpha,R,\tilde{\gamma}_q}^3$, where both $\gamma_i \in \Gamma_{\alpha,R,\tgamma_q}[\tpsi_q, \ttau_q] \setminus \{\tpsi_q, \ttau_q\}$ and $\tilde{\gamma}_q \in \Gamma_{\alpha,R,\gamma_i}[\psi_i,\tau_i] \setminus \{\psi_i,\tau_i\}$ or $\theta_i={\sf e}$.}
\label{fig:overflow-def}
\end{figure}


Due to the above definition, the ``best'' choice of $(\varepsilon_1,\varepsilon_{\tilde{p}}) \in \tcal{A}_1 \times \tcal{A}_{\tilde{p}}$ is independent on its second and third item. Note the first term of the definition is applicable only when $\wtilde{H}$ is a path. Another term that our dynamic programming has not accounted for, is when $\wtilde{H}$ is a cycle, the edges (outside of $\C{P}$) common to $\tbeta_1$ and $\tbeta_{\tilde{p}}$ (see \Cref{def:H-dp}). The above motivates our next definition.  

For $\varepsilon_1 = (u_1,v_1) \in \C{\wtilde{A}}_1$ and $\varepsilon_{\tilde{p}} = (u_{\tilde{p}},v_{\tilde{p}}) \in \C{\wtilde{A}}_{\tilde{p}}$, we define the {\em overcounted weight}, $w^{\sf ovr}(\varepsilon_1, \varepsilon_{\tilde{p}})$ as follows. First let us consider the case when $\wtilde{H}$ is a cycle (which must have at least $3$ vertices from \Cref{obs:h-cycle-path}). In this case, we set $w^{\sf ovr}(\varepsilon_1, \varepsilon_{\tilde{p}})$ as the weight of the common edges (outside of $\C{P}$) in these two cycles. That is, if $\wtilde{H}$ is a cycle then set $w^{\sf ovr}(\varepsilon_1, \varepsilon_{\tilde{p}}) = w(E(\tgamma_1[u_1,v_1]\cap \tgamma_{\tilde{p}}[u_{\tilde{p}},v_{\tilde{p}}])\setminus E(\C{P}))$ (see \Cref{lem:intersection-with-ancestor-is-more}). 

We now consider the case when $\wtilde{H}$ is a path. Intuitively speaking, we construct $w^{\sf ovr}(\varepsilon_1, \varepsilon_{\tilde{p}})$ in accordance with \Cref{def:overflow-wt} (\ref{one-def:overflow-wt}) as follows. For $q\in [\tilde{p}]$, let $\Gamma_{\alpha,R,\tgamma_q}^1$ be the set containing each $\gamma_i$, where: i) $\gamma_i \in \{\tpsi_q, \ttau_q\}$, and ii) $\theta_i ={\sf e}$, or $\theta_i = (\psi_i,\tau_i)$ and $\tgamma_q \in \Gamma_{\alpha,R,\gamma_i}[\psi_i, \tau_i]\setminus \{\psi_i, \tau_i\}$. Also, if $\alpha\in \{\tpsi_q,\ttau_q\}$, then add $\alpha$ to $\Gamma_{\alpha,R,\tgamma_q}^1$. Let $X = \cup_{q \in \{1,\tilde{p}\}, \gamma \in \Gamma_{\alpha,R,\tgamma_q}^1} E(\gamma)$, and $w^{\sf ovr}(\varepsilon_1, \varepsilon_{\tilde{p}}) = w(X\setminus E(\C{P}))$. 


We are ready to pick a pair from $\tcal{A}_1\times \tcal{A}_{\tilde{p}}$, which we will prove to be the ``best'' choice.

\begin{align*}
(\varepsilon^*_1,\varepsilon^*_{\tilde p})
=
\argmax_{\substack{
    \varepsilon_1 \in \tcal{A}_1,\;
    \varepsilon_{\tilde p}=(u_{\tilde p},v_{\tilde p}) \in \tcal{A}_{\tilde p}: \\
    \tilde p =1\;\text{or}\;
(\varepsilon_1,\varepsilon_{\tilde p}) \text{ is } d\text{-diverse}
}}
\Big(
    \Lambda_{\varepsilon_1,\varepsilon_{\tilde p}}
    (\tilde p,u_{\tilde p},v_{\tilde p})
    - w^{\sf ovr}(\varepsilon_1,\varepsilon_{\tilde p})
\Big).
\end{align*}

In the above, if the maximum is $-\infty$ (due to empty maximum, or for some other reason), then we take $(\varepsilon^*_1,\varepsilon^*_{\tilde{p}}) = (\phi,\phi)$. The next observation handles the case when $(\varepsilon^*_1,\varepsilon^*_{\tilde{p}}) = (\phi,\phi)$.

\begin{observation}\label{obs:e1ep-phi-phi}
If $(\varepsilon_1^*, \varepsilon_{\tilde{p}}^*) =  (\phi,\phi)$, then there is no solution for $\Pi_{\Pcal,\alpha, h}(\gamma_i, h_i, \theta_i)_{i \in [b]}$.
\end{observation}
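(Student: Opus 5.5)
We argue by contraposition: assuming a solution $\C{D}$ for $\Pi_{\Pcal,\alpha, h}(\gamma_i, h_i, \theta_i)_{i \in [b]}$ exists, we exhibit a pair $(\varepsilon_1,\varepsilon_{\tilde p})\in\tcal{A}_1\times\tcal{A}_{\tilde p}$, $d$-diverse when $\tilde p\ge 2$, with $\Lambda_{\varepsilon_1,\varepsilon_{\tilde p}}(\tilde p,u_{\tilde p},v_{\tilde p})-w^{\sf ovr}(\varepsilon_1,\varepsilon_{\tilde p})>-\infty$. This contradicts $(\varepsilon^*_1,\varepsilon^*_{\tilde p})=(\phi,\phi)$. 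Since $w^{\sf ovr}$ is always a finite nonnegative number (a weight of an edge set), it suffices to show the $\Lambda$-entry is finite, i.e.\ that some family $\tcal{C}$ satisfies the conditions of \Cref{def:H-dp} for $q=\tilde p$.

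Fix such $\C{D}$ and, for $i\in[b]$, let $\delta_i\in\C{D}$ be the element whose $(\alpha,\C{D})$-representative is $\gamma_i$; set $\tdelta_q=\delta_{q_j}$ for $q\in[\tilde p]$. Since $\tgamma_q\in R_{\sf o}$, \Cref{def:solution-of-pi-alpha-h-theta} (\ref{subcondition-3-of-condition-3-of-def:solution-of-pi-alpha-h-theta}) gives $\tdelta_q\neq\tgamma_q$. \Cref{obs:simple-prop-beta} then places $\tdelta_q$ on $\wtilde P_q$, and \Cref{claim:intersection-between-two-cycles} together with \Cref{lem:intersection-with-ancestor-is-more} shows $\tdelta_q\cap\tgamma_q$ is a nonempty path $\tgamma_q[u_q,v_q]$ with $u_q\ne v_q$. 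Take $\varepsilon_1=(u_1,v_1)$, $\varepsilon_{\tilde p}=(u_{\tilde p},v_{\tilde p})$ and $\tbeta_q=\tdelta_q$. We then verify the following conditions in turn.

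\begin{itemize}
\item[(a)] $(u_q,v_q)\in\tcal{A}_q$ and $\tdelta_q$ is $\ttheta_q$-abiding. This follows directly from condition (\ref{subcondition-3-of-condition-3-of-def:solution-of-pi-alpha-h-theta}): the cycles of $\Gamma_q$ meeting $\tdelta_q$ meet it only on edges of $\tgamma_q$, hence only inside $\tgamma_q[u_q,v_q]$, by \Cref{lem:intersection-with-ancestor-is-more}.
\item[(b)] $d$-goodness. Every $P\in\Pcal$, as well as $\alpha$, is $d$-diverse from $\tdelta_q$, since $\C{D}$ is a solution for $\Pi_{\Pcal}(\alpha,h)$. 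For each $\gamma_{i'}$ listed in \Cref{def:path-comp-theta-good}, the intersection $\tdelta_q\cap\gamma_{i'}$ is contained in $\tdelta_q\cap\delta_{i'}$, because $\delta_{i'}$ contains all of $\gamma_{i'}\cap\tgamma_q$: either $\delta_{i'}=\gamma_{i'}$, or $\tgamma_q$ lies strictly inside $[\psi_{i'},\tau_{i'}]$. Hence its weight is at most $d$.
\item[(c)] Adjacent diversity. $w(\tdelta_q\cap\tdelta_{q+1})\le d$ holds because $\C{D}$ is $d$-diverse.
\item[(d)] $d$-diversity of $(\varepsilon_1,\varepsilon_{\tilde p})$. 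It follows from $\tgamma_1[u_1,v_1]\cap\tgamma_{\tilde p}[u_{\tilde p},v_{\tilde p}]\subseteq\tdelta_1\cap\tdelta_{\tilde p}$.
\item[(e)] Finiteness of each $\Pi_{\C{P}}(\tdelta_q,\wth_q)$. This comes from \Cref{obs:partitioning-of-a-solution} (\ref{item-1-of-obs:partitioning-of-a-solution-opt-sol}), since $\C{D}\cap V(\C{T}_{\tdelta_q})$ is a solution of size $\wth_q$.
\end{itemize}

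Together these show that $\{\tdelta_q\}$ is feasible for \Cref{def:H-dp}, so the $\Lambda$-entry is finite.

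The main obstacle is step (b), the $d$-good check. It must be argued that $\gamma_{i'}\cap\tgamma_q\subseteq\delta_{i'}$ exactly in the cases the definition lists. This requires using \Cref{obs:split-relation-between-pd-and-pdrs} and the interval structure from \Cref{obs:consecutive}, to see that an interior element of $[\psi_{i'},\tau_{i'}]$ has its whole intersection with $\gamma_{i'}$ covered by $\delta_{i'}\cap\gamma_{i'}$. A secondary point is justifying $u_q\ne v_q$ and that $\tdelta_q\cap\tgamma_q$ is a path rather than the whole cycle; this uses $\tdelta_q\ne\tgamma_q$ together with \Cref{claim:intersection-between-two-cycles}.
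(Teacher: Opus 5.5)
Your proposal is correct and follows the paper's proof. You argue the contrapositive, read off $\varepsilon_1,\varepsilon_{\tilde p}$ from $\tdelta_1\cap\tgamma_1$ and $\tdelta_{\tilde p}\cap\tgamma_{\tilde p}$, and show that $\{\tdelta_q\}_{q\in[\tilde p]}$ is feasible for \Cref{def:H-dp} with $q=\tilde p$. So the $\Lambda$-entry, and hence the maximized quantity, is finite. You actually verify the abiding and $d$-good conditions, which the paper only asserts.

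One caution on (b), which applies equally to the paper. The implication from ``$\tgamma_q$ strictly inside $[\psi_{i'},\tau_{i'}]$'' to $\gamma_{i'}\cap\tgamma_q\subseteq\delta_{i'}$ is justified when $\Gamma_{\alpha,R,\gamma_{i'}}[\psi_{i'},\tau_{i'}]$ is a proper subset of $\Gamma_{\alpha,R,\gamma_{i'}}$. In that case $\psi_{i'}$ and $\tau_{i'}$ are forced to be the first and last arcs met by $\delta_{i'}$. If $\delta_{i'}$ meets every arc, however, the set equality in \Cref{def:solution-of-pi-alpha-h-theta} does not determine $(\psi_{i'},\tau_{i'})$. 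You then need $\theta_{i'}$ to be the pair matching where $\delta_{i'}\cap\gamma_{i'}$ actually starts and ends.
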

\begin{proof}
We will prove the contrapositive of the required statement. Suppose that there is a solution $\cal{C}$ for $\Pi_{\Pcal,\alpha, h}(\gamma_i, h_i, \theta_i)_{i \in [b]}$, where for $i \in [b]$, $\tgamma_i$ is the $(\alpha, \cal{C})$-\pdr{} of $\hbeta_i \in \cal{C}$. To prove the statement it will be enough for us to construct $\varepsilon_1 = (u_1,v_1) \in \tcal{A}_1$ and $\varepsilon_{\tilde{p}} = (u_{\tilde{p}},v_{\tilde{p}}) \in \tcal{A}_{\tilde{p}}$ such that $(\varepsilon_1,\varepsilon_{\tilde{p}})$ is $d$-diverse when $\tilde{p} \geq 2$, and furthermore, $\Lambda_{\varepsilon_1,\varepsilon_{\tilde{p}}}(\tilde{p},u_{\tilde{p}},v_{\tilde{p}}) \neq -\infty$. To this end, for $i\in \{1, \tilde{p}\}$, let $u_i, v_i \in V(\tgamma_i)$ be distinct vertices such that $\hbeta_i\cap \tgamma_i = \tgamma_i[u_i,v_i]$. Notice that such vertices exist as $\tgamma_1,\tgamma_{\tilde{p}} \in R_{\sf o}$; see \Cref{def:solution-of-pi-alpha-h-theta} (\ref{subcondition-3-of-condition-3-of-def:solution-of-pi-alpha-h-theta}). As $\hcal{C}$ is a solution for $\Pi_{\Pcal,\alpha, h}(\gamma_i, h_i, \theta_i)_{i \in [b]}$, if $\tilde{p} \geq 2$, it must hold that $w(\tgamma_1[u_1,v_1]\cap \tgamma_{\tilde{p}}[u_{\tilde{p}}, v_{\tilde{p}}]) \leq w(\hbeta_1\cap \hbeta_{\tilde{p}}) \leq d$, and thus $(\varepsilon_1, \varepsilon_{\tilde{p}})$ must be $d$-diverse. Furthermore, for each $i\in \{1,\tilde{p}\}$, as $\hbeta_i$ must be $\ttheta_i$-abiding and $d$-good (see \Cref{def:path-comp-theta,def:path-comp-theta-good}), $\varepsilon_i = (u_i,v_i)\in \tcal{A}_i$. Also, by the construction of $\varepsilon_1, \varepsilon_{\tilde{p}}$ and as $\hcal{C}$ is a solution for $\Pi_{\Pcal,\alpha, h}(\gamma_i, h_i, \theta_i)_{i \in [b]}$, it follows that $\{\hbeta_i\mid i\in [\tilde{b}]\}$ satisfies all the conditions of \Cref{def:H-dp} for $\varepsilon_1, \varepsilon_{\tilde{p}}$, $\tilde{p}$ and $(u_{\tilde{p}},v_{\tilde{p}}) \in \tcal{A}_{\tilde{p}}$. Furthermore, for each $i\in [\tilde{p}]$, $\Pi_{\C{P}}(\hbeta_i, \wth_i) \neq -\infty$. Thus we can obtain that $\Lambda_{\varepsilon_1,\varepsilon_{\tilde{p}}}(\tilde{p},u_{\tilde{p}},v_{\tilde{p}}) \geq \sum_{i\in [\tilde{p}]}\Pi_{\C{P}}(\hbeta_i, \wth_i)  - \sum_{i \in [q-1]}{\zeta_{\hbeta_i, \hbeta_{i+1}}} \neq -\infty$. This concludes the proof. 
\end{proof}

Due to the above observation, we now assume that $(\varepsilon^*_1,\varepsilon^*_{\tilde{p}}) \neq (\phi,\phi)$, and suppose that $\varepsilon^*_1 = (u_1,v_1)$ and $\varepsilon^*_{\tilde{p}} = (u_{\tilde{p}},v_{\tilde{p}})$ such that $\Lambda_{\varepsilon^*_1,\varepsilon^*_{\tilde{p}}}(\tilde{p},u_{\tilde{p}},v_{\tilde{p}}) \neq -\infty$. Let $\wtilde{\C{C}} = \{\tilde{\iota}_i \mid i \in [\tilde{p}]\}$ be the set returned by \Cref{lem:H-rec-correct}, for $\varepsilon^*_1,\varepsilon^*_{\tilde{p}}$, $\tilde{p}$, and $(u_{\tilde{p}},v_{\tilde{p}})$, where $\tilde{\iota}_i$ is a strict descendant of $\tgamma_i$, for each $i\in [\tilde{p}]$. For each $i\in [\tilde{p}]$, we set $\tbeta_i = \tilde{\iota}_i$. 
Also, for $i\in [\tilde{p}]$, let $\tcal{C}_i$ be an optimal solution for $\Pi_{\C{P}}(\tbeta_i, \wth_i)$, which must exists as $\Lambda_{\varepsilon^*_1,\varepsilon^*_{\tilde{p}}}(\tilde{p},u_{\tilde{p}},v_{\tilde{p}}) \neq -\infty$. To prove \Cref{lem:compute-pi-ext}, we will establish some smaller results, for which we introduce the notations that we will follow for the next two lemmas.
Suppose that $\hcal{C}$ is an optimal solution for $\Pi_{\Pcal,\alpha, h}(\gamma_i, h_i, \theta_i)_{i \in [b]}$, where for $i\in [\tilde{p}]$, $\tgamma_i$ is the $(\alpha,\hcal{C})$-representative for $\hbeta_{i} \in \hcal{C}$, and $\hcal{C}_i = \hcal{C} \cap V(\C{T}_{\hbeta_{i}})$. We let $\C{C} = (\hcal{C} \setminus (\cup_{i\in [\tilde{p}]} \hcal{C}_i) ) \cup (\cup_{i\in [\tilde{p}]} \tcal{C}_i)$. 
For each $i\in [b]$, let $\bbeta_{i} \in \C{C}$ be such that $\gamma_i$ is the $(\alpha,\C{C})$-representative for $\bbeta_i$, and $\C{C}_i = \C{C} \cap V(\C{T}_{\bbeta_i})$. 

\begin{lemma}\label{obs-H-path-d-div}
$\C{C}$ is a solution for $\Pi_{\Pcal,\alpha, h}(\gamma_i, h_i, \theta_i)_{i \in [b]}$.
\end{lemma}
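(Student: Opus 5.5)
We have to check that $\C{C}$ meets the conditions of \Cref{def:pi-p-alpha-h} and \Cref{def:solution-of-pi-alpha-h-theta}. We take them in turn: the structural conditions (cardinality, descendants, representatives, the counts $h_i$), then the $\theta_i$-conditions, and finally pairwise $d$-diversity, including diversity from $\C{P}$.

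\textbf{Structural conditions.} For each $i\in[\tilde p]$, $\tbeta_i=\tilde\iota_i$ lies on $\wtilde{P}_i$, so it is a strict descendant of $\tgamma_i$ lying on the maximal degree-$2$ path below it. Hence $\tgamma_i$ is still the $(\alpha,\C{C})$-\pdr{} of $\bbeta_i=\tbeta_i$. For $\gamma_{i'}\notin V(\wtilde H)$ nothing changes. Since $\tcal{C}_i\subseteq V(\C{T}_{\tbeta_i})$ has exactly $\wth_i$ cycles and the subtrees of the replaced and the kept parts are disjoint, we get $|\C{C}|=h$, $\alpha\in\C{C}$, every member of $\C{C}$ is a descendant of $\alpha$, and condition~\ref{condition-2-of-def:solution-of-pi-alpha-h-theta} holds. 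Condition~\ref{condition-3-of-def:solution-of-pi-alpha-h-theta} for $i\in[\tilde p]$ follows because each $\tbeta_i$ is $\ttheta_i$-abiding (\Cref{def:H-dp}~(\ref{item-1-of-def:H-dp})). Diversity of each cycle from $\C{P}$ holds inside $\tcal{C}_i$ by \Cref{def:pi-p-alpha-h}, and for $\tbeta_i$ itself it also follows from $d$-goodness.

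\textbf{Pairwise $d$-diversity.} This is the main step. Take $\delta,\delta'\in\C{C}$. If both lie in the same $\tcal{C}_i$, or both lie outside $\bigcup_i\tcal{C}_i$, we are done, since $\tcal{C}_i$ and $\hcal{C}$ are themselves solutions. Otherwise $\delta\in\tcal{C}_i$ and $\delta'$ lies elsewhere. By \Cref{lem:intersection-with-ancestor-is-more}, $E(\delta)\cap E(\delta')\subseteq E(\tbeta_i)\cap E(\delta'')$, where $\delta''$ is $\alpha$ or the solution descendant whose subtree contains $\delta'$. So it suffices to bound $w(\tbeta_i\cap\delta'')$, and \Cref{lem:intersection-with-ancestor-is-more} again gives $E(\tbeta_i)\cap E(\delta'')\subseteq E(\tgamma_i)\cap E(\delta'')\cap E(\tbeta_i)$. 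We then split into cases according to the relevant set.
\begin{enumerate}
\item If $\delta''=\alpha$, the bound follows from $d$-goodness.
\item If $\delta''$ is represented by some $\gamma_{i'}\notin\Gamma_{\alpha,R,\tgamma_i}$, the intersection is empty.
\item If $\gamma_{i'}=\tgamma_{i\pm1}$ is adjacent in $\wtilde H$, then $\tbeta_i\cap\tbeta_{i\pm1}=\tgamma_i[u,v]\cap\tgamma_{i\pm1}[u',v']$, which has weight at most $d$ by \Cref{def:H-dp}~(\ref{item-3-of-def:H-dp}). When $\wtilde H$ is a cycle, the pair $\tbeta_1,\tbeta_{\tilde p}$ is covered by the $d$-diversity of $(\varepsilon_1^*,\varepsilon_{\tilde p}^*)$.
\item If $\gamma_{i'}$ lies strictly inside $\Gamma_{\alpha,R,\tgamma_i}[\tpsi_i,\ttau_i]$, we argue via the set types $\Gamma^1,\Gamma^2,\Gamma^3$ of \Cref{def:overflow-wt}. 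In this case the intersection of $\tbeta_i$ with $\tgamma_i$ fully covers $\gamma_{i'}\cap\tgamma_i$, so the intersection equals $E(\tgamma_i\cap\delta'')$, which coincides with $E(\hbeta_i\cap\delta'')$. This is at most $d$ because $\hcal{C}$ is a solution.
\item If $\gamma_{i'}\in\{\tpsi_i,\ttau_i\}$ but the edge between them is not in $H$, then $\theta_{i'}=\F$ or $\tgamma_i$ lies strictly inside $\gamma_{i'}$'s interval. Either way $\delta''\cap\tgamma_i=\gamma_{i'}\cap\tgamma_i$, and $d$-goodness (second item of \Cref{def:path-comp-theta-good}) bounds it.
\end{enumerate}
\Cref{obs:split-relation-between-pd-and-pdrs} guarantees that no edge is shared by three of these cycles, so the cases do not interfere with one another.

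We expect the fourth case to be the main obstacle. Making precise that the intersection really equals $\hbeta_i\cap\delta''$ requires combining \Cref{obs:consecutive}, the fact that $\tbeta_i$ and $\hbeta_i$ are both $\ttheta_i$-abiding with the same interval $[\tpsi_i,\ttau_i]$, and the path structure from \Cref{claim:intersection-between-two-cycles}. The point is that a subpath of $\tgamma_i$ meeting an interior element of the interval must contain that element's entire intersection with $\tgamma_i$.
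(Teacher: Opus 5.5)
Your route is the paper's. You reduce via \Cref{lem:intersection-with-ancestor-is-more} to bounding $w(\tbeta_i\cap\delta'')$, where $\delta''$ is the solution descendant on the other side. You then split on where $\gamma_{i'}$ sits relative to $\Gamma_{\alpha,R,\tgamma_i}[\tpsi_i,\ttau_i]$, using \Cref{def:H-dp}~(\ref{item-3-of-def:H-dp}) for $H$-edges, $d$-goodness for an endpoint without an $H$-edge, and a comparison with $\hcal{C}$ for interior elements. You are more explicit than the paper about the pair $(\tbeta_i,\alpha)$ and about the wrap-around pair $\tbeta_1,\tbeta_{\tilde p}$ when $\wtilde{H}$ is a cycle. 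However, your case split is not exhaustive. Nothing covers $\gamma_{i'}\in\Gamma_{\alpha,R,\tgamma_i}\setminus\Gamma_{\alpha,R,\tgamma_i}[\tpsi_i,\ttau_i]$; there $\tbeta_i$, being $\ttheta_i$-abiding, shares no edge with $\gamma_{i'}$, so the intersection is empty (this is the paper's first bullet). Also, in your case~5 the absence of an $H$-edge does not force $\theta_{i'}=\F$ or $\tgamma_i$ to be strictly interior to $\gamma_{i'}$'s interval. It may equally be that $\theta_{i'}=\N$ or $\tgamma_i\notin\Gamma_{\alpha,R,\gamma_{i'}}[\psi_{i'},\tau_{i'}]$. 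Then $\delta''$ shares no edge with $\tgamma_i$, and again the intersection is empty (the paper's second bullet).

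The more substantive problem is case~4. Its conclusion ``at most $d$ because $\hcal{C}$ is a solution'' needs $\delta''\in\hcal{C}$. That fails when $\gamma_{i'}=\tgamma_{i''}$ is a vertex of the same component $\wtilde{H}$ that is not adjacent to $\tgamma_i$, so that $\delta''=\tbeta_{i''}$; nothing in the construction of $H$ excludes this. You must then also locate $\tgamma_i$ relative to $\tgamma_{i''}$'s interval. If $\tgamma_i$ lies outside it, the intersection is empty. If $\tgamma_i$ is strictly interior, both $\tbeta_i$ and $\tbeta_{i''}$ contain all edges of $\tgamma_i\cap\tgamma_{i''}$, so $E(\tbeta_i\cap\tbeta_{i''})=E(\tgamma_i\cap\tgamma_{i''})=E(\hbeta_i\cap\hbeta_{i''})$, whose weight is at most $d$. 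If $\tgamma_i\in\{\tpsi_{i''},\ttau_{i''}\}$, swap the roles and use the $d$-goodness of $\tbeta_{i''}$, i.e., your case~5. This is why the paper treats the ``both interior'' situation separately from the ``interior/endpoint'' one; the latter is covered for same-component pairs by its $d$-goodness bullet in the reversed orientation. Finally, \Cref{obs:split-relation-between-pd-and-pdrs} and the sets $\Gamma^1,\Gamma^2,\Gamma^3$ play no role in this lemma; they belong to the weight comparison in \Cref{lem:dp-H-wt-better}. All these repairs use only ingredients you already have.
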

\begin{proof}
As $\hcal{C}$ is a solution for $\Pi_{\Pcal,\alpha, h}(\gamma_i, h_i, \theta_i)_{i \in [b]}$, and for each $i\in [b]$, $\tcal{C}_i$ is a solution for $\Pi_{\C{P}}(\tbeta_i,\wtilde{h}_i)$, we can obtain that for each $C\in \C{C}$ and $P\in \C{P}$, $w(C\cap P) \leq d$. Furthermore, if we prove that cycles in $\C{C}$ are $d$-diverse, then this together with the construction of $\C{C}$ and \Cref{def:H-dp} (\ref{item-1-of-def:H-dp}) will imply that $\C{C}$ is a solution for $\Pi_{\Pcal,\alpha, h}(\gamma_i, h_i, \theta_i)_{i \in [b]}$. Thus, we will now focus on proving that cycles in $\C{C}$ are $d$-diverse.

For each $i\in [\tilde{p}]$, as $\tcal{C}_i$ is a solution for $\Pi_{\C{P}}(\tbeta_i, \wth_i)$, it must be the case that cycles in $\tcal{C}_i$ are $d$-diverse (see \Cref{def:pi-p-alpha-h}). Also, as $\hcal{C}$ is a solution for $\Pi_{\Pcal,\alpha, h}(\gamma_i, h_i, \theta_i)_{i \in [b]}$, for distinct $i,i' \in [b] \setminus \{i_j \mid i\in [\tilde{p}]\}$, and any $C\in \C{C}_i , C'\in \C{C}_{i'}$, it must be the case that $w(C\cap C') \leq d$. We will now consider $i \in [\tilde{p}]$, $i' \in [b]\setminus \{i_j\}$, and cycles $C\in \tcal{C}_i$ and $C' \in \C{C}_{i'}$, and our goal will be to argue that $w(C\cap C') \leq d$. Note once we establish the above, it will imply that $\C{C}$ is $d$-diverse. For the remainder of the proof we will be relying on \Cref{lem:intersection-with-ancestor-is-more} without mentioning it every time, in particular we will use that, for any distinct $j_1,j_2\in [b]$, $C_1\in \C{C}_{\bbeta_{j_1}}$ and $C_2 \in \C{C}_{\bbeta_{j_2}}$, $E(C_1\cap C_2) \subseteq E(\bbeta_{j_1}\cap \bbeta_{j_2})$. Due to the above, to prove the required statement it will be enough for us to argue that $w(\tbeta_i \cap \bbeta_{i'}) \leq d$. If $E(\tgamma_i\cap \gamma_{i'}) =\emptyset$, then it directly follows that $w(\tbeta_i \cap \beta_{i'}) \leq d$. We will now assume that $E(\tgamma_i\cap \gamma_{i'}) \neq\emptyset$, and let $u,v$ be distinct vertices such that $\tgamma_i[u,v] =\gamma_{i'}[u,v]=\tgamma_i\cap \gamma_{i'}$. Consider the following cases: 

\begin{itemize}
\item Consider the case when $\gamma_{i'} \notin \Gamma_{\alpha,R,\tgamma_i}[\tpsi_i,\ttau_i]$. By the construction of $\tbeta_i$, we have that $\tbeta_i$ is $\ttheta_i$-abiding (see \Cref{def:path-comp-theta}), and thus, $E(\tbeta_i\cap \tgamma_i[u,v]) =\emptyset$. However, $E(\bbeta_{i'}\cap \tgamma_i) \subseteq E(\tgamma_i[u,v])$, and thus we can conclude that $E(\tbeta_i\cap \bbeta_{i'})=\emptyset$. Hence, $w(\tbeta_i\cap \bbeta_{i'})\leq d$. 

\item Now consider the case when $\theta_{i'} = {\sf n}$, or $\gamma_{i'}\in R_{\sf o}$ and $\tgamma_{i} \notin \Gamma_{\alpha,R,\gamma_{i'}}[\psi_{i'},\tau_{i'}]$. As $\bbeta_{i'}$ must be $\theta_{i'}$-abiding, $E(\bbeta_{i'}\cap \gamma_{i'}[u,v]) =\emptyset$. However, $E(\bbeta_{i'}\cap \tgamma_i) \subseteq E(\tgamma_i[u,v])$, and thus we can conclude that $E(\tbeta_i\cap \bbeta_{i'})=\emptyset$. Hence, $w(\tbeta_i\cap \bbeta_{i'})\leq d$. 

\item Consider the case when $\gamma_{i'} \in \Gamma_{\alpha,R,\tgamma_i}[\tpsi_i,\ttau_i]\setminus \{\tpsi_i,\ttau_i\}$, and one of the following holds: i) $\theta_{i'}={\sf e}$, or ii) $\gamma_{i'}\in R_{\sf o}$ and $\tgamma_i \in \Gamma_{\alpha,R,\gamma_{i'}}[\psi_{i'},\tau_{i'}]\setminus \{\psi_{i'},\tau_{i'}\}$. Then, as $\hbeta_i$ and $\bbeta_{i'}$ must be $\ttheta_i$- and $\theta_{i'}$-abiding, respectively, we can obtain that $E(\hbeta_i \cap \bbeta_{i'}) = E(\tgamma_i\cap \gamma_{i'})$. Thus we can conclude that $w(\tgamma_i\cap \gamma_{i'})\leq d$, which implies that $w(\tbeta_i\cap \bbeta_{i'})\leq d$.
\end{itemize}
If none of the above cases are applicable, it must be the case that i) $\gamma_{i'} \in \Gamma_{\alpha,R,\tgamma_i}[\tpsi_i,\ttau_i]$, and ii) $\theta_{i'} ={\sf e}$, or $\theta_{i'} = (\psi_{i'}, \tau_{i'})$, and $\tgamma_{i} \in \Gamma_{\alpha,R,\gamma_{i'}}[\psi_{i'},\tau_{i'}]$. Furthermore, one of $\gamma_{i'} \in \{\tpsi_i,\ttau_i\}$ or $\tgamma_{i} \in \{\psi_{i'},\tau_{i'}\}$ holds. 

\begin{itemize}
\item Suppose that $\gamma_{i'} \in \{\ttau_i, \tpsi_i\}$ and $\tgamma_{i} \in \{\tau_{i'}, \tpsi_{i'}\}$. In this case, $\{\gamma_{i'}, \tgamma_i\} \in E(H)$, and thus, $\gamma_{i'}, \tgamma_{i}$ are consecutive vertices of the connected component of $\wtilde{H}$. Then, by \Cref{def:H-dp} (\ref{item-3-of-def:H-dp}) it follows that $w(\tbeta_i \cap \bbeta_{i'}) \leq d$. 

\item Now consider the case when $\gamma_{i'} \in \{\tpsi_i,\ttau_i\}$ and one of the following holds: i) $\theta_{i'} = {\sf e}$, or $\tgamma_{i} \in \Gamma_{\alpha,R,\gamma_{i'}}[\psi_{i'},\tau_{i'}] \setminus \{\psi_{i'},\tau_{i'}\}$. As $\tbeta_i$ must be $d$-good (see \Cref{def:path-comp-theta-good,def:H-dp} (\ref{item-1-of-def:H-dp})), it follows that $w(\tbeta_i\cap \bbeta_{i'})\leq d$. 

\item Finally consider the case when $\gamma_{i'} \in \Gamma_{\alpha,R,\tgamma_i}[\tpsi_i,\ttau_i] \setminus \{\tpsi_i,\ttau_i\}$ and $\tgamma_{i} \in \{\psi_{i'},\tau_{i'}\}$. Then, as $\hcal{C}$ is a solution we have that $w(\hbeta_i\cap \bbeta_{i'}) \leq d$. Also, since $\hbeta_i$ and $\tbeta_i$ are $\ttheta_i$-abiding, we must have $E(\hbeta_i \cap \bbeta_{i'}) \subseteq E(\tgamma_i \cap \gamma_{i'}) = E(\hbeta_i \cap \gamma_{i'}) = E(\tbeta_i \cap \gamma_{i'})$. Thus, we can obtain that $E(\hbeta_i \cap \bbeta_{i'}) = E(\tbeta_i \cap \bbeta_{i'})$, and hence, $w(\tbeta_i\cap \bbeta_{i'})\leq d$.
\end{itemize}
This concludes the proof.  
\end{proof}

We will now prove that $\C{C}$ is at least as good a solution as $\hcal{C}$, for $\Pi_{\Pcal,\alpha, h}(\gamma_i, h_i, \theta_i)_{i \in [b]}$. 
 
\begin{lemma}\label{lem:dp-H-wt-better}
We have $w(\C{C}\cup \C{P})\geq w(\hcal{C}\cup \C{P})$.
\end{lemma}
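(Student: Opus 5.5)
We compare $w(\C{C}\cup\C{P})$ and $w(\hcal{C}\cup\C{P})$ through an explicit inclusion--exclusion decomposition that is valid for every solution $\C{D}$ of $\Pi_{\Pcal,\alpha, h}(\gamma_i, h_i, \theta_i)_{i \in [b]}$. Let $\delta_i$ be the solution descendants of $\C{D}$ and $\C{D}_i=\C{D}\cap V(\C{T}_{\delta_i})$. By \Cref{lem:intersection-with-ancestor-is-more} and \Cref{obs:split-relation-between-pd-and-pdrs}, edges outside $E(\C{P})$ lie in at most two of the sets $E(\alpha)$, $E(\C{D}_1\cup\ldots)$, \ldots, and every such double edge lies in $\delta_i\cap\delta_{i'}$ or $\alpha\cap\delta_i$. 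This gives, exactly as in the proof of \Cref{lem:compute-pi},
\[
w(\C{D}\cup\C{P}) = w(E(\alpha)\setminus E(\C{P})) -(b-1)w(\C{P}) + \sum_{i\in[b]} w(\C{D}_i\cup\C{P}) - \sum_{\{x,y\}} \zeta_{x,y},
\]
where the last sum ranges over unordered pairs of distinct elements of $\{\alpha\}\cup\{\delta_i\mid i\in[b]\}$. We apply this to both $\hcal{C}$ and $\C{C}$. They agree on all $\C{D}_i$ with $\gamma_i\notin V(\wtilde{H})$. They also agree on every pair term not involving a member of $\wtilde{H}$.

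We then split the pair terms involving some $\tdelta_q$ according to the case analysis already performed in the proof of \Cref{obs-H-path-d-div}. Pairs with $E(\tgamma_q\cap\gamma_{i'})=\emptyset$, or whose intersection is forced to be empty by abidingness, contribute $0$ in both solutions. Pairs of Type~3 (\Cref{def:overflow-wt}(\ref{three-def:overflow-wt})) contribute exactly $w(E(\tgamma_q\cap\gamma_{i'})\setminus E(\C{P}))$, which depends only on the $\gamma$'s, so $w^3$ coincides for both solutions. Pairs of Type~2 contribute $w(E(\tgamma_q\cap\delta_{i'})\setminus E(\C{P}))$. Here abidingness of $\tdelta_q$ gives $E(\tdelta_q\cap\gamma_{i'})=E(\tgamma_q\cap\gamma_{i'})$, so the term depends only on $\delta_{i'}$, which is unchanged; hence $w^2$ also coincides. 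The remaining terms are two kinds. First, the $\zeta$ between consecutive vertices of $\wtilde{H}$; these equal $\zeta_{\tgamma_{q-1}[u',v'],\tgamma_q[u,v]}$ as in \Cref{lem:H-rec-correct}. Second, the Type~1 terms at the endpoints of $\wtilde{H}$ (if it is a path), or the wrap-around term $\zeta_{\tdelta_1,\tdelta_{\tilde p}}$ (if it is a cycle). In both cases this equals $w^{\sf ovr}(\varepsilon_1,\varepsilon_{\tilde p})$ for $\varepsilon_1,\varepsilon_{\tilde p}$ read off from $\tdelta_1\cap\tgamma_1$ and $\tdelta_{\tilde p}\cap\tgamma_{\tilde p}$.

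Putting this together, $w(\C{D}\cup\C{P}) = K + \sum_{q}w(\C{D}_{q_j}\cup\C{P}) - \sum_{q}\zeta_{\tdelta_q,\tdelta_{q+1}} - w^{\sf ovr}(\varepsilon_1(\C{D}),\varepsilon_{\tilde p}(\C{D}))$, where $K$ is the same for $\C{C}$ and $\hcal{C}$. For $\C{C}$ we have $w(\tcal{C}_q\cup\C{P})=\Pi_{\C{P}}(\tbeta_q,\wth_q)$, so the $\wtilde{H}$-dependent part equals $\Lambda_{\varepsilon^*_1,\varepsilon^*_{\tilde p}}(\tilde p,\cdot)-w^{\sf ovr}(\varepsilon^*_1,\varepsilon^*_{\tilde p})$. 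For $\hcal{C}$, the set $\{\hbeta_q\}$ is feasible for \Cref{def:H-dp} with its own $(\varepsilon_1,\varepsilon_{\tilde p})$, as argued in \Cref{obs:e1ep-phi-phi}. Also $w(\hcal{C}_q\cup\C{P})\le\Pi_{\C{P}}(\hbeta_q,\wth_q)$ by \Cref{obs:partitioning-of-a-solution}. Hence its $\wtilde{H}$-part is at most $\Lambda_{\varepsilon_1,\varepsilon_{\tilde p}}-w^{\sf ovr}(\varepsilon_1,\varepsilon_{\tilde p})$, and by the argmax choice this is at most the value attained by $\C{C}$.

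The main obstacle is verifying that the Type~1 and wrap-around terms are determined solely by $\varepsilon_1,\varepsilon_{\tilde p}$ and not by deeper structure of $\tdelta_q$. For Type~1 this means using that $\gamma_{i'}$ (with $\theta_{i'}=\F$ or $\tgamma_q$ interior to its range) covers all of $\tgamma_q\cap\gamma_{i'}$. It also requires $\tdelta_q\cap\gamma_{i'}\subseteq\tgamma_q[u_q,v_q]$, so the intersection is $\tgamma_q[u_q,v_q]\cap\gamma_{i'}$. We would check the paper's formula $w(X\setminus E(\C{P}))$ against this carefully, possibly restricting $X$ to $\tgamma_q[u_q,v_q]$.
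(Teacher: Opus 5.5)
Your argument is the paper's argument. Both use the same inclusion--exclusion over pairs of solution descendants; the paper phrases it through $\C{C}_{\sf cm}$ versus $\tcal{C}_{\sf ex}$ and $\hcal{C}_{\sf ex}$, while you use the global formula from the proof of \Cref{lem:compute-pi}, which is only cosmetically different. Both then show that $w^2$ and $w^3$ are the same for $\C{C}$ and $\hcal{C}$, identify the $\wtilde{H}$-dependent part with $\Lambda_{\varepsilon_1,\varepsilon_{\tilde p}}(\tilde p,u_{\tilde p},v_{\tilde p})-w^{\sf ovr}(\varepsilon_1,\varepsilon_{\tilde p})$, and conclude by the argmax.

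\textbf{The $X$ in $w^{\sf ovr}$.} Your suspicion is justified. As written, the path-case $w^{\sf ovr}(\varepsilon_1,\varepsilon_{\tilde p})=w(X\setminus E(\C{P}))$ takes $X$ to be a union of full edge sets $E(\gamma)$. That quantity does not depend on $(\varepsilon_1,\varepsilon_{\tilde p})$ at all. Consequently \eqref{eq:wt-diff-three} holds only after replacing $E(\gamma)$ by $E(\gamma)\cap E(\tgamma_q[u_q,v_q])$. Your observation explains why this is the right quantity: $\gamma$, or its solution descendant, covers all of $\gamma\cap\tgamma_q$, while $\tdelta_q\cap\tgamma_q=\tgamma_q[u_q,v_q]$.

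\textbf{The Type~2 step.} One step fails as stated, in your write-up and equally in the paper. You claim a Type~2 term $w(E(\tgamma_q\cap\delta_{i'})\setminus E(\C{P}))$ is unchanged between $\C{C}$ and $\hcal{C}$ because $\delta_{i'}$ is unchanged. That requires $\gamma_{i'}\notin V(\wtilde{H})$.

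In a Type~2 pair, $\tgamma_q\in\{\psi_{i'},\tau_{i'}\}$ but $\gamma_{i'}\notin\{\tpsi_q,\ttau_q\}$. So $\gamma_{i'}$ and $\tgamma_q$ are not adjacent in $H$, and $\gamma_{i'}$ has degree at most one in $H$. Nothing prevents $\gamma_{i'}$ from being an endpoint $\tgamma_1$ or $\tgamma_{\tilde p}$ of the same path component. For example, three cycles of $R_{\sf o}$ that pairwise share a path and meet at a common vertex can produce the component $(\gamma_{i'},\gamma_j,\tgamma_q)$. In that case $\delta_{i'}$ is replaced along with $\tdelta_q$, so the term varies with the choice of $\delta_{i'}$. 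It is also not among the consecutive $\zeta$'s subtracted in $\Lambda$.

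The repair fits your framework. Such a pair is exactly a Type~1 pair seen from the endpoint $\gamma_{i'}$. Indeed, if $\gamma_{i'}=\tgamma_1$ then $\tgamma_q\in\Gamma^1_{\alpha,R,\tgamma_1}$, contrary to the paper's remark that $\Gamma^1_{\alpha,R,\tgamma_q}\cap V(\wtilde{H})=\emptyset$. Its weight is $w\bigl((E(\tgamma_q)\cap E(\tgamma_1[u_1,v_1]))\setminus E(\C{P})\bigr)$, which is determined by $\varepsilon_1$. So it must be charged once, to the corrected $w^{\sf ovr}(\varepsilon_1,\varepsilon_{\tilde p})$, and excluded from $w^2$. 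Cycle components are unaffected, since every vertex in them has degree two in $H$.

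With these two corrections, your comparison via the argmax goes through.
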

\begin{proof} 
We recall the following property which we will be using in our proof. Consider distinct $i_1,i_2,i_3 \in [b]$ and $\C{D} \in \{\C{C}, \hcal{C}\}$. Also consider $C_1 \in \C{D}\cap V(\C{T}_{\gamma_{i_1}})$, $C_2 \in \C{D}\cap V(\C{T}_{\gamma_{i_2}})$, and $C_3 \in \C{D}\cap V(\C{T}_{\gamma_{i_3}})$. From \Cref{lem:intersection-with-ancestor-is-more} and \Cref{obs:split-relation-between-pd-and-pdrs} we have that $E(C_1)\cap E(C_2) \cap E(C_3) = \emptyset$ and $E(C_1\cap C_2) \subseteq E(\delta_{i_1}\cap \gamma_{i_1})$, where $\gamma_{i_1}$ is the $(\alpha,\C{D})$-representative for $\delta_{i_1}\in \C{D}$. In the above, the first part will be used in the arguments that an edge cannot be part of three distinct cycles from $\{\tbeta_i\mid i\in [\tilde{p}]\}$ (resp. $\{\hbeta_i\mid i\in [\tilde{p}]\}$). 

Let $I = [b] \setminus \{i_j \mid i\in [\tilde{p}]\}$, and $\C{C}_{\sf cm} = \{\alpha\} \cup (\cup_{q \in I} \C{C}_i)$. Note that $\C{C}_{\sf cm}=\{\alpha\} \cup (\cup_{q \in I} \hcal{C}_i)$. Let $\tcal{C}_{\sf ex} = \cup_{i\in [\tilde{p}]} \tcal{C}_i$ and $\hcal{C}_{\sf ex} = \cup_{[i\in \tilde{p}]} \hcal{C}_i$. Notice that $\C{C} = \C{C}_{\sf cm} \uplus \tcal{C}_{\sf ex}$ and $\hcal{C} = \C{C}_{\sf cm} \uplus \hcal{C}_{\sf ex}$. Let $w_{\sf cm} = w(E(\C{C}_{\sf cm}) \setminus E(\C{P}))$. Also,
\begin{align}
w(\C{C}\cup \C{P}) & = w(\tcal{C}_{\sf ex} \cup \C{P}) -  w\big((E(\tcal{C}_{\sf ex}) \cap E(\C{C}_{\sf cm})) \setminus E(\C{P}) \big) + w_{\sf cm} \notag\\
w(\hcal{C}\cup \C{P}) & = w(\hcal{C}_{\sf ex} \cup \C{P}) -  w\big((E(\hcal{C}_{\sf ex}) \cap E(\C{C}_{\sf cm})) \setminus E(\C{P}) \big) + w_{\sf cm}\label{eq:wt-diff-one}
\end{align}

As $\hcal{C}$ is a solution for $\Pi_{\Pcal,\alpha, h}(\gamma_i, h_i, \theta_i)_{i \in [b]}$, using the property we stated towards the beginning of the proof of this lemma, and the construction of $\tcal{C}_i$, for $i\in [\tilde{p}]$, we can obtain: 
\begin{align}
w(\tcal{C}_{\sf ex} \cup \C{P}) & = \sum_{i\in [\tilde{p}]} \Pi_{\C{P}}(\tbeta_i, \wtilde{h}_i) - (\tilde{p}-1)w(\C{P}) - \sum_{i,i' \in [\tilde{p}], i< i'} w(E(\tbeta_i\cap \tbeta_{i'}) \setminus E(\C{P})) \notag\\
w(\hcal{C}_{\sf ex} \cup \C{P}) & \leq \sum_{i\in [\tilde{p}]} \Pi_{\C{P}}(\hbeta_i, \wtilde{h}_i) - (\tilde{p}-1)w(\C{P}) - \sum_{i,i' \in [\tilde{p}], i< i'} w(E(\hbeta_i\cap \hbeta_{i'})\setminus E(\C{P}))\label{eq:wt-diff-two}
\end{align}

From \Cref{def:overflow-wt} we can obtain the following:

\begin{align}
w\big((E(\tcal{C}_{\sf ex}) \cap E(\C{C}_{\sf cm})) \setminus E(\C{P}) \big) + \sum_{i,i'\in [\tilde{p}], i< i'} w(E(\tbeta_i\cap \tbeta_{i'}) \setminus E(\C{P})) & \notag
\end{align}
\begin{align}
\hspace{10mm}& = w^{\sf ovr}(\varepsilon^*_1, \varepsilon^*_{\tilde{p}}) + w^2(\C{C}) +w^3(\C{C}) + \sum_{i\in [\tilde{p}-1]} w(E(\tbeta_i\cap \tbeta_{i+1}) \setminus E(\C{P})) \notag
\end{align}
\begin{align}
w\big((E(\hcal{C}_{\sf ex}) \cap E(\C{C}_{\sf cm})) \setminus E(\C{P}) \big) + \sum_{i,i'\in [\tilde{p}], i< i'} w(E(\hbeta_i\cap \hbeta_{i'}) \setminus E(\C{P})) & \notag
\end{align}
\begin{align}
\hspace{10mm}& = w^{\sf ovr}(\varepsilon_1, \varepsilon_{\tilde{p}}) + w^2(\hcal{C}) +w^3(\hcal{C}) + \sum_{i\in [\tilde{p}-1]} w(E(\hbeta_i\cap \hbeta_{i+1}) \setminus E(\C{P})) 
\label{eq:wt-diff-three}
\end{align}

Also, from \Cref{def:overflow-wt} (\ref{two-def:overflow-wt} and \ref{three-def:overflow-wt}) we can obtain that $w^2(\C{C}) =w^2(\hcal{C})$ and $w^3(\C{C}) =w^3(\hcal{C})$. Using the above combined with \Cref{eq:wt-diff-one}--\Cref{eq:wt-diff-three}, we can obtain that:  \\

$w(\C{C}\cup \C{P}) - w(\hcal{C}\cup \C{P})$
$$\geq \big(\sum_{i\in [\tilde{p}]} \Pi_{\C{P}}(\tbeta_i, \wtilde{h}_i) -\sum_{i\in [\tilde{p}-1]} \zeta_{\tbeta_i,\tbeta_{i+1}} - w^{\sf ovr}(\varepsilon^*_1, \varepsilon^*_{\tilde{p}}) \big) - \big( \sum_{i\in [\tilde{p}]} \Pi_{\C{P}}(\hbeta_i, \wtilde{h}_i) -\sum_{i\in [\tilde{p}-1]} \zeta_{\hbeta_i,\hbeta_{i+1}} - w^{\sf ovr}(\varepsilon_1, \varepsilon_{\tilde{p}}) \big)$$

Moreover, as $\{\hbeta_i \mid i\in [\tilde{p}]\}$ must satisfy all the requirements of \Cref{def:H-dp} (but may or may not achieve the maximum), by the choice of $(\varepsilon^*_1,\varepsilon^*_{\tilde{p}})$ we can further obtain that:
$$\big(\sum_{i\in [\tilde{p}]} \Pi_{\C{P}}(\tbeta_i, \wtilde{h}_i) -\sum_{i\in [\tilde{p}-1]} \zeta_{\tbeta_i,\tbeta_{i+1}} - w^{\sf ovr}(\varepsilon^*_1, \varepsilon^*_{\tilde{p}}) \big) - \big( \sum_{i\in [\tilde{p}]} \Pi_{\C{P}}(\hbeta_i, \wtilde{h}_i) -\sum_{i\in [\tilde{p}-1]} \zeta_{\hbeta_i,\hbeta_{i+1}} - w^{\sf ovr}(\varepsilon_1, \varepsilon_{\tilde{p}}) \big) \geq 0$$

Thus, $w(\C{C}\cup \C{P}) - w(\hcal{C}\cup \C{P}) \geq 0$. 
\end{proof}

We summarize the results obtained using the previous three results in the following corollary.

\begin{corollary}\label{lem:correct-beta-i-n-when-p-is-1}
The following holds:
\begin{enumerate}
\item 
If $(\varepsilon^*_1,\varepsilon^*_{\tilde{p}}) = (\phi,\phi)$, then there is no solution for $\Pi_{\Pcal,\alpha, h}(\gamma_i, h_i, \theta_i)_{i \in [b]}$.  

\item
Suppose there is an (optimal) solution $\hcal{C}$ for $\Pi_{\C{P}}(\alpha,h)$, where for $i\in [\tilde{p}]$, we have $\hcal{C}_i = \hcal{C} \cap V(\C{T}_{\tgamma_i})$. Then there is an (optimal) solution $\C{C}$ for $\Pi_{\Pcal,\alpha, h}(\gamma_i, h_i, \theta_i)_{i \in [b]}$, where for $i\in [\tilde{p}]$, $\tcal{C}_i = \C{C} \cap V(\C{T}_{\tgamma_i})$, and the following holds:  i) $\C{C} \setminus (\cup_{i\in [\tilde{p}]} \tcal{C}_i) = \hcal{C} \setminus (\cup_{i\in [\tilde{p}]} \hcal{C}_i)$, ii) for $i\in [\tilde{p}]$, $\tgamma_i$ is the $(\alpha, \C{C})$ representative for $\tbeta_i \in \C{C}$, and iii) $w(\C{C} \cup\C{P}) \geq w(\hcal{C}\cup\C{P})$. 
\end{enumerate} 
\end{corollary}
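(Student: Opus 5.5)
The plan is to read the corollary as a repackaging of \Cref{obs:e1ep-phi-phi}, \Cref{obs-H-path-d-div} and \Cref{lem:dp-H-wt-better}, with a little bookkeeping to match the subtree notation $\C{T}_{\tgamma_i}$ used in the statement. Item 1 is exactly \Cref{obs:e1ep-phi-phi}, so nothing further is needed there.

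For item 2, I first fix how to read the hypothesis. Conclusion i) requires $\C{C}$ and $\hcal{C}$ to coincide outside the component $\wtilde{H}$, and $\C{C}$ must be a solution for $\Pi_{\Pcal,\alpha, h}(\gamma_i, h_i, \theta_i)_{i \in [b]}$. So the statement can only be satisfied when $\hcal{C}$ already has the prescribed representatives $R$, the counts $h_i$ and the intersection types $\theta_i$ outside $\wtilde{H}$. This is also the setting in which the corollary is invoked. I therefore take $\hcal{C}$ to be a solution for $\Pi_{\C{P}}(\alpha,h)$ that meets the conditions of \Cref{def:solution-of-pi-alpha-h-theta}, which makes it a solution of the restricted problem. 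I will state this reading explicitly at the start of the proof.

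Under this reading, for each $i\in[\tilde{p}]$ let $\hbeta_i$ be the solution descendant with $(\alpha,\hcal{C})$-representative $\tgamma_i$.

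\begin{enumerate}
\item \textbf{Matching the notation.} By \Cref{def:productive-descendant} no node of $\hcal{C}$ lies strictly between $\alpha$ and $\hbeta_i$. By \Cref{obs:simple-prop-beta}, $\hbeta_i$ lies on $\wtilde{P}_i$, which hangs below $\tgamma_i$ through degree-$2$ nodes. Together these give $\hcal{C}\cap V(\C{T}_{\tgamma_i})=\hcal{C}\cap V(\C{T}_{\hbeta_i})$. Hence the sets $\hcal{C}_i$ in the corollary coincide with those used before \Cref{obs-H-path-d-div}, and the same identity will hold for $\C{C}$ once its representatives are known.
\item \textbf{Using the existing construction.} Since a solution exists, item 1 gives $(\varepsilon^*_1,\varepsilon^*_{\tilde p})\neq(\phi,\phi)$. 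So the cycles $\tbeta_i$ and the optimal solutions $\tcal{C}_i$ for $\Pi_{\C{P}}(\tbeta_i,\wth_i)$ are defined, and I set $\C{C}=(\hcal{C}\setminus\bigcup_{i\in[\tilde p]}\hcal{C}_i)\cup\bigcup_{i\in[\tilde p]}\tcal{C}_i$ exactly as in the text.
\item \textbf{Verifying the conclusions.} \Cref{obs-H-path-d-div} shows that $\C{C}$ is a solution for the restricted problem. Property i) holds by construction. \Cref{lem:dp-H-wt-better} gives property iii).
\item \textbf{Representatives.} For property ii), each $\tbeta_i$ is a strict descendant of $\tgamma_i$ on $\wtilde{P}_i$, by \Cref{lem:H-rec-correct} and \Cref{def:H-dp}(\ref{item-1-of-def:H-dp}). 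Nothing of $\C{C}$ lies strictly between $\tgamma_i$ and $\alpha$, because that part of the tree is untouched from $\hcal{C}$. Moreover $\tbeta_i$ is the root of $\tcal{C}_i$. Therefore $\tbeta_i$ is a solution descendant of $\alpha$ in $\C{C}$, and its closest marked ancestor in $\C{M}_\alpha$ is $\tgamma_i$, as $\wtilde{P}_i$ contains no marked nodes besides $\tgamma_i$'s endpoint.
\item \textbf{The ``(optimal)'' variant.} If $\hcal{C}$ is optimal, then $w(\C{C}\cup\C{P})\ge w(\hcal{C}\cup\C{P})$ forces $\C{C}$ to be optimal as well.
\end{enumerate}

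I expect the main obstacle to be the bookkeeping in step 1. The statement indexes by $\C{T}_{\tgamma_i}$, while the lemmas index by $\C{T}_{\hbeta_i}$, and the identification of the two must hold for both $\hcal{C}$ and $\C{C}$. I would handle this by a single claim: if $\gamma$ is the representative of a solution descendant $\beta$ and $\gamma\neq\beta$, then $\gamma$ has degree $2$ and every node strictly between them is absent from the solution. This follows directly from \Cref{def:productive-descendant} and \Cref{def:productive-descendant-representative}.
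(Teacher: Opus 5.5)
Your proposal is correct and follows the paper. The paper presents this corollary simply as a summary of \Cref{obs:e1ep-phi-phi}, \Cref{obs-H-path-d-div} and \Cref{lem:dp-H-wt-better}, with $\hcal{C}$ understood, exactly as you read it and as in the setup preceding those lemmas, to be a solution of the restricted problem $\Pi_{\Pcal,\alpha, h}(\gamma_i, h_i, \theta_i)_{i \in [b]}$. Your extra bookkeeping only makes explicit what the paper leaves implicit: identifying $\hcal{C}\cap V(\C{T}_{\tgamma_i})$ with $\hcal{C}\cap V(\C{T}_{\hbeta_i})$, and checking that $\tgamma_i$ is the representative of $\tbeta_i$ in $\C{C}$. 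For the latter, lineage-freeness of $R$ also excludes nodes of $\C{C}$ lying between $\tgamma_i$ and $\tbeta_i$, not just above $\tgamma_i$.
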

We are ready to prove \Cref{lem:compute-pi-ext}.
\begin{proof}[Proof of \Cref{lem:compute-pi-ext}]
This follows from \Cref{lem:correct-beta-i-n} and \Cref{lem:correct-beta-i-n-when-p-is-1}.
\end{proof}

We have proved \Cref{lem:compute-pi-ext}, which completes the proof of \Cref{lem:compute-pi}, which in turn finishes the proof of \Cref{thm:maxddisjointcoverage}.

\section{Conclusion and Open Questions}\label{conclusion}
In this paper we resolved two open questions from~\cite{psc,DBLP:journals/talg/BentertFGKLPRSS26}  on packing shortest cycles. 
First, we showed that \dsce is \WOH parameterized by $k$ already on unweighted subcubic graphs, ruling out \classFPT algorithms for the problem on general graphs. 
Second, we  obtained a single-exponential algorithm for \maxddisjointcoveragefull on planar graphs. Several questions remain open.

\medskip\noindent\textbf{Kernelization.}
First, does \maxddisjointcoveragefull admit a polynomial-in-$k$ 
kernel on planar graphs? We are able to establish kernelization for some special 
cases, but the general case remains open. More precisely, for planar graphs, 
 ~\cite{psc,DBLP:journals/talg/BentertFGKLPRSS26} show that \dsce 
admits a kernel with $\Ocal(k^2)$ vertices. By making use of a very similar 
approach, we are able to show that the special case of \maxddisjointcoveragefull 
without constraints on covering $\ell$ --- that is, the task of finding $k$ 
$d$-diverse shortest cycles --- admits a kernel with $\Ocal(k^2)$ vertices on 
planar graphs. However, we could not obtain a similar result for the coverage 
variant, even without constraints on diversity; that is, the problem of finding 
$k$ shortest cycles covering the edge set of maximum total weight. Obtaining a polynomial kernel for coverage problem on planar graphs is the first natural step towards the kernel for  \maxddisjointcoveragefull. 

\medskip\noindent\textbf{Subexponential algorithms.}
Many cycle- and packing-related problems on planar graphs admit subexponential-time algorithms, yet we do not know whether this holds for problems around packing of shortest cycles. 
Proving a lower bound that rules out $2^{o(k)} \cdot n^{O(1)}$ algorithms under ETH, or obtaining a subexponential algorithm for \maxddisjointcoveragefull and \dsce is another interesting open question.

\medskip\noindent\textbf{Sparse classes of graphs.}
We establish that \dsce is \WOH even on subcubic graphs, so the problem remains 
intractable on sparse graphs in general. However, it is plausible that the problem 
could be \classFPT on sparse graphs of specific structure, such as $H$-minor-free 
graphs or graphs of bounded expansion. Our methods are tailored to planar graphs, 
and different techniques would most likely be needed here.

\bibliography{book_pc}
\end{document}